\def\p{{_{\bf P}}}
\begin{document}

\author{Tridib Sadhu}
\title{Emergence and Complexity in Theoretical Models of
Self-Organized Criticality }
\date{July 2011}
\maketitle
%


\chapter*{Statutory Declarations}
\addcontentsline{toc}{chapter}{Statutory Declarations}
\begin{tabular}{p{2.2in}p{3in}}
Name of the Candidate &: Tridib Sadhu\\
\\ & \\
Title of the Thesis &: Emergence and Complexity in\\
& $~$ Theoretical Models of Self-\\
& $~$ Organized Criticality.\\
\\ & \\
Degree &: Doctor of Philosophy (Ph.D.)\\
\\ & \\
Subject &: Physics \\
\\ & \\
Name of the advisor &: Prof. Deepak Dhar\\
\\ & \\
Registration number &: PHYS--154\\
\\ & \\
Place of Research &: Tata Institute of Fundamental\\
 & $~$ Research, Mumbai
400005
\end{tabular}

\chapter*{Declaration Of Authorship}
\addcontentsline{toc}{chapter}{Declaration of Authorship}
This thesis is a presentation of my original research work. Wherever
contributions of others are involved, every effort is made to indicate
this clearly, with due reference to the literature, and
acknowledgement of collaborative research and discussions.

The work was done under the guidance of Professor Deepak Dhar, at the
Tata Institute of Fundamental Research, Mumbai.\\
\vspace{0.5in}\\
Signed: \dotfill \hspace{2.0cm}  Date:\dotfill\\
\vspace{0.02in}
Name: {\Large \textbf{Tridib Sadhu}}
\newline
\vspace{1.0in}\\
In my capacity as supervisor of the candidate's thesis, I certify that
the above statements are true to the best of my knowledge.\\
\vspace{0.5in}\\
Signed: \dotfill \hspace{2.0cm}  Date:\dotfill\\
\vspace{0.02in}
Name: {\Large \textbf{Prof. Deepak Dhar}}
\noindent

\chapter*{Acknowledgements}
\addcontentsline{toc}{chapter}{Acknowledgements}
I would like to express my gratitude to all those who helped me in the
completion of this work. Most importantly, I gratefully acknowledge
the guidance and support of my thesis advisor Prof. Deepak Dhar. His
emphasis on perfection and strong sense of work ethics have had a deep
influence on me.

I also acknowledge the financial support I received from TIFR, Mumbai.

My special thanks to Shaista for proofreading, and for helping
me in many technical aspects of preparing the thesis. Finally, I
would like to thank my family for their unconditional love and support.

\noindent

\tableofcontents

\listoffigures

\listoftables

\chapter{Synopsis}\label{synopsis}
\section{Introduction}\label{sec:intro}
The concept of self-organized criticality (SOC) was introduced by Bak
Tang and Wiesenfeld in 1987 \cite{btw}, to explain the abundant fractal
structures in nature, e.g. mountain ranges, river networks,
power law tails in the distribution of earthquake intensities etc.
The SOC refers to the non-equilibrium steady state of
slowly driven systems, which show irregular burst like relaxations
with a wide distribution of event sizes. The power law correlations of
different physical quantities, extending over a wide range of length and time scales is a signature of
criticality. Usually, reaching a critical state requires fine tuning
of some control parameters e.g. temperature and magnetic field for the Ising model with a
given interaction strength.
However in SOC the systems reach a critical state under their
own dynamics, irrespective of the initial states and without any
obvious fine tuning of parameters.

In the last two decades a large amount of study is focused on
understanding the mechanism of SOC. The questions regarding the
universality classes of the critical
states has still not been completely settled.
Many theoretical models have been studied to address these issues.
Most of these are cellular automata models with discrete or
continuous variables, evolving under deterministic or
stochastic evolution rules (see \cite{dharphysica06} for a review).
Among them the Abelian Sandpile Model (ASM) is studied the most, mainly
because of its analytical tractability using the Abelian property\cite{dharprl}.

The standard ASM first proposed in \cite{btw} is defined on a
lattice with height variables $z_{i}$ at each site $i$, which is equal to the number of
sand grains at that site. There is a threshold value $z_{c}$ for
each site, and any site with height $z_{i}\ge z_{c}$ is said to be unstable.
The system in a stable configuration is driven by adding a single grain at
a randomly chosen site. If this addition makes the system unstable, it
relaxes by the following toppling rule: All the unstable sites at one
time step transfers one grain each to all its nearest neighbors. Grains can move out of the
lattice by toppling at the boundary. When the system reaches a stable
configuration it is again driven by adding a grain and the process is
repeated. The model reaches a steady state, in which the
probability distribution of the size of events has power law tail.

The ASM defined on an \textbf{infinite} lattice when driven by adding grains
only at a \textbf{single} site and relaxed, produces beautiful and complex patterns in
height variables (see figure $1.1$). They are one of the
examples where complexity arises from simple rules. In section \ref{sec:first} we
analyze some of these patterns and develop a detailed and exact mathematical
characterization of them.

There are different variants of the ASM that have been proposed. We study
two well known models among them.

First is the Zhang model. This is similar to the ASM, but with
continuous non-negative variables, usually referred as energy.
At any time step all the unstable sites relax by equally distributing all
its energy amongst its nearest neighbors, with their energy reducing to
zero. Energy can also move out of the system by toppling at the
boundary. The driving is done by adding energy to a randomly chosen
site and the amount of the energy is chosen at random from a
distribution.

The second is a stochastic variant of the ASM.
The first stochastic sandpile model was proposed
by Manna and it is known as Manna model\cite{manna}. The model is non-Abelian, but one
can construct stochastic relaxation rules with Abelian character. We
consider one such stochastic Abelian sandpile model introduced in
\cite{dm}. The model is similar to the ASM with non-negative integer height variables
$z_{i}$ and a threshold value $z_{c}$ defined at each site. The
driving is also done by adding one sand grain at a
randomly chosen site in a stable configuration. The difference is in
the relaxation rules: On toppling $z_{c}$ number of grains are
transfered, each grain moving independent of others to the nearest neighbors with equal
probability and the height at the toppling site reduces by
$z_{c}$. For the one dimensional model defined on a linear chain,
with $z_{c}=2$, there are three possible events in a toppling at site
$i$: Both the neighbors $\left( i-1 \right)$ and $\left( i+1 \right)$
gets one grain each. Probability of this event is $1/2$. Other two
possibilities are that both the grains move either to the left or to
the right neighbor, each event with probability $1/4$.

\section[Spatial patterns]{The spatial patterns in theoretical sandpile models}\label{sec:first}
While real sand, poured at one point on a flat substrate, produces a
rather simple conical pyramid shape, nontrivial patterns
are generated this way in the ASM on an infinite lattice.
One such pattern on a square lattice with threshold height
$z_{c}=4$, produced by adding grains at the origin in an initial
uniform distribution of heights $z=2$, is shown in the figure
$1.1$.
\begin{figure}
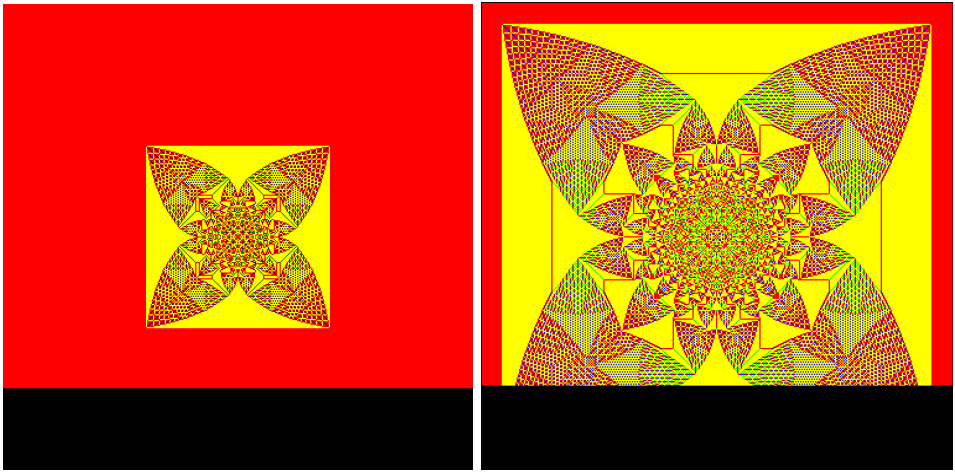

\includegraphics[scale=0.35,angle=0,clip]{Images/Synopsis/Output50k}
\includegraphics[scale=0.35,angle=0,clip]{Images/Synopsis/btw}
\label{fig:btw}
\caption{The stable configurations for the Abelian sandpile model, obtained
by adding $10^4$ and $5 \times 10^{4}$ grains, respectively at one
site on a square lattice. Initial configuration is with all heights 2. Color code:
blue =0, green = 1, red = 2, yellow = 3. Both patterns are on the same
scale. (Details can be seen in the
electronic version using zoom in).}
\end{figure}
\begin{SCfigure}
\includegraphics[scale=0.68,clip]{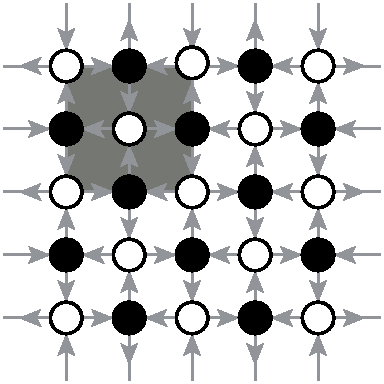}
\caption{\label{fig:flatt} F-lattice with checker board distribution
of grains. Unfilled circles denote height $z=1$ and filled ones $z=0$.
The gray area denotes a unit cell of the periodic distribution.}
\end{SCfigure}
\begin{SCfigure}
\includegraphics[scale=0.47,angle=0]{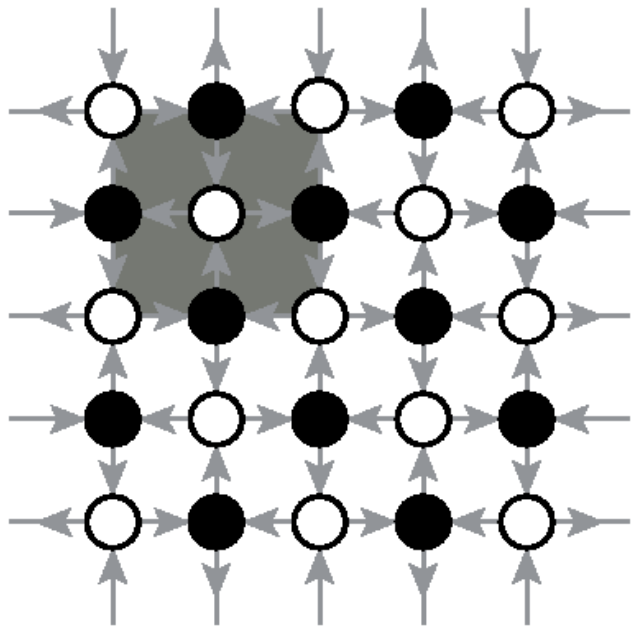}
\label{fig:flatone}
\caption{The stable configuration for the Abelian sandpile model, obtained
by adding $5\times10^4$ grains at one site, on the F-lattice of
figure \ref{fig:flatt} with initial checkerboard configuration. Color code: red =
0, yellow = 1. The apparent orange regions in the picture represent the
patches with checkerboard configuration. (Details can be seen in the
electronic version using zoom in.)}
\end{SCfigure}

The reason for interest in these patterns is two fold.

Firstly, these are analytically tractable examples of complex patterns that are obtained from simple
deterministic evolution rules. Here complexity means that we have
structures with variations, and a complete description of which is
long. Thus, a living organism is complex because it has many different
working parts, each formed by variations in the working out of the
same, but relatively much simpler genetic coding.

Secondly, these patterns have the very interesting property of
\textit{proportionate growth}. This is a well-known feature of
biological growth in animals, where different parts of the growing
animal grow at roughly the same rate, keeping their shape almost the
same. Our
interest in studying the sandpile patterns comes from these being the
simplest model of proportionate growth with non-trivial patterns. Compare the two patterns in figure
$1.1$ produced on
the same background but with different values of $N$. The pattern
grows in size and finer features become discernible at the center, but the overall shape of
the pattern remains same. Most of the other growth models studied in physics
literature, such as the Eden model, the diffusion limited aggregation,
or the surface
deposition, do not show this property \cite{eden,dla,barabasi}. In these
models, the growth is
confined to some active outer region. The inner structures, once
formed are frozen in and do not evolve further in time.

The standard square lattice produces complicated patterns and it has
not been possible to characterize them so far. We consider a pattern which is simpler but
still complex. The pattern is produced on the F-lattice which is a
variant of the square lattice with directed edges. The lattice with
checker board distribution of heights is shown in figure
\ref{fig:flatt}. Each site has
equal number of incoming and outgoing arrows. The threshold height is
$2$ and any unstable site relaxes by giving away one particle each in
the direction of the outgoing arrows. The pattern produced by
centrally seeding grains on the checkerboard background is shown in
the figure \ref{fig:flatone}.

\subsection{The characterization of the pattern}
We take some qualitative features of the observed pattern as input
and show how one can get a complete and quantitative characterization
of the pattern in the asymptotic limit of $N\rightarrow \infty$.

In models with proportionate growth, it is natural to describe the
asymptotic pattern in terms of the rescaled coordinate
$\mathbf{r}=\mathbf{R}/\Lambda\left( N \right)$ where $\Lambda\left( N
\right)$ is the diameter of the pattern, suitably
defined, and $\mathbf{R}$ is the position vector of a site on the
lattice. The function $\Lambda\left( N \right)$ increases in steps
with $N$ and goes to infinity as $N\rightarrow\infty$. In the
asymptotic limit the pattern can be characterized by a function
$\rho\left( \mathbf{r} \right)$ which gives the local density of sand
grains in a small rectangle of size $\delta\xi\delta\eta$ about the
point $\mathbf{r}$, with $1/\Lambda \ll \delta\xi, \delta\eta\ll 1$
where $\xi$ and $\eta$ are the x and y components of the rescaled
position vector $\mathbf{r}$. We define $\Delta\rho\left(
\mathbf{r}\right)$ as the change in density $\rho\left(
\mathbf{r}\right)$ from its background value. 

The pattern is composed of large regions where the heights are
periodic and we call these regions as patches. Inside each patch
$\Delta\rho\left( \mathbf{r} \right)$ is constant and takes only two
values, either $0$ or $1/2$.

Let $T_{N}\left( \mathbf{R} \right)$ be the number of
topplings at site $\mathbf{R}$ when $N$ number of grains have been
added and then relaxed. Define a rescaled toppling function
\begin{equation}
\phi\left( \mathbf{r}
\right)=\lim_{N\rightarrow\infty}\frac{T_{N}\left(\lfloor\Lambda\mathbf{r}\rfloor
\right)}{\Lambda\left( N \right)^{2}},
\end{equation}
where the floor function $\lfloor x \rfloor$ is the largest integer less than or equal to $x$.
From the conservation of sand grains in the toppling process, it
follows that $\phi$ satisfies the Poisson equation
\begin{equation}
\nabla^{2}\phi\left( \mathbf{r} \right)=\Delta\rho\left(
\mathbf{r}\right)-\frac{N}{\Lambda^2}\delta\left( \mathbf{r} \right).
\label{eq:poisson}
\end{equation}

The complete specification of $\phi\left( \mathbf{r} \right)$
determines the density function $\Delta\rho\left( \mathbf{r} \right)$
and hence the asymptotic pattern. The condition
that determines $\phi\left( \mathbf{r} \right)$ is the requirement
that inside each patch of constant density, it is a quadratic function
of $\xi$ and $\eta$. Considering that there are only two types of
patches and that $\phi\left(
\mathbf{r}\right)$ satisfies (\ref{eq:poisson}) we write
\begin{equation}
\phi\left( \mathbf{r}
\right)=\frac{1}{8}\left( 1+m\right)\xi^{2}+\frac{1}{4}n\xi\eta+\frac{1}{8}\left( 1-m \right)\eta^{2}+d\xi+e\eta+f,
\end{equation}
for the patches with $\Delta\rho=1/2$ and
\begin{equation}
\phi\left( \mathbf{r}
\right)=\frac{1}{8}m\xi^{2}+\frac{1}{4}n\xi\eta-\frac{1}{8}m \eta^{2}+d\xi+e\eta+f,
\end{equation}
for the patches with $\Delta\rho=0$. Each patch is
characterized by the values of the parameters $m$,$n$,$d$,$e$ and $f$. The continuity of
$\phi\left( r \right)$ and its derivatives along the boundary between
two adjacent patches imposes linear relations among the corresponding
parameters. Using these relations we show that $m$ and $n$ take
only integer values. Each patch with its
parameters $d$, $e$ and $f$ can be labeled by
the pair $\left( m,n \right)$. The pair can be taken as the Cartesian coordinates of the
adjacency graph of the patches, which for this pattern is a square lattice on a two
sheeted Riemann surface. We show that the function $D\left(
m,n \right)=d\left( m,n \right)+ie\left( m,n \right)$, with
$i=\sqrt{-1}$,
satisfies the discrete Laplace's equation on the adjacency
graph. Using the asymptotic dependence of $\phi\left(
\mathbf{r} \right)$ close to the site of addition, we show that 
for large $|m|+|n|$,
\begin{equation}
D\left( m,n \right)\simeq\pm\frac{1}{2\pi}\sqrt{m+in}.
\end{equation}
Solution of the discrete Laplace's equation on this adjacency graph
with the above boundary condition is
difficult to determine. We numerically calculate the solution on
finite adjacency graphs and extrapolate our results to the asymptotic
limit. We also show that the pattern has an exact eight-fold rotational
symmetry.
\subsection{The effect of multiple sources or sinks}
\begin{figure}[t]
\begin{center}
\includegraphics[scale=0.22,angle=0,clip]{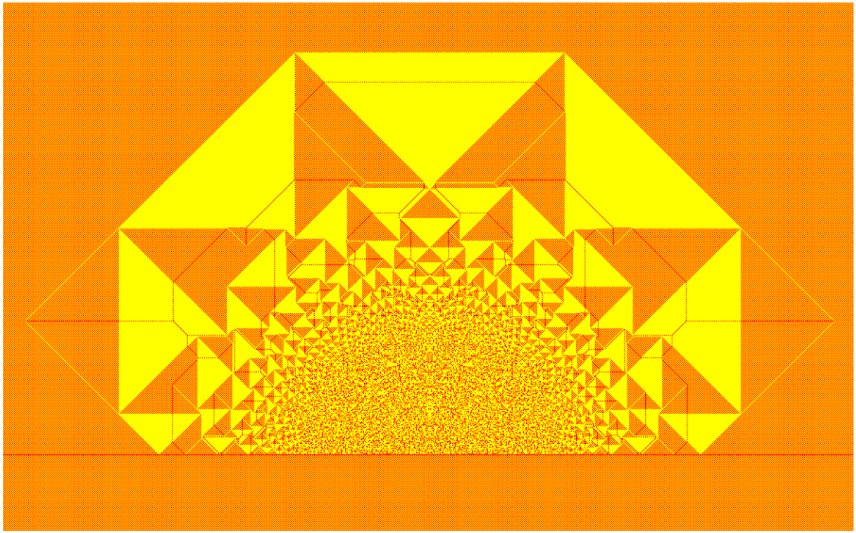}
\label{fig:fsink}
\caption{The pattern produced by adding $N=14\times 10^6$ grains at
the site
$\left( 0,1 \right)$ in the checkerboard background on the F-lattice with
the sink sites placed along the X-axis. Color code: yellow=1, and
red=0.
(Details can be seen in the electronic version using zoom in.)}
\end{center}
\end{figure}
We also studied the patterns where the grains are added at more than
one site or those formed in presence of sink sites. One such
pattern on the F-lattice with the checker board background in presence
of a line of sink sites is shown in the figure ${\color{red}1.4}$.
There are still only two types of patches and like
the single source case, the spatial
distances can be expressed in terms of the solution of the discrete
Laplace's equation on the adjacency graph. However, the structure of
the adjacency graph changes. For the pattern in figure
${ \color{red}1.4 }$, the adjacency graph is still a square lattice but on a
Riemann surface of three sheets.
We have explicitly worked out the spatial distances by numerically
solving the Laplace's equation on this graph. We have also studied the case with two sites of
addition and quantitatively characterized the pattern.

The most interesting effect of the sink sites is that it changes how
different spatial lengths in the pattern scale with the number of added
grains $N$. For example, in the absence of sink sites, the diameter
$\Lambda\left( N \right)$ of the pattern in figure \ref{fig:flatone}
grows as $\sqrt{N}$, for large $N$, whereas in presence of a line of
sink sites next to the site of addition, it changes to $N^{1/3}$. More
precisely, we show that in this case
\begin{equation}
C_{1}\Lambda^{3}+C_{2}\Lambda^{2}\simeq N,
\label{eq:lrel}
\end{equation}
where $C_{1}$ and $C_{2}$ are numerical constants.
For $C_{1}=0.1853$ and $C_{2}=0.528$ this relation describes the $N$ dependence of $\Lambda\left( N \right)$ for
$N$ in the range of $100$ to $10^{5}$, with unexpectedly high accuracy where both sides of the above equation
differs by at most $1$.

We have also studied the case in which the source site is at the corner of a
wedge angle $\omega$, where the wedge boundaries are absorbing. We
show that the relation similar to (\ref{eq:lrel}) is
\begin{equation}
C_{1}\Lambda^{2+\alpha}+C_{2}\Lambda^{2}\simeq N,
\end{equation}
where $\alpha=\omega/\left( \pi +2\omega \right)$. This analysis is extended
to other lattices with different initial height distribution, and also to
higher dimensions.
\subsection{The compact and non-compact growth}
\begin{SCfigure}[][t]
\includegraphics[scale=0.38,clip]{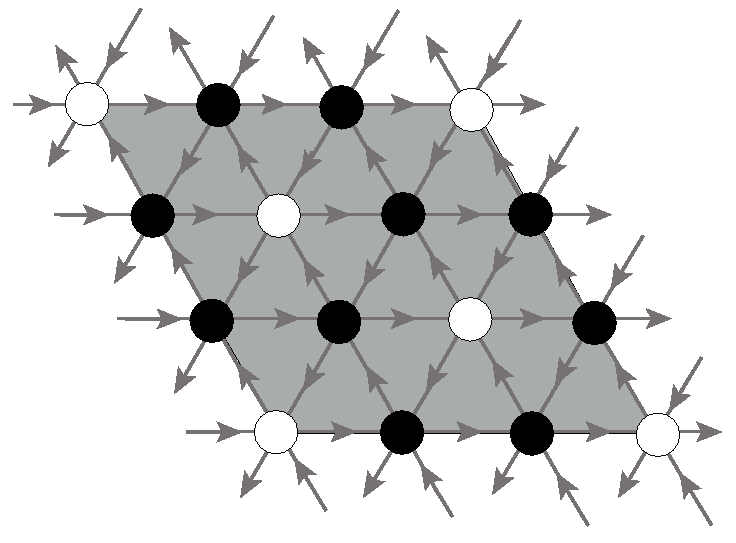}
\caption{\label{fig:trilatt} Directed triangular lattice. Unfilled
circles represent $z=1$ and filled ones $z=2$. The gray area denotes a unit
cell of the periodic distribution.}
\end{SCfigure}
\begin{SCfigure}
\includegraphics[scale=0.5,angle=0,clip]{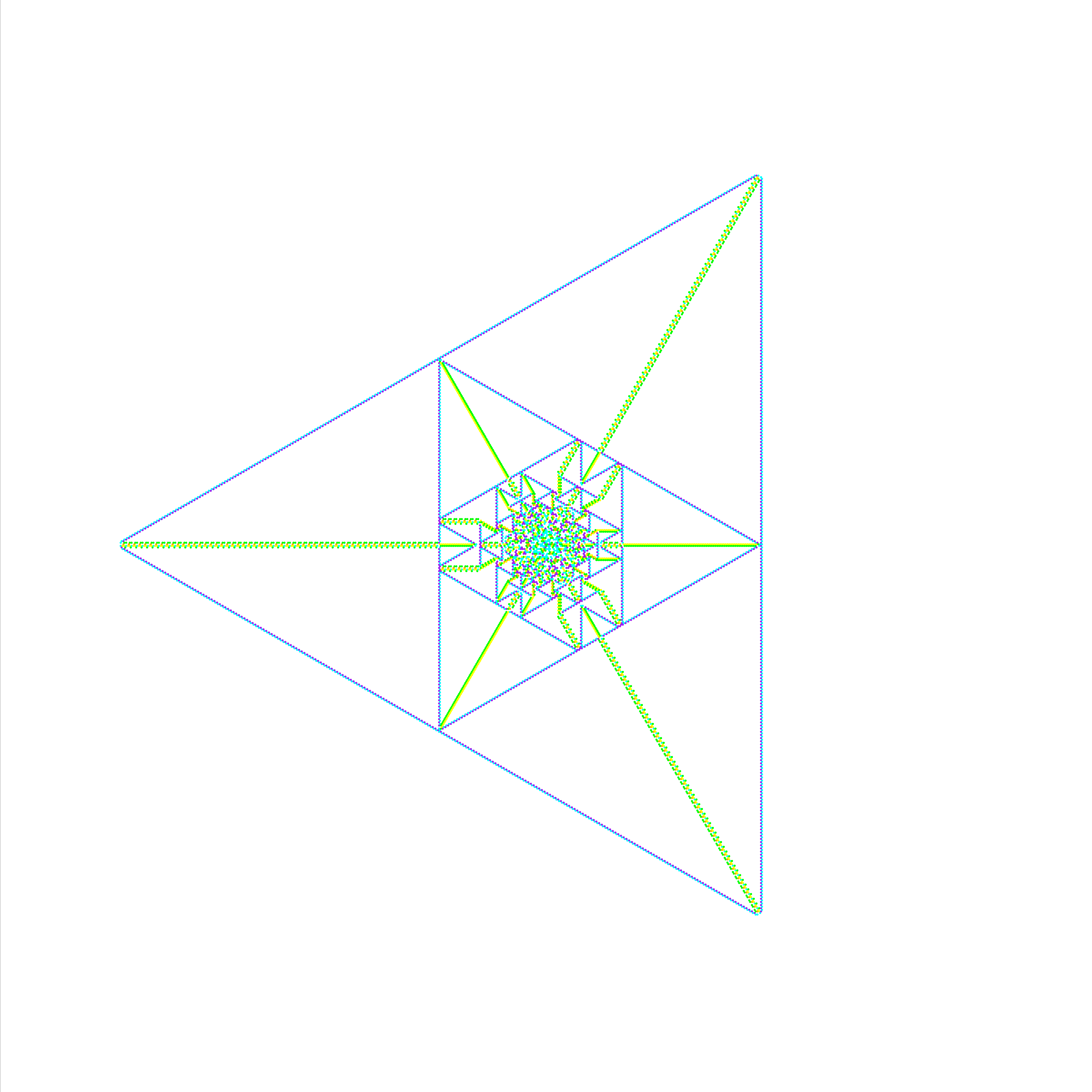}
\label{fig:trione}
\caption{The pattern produced by adding $N=520$ grains at a single site on
the directed triangular lattice. Only
the patch boundaries are shown by colored lines.
(Details can be seen in the electronic version using zoom in.)}
\end{SCfigure}
The growth rate of the patterns closely depends on the background
height distribution.
When the heights at all sites on the background are low
enough, one gets patterns with $\Lambda\left( N \right)$ growing as
$N^{1/d}$ in d-dimensions. We refer to this growth as the compact growth.
However if sites with maximum stable height
in the background form an infinite cluster we get avalanches that do
not stop, and the pattern is not-well defined.
We describe our unexpected finding of an
interesting class of backgrounds, that show an intermediate behavior.
For any $N$, the avalanches are finite, but the diameter of the
pattern increases as $N^{\alpha}$, for large $N$, with $1/2 <\alpha
\le 1$. We call this as non-compact growth. The exact value of $\alpha$ depends on the background. These
patterns still show proportionate growth.

We characterize one such pattern in the asymptotic limit for which $\alpha=1$.
This pattern is produced on a triangular lattice with directed
edges with the background shown in the figure \ref{fig:trilatt}. The
threshold height is $3$ and the
toppling rules are similar to the model on the F-lattice. The corresponding
pattern is shown in figure \ref{fig:trione}. We
show that for this pattern the rescaled toppling function $\phi$ is
piece-wise linear in $\xi$ and $\eta$. The adjacency graph is also simpler,
it is a hexagonal lattice and like the previous examples, the spatial
distances in the pattern are expressed in terms of the solution of the Laplace's equation on
this graph. We determine the solution in a closed integral form.

\section[Zhang model]{Emergence of quasi-units in the Zhang model}\label{sec:zhang}
The Zhang model in one dimension has the remarkable property that in spite of
the randomness in the amount of energy added during driving, the
steady state energy per site has a very sharply peaked
distribution in which the width of the peak is much less than the
spread in the input amount. One such distribution is shown
in figure \ref{fig:emergence}. The threshold energy $E_{c}=1.5$ and
the driving energy is chosen from a uniform distribution in the range
$\left[ 0.76:1.24 \right]$. The distribution has a spike at $E=0$ and
a peak at $E=1.0$ with a standard deviation $\sigma=0.0135$. In general
the width of the peak decreases with the increase of
the system size.
\begin{SCfigure}
\includegraphics[scale=0.7,angle=0,clip]{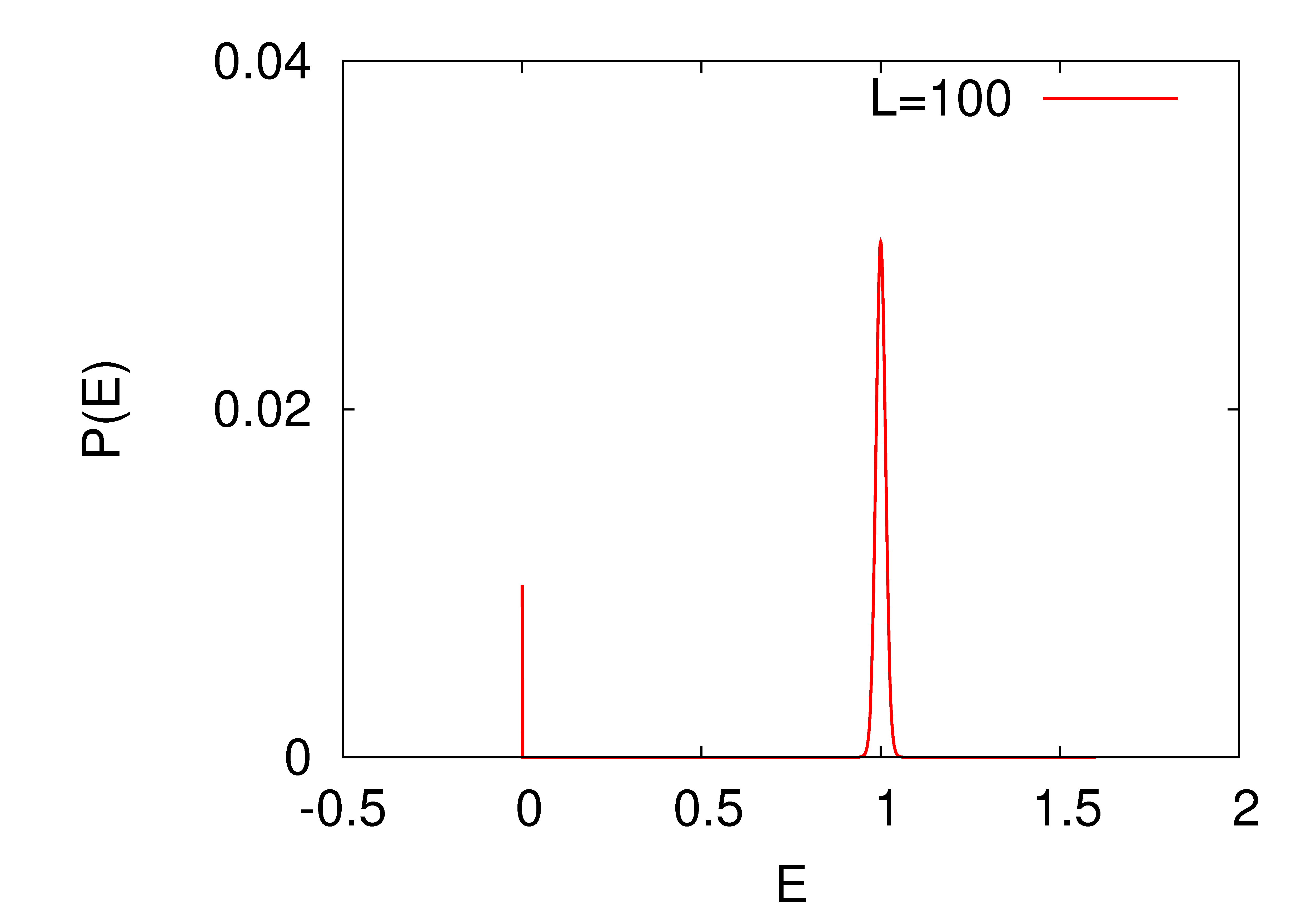}
\label{fig:emergence}
\caption{The probability distribution of energy per site in the Zhang model defined on
a linear chain of length $L=100$.}
\end{SCfigure}

This behavior was noticed in numerical simulations in
both one and two dimensions, and it is called as ``emergence of
quasi-units''. It is argued that for large systems, the behavior would
be the same as in the discrete model \cite{zhang}. Fey et. al.
\cite{fey} have
shown that for some choices of the distribution of input energy, in
one dimension, the variance of energy does go to zero as length of the
chain $L$ goes to infinity. However they did not show how fast the
variance decreases with $L$.

We study the emergent behavior in the one-dimensional
model by looking at how the added energy is redistributed among
different sites in the relaxation process. Let the amount of energy
for driving at time $t$ be chosen randomly from a uniform interval $\left
[1-\epsilon, 1+\epsilon \right ]$. The time is counted in terms of
the relaxation steps and at one time step all the unstable sites relax
together. We write the amount of driving energy at time $t$ as
\begin{equation}
\Delta_{t}=1+\epsilon u_{t},
\end{equation}
where $u_{t}$ is uniformly distributed in the interval $\left[ -1,1
\right]$. We decompose the energy variable at a site $x$ in a
relaxation time step $t$ as
\begin{equation}
E\left( x,t \right)=Nint[E\left( x,t \right)]+\epsilon \eta\left( x,t
\right)
\end{equation}
where $Nint$ refers to the nearest integer value. The integer part
evolves as the integer heights in the ASM. The function $\eta\left(
x,t \right)$ is independent of $\epsilon$ and is a linear function of
$u_{t}$. The precise function depends on the evolution history
$H_{t}$ which
is determined by the initial configuration and the sequence of
addition sites $d_{t}$ up to time $t$. We assume that at starting time
$t=0$, the variable $\eta\left( x,t=0 \right)$ are zero for all
$x$, and the initial configuration is a recurrent configuration of the
ASM. We define $G\left( x, t|d_{t'},t',H_{t} \right)$ by
\begin{equation}
\eta\left( x,t| \left\{ u_{t} \right\},H_{t}
\right)=\sum_{t'=1}^{t}G\left( x,t|d_{t'},t',H_{t} \right)u_{t'}.
\end{equation}
We show that this is equal to the probability $Prob\left( x, t
|d_{t'},t',H_{t}
\right)$ of a marked grain in
the corresponding ASM, added at the site $d_{t'}$ at time $t'$ following
the history $H_{t}$, to be found at site $x$ at time $t$
\begin{equation}
G\left( x, t|a_{t'},t',H_{t} \right)=Prob\left( x, t|a_{t'},t',
H_{t} \right).
\end{equation}
Then we show that the variance of energy in the steady state at a site $x$
can be written as
\begin{equation}
Var\left[ E\left( x \right)
\right]=L/(L+1)^{2}+\epsilon^{2}\Sigma^{2}\left( x \right),
\end{equation}
where
\begin{equation}
\Sigma^{2}\left( x \right)=\lim_{t\rightarrow \infty}Var\left[
\eta\left( x,t \right)
\right]=\frac{1}{3}\sum_{\tau=0}^{\infty}\frac{1}{L}\lim_{t'\rightarrow\infty}\sum_{x'}\overline{G^{2}
(x,t'+\tau|x',t',H_{t} )}.
\end{equation}
The overbar denotes averaging over the evolution histories $H_{t}$.

The function $\overline{G}$ has been studied in \cite{punya}, but for
$\overline{G^{2}}$ the calculation is much more difficult. However
analyzing the behavior in different limits we introduce a
phenomenological expression for it and show that in the large
$L$ limit the variance of energy at a site $x$ has a scaling form
$L^{-1}g\left( x/L \right)$. We determine an approximate form of the
scaling function 
\begin{equation}
g\left( \xi \right)=A\ln\left( 1+\frac{1}{B\sqrt{\xi\left( 1-\xi
\right)}} \right),
\end{equation}
where $A$ and $B$ are numerical constants. This expression agrees very
well with the results of our numerical simulation.

\section[Stochastic models]{Stochastic models of SOC}\label{sec:manna}
The sandpile models with stochastic toppling rules are important subclass
of SOC models. These models are able to describe the avalanche behavior seen experimentally in the
piles of granular media much better than the deterministic models \cite{osloexpt}.
Also in the numerical studies, one gets better scaling collapse, and
consequently, more reliable estimates for the values of the critical
exponents, than for the models with deterministic toppling rules
\cite{stochnum}. There is good numerical evidence that these models
constitute a universality class different from their deterministic
counter parts \cite{univ1,univ2,numbtw,numsand1,numsand2}.

Unfortunately, at present, the theoretical understanding of the models
with stochastic toppling rules is much less than the
deterministic models, even the characterization of the
steady state is not known for the one-dimensional Manna model.

We consider the Manna model with Abelian toppling rule on a linear
chain of length $L$, described in section \ref{sec:intro}. Any stable
state of the model is expressed as a linear combination of the stable
height configurations with the coefficients being the probability of finding
the system in that configuration. We define an addition operator
$\mathbf{a}_{i}$
corresponding to the addition of grains at the site $i$, which acting on a
stable state takes it to another stable state achieved following the
stochastic relaxation rule.

The Abelian property of the operators is shown in \cite{dm}.
However, unlike the deterministic ASM, the inverse operator $\left\{
\mathbf{a}_{i}^{-1} \right\}$ need not exist. This makes
the determination of the matrix form of the operators difficult for
this model. In a straightforward exact numerical calculation of the steady state
one needs to invert a matrix of size $4^{L}\times 4^{L}$. We use the
operator algebra of the addition operators to obtain an efficient
method which requires inverting a matrix only of size $2^{L}\times
2^{L}$.

Using the conservation of sand grains during toppling we show that
\begin{equation}
\mathbf{a}_{i}^{2}=\frac{1}{4}\left(
\mathbf{a}_{i-1}+\mathbf{a}_{i+1} \right)^{2}.
\label{eg}
\end{equation}
In general the operators $\mathbf{a}_{i}$ need not be diagonalizable.
However, using the Abelian property we construct a common set of
generalized eigenvectors for all the operators such that in this basis the
matrices simultaneously reduce to the Jordan block form. Corresponding to
each block there is at least one common eigenvector. Then from
eq. (\ref{eg}), the eigenvalues satisfy
\begin{equation}
a_{i}^{2}=\frac{1}{4}\left( a_{i-1} +
a_{i+1}\right)^{2},
\end{equation}
with the boundary condition $a_{0}=a_{L+1}=1$. We reduce this set of
coupled quadratic equations to a
set of linear equations by taking square root
\begin{equation}
\eta_{i}a_{i}=\frac{1}{2}\left( a_{i-1}+a_{i+1} \right),
\end{equation}
where $\eta_{i}=\pm 1$. There are $2^{L}$ different choices for the
set of $L$ different $\eta$'s and for each such choice, we get a set
of eigenvalues $\left\{ a_{i} \right\}$. In general there will be
degenerate sets of eigenvalues. We show that degeneracies are possible
only for $L= 3 \left(\rm{mod ~} 4 \right)$ and the set can at most be doubly
degenerate. This implies that the largest dimension of a Jordan
block is $2$. We determine the matrix elements inside each block in
terms of the solution of a set of coupled linear equations.

We define a transformation matrix between this generalized
eigenvector basis and the height configuration
basis.
\begin{equation}
|\left\{ z_{i} \right\}\rangle = \sum_{j} \mathbf{M}_{\left\{
z_{i}\right\},j} |\psi_{j}\rangle,
\end{equation}
where $|\left\{ z_{i} \right\}\rangle$ is the basis vector
corresponding to the height configuration $\left\{ z_{i} \right\}$ and
$|\psi_{j}\rangle$ is the $j$ th generalized eigenvector.
Any height configuration can be generated by an appropriate sequence
addition operators acting on the all empty configuration. As the action of the
addition operators on the generalized eigenvectors are known, this
determines the elements of the transformation matrix $\mathbf{M}$.

Given $\mathbf{M}$ we can get the eigenvectors of the addition
operators in the configuration basis, in particular, the steady state
vector, by the inverse transformation
\begin{equation}
|\psi_{j}\rangle=\mathbf{M}^{-1}|\left\{ z_{i} \right\}\rangle.
\end{equation}

We numerically calculate the inverse matrix $\mathbf{M}^{-1}$ and
determine the exact steady state for systems of small sizes. This
result is then extrapolated to determine the asymptotic density
profile in the steady state. Our results suggest that
the steady state density averaged over all sites approaches the
asymptotic density as
\begin{equation}
\frac{1}{\rho_{L}}=\frac{1}{\rho_{\infty}}+\frac{B}{\left( L+\delta,
\right)^{\nu}}
\end{equation}
with $\rho_{\infty}=0.953$ which is close to the Monte Carlo estimate
$0.949$. We also find that the ratio of probabilities of the most
probable to the least probable configuration varies as $\exp{\left(
0.94 L \log L \right)}$. We show that the steady state is not a
product measure state.

The method described is easily generalized to other stochastic Abelian sandpile
models and we discuss some examples of them.

\chapter{List of publications}
\begin{itemize}
\item[(1)]Tridib Sadhu and Deepak Dhar, \textit{Emergence of quasiunits
in the one-dimensional zhang model}, Phys. Rev. E. $\mathbf{77}$
(2008). no. 3, 031122.
\item[(2)]Tridib Sadhu and Deepak Dhar, \textit{Steady state of
stochastic sandpile models}, Journal of Statistical Physics
$\mathbf{134}$ (2009), 427.
\item[(3)]Deepak Dhar, Tridib Sadhu, and Samarth Chandra,
\textit{Pattern formation in growing sandpiles}, Europhysics
Letters $\mathbf{85}$ (2009), no. 4, 48002.
\item[(4)] Tridib Sadhu and Deepak Dhar, \textit{Pattern formation in
growing sandpiles with multiple sources and sinks}, Journal of
Statistical Physics $\mathbf{138}$ (2010), 815.
\item[(5)] Tridib Sadhu and Deepak Dhar, \textit{Pattern formation in
fast growing sandpiles}, Phys. Rev. E. $\mathbf{85}$
(2012), 021107.
\end{itemize}

\chapter{Introduction}\label{ch:intro}
\section{Self-organized criticality (SOC)}
One of the most striking aspects of physics is the simplicity of its
laws. Maxwell's equations, Schrodinger's equation, and Hamiltonian
mechanics are simple and expressible in few lines. However every place
we look, outside the textbook examples, we see a world of amazing complexity:
huge mountain ranges, scale free coastlines, the delicate ridges on the surface of sand dunes,
the interdependencies of financial markets, the
diverse ecologies formed by living organisms are few examples. Each situation is
highly organized and distinctive, but extremely complex.
So why, if the basic laws are simple, is the world so complicated? The idea of Self
Organized Criticality was born aiming to give an explanation for this
ubiquitous complexity \cite{jensen}. In this chapter the basic
concepts related to SOC, that will be
important for this thesis, are introduced.

The examples, cited above, share
a common feature: a power-law tail of the correlations. Consider the
two point correlation of a quantity $\Delta h\left(\mathbf{x} \right)=h\left( \mathbf{x} \right)-\bar{h}$, where $h\left( \mathbf{x}
\right)$ is the height at a place $\mathbf{x}$ in a mountain range, and
$\bar{h}$ is its average value. The function $\langle \Delta
h(\mathbf{x+r})\Delta h(\mathbf{x}) \rangle$
increases as $r^{\delta}$, with the exponent $\delta$ varying very
little for different mountain ranges. Similar distribution with
extended tails is observed in many other natural phenomena:
Gutenberg-Richter law in earth quake \cite{gutenberg}, Levy
distribution in stock market price variations \cite{levy}, Hacks law in River networks \cite{hackslaw1,hackslaw2} etc. Such power-law
distributions entail scale invariance --- there are no macroscopic
spatial scales other than the system size,
in terms of which one can describe the system, making it complex.

Such features are familiar to physicist in equilibrium systems undergoing phase
transition. In standard critical phenomena there are control
parameters such as temperatures, magnetic field, which requires to be
fine tuned by an external agent, to reach the critical point. This is
unlikely to happen in naturally occurring processes such as formation
of mountain ranges, earth quakes or even stock markets. These are
non-equilibrium systems brought to their present states, by their
intrinsic dynamics --- and not by a delicate selection of temperature,
pressure or similar control parameters. \footnote{Per Bak, in his
book \cite{blind}, puts this in an interesting comment---``The nature is
operated by a 'blind watchmaker' who is unable to make continuous fine
adjustments''}.

In the summer of 1987, Bak, Tang and Wiesenfeld(BTW) published a paper
\cite{btw}
proposing an explanation to such ubiquitous scale invariance. They argued that the dynamic which gives
rise to the robust power-law correlations seen in the non-equilibrium
steady states in nature must not involve any fine tuning of
parameters. It must be such that the systems under their \textit{natural
evolution} are driven to a state at the boundary between the stable and
unstable states. Such a state then shows long range spatio-temporal
fluctuations similar to those in equilibrium critical phenomena. The
complex features appear spontaneously due to a cooperative behavior
between the components of the system. They called this
self-organized criticality as the system self-organizes to its
critical steady state.

SOC nicely compliments the idea of chaos. In the latter,
dynamical systems with a few degrees of freedom, say as little as
three, can display extremely complicated behavior. However,
a statistical description of this randomness is predictable in the
sense that, the signals have a white noise spectrum, and not a
power law tail. A Chaotic system has little memory of the past, and it
is easy to
give a statistical description of such behavior. In short, chaos does not explain complexity.
On the other hand, in SOC, generally, we start with systems of many
degrees of freedom, and find a few general features which are also
statistically predictable, but has a power-law spectra leading to
complex behavior. In certain dynamical systems, \textit{e.g.},
logistic maps, there are points (the Feigenbaum point \cite{chaos}) in the parameter space, which
separates states with a predictable periodic behavior and chaos. At
this transition point there is complex behavior, with power-law
correlations. SOC gives description of how systems, under their
own dynamics, without external monitoring, reaches this very special
point (``edge of chaos''), explaining the robust complex behavior in natural systems.

In the book ``How nature works?'', Per Bak gives various kinds of
natural examples of SOC, of which the canonical one is the sandpile.
On slowly adding grains of sand to an empty table, a pile will grow
until its slope becomes critical and avalanches start spilling over
the sides. If the slope is too small, each grain just
stays at the place where it lands or creates a small avalanche. One can understand the motion of
each grain in terms the local properties, like place, the neighborhood
around it etc. As the process continues, the slope of the pile become
steeper and steeper. If the slope becomes too large, a large
catastrophic avalanche is likely, and the slope will reduce. Eventually, the slope reaches a critical value
where there are avalanches of all sizes. At this point,
the system is far out of balance, and its behavior can no longer be
understood in terms of the behavior of localized events. The
system is invariably driven towards its critical state.

\section{Theoretical models}
In order to have a mathematical formulation of SOC,
BTW studied a so-called cellular automata known
as the sandpile model \cite{btw}, which is discrete in space, time and in its
dynamical variables. The model is defined on a two dimensional square
lattice where each site $i$ has a state variable $z_{i}$ referred as
height, which takes only positive integer values. This integer can be thought
of as representing the amount of sand at that location
or in another
sense it represents the slope of the sandpile at that point. Neither
of these analogies is fully accurate, the model has aspects of both.
One should consider it as a mathematical model of SOC, rather than an
accurate model of physical sand.

A set of local dynamical rules defines the evolution of the model: At each time
step a site is picked randomly, and its height $z_{i}$ is increased by
unity. In this thesis, this step will be referred as the
\textit{driving}. If the height now is greater than or equal to a threshold value
$z_{c}=4$, the site is said to be unstable. It relaxes by toppling
whereby four sand grains leave the site, and each of the four
neighboring sites gets one grain. If there are any unstable sites
remaining, they too are toppled, all in parallel. In case of toppling at a site at the
boundary of the lattice, grains falling outside the lattice are
removed from the system. This process continues until all sites are
stable. This completes one time step. Then, another site is picked randomly, its height is increased by
$1$, and so on.

The following example illustrates the dynamics. Let the lattice size
be $3\times 3$ and suppose at some time step the following
configuration is reached where all sites are stable.
$$
\begin{tabular}{|c|c|c|}
\hline
2 & 3 & 2 \\
\hline
3 & 3 & 0 \\
\hline
1 & 2 & 3 \\
\hline
\end{tabular}
$$
We now add a grain of sand at randomly selected site: let us say the
central site is chosen. Then the configuration becomes the
following
$$
\begin{tabular}{|c|c|c|}
\hline
2 & 3 & 2 \\
\hline
3 & {\color{red}4} & 0 \\
\hline
1 & 2 & 3 \\
\hline
\end{tabular}
$$
The central site is not stable, and therefore it will topple and
the configuration becomes
$$
\begin{tabular}{|c|c|c|}
\hline
2 & {\color{red}4} & 2 \\
\hline
 {\color{red}4} & 0 & 1 \\
\hline
1 & 3 & 3 \\
\hline
\end{tabular}.
$$
This configuration has two unstable sites, so both
will topple in parallel. Since these are at the boundary, two grains
will be lost, on toppling. The new result is
$$
\begin{tabular}{|c|c|c|}
\hline
{\color{red}4}  & 0 & 3 \\
\hline
0 & 2 & 1 \\
\hline
2 & 3 & 3 \\
\hline
\end{tabular},
$$
and further toppling leads to
$$
\begin{tabular}{|c|c|c|}
\hline
0 & 1 & 3 \\
\hline
1 & 2 & 1 \\
\hline
2 & 3 & 3 \\
\hline
\end{tabular}
$$
This is a configuration with all sites stable. One speaks in this case
of an avalanche of size $s=4$, since there are four topplings. Another
measure is the number of steps required for relaxation, which in this case
is $t=3$.
For large lattices, in the steady state, the distribution of avalanche sizes and durations
display a long power-law tail, with an eventual cutoff determined by
the finite size of the system.

Since the original sandpile model by BTW a large number of variations
of the model have been studied (see \cite{dharphysica06,jensen} for
reviews). These are mostly extended
systems with many components, which under steady drive reaches a
steady state where there are irregular burst like relaxations and long
ranged spatio temporal correlations.
It is to be noted that
in these models the complexity is not contained in the evolution rules itself, but
rather emerges as a result of the repeated local interactions among
different variables in the extended system.

In the rest of this chapter, I will introduce three of the most
studied models of sandpile and the techniques used to analyze them.

\subsection{Deterministic abelian Sandpile Model
(DASM)}\label{sec:dasm}
This is the most studied model due to it analytical tractability. In a
series of papers, Deepak Dhar and his collaborators have shown that
this model has some remarkable mathematical properties. In
particular, the critical state of the system has been
well characterized in terms of an abelian group. In the following I
will generally follow the discussion in \cite{dharphysica06}.

The model is a generalized BTW model on any general graph with $N$ sites labeled
by integers $1,2,3\cdots N$. To make things precise, I will start with
some definitions. A configuration $C$ for the sandpile model is
specified by a set
of integer heights $\left\{ z_{i} \right\}$ defined on the $N$ sites
of the graph. We denote a threshold value of the height at a site $i$ as $z_{i}^{c}$. The system is driven like the BTW
model by adding one sand grain at a randomly chosen site which
increases the height at that site by $1$. The toppling rules are
specifies by a $N\times N$ toppling matrix $\Delta$ such that on toppling at site
$i$, heights at all sites are updated according to the rule: 
\begin{equation}
z_{j}\rightarrow  z_{j}-\Delta_{i,j}  \rm{~ ~ for~ ~ every
~}j.
\end{equation}
For example in the BTW model on a square lattice
\begin{equation}
\Delta_{i,j}=\left\{ \begin{array}{rl}4 &\mbox{for $i=j$}\\-1
&\mbox{for $i,j$ nearest neighbors}\\0 &\mbox{otherwise}\end{array}
\right.
\end{equation}
Evidently the matrix $\Delta$ has to satisfy some conditions to ensure
that the model is well behaved. These are
\begin{itemize}
\item[1.] $\Delta_{i,i}>0$, for all $i$. (Height decreases at the
toppled site)
\item[2.] $\Delta_{i,j}\le0$, for all $j\ne i$. (Heights at other
sites are increased or unchanged)
\item[3.] $\sum_{j}\Delta_{i,j}\ge 0$ for all $i$. (Sand is not
generated in toppling)
\item[4.] Each site is connected through toppling events to at least
one site where sand can be lost, such as the boundary.
\end{itemize}
Without loss of generality we choose $z_{i}^{c}=\Delta_{i,i}$ (This
only amounts to defining the reference level for the height variables).

With this convention, if all $z_{i}$ are initially non-negative they
will remain so, and we restrict ourself to configurations $C$ belonging to
that space, denoted by $\Omega$. Let $\mathbb{S}$ be the space of stable
configurations denoted by $C_{s}$ where the height variables at each
site are below threshold. The property $4$ above ensures that
stability will always be achieved in a finite time.

We formalize the addition of sand to a stable configuration by
defining an ``addition operator'' $a_{i}$ so that
$a_{i}C_{s}$ is the new stable configuration obtained by taking
$z_{i}\rightarrow z_{i}+1$ and then relaxing.

The mathematical treatment of ASM relies on one simple property it
possesses: The order in which the operations of particle addition and
site toppling are performed does not matter. Thus the operators
$a_{i}$ commute \textit{i.e.}
\begin{equation}
a_{i}a_{j}C_{s}=a_{j}a_{i}C_{s} \mbox{~ ~  for every $i,j$ and
$C_{s}$}.
\end{equation}
The proof uses the linearity of the toppling processes \cite{dharphysica06}.
In the relaxation processes represented by the two sides of the above
equation, the order of topplings can be changed, but the final
configurations are equal. An example of this abelian nature is the
process of long addition of multi-digit numbers. In this example the
toppling process is like carrying.

%
Note that, there are some ``garden of Eden'' configurations that once
exited can not be reached again. For example, in the
BTW model on square lattice, system can never reach a state with two
adjacent $z_{i}=0$. This is because in trying to topple a site to
zero, the neighbor gains a grain, and vice versa. This leads to the
definition of the recurrent state space $\mathcal{R}$ which consists of any stable
configuration that can be achieved by adding sand to some other
recurrent configuration. This
set is not empty since one can always reach a minimally stable
configuration defined by having all $z_{i}=z_{c}-1$.

Dhar proved \cite{dharprl} another remarkable property that the addition operators
$a_{i}$ have unique inverses when restricted to the recurrent space;
that is, there exists a unique operator $a_{i}^{-1}$ such that
$a_{i}\left (a_{i}^{-1}C_{s}\right )=C_{s}$ for all $C_{s}$ in $\mathcal{R}$.
This can be easily seen from the fact that there are finite number of
configurations in $\mathcal{R}$, so for some positive period
$p$, $a_{i}^{p}C_{s}=C_{s}$ with $C_{s}$ a recurrent configuration.
Using the abelian property it can be shown that the period $p$ is same
for all $C_{s}\in \mathcal{R}$. Then $a_{i}^{p-1}$ is the inverse for
$a_{i}$.

These properties of $a_{i}$ have some interesting
consequences \cite{dharprl}. One is that in the steady state all the recurrent
configurations are equally probable. Also, the number of recurrent
states is simply the determinant of the toppling matrix
$\Delta$. For large square lattices of $N$ sites this determinant can be
found easily by Fourier transform. In particular, whereas there are
$4^N$ stable states, there are only $\left( 3.2102\cdots
\right)^{N}$ recurrent states. Thus starting from an arbitrary state
and slowly adding sand, the system self-organizes to an exponentially small
subset of states, which are the attractor of this dynamics.

There are many more interesting properties of the DASM,
\textit{e.g.}, using a burning algorithm \cite{burning}, it is
possible to test whether an arbitrary
configuration is recurrent. Using this
algorithm it can also be shown that the model is related to statistics
of spanning-trees
on the lattice, as well as with the $q\rightarrow 0$ limit of the
Potts model \cite{burning,dharphysica06}. As several results are
known about spanning tree these equivalence help in relating properties
of DASM to known properties of spanning trees.

In spite of these interesting mathematical properties, the exponents
characterizing the power-law tail in the distribution of
avalanches are still difficult to
determine analytically on most lattices, and computer
simulations are still needed. In fact, on a square lattice, the
numerical values estimated by different people have shown a wide range
of values. It has been argued that the simple finite size scaling does
not work for the avalanche distribution and instead it has a
multi-fractal character \cite{kadanoff}. In some simpler quasi-one dimensional
lattices it has been shown that simple linear combination of two
scaling forms provides an adequate description \cite{ali}.

For higher dimensional lattices it has been shown by Priezzhev that
the upper critical dimension for the models is $4$ \cite{ucd}. For
lattices of dimension above
$4$, the avalanche exponents take mean field values and can be deduced
from the exact solution of the model on a Bethe lattice
\cite{bethe}.

\subsection{Zhang model}
The Zhang model, introduced by Zhang in 1989, differs from the DASM in
two aspects: first, the height variables $z_{i}$ are continuous and
takes non-negative real values. A site is unstable if its height is above threshold, and
it topples by equally dividing its entire content among
its nearest neighbors, and itself becoming empty. Second, the external
perturbation is not by adding height $1$, but by a quantum chosen randomly
from an interval $\left[ a,b \right)$, where $0\le a < b$ are
positive real numbers.

Here, is an example of the Zhang model in one dimension. Let the
threshold height is $z_{i}^{c}=1.5$, same for all $i$, and an initial configuration is
$$
\begin{tabular}{|c|c|c|}
\hline
0.8 & 1.4 & 0 \\
\hline
\end{tabular}
$$
Now a time step begins by an addition to a random site, of a random
amount chosen from the interval $\left[ 0,1.5 \right)$. Let the amount
is $1.0$ and the site is the central site. After addition the result
is
$$
\begin{tabular}{|c|c|c|}
\hline
0.8 & 2.4 & 0 \\
\hline
\end{tabular}
$$
Because the middle site is unstable, an avalanche starts:
$$
\begin{tabular}{|c|c|c|}
\hline
2.0 & 0 & 1.2 \\
\hline
\end{tabular}
\rightarrow
\begin{tabular}{|c|c|c|}
\hline
0 & 1.0 & 1.2 \\
\hline
\end{tabular}
$$
In case of two or more unstable sites, all are toppled in parallel.

Since the addition amount is random, a stable site could in principle
have any height between zero and the threshold and the stationary
distribution could be very different from that of the DASM, where
only
discrete values are encountered. Nevertheless, when one simulates the
model on large lattices in one and two dimensions, the stationary
heights tend to concentrate around nonrandom discrete values. This is
known as the ``emergence of quasi-units'' \cite{zhang}. It appears that altering
the \textit{local} toppling rules of the DASM, does not have that much effect on the
\textit{global} behavior after all, if the lattice size is large.

This behavior led to the conjecture that in the thermodynamic limit
the critical behavior is identical to that of DASM. However, due to the
changed toppling rules, the dynamics is no longer
abelian,
and determining the steady state is quite difficult, even in
one-dimension. In fact, Blanchard \textit{et. al.} have shown that the
probability distribution of heights in the steady state, even for the
two site problem, has a multi-fractal character
\cite{blanchard}.

This status was unchanged for over a decade until Fey
\textit{et. al.} showed that on a one-dimensional lattice, for some
specific choice of the amount of addition, the toppling becomes
abelian. Using this they showed that, indeed, the model is on the same
universality class of the DASM. However, the analysis in two dimension
is still an open problem.

\subsection{Manna model}
Another important class of the sandpile models are with stochastic
toppling rules. The first model of its kind was studied by Manna in
1991, and is known as the Manna model \cite{manna}.

The evolution rules of this sandpile in $d$-dimensions are very
similar to the ones defined for the DASM. In fact, the driving rule
and the dissipation rules at the ``boundary'' remain the same. But in a toppling, an
unstable site redistributes \textit{all the sand grains} between sites randomly
chosen amongst its $2d$ nearest neighbors.
\begin{eqnarray}
z_{i}&\rightarrow&0\nonumber\\
z_{j}&\rightarrow&z_{j}+1 \mbox{~ ~ for $z_{i}$ sites chosen
randomly amongst n.n. of $i$.}\nonumber
\end{eqnarray}

The randomness in the evolution rule is a relevant change in the
dynamics, which makes it non-abelian. It is possible to get back the
abelianness by a simple modification in the toppling rule, which I
will discuss in detail in the later chapters. However, the addition
operators defined appropriately do not form a group anymore and this
makes the analysis less tractable even for a linear chain.

The steady state is very different from its deterministic counter part
\textit{e.g.} on
a simple linear chain the different recurrent states are not equally
probable, unlike the deterministic model. Also the avalanche
distribution can be satisfactorily described by simple finite
size scaling.
Another evidence of the differences between these models is in the way
the avalanches spread over the lattice \cite{milshtein}. The
distribution of number of toppling per site in a typical avalanche for
both DASM and Manna model on a square lattice are shown in Fig. \ref{fig:ava}. For the
DASM, one can see a shell structure in which all sites that toppled
$T$ times form a connected cluster with no holes, and these sites are
contained in the cluster of sites that toppled $T-1$ times, and so on.
On the other hand, the toppling distribution exhibits a random
avalanche structure with many peaks and holes.
\begin{figure}
\begin{center}
\includegraphics[width=6.0cm,clip]{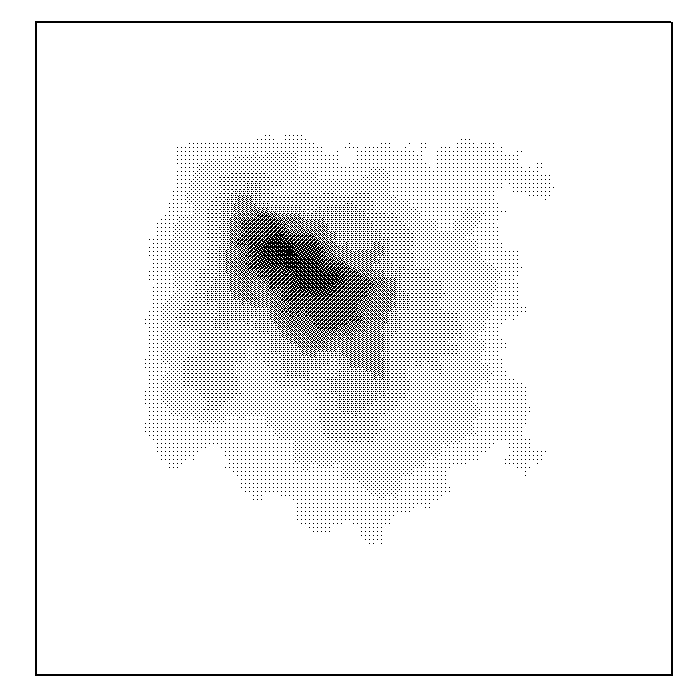}
\includegraphics[width=6.0cm,clip]{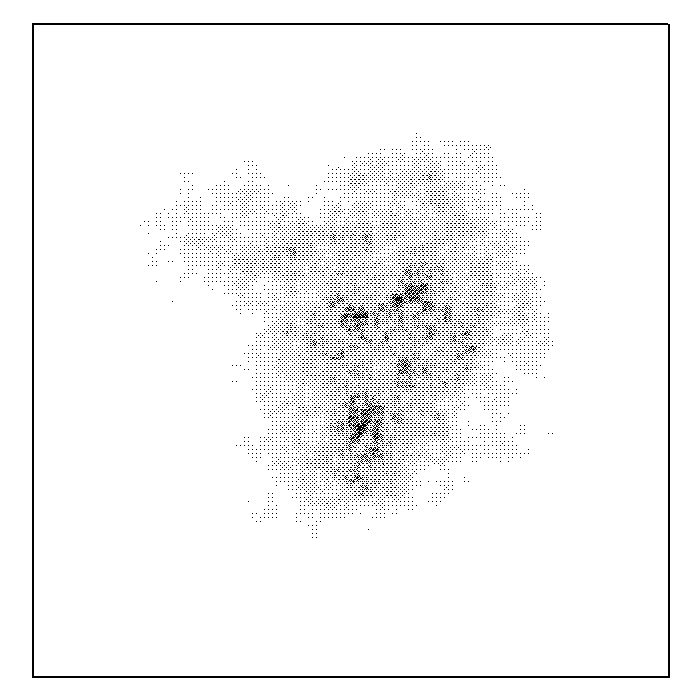}
\label{fig:ava}
\caption{Number of toppling per site for a typical avalanche in
(a) DASM
and (b) Manna model. The darker shades denote more topplings. (Courtesy
\cite{milshtein})}
\end{center}
\end{figure}

For many years, the universality of the manna model was a controversial
question. At present there are convincing numerical evidences that in
dimension up to $3$, they have a different critical behavior, from its
deterministic counterpart, with a different set of critical exponents.
However in dimensions $d\ge 4$, both DASM and Manna model take the same
mean-field values of critical exponents.

\section{Universality in the sandpile models.}
Since the work by BTW, a large number of different models have been
studied \textit{e.g.} sandpile models with many variations of the BTW toppling
matrix \cite{kadanoff} or sand grain distribution rules
\cite{maslov}, stochastic topplings \cite{manna}, with
activity inhibition \cite{manna2}, continuous
height models \cite{zhang}, loop erased random walk \cite{loop},
Takayasu aggregation model \cite{takayasu}, train model
\cite{train1,train2}, non-abelian sandpile directed sandpile model
\cite{lee,pan,threshold,pruessner}, forest-fire model \cite{forest},
Olami-Feder-Christensen model \cite{ofc} \textit{etc} (and many more could
have been defined). Most of these models could only be studied numerically,
and for a while it seemed that each new variation studied belong to a
new universality class each with its own set of critical exponents. It is a fair
question to ask, what is the generic behavior?

Although this question is not yet resolved
completely, by now, there has been a fair amount of understanding of
this problem. The universality classes with renormalization group flow
in these models can be summarized in the Fig. \ref{fig:universality}.
\begin{SCfigure}
\includegraphics[width=9.0cm,clip]{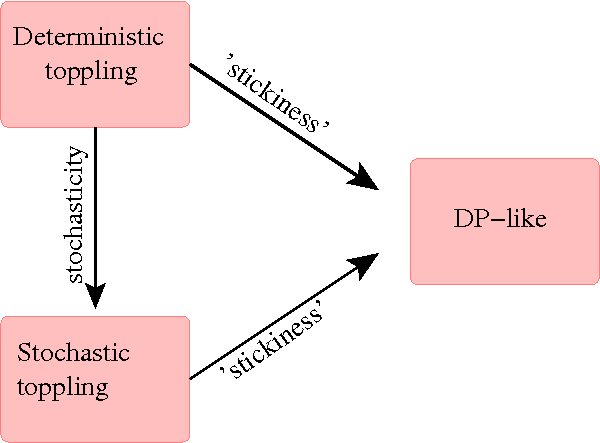}
\caption{ A schematic flow diagram of renormalization
group flows between different fixed points of sandpile models.}
\label{fig:universality}
\end{SCfigure}

There are sufficient numerical evidence that sandpile models with
deterministic toppling rules (DASM) and stochastic toppling rules
(Manna model)
constitute different universality classes. There are also several
other model which show critical exponents different from these two
\cite{sneppen1,sneppen2,grass}.
They are related to the directed
percolation (DP) \cite{dp}, which describes the
active-absorbing state phase transition in a wide class of
reaction-diffusion systems. The activity in avalanches in sandpile
can grow, diffuse or die, and any stable configuration is an absorbing
state. Thus one would expect that in general the sandpile should belong to the
universality class of active-absorbing state transition with many
absorbing states \cite{path}. However, these models do not involve any conserved
fields. In Manna and DASM-type models, it is this presence of conservation laws of sand that makes the
critical behavior different from DP \cite{vespignani}.

Recently, the effect of non-conservation has been explicitly studied
\cite{mohanty1,mohanty2}
by introducing stickiness in the toppling rules (\textit{i.e.} there
is small probability that the incoming particles to a site get stuck
there, and do not cause any toppling until the next avalanche hits the
site, thus in effect there is no conservation of grains within an
avalanche). It has been argued that as long as the sand grains have
non-zero stickiness, the distribution of avalanche sizes follows directed percolation
exponents. The DASM, and the stochastic Manna-type models are
unstable to this perturbation, and the renormalization group flows are
directed away from them to the directed percolation fixed point, as
schematically shown in the figure. This picture is exactly verified in
\textit{directed} sandpiles. However, the argument is less convincing
for undirected models, and the issues is not settled \cite{bonachela2}.

\section{Experimental models of SOC}
Soon after the sandpile model was introduced, several experimental
groups measured the size distribution of avalanches in granular
materials. Unfortunately, real sandpile do not seem to behave as the
the theoretical models. Experiments show large periodic avalanches
separated by quiescent states with only limited activity. While for
small piles one could try to fit the avalanche distribution with
power-law over a limited range, the behavior would eventually cross
over, on increasing the system size, to a state which is not
scale-invariant \cite{sandexpt1,sandexpt2}. It is later realized that inertia of rolling grains
is the reason for non-criticality.
A large avalanche propagating over a surface with slope
$\theta_{c}$ scours the surface, and takes away materials from it. The
final angle, after the avalanche has stopped, is bellow
$\theta_{c}$.
So if we want to see power-law
avalanches, we have to minimize the effect of inertia of the grains. This is
achieved in an experiment in Oslo by using rice grains. Because of the
elongated shape of the rice grains (Fig. \ref{fig:ricepile})
frictional forces are stronger and these
poured at very small rate gave rise to a convincing power-law
avalanche distribution \cite{osloexpt}.
\begin{SCfigure}
\includegraphics[width=8.0cm,clip]{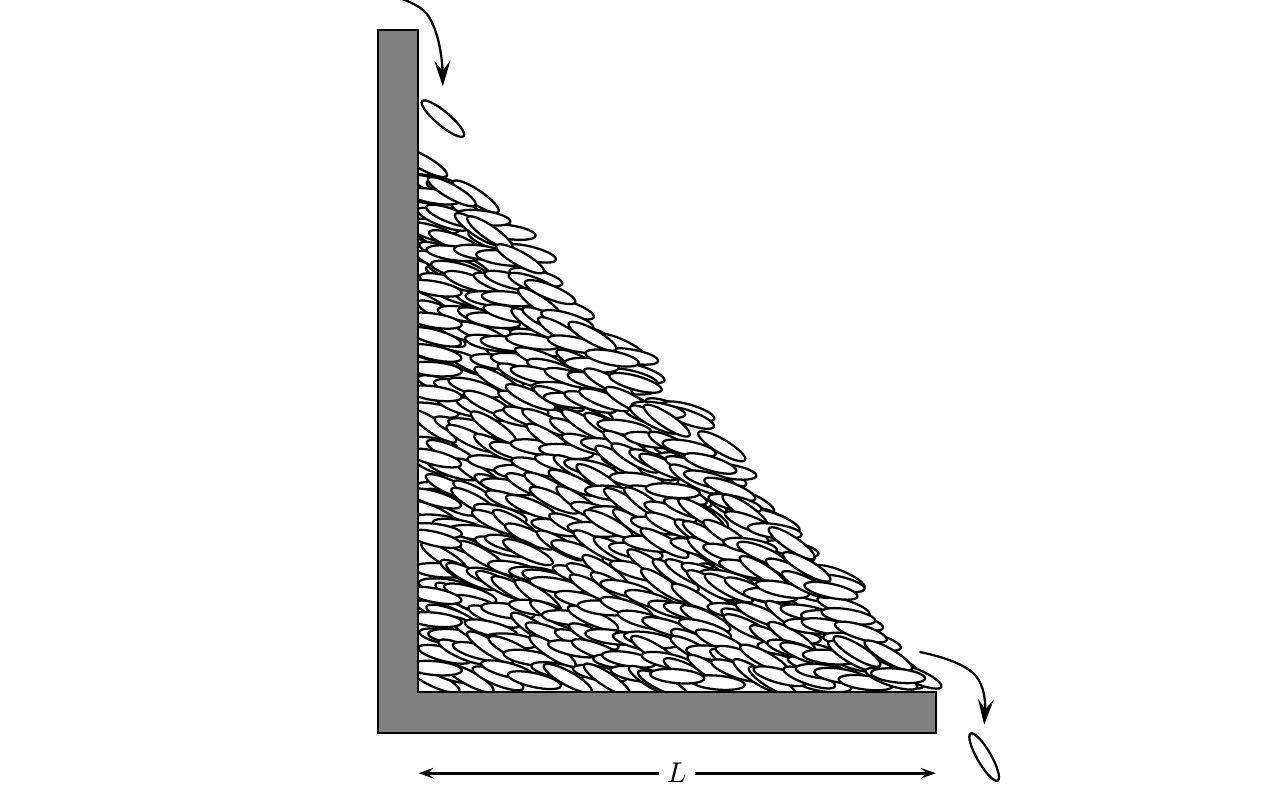}
\caption{A schematic of a rice-pile. The elongated shapes of the rice
grains reduces the inertial effect in an avalanche. (Courtesy K.
Christensen.)}
\label{fig:ricepile}
\end{SCfigure}

A similar power law distribution of avalanche sizes are obtained in
motion of domain walls in ferro-magnets \cite{feromag1,feromag2} and
flux lines in type II superconductors \cite{sup1,sup2}.
A more recent
experimental realization of SOC is obtained in suspensions of
sedimenting non-brownian particles by slow
periodic shear \cite{corte}.

It is worth mentioning that SOC has been invoked in
several other situations in geophysics (atmospheric precipitation
\cite{presp}, river
pattern due to erosion \cite{erotion}, landslides \cite{land}),
biology (neural-network \cite{nunet}), economics (stock-market crashes
\cite{smcr}) and many
more. I have deliberately chosen the above experimental examples for which
experimental observations are accurate and reproducible.
\section{Remarks}
Originally, SOC was proposed with the aim of providing an explanation
of the ubiquitous complexity in nature \cite{btw}. The abundance of fractal
structures in nature, temporal as well as spatial, was considered to be an
effect of a generic tendency --- pertinent to most many-body systems
--- to develop by themselves in to a critical scale-invariant state.

However, certainly not all systems that organize themselves into one
specific state will, when gently driven, exhibit scale invariance in
that self-organized state. The experiments of real sandpiles referred
earlier are a prime example. Neither is all observed power law
behavior are an effect of dynamical self-organization into a critical
state. The work by Sethna and co-workers on Barkhausen noise
\cite{sethna} is an interesting example of this, what Didier Sornette has called ``Power laws by
sweeping of an instability'' \cite{sornet}.

Since the introduction of the idea, a large amount of discussion went into
understanding the minimum conditions for a model to be
self-organized critical. Though a broad picture has emerged in last
two decades, it is still not complete and controversial. In the rest of this section, I will
discuss two of the most discussed properties, using both examples and counter examples.

\begin{itemize}
\item
\textbf{Slow driving limit.} There is a strong belief in the community that an essential ingredient of SOC is
\textit{slow driving} (driving and dynamics operating at two
infinitely separated timescales, \textit{i.e.} avalanches are
instantaneous relative to the time scale of driving). This idea got
widely accepted after an argument given by Dickman \textit{et.
al.}.\cite{path}
They argued that the dynamics in the sandpile model implicitly involve
tuning of the \textit{density} of grains to a value where a phase
transition takes place between an active state, where topplings take
place, and a
quiescent ``absorbing'' state. When the system is
quiescent, addition of new particles increases the density. When the
system is active, particles are lost to the sinks via toppling,
decreasing the density. The slow driving ensures that these two
density changing mechanisms balance one another out, driving the
system to the threshold of instability.

However in the Takayasu model of aggregation \cite{takayasu} the driving is fast. A
simple example, it can be defined on a linear chain on which particles
are continuously injected, diffuse and coalesce. One can write the
explicite rules as follows:
\begin{itemize}
\item
At each time step, each particle in the system moves by a single step,
to the left or to the right, taken with equal probability, independent
of the choice at other sites. 
\item
A single particle is added at every site at each time step.
\item
If there are more than one particle at one site, they coalesce and
become a single particle whose mass is the sum of the masses of the
coalescing parts. In all subsequent events, the composite particles
acts as a single particle.
\end{itemize}
The probability distribution of total mass at a
randomly chosen site, has power law tail, with an upper cutoff that
increases with time. This is a signature of criticality.
The analogue of avalanches in this model is the event of coming
together of large masses. In fact, it can be shown \cite{dharphysica06} that the model is
equivalent to a directed version of sandpile. In this example, it is
clear that the driving is fast, and the rate is comparable to the
local movements of the particles.

\item
\textbf{Conservation.} Conservation of grains is also considered as a key property for the
criticality to emerge in sandpile models. A simple intuitive argument
goes as follows: the sand grains introduced in the pile can dissipate only by
reaching the diffusive ``boundary'' of the lattice. Owing to this and
because of the vanishing rate of sand addition, arbitrarily large
avalanches (of all possible sizes) must exist for an arbitrarily large
system size, yielding a power-law size distribution. In contrast, in
the presence of non-vanishing bulk dissipation, grains disappear at
some finite rate, and avalanches stop after some characteristic size
determined by the dissipation rate. This clearly says that bulk
dissipation is a relevant perturbation in the \textit{sandpile
dynamics} and breaks the criticality \cite{bonachela1}. 

There are also some other models of SOC like Forest fire
\cite{forest}, OFC model \cite{ofc}
where it was shown, mostly numerically, that non-conservation in the
dynamics leads to non-critical steady state.

However, extrapolating these results and considering conservation as a
neccesarry criteria for SOC, in general, is not correct. A model which
is clearly non-conservative and still, when slowly driven displays
power-law in the avalanche size distribution is discussed in
\cite{ice}. Another two
non-conservative models of SOC are a sandpile model with threshold
dissipation\cite{thresdis}, and Bak-Sneppen model of evolution
\cite{sneppen2}.
\end{itemize}

Finally, one could ask: Has SOC, taught us anything about the world that
we did not know prior to it? Jensen addresses this question very
nicely in his book. The most important lesson is that, in a
great variety of systems, particularly for slowly-driven-interaction-
dominated-threshold systems, it is misleading to neglect
fluctuations. In these systems, sometimes, the fluctuations are so
large that the fate of a major part of the system can be determined by
a single burst of activity. Dinosaurs may have become extinct simply
as a result of an intrinsic fluctuation in a system consisting of a
highly interconnected and interacting web of species; there may be
no need for an explanation in terms of external bombardment by
meteorites. \textit{Fluctuations are so large that the "atypical" events
decides the future of the system}.
This new insight is sufficiently important to justify and inspire more
theoretical, and experimental research in SOC.
\section{Overview of the later chapters}
The work in this thesis ranges from characterizing the spatial
patterns in sandpile model, to quasi-units in the stationary
distribution of Zhang's model, and determining exact steady state of
Manna model. The first three chapters in the following are about
sandpile as a growth model, where we show how well-structured
non-trivial patterns emerge at large length scales, due to local interactions in cellular
lengths where the patterns are not obvious. In chapter \ref{ch:zhang}
we discuss another emergent behavior in the Zhang model. Chapter
\ref{ch:manna} contains an operator algebraic analysis of the
stochastic sandpile models. Here is a brief summary of these chapters.
\begin{itemize}
\item[]{\textbf{Chapter 4:} While a considerable amount of research
went into characterizing the universality classes of sandpile models
and understanding the mechanism of SOC, very limited work is done
about spatial patterns in sandpile models. Such patterns were noted around the
time when sandpile was first introduced \cite{liu}. Yet, very little
is known about them.

This chapter is devoted to the study of a class of such spatial patterns produced by
adding sand at a \textit{single} site on an \textit{infinite} lattice
with initial periodic distribution of grains and then
relaxing using the DASM toppling rules.
We present a complete quantitative characterization
of \textit{one} such patterns. We show that the spatial distances in the
asymptotic (in the limit when large number of grains are added)
patterns produced by adding a large number of grains, can be expressed
in terms solution of discrete Laplace's equation (discrete holomorphic
functions \cite{duffin,mercat,laszlo}) on a grid on
two-sheeted Riemann surface.

We also discuss the importance of these patterns as a paradigmatic
model of growth where different parts of the structure grow in
proportion to each other, keeping their shape the same. We call these
kind of growth as \textit{proportionate growth}. We discuss the
importance of such growth in real world examples.}

\item[]\textbf{Chapter 5:} In this chapter we describe how the
pattern changes in presence of absorbing sites, reaching which the
grains get lost and no longer participate in the avalanches. We show
that, again, the \textit{asymptotic} pattern can be characterized in
terms of discrete holomorphic functions, but on a different lattice.
Similar effects of multiple sites of addition on the pattern are
also calculated.

The most interesting effect of the sink sites is the
change in the rate of growth of the pattern. In absence of sink sites
the diameter of the pattern, suitably defined, increases as
$\sqrt{N}$ where $N$ is the number of sand grains added in the
lattice. When the pattern grows with the sink sites inside, the growth
rate of the diameter changes, in general, to $N^{\alpha}$, where
the exponent $\alpha$ depends on the sink geometries. For example,
$\alpha=1/3$, when the sink sites are along an infinite line adjacent
to the site where grains are dropped. When the site of addition is
inside a wedge of angle $\pi/2$ with the sink sites along the wedge
boundary, this value of the exponent is $1/4$. We use an scaling
argument and determine $\alpha$, for some simple sink-geometries.

\item[]\textbf{Chapter 6:} The growth rate also depends on the
arrangement of heights in the background, and this dependence is quite intriguing. When the
initial heights are low \textit{enough} at all sites, one gets
patterns with $\alpha={1/d}$, in $d$-dimension. If sites with maximum
stable height $\left( z^{c}-1 \right)$ in the starting configuration
form an infinite cluster, we get avalanches that do not stop, and the
model is not well-defined. In this chapter, we study backgrounds in
two dimensions. We describe our unexpected finding of an interesting
class of backgrounds, that show an intermediate behavior: For any $N$,
the avalanches are finite, but  the diameter of the pattern increases
as $N^{\alpha}$, for large $N$, with $1/2 < \alpha \leq 1$, the exact
value of $\alpha$ depending on the background. It still shows
proportionate growth. We characterize the asymptotic pattern exactly
for one illustrative example.

\item[]{\textbf{Chapter 7:} As mentioned, the Zhang model on one
and two dimensional lattices displays a remarkable property: emergence
of quasi-units, where the continuous heights, in spite of the
randomness in the driving, are peaked around a few discrete non-random
values. Fey \textit{et. al} have shown that on a linear chain the
width of the distribution vanishes in the infinite volume limit.
However they did not show how it approaches zero.

In this chapter, we show that, the sequence of toppling of the continuous height
variables, when suitably discretized, have an one-to-one relation with
that of integer heights in the corresponding DASM. We use this
relation to show that the width of the distribution of heights
decreases in inverse power of the length of the chain. We also
determine how the variance of height at a site, changes with position
of the site along the length of the system.}

\item[]{\textbf{Chapter 8:} This chapter contains an algebraic approach of
determining the steady state of a class of sandpile models with
stochastic toppling rules. The original Manna model, as
discussed in section 1.2.3, does not have the abelian property of its
deterministic
counterpart. However, a simple modification
of the toppling rules makes the model abelian \cite{dm}. A similar construction is
possible for other stochastic toppling rules. However, analysis of
these
models are still difficult as the corresponding addition operators
(see section \ref{sec:dasm}), in general, does not have an inverse, and are not diagonalizable.

We show that, in principle, the operators can be reduced to a Jordan
block form, using the algebra satisfied by these.  These are then used
to determine the steady state of the models. We illustrate this
procedure by explicitly determining the numerically exact steady for
a stochastic model on a linear chain. Using the desktop computers at
our disposal, we have been able to perform the calculation for systems
of size $\le 12$ and also studied the density profile in the steady
state. }


\end{itemize}
\chapter{Pattern formation on growing sandpiles}\label{ch2}
\textit{Based on the paper \cite{myepl}} by Deepak Dhar, Tridib Sadhu and Samarth
Chandra.

\begin{itemize}\item[\textbf{Abstract}]{\small{
Adding grains at a single site on a flat substrate in the abelian 
sandpile models produces beautiful complex patterns. We study in detail 
the pattern produced by adding grains on a two-dimensional square 
lattice with directed edges (each site has two arrows directed inward 
and two outward), starting with a periodic background with half the 
sites occupied.  The model shows proportionate growth and the size of the pattern formed by adding $N$ 
grains scales as $\sqrt{N}$.  We give exact 
characterization of the asymptotic pattern, in terms of the position
and shape of different features in the pattern.}
}
\end{itemize}

\section{Introduction}\label{ch2intro}
As discussed in details in chapter \ref{ch:intro}, the sandpile models were introduced in physics in the context of 
self-organized criticality, where the main interest has been the 
power-law tail in the distribution of avalanche sizes \cite{btw}.
In this chapter our interest is different. We study the 
pattern produced by adding large number of sand grains at a single 
site in a two dimensional DASM starting from a 
periodic background, and allowing the system to relax using the
sandpile toppling rule (See chapter 1). For example, consider the ASM
on an \textit{infinite} square lattice with initial heights
$z_{i}=2$, for all sites, and add large $N$ number of grains at the origin.
The distribution of heights $z_{i}$ in the relaxed configuration, produces a 
very beautiful, but complex pattern (Fig. \ref{asm})
\footnote{A natural sandpile, formed by pouring sand grains at a
constant rate on a flat table with boundaries, gives rise to singular
structures like ridges, in the stationary state. This has been
attracted much attention recently \cite{hadeler,falcone}}. In this chapter, we give a
detailed quantitative characterization of similar patterns produced on two directed lattices, starting from a
simple periodic background.
\begin{figure}
\begin{center}
  \includegraphics[scale=0.70]{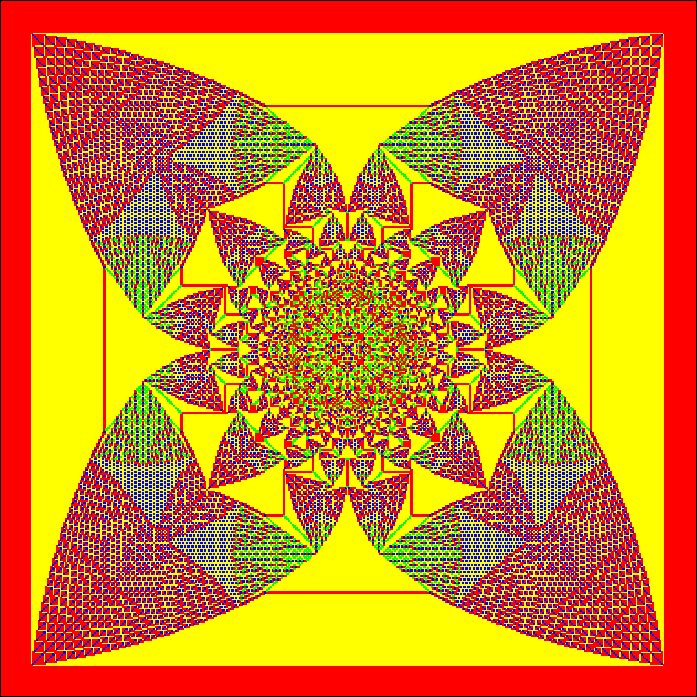}
  \caption{A stable configuration for the abelian sandpile model, obtained 
  by adding $5\times 10^4$ sand grains at one site, on a square
lattice, and relaxing.
  Initial configuration with all heights $2$. Color code: blue=0, green=1, 
  red=2, yellow=3. (Details can be seen in the electronic version using zoom in.)}
  \label{asm}
\end{center}
\end{figure}
\begin{figure}
\includegraphics[width=7.0cm,angle=0,clip]{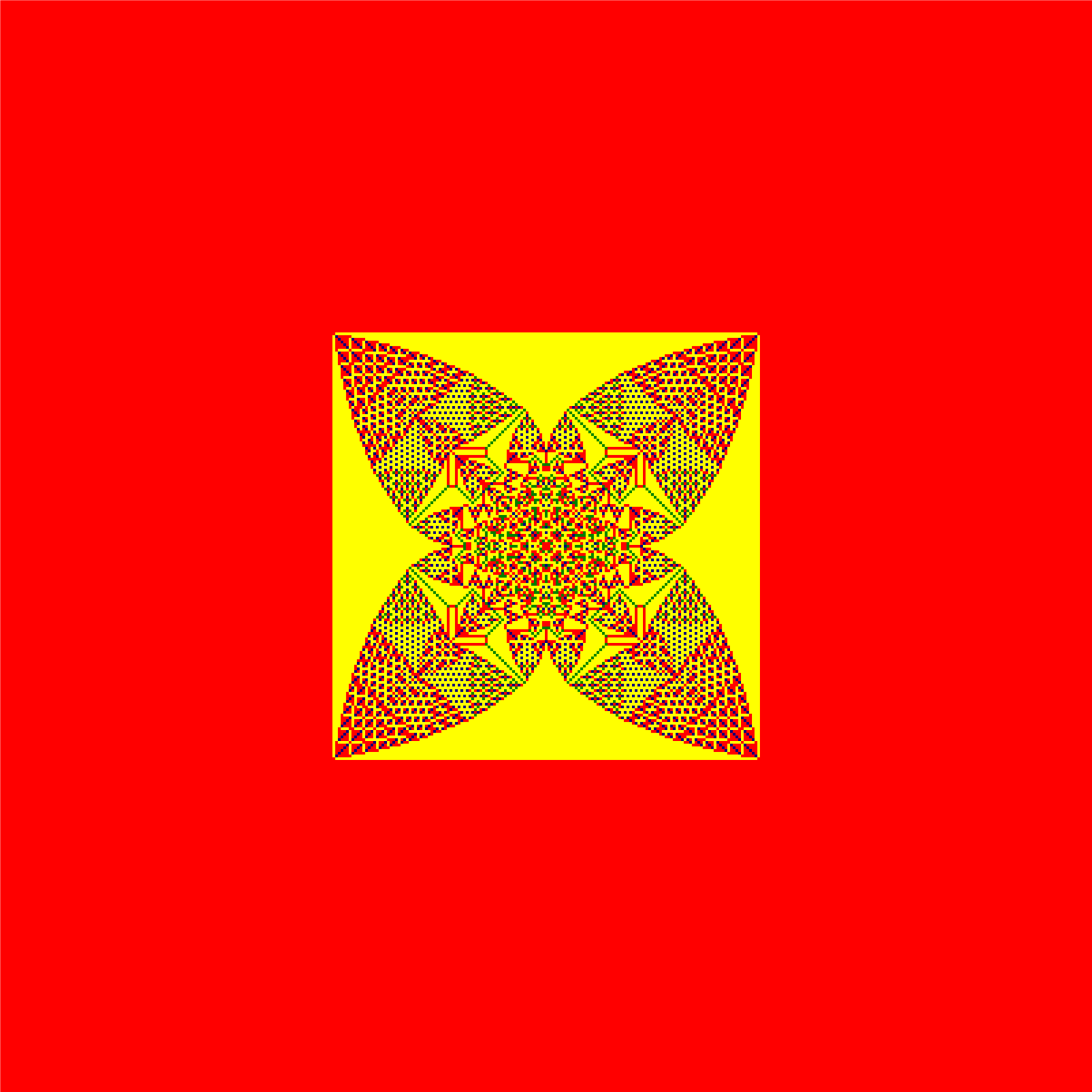}
\includegraphics[width=7.0cm,angle=0,clip]{Images/ch2/fig1}
\caption{The proportionate growth of the pattern for DASM, obtained by adding
$N$ grains at \textit{one} site on a square lattice. Initial configuration
is with all heights $2$. The two patterns correspond to $N=10^{4}$ and
$5\times 10^{4}$, respectively.  Color
code same as in the Fig. \ref{asm}. Both patterns are on the same
scale. (Details can be seen in the electric version using zoom in).}
\label{fig:prop}
\end{figure}

These patterns are examples of complex patterns obtained from simple deterministic 
evolution rules, and are also analytically tractable.
Here complexity means that we have structures with variations, and a
complete description of which is long. Thus,
a living organism is complex because it has many different working
parts, each formed by variations in the working out of the same, but
relatively much simpler genetic coding. There are some earlier known
theoretical examples of
complex patterns obtained from local deterministic evolution rules.
Most studied among them are the
Conway's game of life \cite{earlierone}, which is a cellular automata
model with local update rules, and Turing patterns \cite{earliertwo}, which are
reaction-diffusion models. In general, a detailed and exact 
mathematical characterization of such patterns has not been possible so 
far. In this aspect the sandpile patterns are important as they are
analytically tractable. Understanding these should also help in studying the more general 
problem.
The most important aspect of these patterns is that, these are the simplest examples that show nontrivial spatial 
features with {\it proportionate growth} (see Fig. \ref{fig:prop}),
where these features grows in proportion, keeping the overall shape
same.
Examples of proportionate
growth are abundant in the animal kingdom, where a 
young animal, typically, grows in size with time,
with different parts of the body growing roughly at the same rate. 
Obviously, this kind of growth requires some coordination and communication 
between different parts.
While there are many models of growing objects studied in 
physics literature so far, e.g. the Eden model, Diffusion-Limited 
Aggregation (DLA), invasion-percolation etc. \cite{herrmann}, none has this
property. All of these are 
mainly models of aggregation, where growth occurs by accretion on the 
surface of the object, and inner parts do not evolve significantly
(Fig. \ref{fig:dla}). It is worth mentioning, that modelling proportionate
growth with simple structures is easy, \textit{e. g.}, growth of a
water droplet as one injects more water into it. However, generating a
complex pattern with large number of structures inside, all growing at
same rate, maintaining their relative shape, is highly nontrivial.
This is what happens in the patterns produced in the sandpile models.
\begin{SCfigure}
\includegraphics[width=7.0cm,clip]{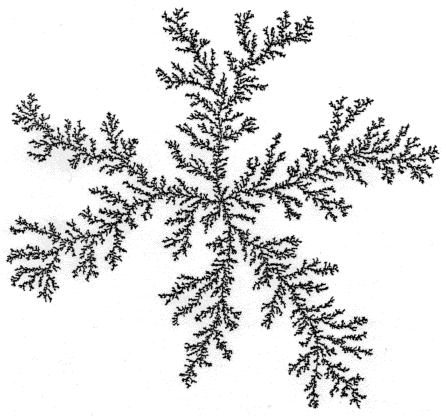}
\caption{Growth in a DLA occurs by accretion of particles, doing random
walk, and attaching to the cluster connected to the origin, when it
comes close to it. }
\label{fig:dla}
\end{SCfigure}

Also, there is an astonishing qualitative similarity between the formation of these
patterns and the way a fertilized egg develops into a well
structured multicellular organism. The development starts with a single cell which
divides into more cells, and then they divide in turn. At some stage
of the development, few cells generate newer types of cells and form
organs. This process continues and after a long time it forms a large
complex, but highly coordinated cellular assembly. The same genetic code in each cell is
responsible for the cell differentiation and structure formation. In
the abelian sandpile when the sand grains are added at the same
site, the grains redistribute themselves and a pattern emerges around
the addition site, which grows as more and more sands are added. The
same redistribution rule is used for all sites, and yet the pattern
has large visibly distinguishable structures inside, which, one can
think of as different organs in an animal.
This simple mathematical model of growing sandpile captures these
qualitative features, even though the actual mechanism of growth is
much more complicated in the biological world.
\begin{figure}
 \begin{center}
 \includegraphics[scale=0.27,angle=90]{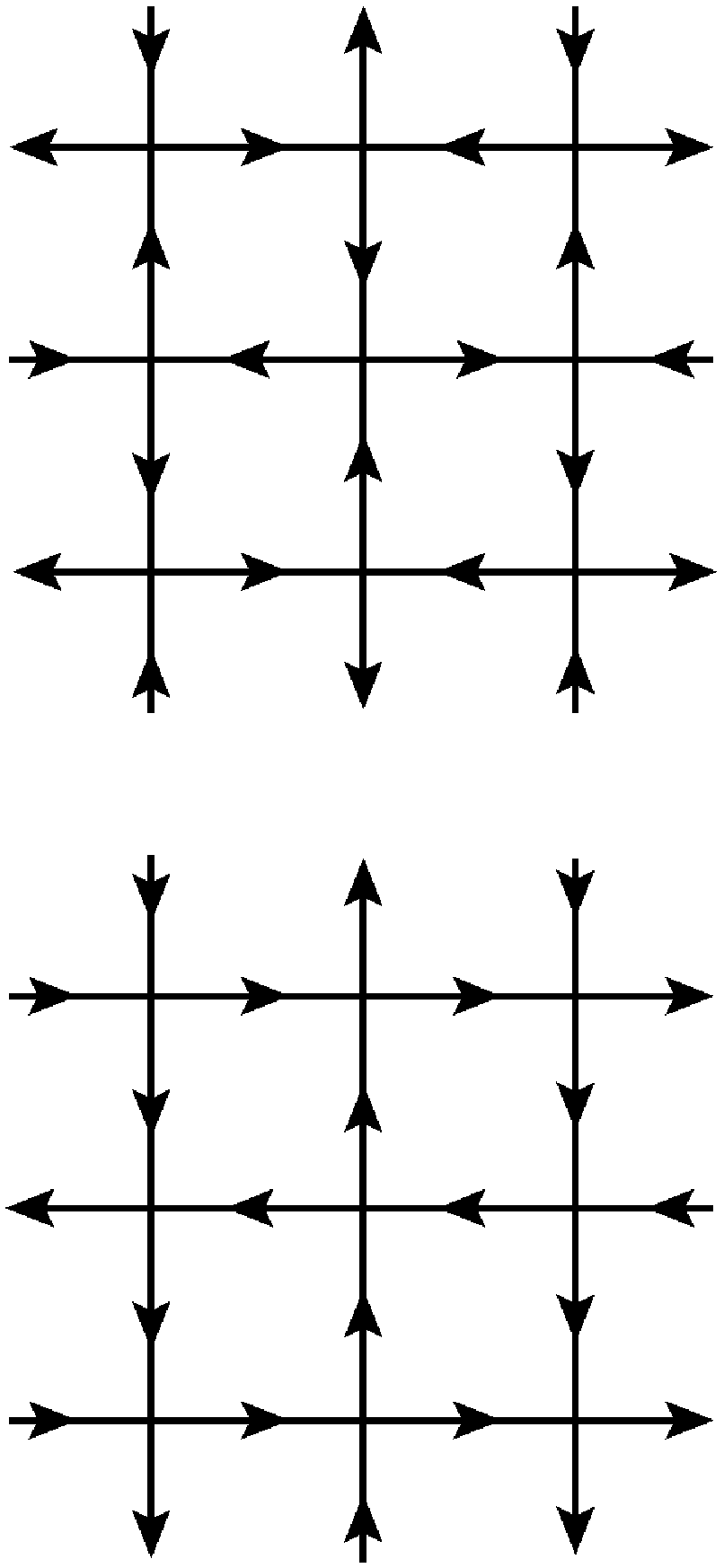}
 \caption{The directed square lattices studied in this chapter: $(a)$ the  
 F-lattice, $(b)$ the Manhattan lattice.}
 \label{lattice}
 \end{center}
\end{figure}

The spatial patterns in
sandpile models were first discussed by Liu \textit{et al.} \cite{liu}. The asymptotic
shape  of the boundaries of the sandpile patterns produced by adding grains at 
a single site in different periodic backgrounds was discussed in \cite{dhar99}.
Borgne \textit{et al.} \cite{borgne} obtained bounds on the rate of growth of these boundaries
and later these bounds were improved by Fey \textit{et al.} \cite{redig} and
Levine \textit{et al.} \cite{lionel}. The first detailed analysis of different
periodic structures found in the patterns were carried out by Ostojic  in
\cite{ostojic}. Other spatial configurations in the abelian sandpile models, like the
identity \cite{borgne,identity,caracciolo} or the stable state produced
from special unstable states, also show complex internal self-similar structures
\cite{liu}, which share common features with the patterns studied here. In particular, the 
identity configuration on the F-lattice has recently been shown to have 
spatial structure similar to what we study here \cite{caracciolo}.
There are other models, which are related to the abelian sandpile model, e.g.
the Internal Diffusion-Limited Aggregation (IDLA), Eulerian walkers (also called
the rotor-router model), and the infinitely-divisible sandpile, which  also show
similar structure. For the IDLA,  Gravner and Quastel showed that the
asymptotic shape of the growth pattern is related to the classical Stefan
problem in hydrodynamics, and determined the exact radius of the pattern with
a single point source \cite{gravner}. Levine and Peres have studied patterns
with multiple sources in these models recently, and proved the existence
of a limit shape\cite{levine_peres}.
\begin{figure}
  \begin{center}
  \includegraphics[scale=0.70]{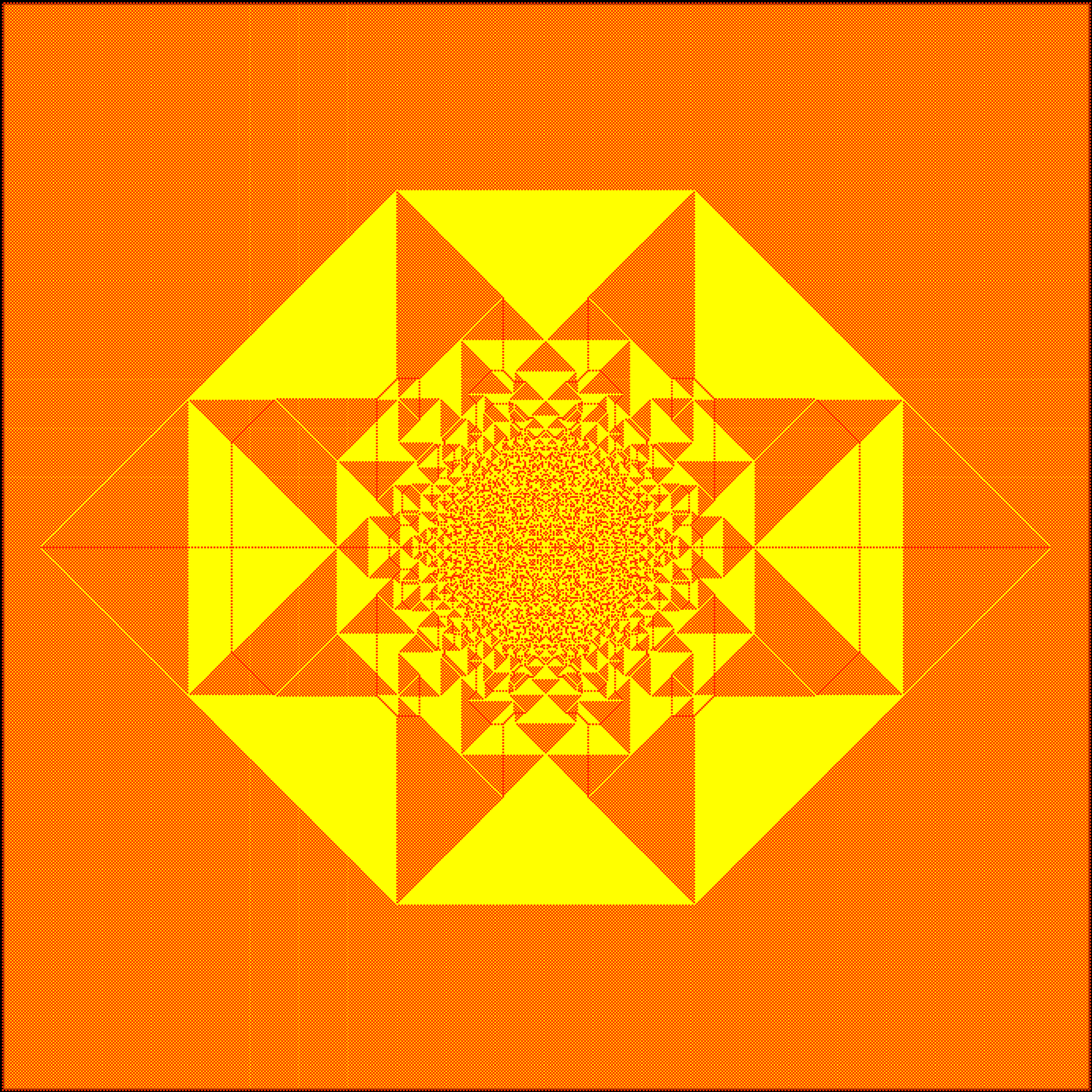}
  \caption{The stable configuration for the DASM, obtained 
  by adding $5\times10^4$ particles at one site, on the F-lattice of Fig.\ref{lattice}$a$ with initial 
  checkerboard configuration. Color code: 
  red=0, yellow=1. The apparent orange regions in the picture represent the
  patches with checkerboard configuration. (Details can be seen in the electronic version using zoom in.)}
  \label{flattice}
  \end{center}
\end{figure}

\section{Definition of the model}
The pattern on a standard square lattice is rather complicated
(Fig.\ref{asm}), and it has not been possible to characterize 
it so far. In this chapter, we consider two variations of the square
lattice, assigning orientations to 
its edges, such that each site has two inward
and two outward arrows, as shown in Fig.\ref{lattice}a and
Fig.\ref{lattice}b. They are known as F-lattice and Manhattan lattice, respectively.

We define a position vector on the lattice, $\mathbf{R}\equiv\left( x, y \right)$.
The DASM is defined on the lattice by a height variable
$z\left( \mathbf{R} \right)$, at each site $\mathbf{R}$.
In a stable configuration all sites have height $z\left( \mathbf{R} \right)<2$.
The system is driven by adding grains at a single site and if this addition
makes the system unstable it relaxes by the toppling rule: each unstable 
site transfers one grain each in the direction of its outward arrows. 
We start with an initial checkerboard configuration in which  $z\left( \mathbf{R} \right)=1$
for sites with $(x+y)=$ even, and $0$ otherwise. Clearly, the average density of
sand grains for the initial configuration is $1/2$ per site.
For numerical purpose we use a lattice large enough so that
none of the avalanches reaches the boundary.
The result of adding $N=5\times 10^4$ grains at the origin is shown in
Fig. \ref{flattice} and Fig. \ref{manhattan}, for the two lattices.

The asymptotic pattern in large $N$ limit, for the two lattices
are indistinguishable from each other, at large scales, except that the thin
lines of $1$'s forming two triangles outside the octagon are rotated by 
$45^{\circ}$ (Fig.\ref{manhattan}).  Since the lattices are different, 
this is quite intriguing. Specially there is no obvious lattice symmetry, and
it is easily checked that patterns produced for small $N$ are quite different 
(Fig.\ref{small}).

This pattern is somewhat simpler than in Fig.\ref{asm}, which makes its study easier. We shall discuss here only the F-lattice,
but the discussion is equally applicable to the Manhattan lattice. Taking some qualitative features 
of the observed pattern
(e.g. only
two types of patches are present, and they are all $3$- or $4$- sided 
polygons) as input, we show how one can get a complete and {\it 
quantitative} characterization of the pattern.  We also  show that the 
\textit{asymptotic} pattern 
has an unexpected  exact $8$-fold rotational symmetry, and determine the 
exact 
coordinates of all the boundaries in it.  We will also discuss 
some other cases, where exactly the same asymptotic pattern is obtained.

\begin{SCfigure}
 \includegraphics[scale=0.45,angle=0]{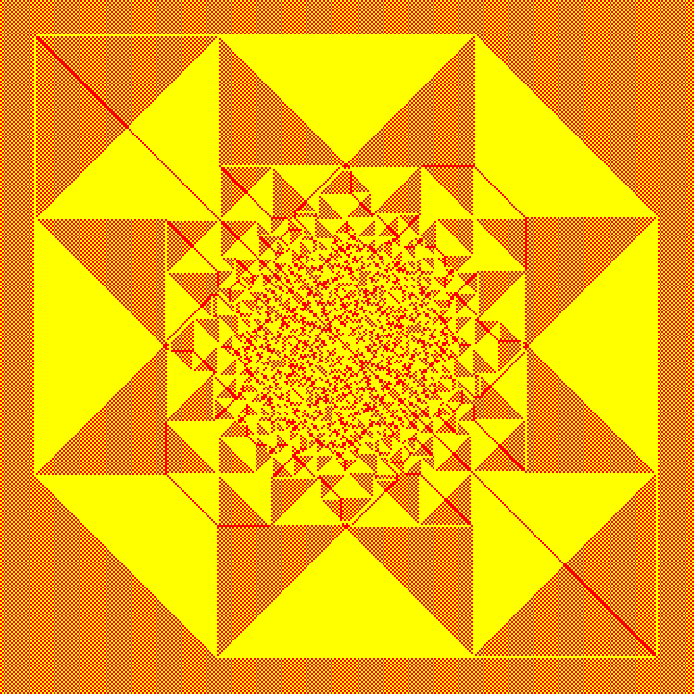}
 \caption{A stable configuration for the DASM on the 
Manhattan lattice of Fig.\ref{lattice}b, obtained by adding $25\times10^3$ 
particles at one site, with initial checkerboard configuration. Color code: red=0, yellow=1.
  (Details can be seen in the electronic version using zoom in.)}
 \label{manhattan} 
\end{SCfigure}
\begin{figure}
 \includegraphics[scale=0.35]{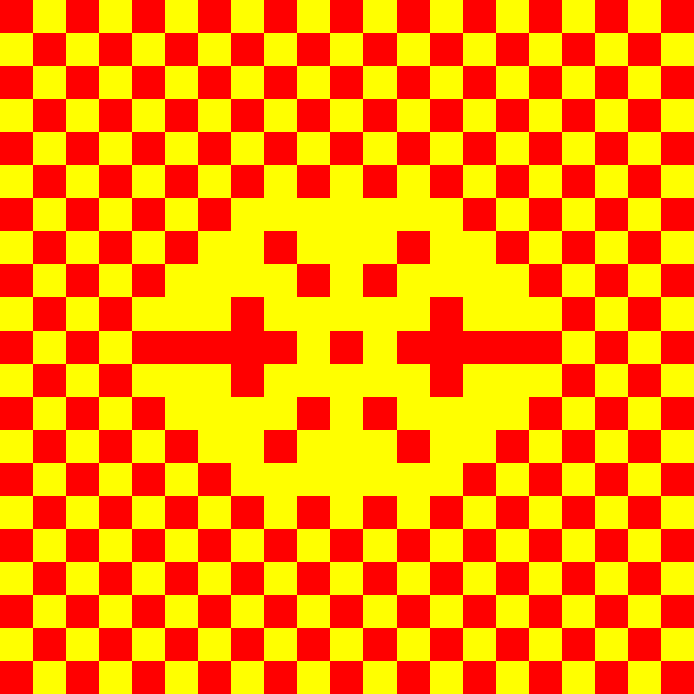}
 \includegraphics[scale=0.35]{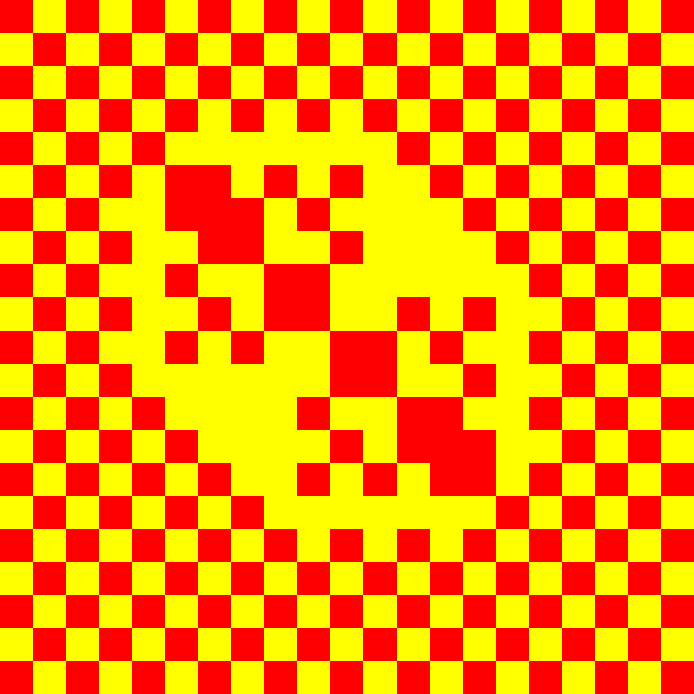}
 \caption{Very different stable configurations for the abelian sandpile model on
$(a)$ the F-lattice, and $(b)$ the Manhattan lattice, 
obtained by adding only $20$ particles at one site, with initial 
checkerboard configuration. Color code: red$=0$, yellow$=1$.}
 \label{small}
\end{figure}

\section{Characterizing asymptotic pattern: A general theory}
We first describe a general method of formally characterizing a
large number of sandpile patterns, not just the three patterns shown
till now.
We start by defining $2\Lambda\left( N \right)$ as the diameter of the
pattern when $N$ grains have been added. The exact definition of $\Lambda$ is flexible, and the
characterization does not depend on the choice. For the patterns in
Fig. \ref{asm}, \ref{flattice}, and \ref{manhattan} we choose $2\Lambda$ as the width of the
smallest rectangle that encloses all sites that have toppled at least
once.

As mentioned before, the patterns exhibit proportionate growth,
\textit{i.e.}, all structures in the pattern grows at the same rate to
the diameter. While there is as yet no rigorous proof of this
important property, we assume this in the following discussion. Then,
it is natural to describe the patterns in the reduced coordinates
defined by $\xi=x/\Lambda$ and $\eta=y/\Lambda$. A position
vector in these reduced coordinates is defined by $\mathbf{r}=\mathbf{R}/\Lambda\equiv\left( \xi, \eta \right)$.
Then in the limit $\Lambda\rightarrow\infty$, the patterns can be characterized by a
function $\rho(\mathbf{r})$ which gives the local density of sand
grains in a small rectangle of size $\delta\xi\delta\eta$
about the point $\mathbf{r}$, with $1/\Lambda\ll\delta\xi$, $\delta\eta\ll1$.
We define $\Delta\rho\left( \mathbf{r} \right)$ as the change in density $\rho\left( \mathbf{r} \right)$
from its initial background value.

A large number of sandpile patterns, including the one in Fig.
\ref{asm}, \ref{flattice}, and \ref{manhattan}, are made of a union of distinct regions,
called ``Patches'', inside which the heights are periodic in
space. Then, inside each patch $\Delta\rho\left( \mathbf{r} \right)$
is constant.  For example, in the pattern in Fig. \ref{flattice},
there are only two types of patches, and the
change in density takes only two possible values, $1/2$ in a high-density patch
(color yellow in Fig. \ref{flattice}) and $0$ in a low-density patch (color orange). There are 
few defect-lines, which move with $N$, and can also be seen in Fig.\ref{asm} and Fig.\ref{flattice}. 
But these can be ignored in discussing the asymptotic pattern.

Let $T_{\Lambda}\left( \mathbf{R} \right)$ be the number of topplings at site
the $\mathbf{R}$ when the diameter reaches the value $2\Lambda$ for the first time.
Define
\begin{equation}
\phi\left( \mathbf{r} \right)=\lim_{\Lambda\rightarrow\infty}\frac{1}{2\Lambda^{2}}T_{\Lambda}\left( \mathbf{R}' \right),
\label{phi}
\end{equation}
where $\mathbf{R}'\equiv\left(\lfloor\Lambda \xi \rfloor, \lfloor\Lambda \eta \rfloor\right)$,
with $\lfloor x \rfloor$ being the floor function which gives the largest
integer $\le x$. 
From the conservation of sand-grains in the toppling process, it is easy to see that
\begin{equation}
\sum_{\mathbf{R'}\in n.n.}T_{\Lambda}\left( \mathbf{R'}
\right)-\alpha T_{\Lambda}\left( \mathbf{R} \right)= \Delta z\left(
\mathbf{R} \right)-N \delta_{\mathbf{R},0},
\end{equation}
where the sum is over the sites nearest neighbors of
$\mathbf{R}$, and $\alpha$ is the number of them. Then in the rescales coordinate,
$\phi$ satisfies the Poisson equation
\begin{equation}
\nabla^{2}\phi\left( \mathbf{r} \right)=\Delta\rho\left( \mathbf{r} \right)-\frac{N}{\Lambda^{2}}\delta\left( \mathbf{r} \right).
\label{poisson0}
\end{equation}
In an electrostatic analogy, we can think of $\Delta\rho(\mathbf{r})$
as an areal 
charge density, and $\phi(\mathbf{r})$ as the corresponding 
electrostatic potential. \textit{A complete specification of $\phi\left( \mathbf{r} \right)$ determines
the density function $\Delta\rho\left( \mathbf{r} \right)$ which in turn
characterizes the asymptotic pattern.}

The key observation that allows us to determine the asymptotic pattern 
is the following proposition.
\begin{proposition}
Inside each patch of periodic heights, 
$\phi(\mathbf{r})$ is a quadratic function of $\xi$ and $\eta$.
\label{lemma1}
\end{proposition}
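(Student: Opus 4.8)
The plan is to work locally inside a single patch and exploit the fact that the height configuration there is periodic. Fix a patch $P$ with change-in-density $\Delta\rho$ constant on $P$. The core idea is that within a periodic patch, the toppling function $T_\Lambda(\mathbf{R})$ at the lattice level is, up to bounded corrections, a \emph{discrete quadratic polynomial} in the integer coordinates $x,y$; after rescaling by $1/(2\Lambda^2)$ and sending $\Lambda\to\infty$, this becomes a genuine quadratic function of $\xi,\eta$. I would make this precise as follows. The toppling numbers satisfy the exact discrete relation
\begin{equation}
\sum_{\mathbf{R}'\in n.n.}T_\Lambda(\mathbf{R}')-\alpha T_\Lambda(\mathbf{R})=\Delta z(\mathbf{R}),
\end{equation}
valid at every site $\mathbf{R}$ strictly inside $P$ (the source term is absent away from the origin). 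Here $\Delta z(\mathbf{R})=z_N(\mathbf{R})-z_0(\mathbf{R})$ is the change of height from background, and periodicity of the heights in $P$ means $\Delta z(\mathbf{R})$ is a spatially periodic function of $\mathbf{R}$, with average value $\Delta\rho$ over a unit cell.

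The first step is to split off the periodic fluctuation: write $\Delta z(\mathbf{R})=\Delta\rho+\big(\Delta z(\mathbf{R})-\Delta\rho\big)$, where the second term is periodic with zero mean. One then seeks a particular solution $T^{\mathrm{part}}_\Lambda(\mathbf{R})$ of the discrete Poisson-type equation with this right-hand side. For the constant part $\Delta\rho$, a quadratic ansatz $\tfrac12(a\xi^2+b\xi\eta+c\eta^2)\Lambda^2 + (\text{linear})\cdot\Lambda + (\text{const})$ solves the discrete equation exactly provided the coefficients satisfy $a+c=\Delta\rho$ (for the isotropic square-lattice Laplacian) together with the off-diagonal matching that will be fixed by the patch-adjacency conditions described later in the paper; the linear and constant terms are free. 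For the zero-mean periodic part, I would invoke the standard fact that the discrete Laplacian (or the F-lattice transfer operator) is invertible on the space of zero-mean periodic functions on the unit cell, producing a \emph{bounded} periodic correction $p(\mathbf{R})$ with $\|p\|_\infty = O(1)$ independent of $\Lambda$. Thus $T_\Lambda(\mathbf{R}) = Q_\Lambda(\mathbf{R}) + p(\mathbf{R}) + h_\Lambda(\mathbf{R})$, where $Q_\Lambda$ is the quadratic, $p$ is the bounded periodic piece, and $h_\Lambda$ is discrete-harmonic in the interior of $P$.

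The second step is to control $h_\Lambda$. Since $h_\Lambda$ is harmonic on $P$ and $T_\Lambda\ge 0$ with known growth $T_\Lambda=O(\Lambda^2)$ on the whole pattern, $h_\Lambda$ is a discrete-harmonic function on a region of diameter $O(\Lambda)$ with $O(\Lambda^2)$ bounds; the boundary values of $h_\Lambda$ are governed by matching to the neighbouring patches, which by the same argument are also quadratics plus bounded terms. A discrete Harnack / interior-gradient estimate then shows that after rescaling, $h_\Lambda(\lfloor\Lambda\xi\rfloor,\lfloor\Lambda\eta\rfloor)/(2\Lambda^2)$ converges to a (continuum) harmonic function $h(\mathbf{r})$ on the open rescaled patch; but a harmonic function that arises as the $\Lambda\to\infty$ limit of a consistent global solution, with the patch boundaries themselves converging to polygonal arcs, contributes nothing beyond what the quadratic already encodes — more carefully, one observes that $\nabla^2\phi=\Delta\rho$ is forced on the interior by \eqref{poisson0}, a quadratic already realizes this, and any additional harmonic part is absorbed by adjusting the free linear/constant coefficients once continuity across all patch boundaries (the next ingredient in the paper) is imposed. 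Passing to the limit in $T_\Lambda = Q_\Lambda + p + h_\Lambda$, dividing by $2\Lambda^2$: the bounded term $p$ dies, $Q_\Lambda/(2\Lambda^2)$ tends to a quadratic in $\xi,\eta$, and $h_\Lambda/(2\Lambda^2)\to 0$ (or into the linear part), so $\phi$ is quadratic on the patch.

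I expect the main obstacle to be the rigorous justification of the harmonic-remainder step, i.e. genuinely proving that $h_\Lambda/\Lambda^2 \to 0$ (or is linear) rather than merely asserting it — this is essentially the content of proportionate growth and of the existence of a well-defined limit shape, which the paper explicitly flags as unproven. In the spirit of the paper I would therefore present this proposition at a physical level of rigor: assume the pattern converges to a limit described by patches with polygonal boundaries and that $T_\Lambda$ is asymptotically smooth inside each patch, derive the quadratic form from the discrete Poisson equation plus periodicity as above, and defer the hard convergence/regularity estimates. The cleanest self-contained version of the argument is: on a patch, $\Delta z$ is periodic, so its rescaled limit is the constant $\Delta\rho$; hence $\nabla^2\phi=\Delta\rho$ (a constant) on the open patch; and since $\phi$ is also required (by the assumed smoothness of the rescaled toppling function and the piecewise structure) to be \emph{real-analytic with no singularities} inside the patch and to match quadratics across boundaries, the only possibility consistent with a constant Laplacian and the global octagonal symmetry is a quadratic — any harmonic correction would propagate a contradiction with the finitely-many-patches hypothesis. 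The off-diagonal coefficient $n$ (coefficient of $\xi\eta$) and the relative weights of $\xi^2,\eta^2$ are left undetermined at this stage and are pinned down later by the continuity relations between adjacent patches.
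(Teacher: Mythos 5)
Your decomposition $T_\Lambda = Q_\Lambda + p + h_\Lambda$ is a sensible setup, but the proposition lives or dies on the step you yourself flag as the obstacle — showing the discrete-harmonic remainder $h_\Lambda/\Lambda^2$ has no contribution beyond the (harmonic) quadratic — and neither of your two attempts at it closes the gap. A function with constant Laplacian on an open patch is a quadratic plus an \emph{arbitrary} harmonic function, so "real-analytic with constant Laplacian" does not force quadratic; and deferring to "continuity across patch boundaries will absorb it" is not an argument, since the matching conditions are derived \emph{after} one already knows each patch carries only the six quadratic coefficients — used here it is circular. Your assertion that $h_\Lambda/(2\Lambda^2)\to 0$ or linear also cannot be quite what you want, since the harmonic quadratics $\xi^2-\eta^2$ and $\xi\eta$ must survive (they are the $m,n$ degrees of freedom that distinguish patches); only orders $\ge 3$ need to be killed, and nothing in your proposal kills them.

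The ingredient you are missing is the integer-valuedness of $T_\Lambda$. The paper's argument is: a cubic term in $\phi$ corresponds to a term $\sim(\Delta x)^3/\Lambda$ in the integer function $T_\Lambda(x,y)$, which therefore must jump discontinuously at spacings $\Delta x\sim\mathcal{O}(\Lambda^{1/3})$; each such jump is a defect line breaking the height periodicity, so a patch of size $\sim\Lambda$ would contain infinitely many defect lines in the asymptotic limit, contradicting the hypothesis that the patch is periodic (and the observed absence of such features). This is a local argument, at the same physical level of rigor you adopt, but it supplies the concrete mechanism excluding orders $\ge 3$ that your harmonic-remainder step merely postulates. I would recommend replacing your step two with this integrality-plus-periodicity argument; the rest of your setup (periodic zero-mean correction $p$ bounded uniformly in $\Lambda$, quadratic particular solution with $a+c=\Delta\rho$, off-diagonal coefficients left free for the matching conditions) is consistent with the paper and fine to keep.
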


A proof can be done in the following way.
Within a patch, the function $\phi(\mathbf{r})$ is
Taylor expandable around any point $\mathbf{r}_{O}\equiv\left(
\xi_{o},\eta_{O}\right)$ inside the patch.
\begin{equation}
\phi\left( \mathbf{r}
\right)=f+d\Delta\xi+e\Delta\eta+a\left(\Delta\xi\right)^{2}+2h\Delta\xi\Delta\eta+b\left(\Delta\eta\right)^{2}+\mathcal{O}\left(
\Delta\xi^{3},\Delta\eta^{3} \right)\dots,
\end{equation}
where $\Delta\xi=\xi-\xi_{o}$ and $\Delta\eta=\eta-\eta_{o}$.
Consider any term
of order $\ge 3$ in the expansion, for example the term
$\sim(\Delta\xi)^3$. This can only arise due to a term $\sim (\Delta
x)^3/\Lambda$ in $T(x,y)$. Then, considering the fact that $T\left(
\mathbf{R} \right)$ is an integer function of the coordinates, it is easy
to see that it will 
change discontinuously at intervals of $\Delta x \sim
\mathcal{O}(\Lambda^{1/3})$.
This leads to change in the periodicity of heights at such intervals
inside each patch which themselves are of size $\sim\Lambda$. This
would then result in an infinitely many defect-lines in the asymptotic pattern. However
there are no such features in Fig.\ref{asm} or Fig.\ref{flattice}.
Therefore inside each periodic patch of constant $\Delta\rho(\mathbf{r})$,
$\phi(\mathbf{r})$ can at most be quadratic in $\xi$ and $\eta$.

The argument finally boils down to proving the two features of the
pattern, \textit{i.e.}, there is proportionate growth, and that the
pattern can be decomposed in terms of periodic patches.
\qed

In
each periodic patch the toppling function $T(x,y)$ is a sum of two 
terms: a 
part, that is a simple quadratic function of $x$ and $y$, and another
is a periodic 
part. The periodic part averages to zero, and does not contribute to the 
coarse-grained function $\phi(\mathbf{r})$ \footnote{In some patterns, 
with other backgrounds (not discussed here) there are 
regions that occupy finite fraction area of the full pattern, which show aperiodic height 
patterns. These cases are harder to analyse.}. The quadratic part,
when rescaled, can be written as
\begin{equation}
\phi\left( \mathbf{r}\right)=f+d\xi+e\eta+a\xi^{2}+2h\xi\eta+b\eta^{2},
\end{equation}
where $a,h,b,d,e,f$ are constants inside a patch, and
$a+b=\Delta\rho/2$, corresponding to the patch. Then each patch is
characterized by the values of these parameters.

Now we will show that the continuity of $\phi$ and its first
derivatives along the boundary between adjacent patches imposes linear
relations among the corresponding parameters. Consider two neighboring periodic patches ${\bf P}$ and ${\bf P'}$ 
with mean densities $\rho$ and $\rho'$ respectively. Let the rescaled quadratic 
toppling function be $Q(\mathbf{r})$ and $Q'(\mathbf{r})$ in these 
patches. Then the boundary between the patches is given by the equation 
$Q(\mathbf{r}) = Q'(\mathbf{r})$. As the derivatives of $\phi$ are also 
continuous across the boundary, the boundary between two periodic 
patches must be a straight line, and
\begin{equation} 
Q'(\mathbf{r}) = Q(\mathbf{r})+\frac{1}{2}(\rho'-\rho) l_{\perp}^2,  
\label{continuity}
\end{equation} 
where $l_\perp$ is the perpendicular distance of $(\mathbf{r})$ from the 
boundary. We can start with a periodic patch ${\bf P}$, and go to another 
patch ${\bf P'}$ by more than one path. Since the final quadratic function at 
${\bf P'}$ should be the same whichever path we take, this imposes consistency 
conditions which restrict the allowed values of slopes of the boundaries. 
Consider a point $z_0$ where $n$ periodic patches meet, with $n > 2$
(Fig.\ref{matching}). If the $j$th boundary at this point makes an angle 
$\theta_j$ with the $x$-axis, and the density of the patch in the wedge 
$\theta_j \le \theta \le \theta_{j+1}$ is $\rho_{j+1}$ (Fig.\ref{matching}), 
then the condition that the net change in the quadratic form is zero if we 
go around $z_0$ once, reduces to the following condition:
\begin{equation}
\sum_{j=1}^n  (\rho_{j+1} -\rho_{j}) e^{ 2 i \theta_j}  =0,
\end{equation}
with $\rho_{n+1}=\rho_1$.
For $n=3$, with $\rho_1 \neq \rho_2 \neq \rho_3$, this equation has
only trivial solutions with $\theta_{j}$ equal to $0$ or $\pi$ for all
$j$. Hence, only  $n \geq 4$ are allowed.
\begin{SCfigure}
 \includegraphics[scale=0.50]{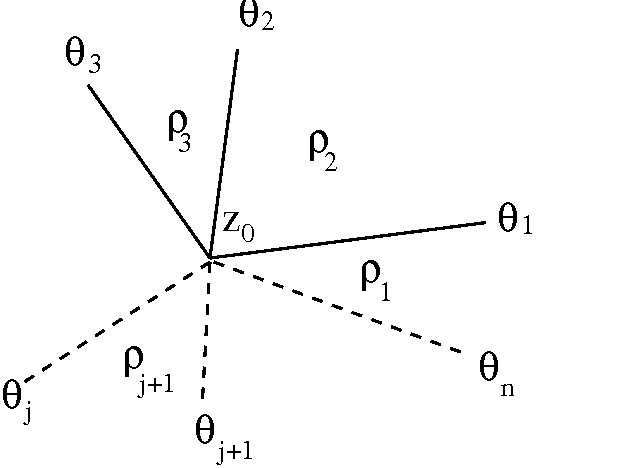}
 \caption{$n$ different periodic patches of density
 $\rho_1$,$...$,$\rho_n$ meeting at point $z_0$.}
 \label{matching}
\end{SCfigure}

These linear equations amongst the parameters corresponding to
neighboring patches, can be solved and this will determine the
complete potential function $\phi$, giving a quantitative characterization of
the pattern.

\section{Determination of the potential function}
We now apply this method, in the last section, to the F-lattice pattern
in Fig. \ref{flattice}, and determine the exact potential function $\phi\left(
\mathbf{r}\right)$. We note that in this
pattern, there are no aperiodic patches, 
only two types of periodic patches, where $\rho(\mathbf{r})$ only
takes values $1$ or $1/2$. Also, the slopes of the boundaries between patches 
only take values $0$, $\pm1$ or $\infty$. The patches are typically dart 
shaped quadrilaterals, and some triangles. These 
simplifications, not present in Fig.\ref{asm}, make possible a full 
characterization of the pattern in Fig.\ref{flattice}.

We start by determining the exact asymptotic size of the pattern. We
note from Fig.\ref{flattice} that the boundary of the pattern is an octagon
(we shall prove later that this is a regular octagon). In fact, there
are four lines of $1$'s outside the octagon. But these have zero areal
density in the limit $N\rightarrow \infty$, and do not contribute to 
$\rho(\mathbf{r})$. We will ignore these  in the following discussion.
\begin{SCfigure}
 \includegraphics[scale=0.14]{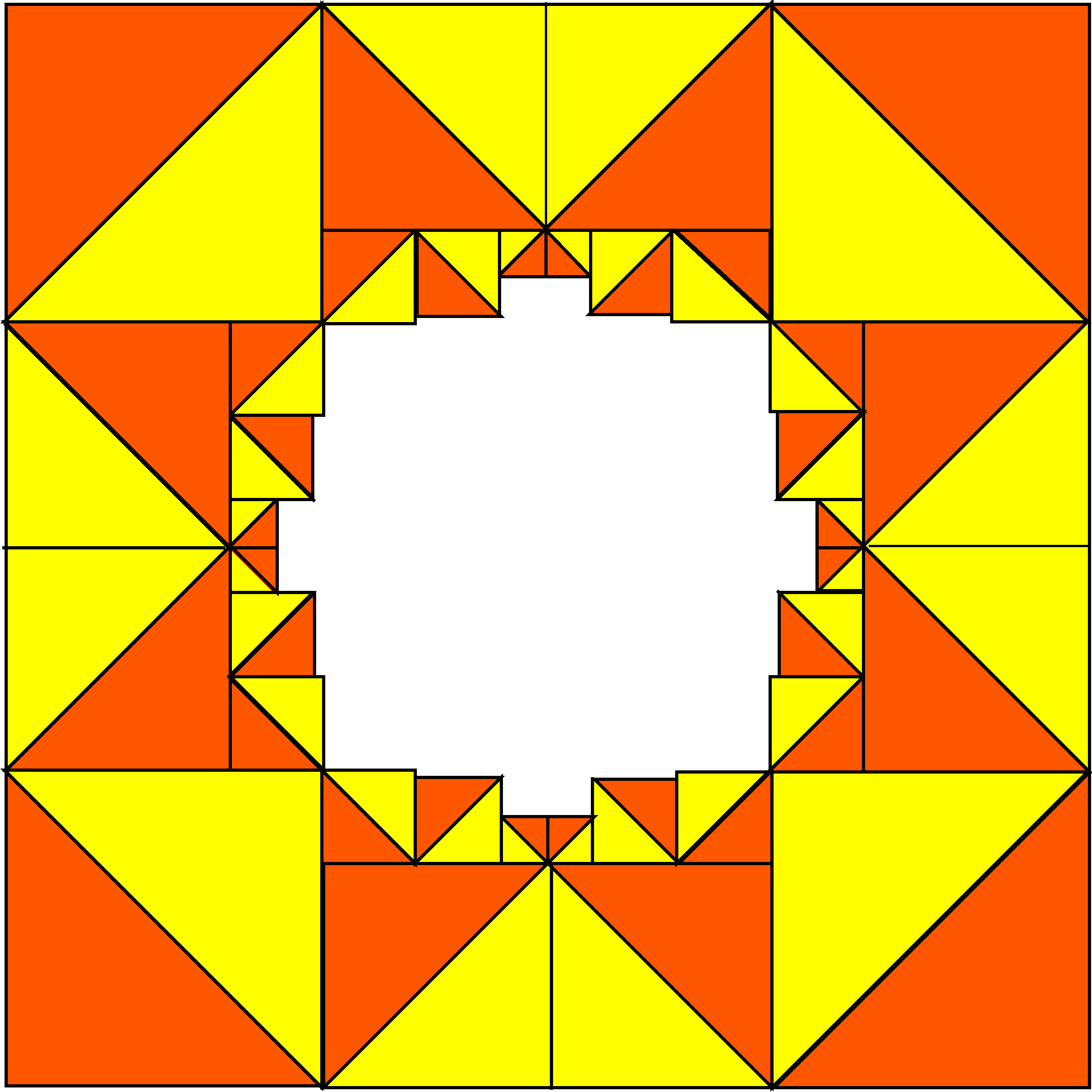}
 \caption{The pattern in 
 Fig.\ref{flattice} is obtainable by putting together square tiles of 
 different sizes. Each of the tiles is divided into two halves of 
 different density.}
 \label{tile}
\end{SCfigure}

Let $B$ be the minimum boundary square containing all
($\mathbf{r}$)
that have a non-zero charge density $\Delta\rho(\mathbf{r})$. We
observe that $B$ can be considered as a union of
disjoint smaller squares, each of which is divided by diagonal into two 
parts where $\Delta \rho(\mathbf{r})$ takes values $1/2$ and $0$
(Fig.\ref{tile}). This is seen to be true for the
outer layer patches. Towards the center, the squares are not so well
resolved. Assuming that this construction remains true all the way to the
center, in the limit of large $N$, the mean density of negative 
charge in the bounding square $=1/4$. Given that the total amount of 
negative charge is $-1$, the area of the bounding square should be $4$. 
Hence, the boundaries of the minimum bounding square are
\begin{equation}
|\xi|=1, \textrm{          } |\eta|=1.
\end{equation}
This means, with our choice of the diameter as the width of the box
$B$, we have
\begin{equation}
\Lambda\left( N \right)=\sqrt{N}+\textrm{lower order terms}.
\label{eq:boundary}
\end{equation}
In Fig. \ref{fig:bdiff}, we have shown, the correction term appears to
grow as $N^{1/4}$.
\begin{figure}
\begin{center}
\includegraphics[width=6.0cm,clip]{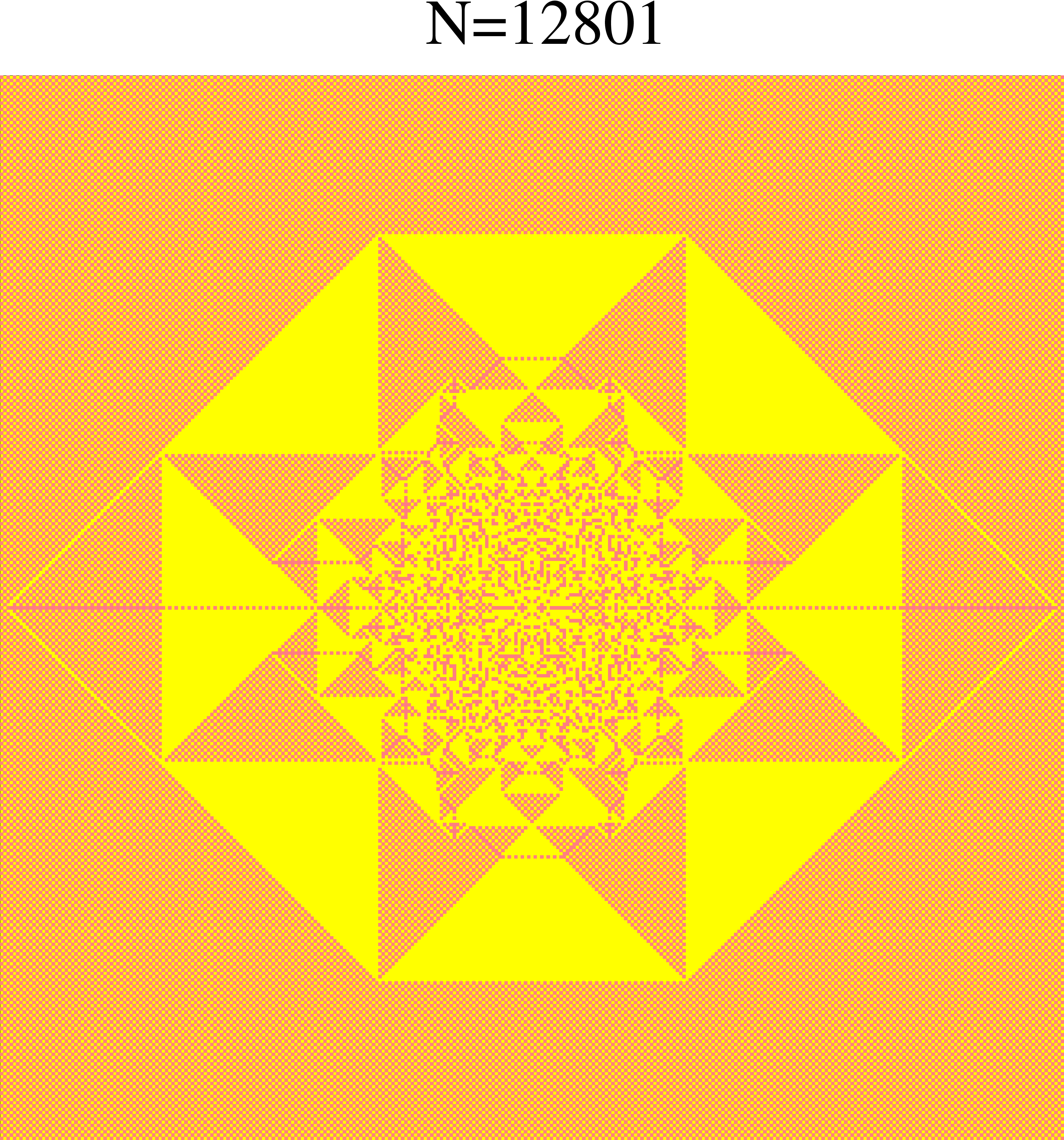}
\includegraphics[width=6.0cm,clip]{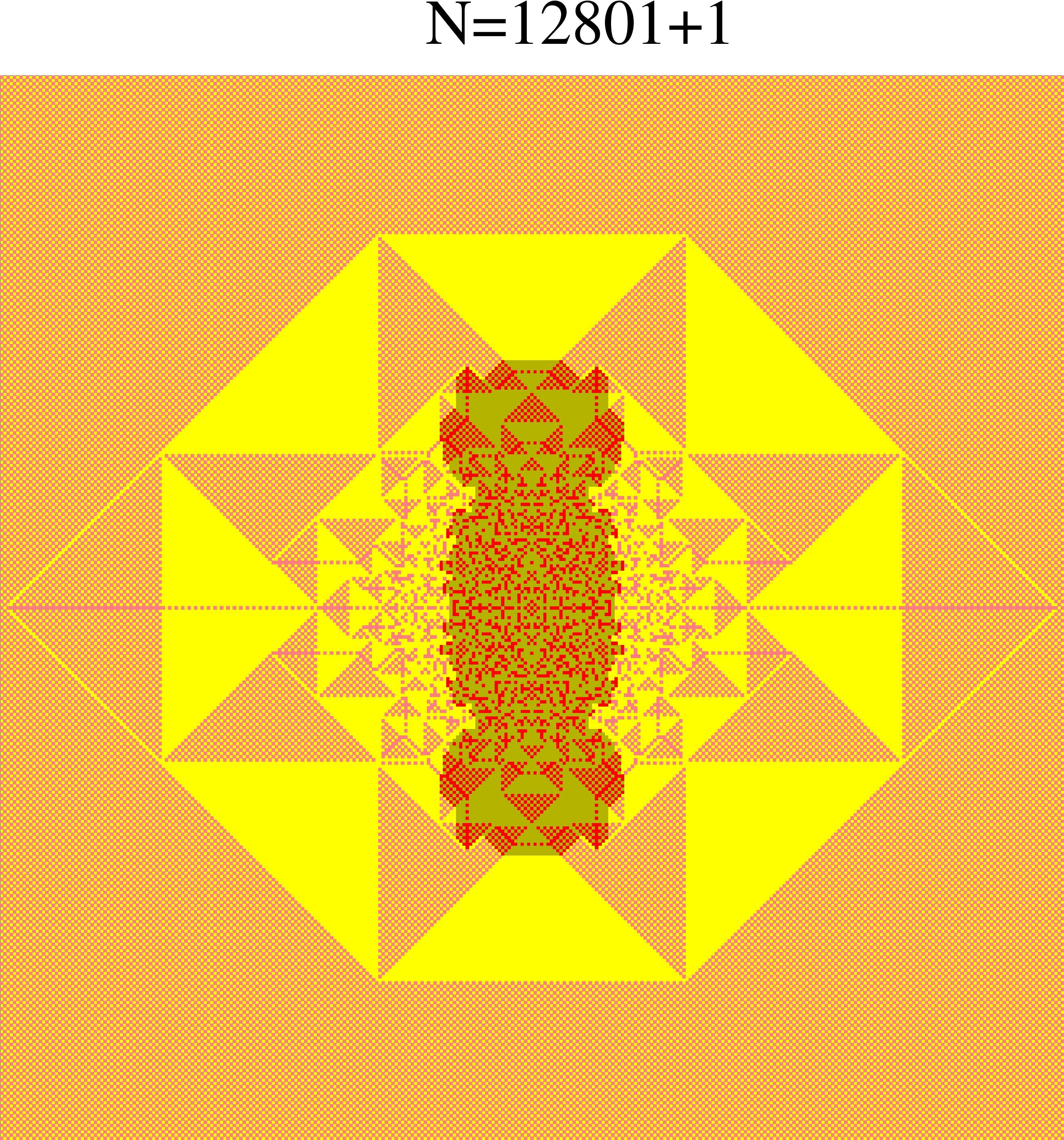}
\caption{The patterns produced on the F-lattice, by adding $N=12801$ and
$N+1$ particles. The shaded region on the second pattern represents the
toppled sites. Notice that the avalanches are stopped by some of the defect
lines in the first picture.}
\label{succ}
\end{center}
\end{figure}

Most of the time the avalanches does not
reach the boundary. They are often stopped by the defect-lines inside the patches,
which breaks the periodicity of the heights. For example, there are
lines of alternating $1$'s and $0$'s inside the dense (all $1$)
patches. When an avalanche enters the patch, the defect line shifts
its position, partially increasing the size of the patch. An example
of such event is shown in the Fig.\ref{succ}. Because of this, the diameter
increases in steps with the increase of $N$ (see
Fig.\ref{fig:boundary}).
\begin{SCfigure}
\includegraphics[width=9.0cm,clip]{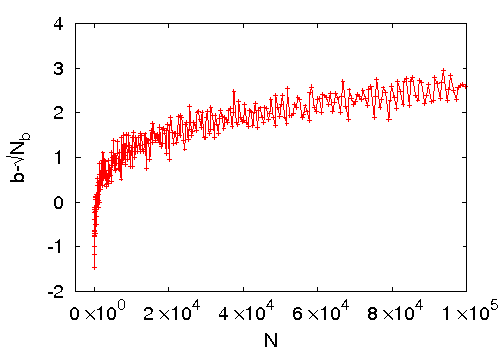}
\caption{Difference of the diameter to $\sqrt{N}$ is less than
$3$ for $N$, at least, up to $10^{5}$.}
\label{fig:bdiff}
\end{SCfigure}
\begin{SCfigure}
\includegraphics[width=9.0cm,clip]{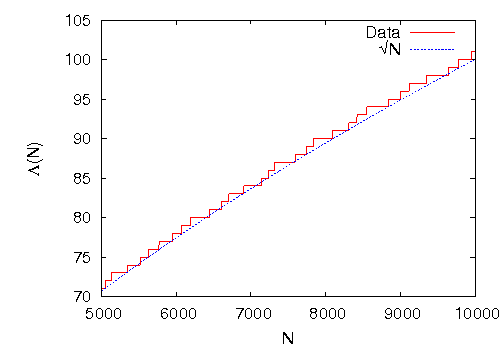}
\caption{$\Lambda$ for the F-lattice pattern (Fig. \ref{flattice}) as
a function of $N$, in the range $5000\le N \le 10000$.}
\label{fig:boundary}
\end{SCfigure}

Let $N_b$ be the minimum number of particles that have to be added so
that at least one site at $y=b$ topples. We find that for $b=10$,
$50$, $100$, and $300$, $\sqrt{N_b}=10.770$, $49.436$, $98.894$ and
$297.798$.
This is consistent with Eq. (\ref{eq:boundary}).

Then, the Poisson equation, in Eq. (\ref{poisson0}), for this pattern becomes,
\begin{equation}
\nabla^{2}\phi\left( \mathbf{r} \right)=\Delta\rho\left( \mathbf{r}
\right)-\delta\left( \mathbf{r} \right),
\label{poisson1}
\end{equation}
\textit{i.e.}, there is unit amount of point charge at the origin.

\begin{figure}
 \begin{center}
 \includegraphics[scale=0.70]{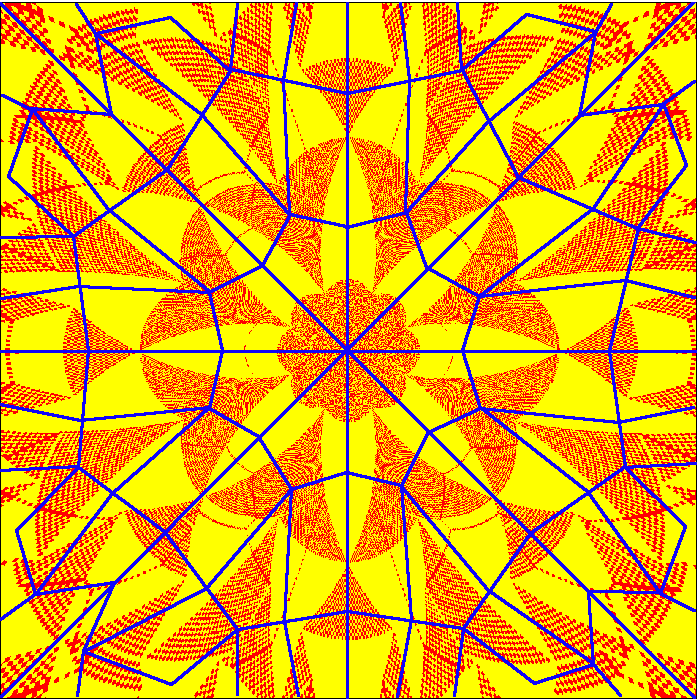}
 \caption{$1/r^2$ transformation of the pattern in Fig. \ref{flattice}. Blue lines
are drawn between the patches if they are neighbor of each other, with triangular patches 
considered as degenerate quadrilaterals.}
 \label{adj}
 \end{center}
\end{figure}
\begin{figure}
 \begin{center}
 \includegraphics[scale=0.35]{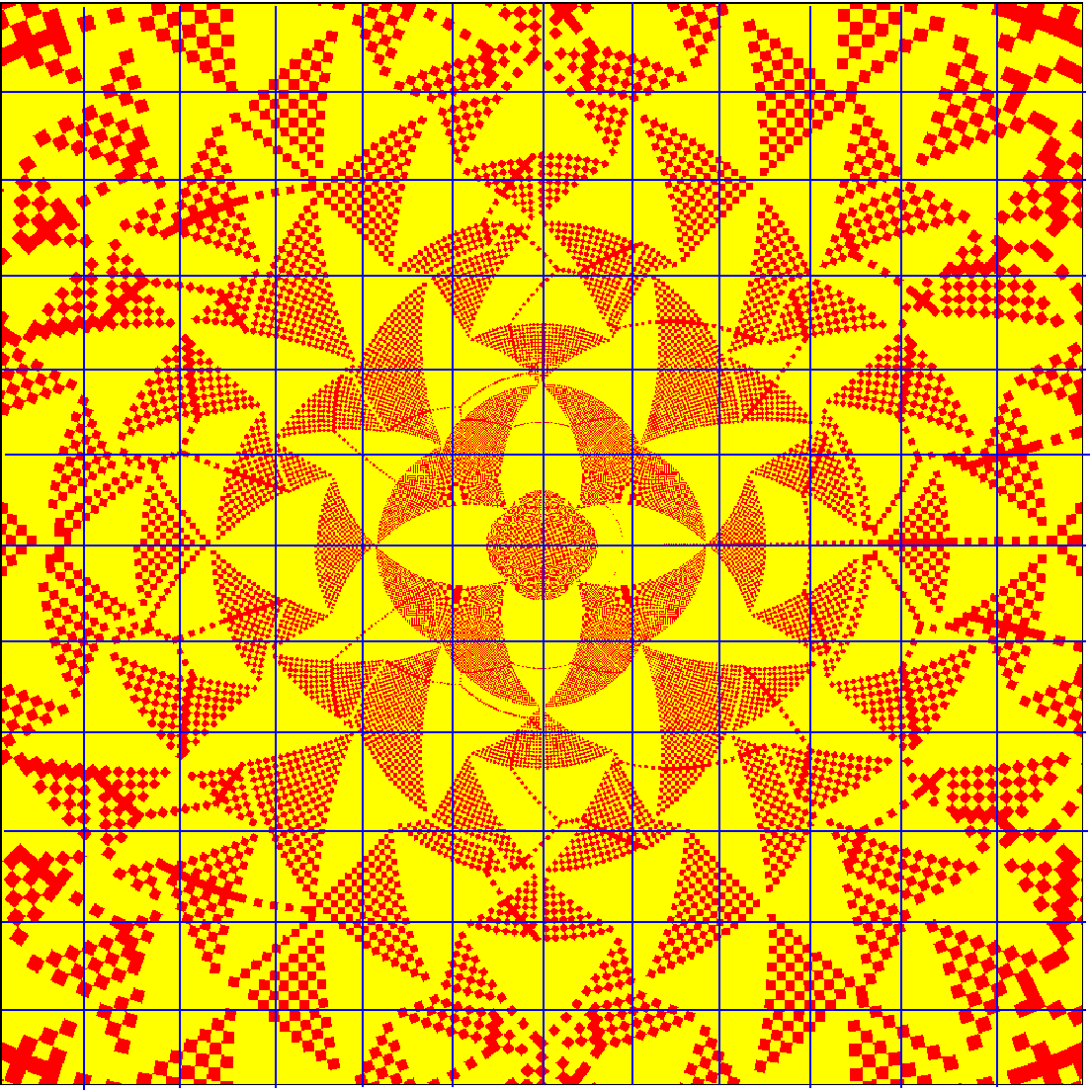}
 \caption{$1/z^2$ transformation of the section of the pattern in Fig.
\ref{flattice}, on the half-plane corresponding to positive values of
$\eta$. The regions along the positive and negative $\xi$-axis are glued together in the
transformed picture. The blue lines are drawn in a way similar those in the
previous picture, and they form a square grid.}
 \label{zsqr}
 \end{center}
\end{figure}
We now determine the parameters in the quadratic form of the potential
function. In order to do that in a consistent way, we first
look at the topological structure of the pattern. We note 
that the patches become smaller, and there are more of them in number, as 
we move towards the center. One can use a coordinate
transformation $r'= 1/r^2$,  $\theta' =\theta$ to avoid this 
overcrowding (Fig.\ref{adj}). We can now draw the  adjacency graph (Fig.\ref{fig5}$a$) 
of the pattern, where each vertex 
denotes a patch, and a bond between the vertices is drawn if the 
vertices share a common boundary. It is convenient to think of the 
triangular patches in the pattern as degenerate quadrilaterals, with one 
side of length zero. Then we see that the adjacency graph is planar with 
each vertex of degree four, except a single vertex of coordination 
number eight corresponding to the exterior of the pattern. The graph has 
the structure of a square lattice wedge of wedge angle $4\pi$. The 
square lattice structure of the adjacency graph is seen more clearly, if 
rather than $1/r^2$ transformation, the transformation used is  $z' = 1/z^2$  ( 
this has been used earlier in 
\cite{ostojic}), where $z=\xi+ i \eta$, and view it in the complex 
$z'$-plane (see Fig.\ref{zsqr}). Thus, one can equivalently represent the graph as a square 
grid on a Riemann surface of two sheets (fig.\ref{fig5}$b$).

\begin{figure}
\begin{center}
    \includegraphics[scale=0.21]{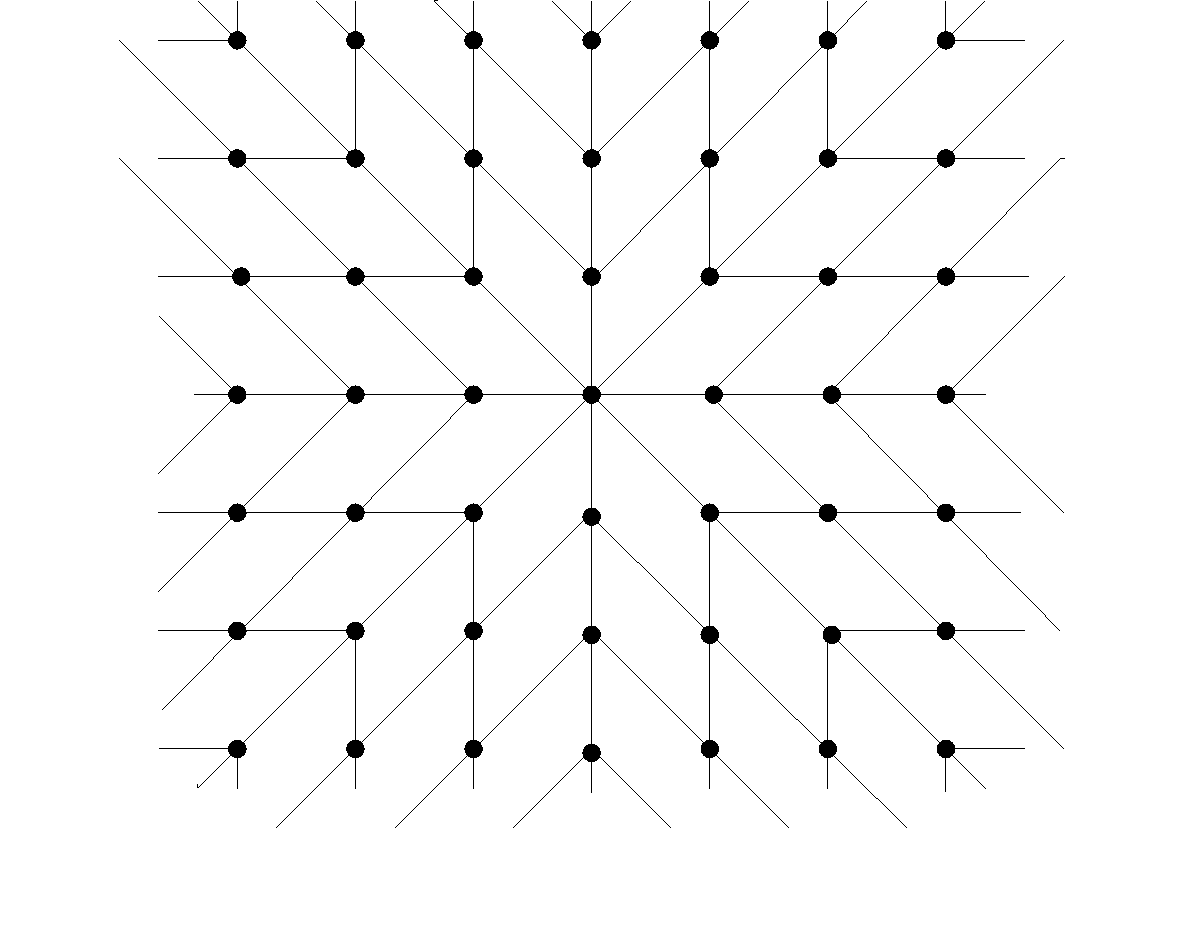}    
    \includegraphics[scale=0.35]{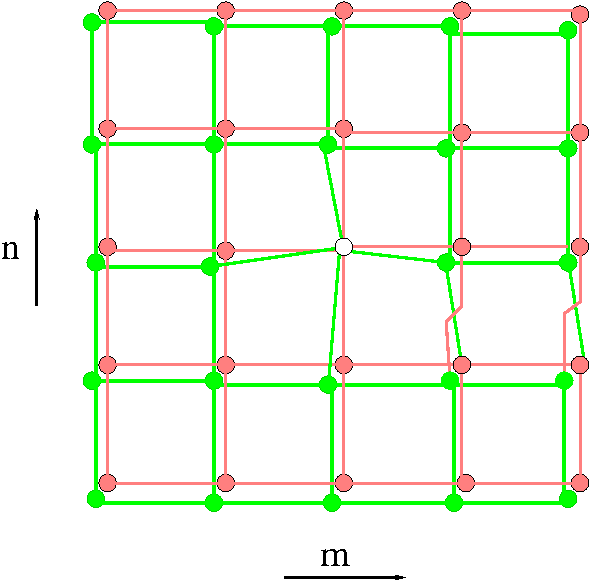} 
    \caption{Two representations of the adjacency graph of the pattern. 
    Here the vertices are the patches, and the edges connect the adjacent
    patches.
    $(a)$ Representation as a planar graph, $(b)$ as a graph of wedge of angle 
    $4 \pi$ formed by glueing together the eight quadrant graphs at the
    origin.}
    \label{fig5}
\end{center}
\end{figure}

We now use the qualitative information obtained from the adjacency 
matrix of the observed pattern, to obtain quantitative prediction of the 
exact coordinates of all the patches. Consider an arbitrary patch ${\bf P}$, 
having an excess density $1/2$. The potential function in this patch is a 
quadratic function of $(\xi,\eta)$ and we parametrize it as
\begin{eqnarray}
\phi_\p(\mathbf{r}) &=& \frac{1}{8}(m_\p+1)\xi^2 + \frac{1}{4}n_\p\xi\eta
+\frac{1}{8}(1-m_\p)\eta^2 \nonumber \\
&&+ d_\p\xi + e_\p \eta + f_\p.
\label{eq:odd}
\end{eqnarray}
The potential function in another patch ${\bf P}$ having zero excess density 
is parametrized as
\begin{equation}
\phi_\p(\mathbf{r})=\frac{1}{8}m_\p(\xi^2-\eta^2) + \frac{1}{4}n_\p\xi\eta
 + d_\p\xi + e_\p\eta + f_\p.
\label{eq:even}
\end{equation}
Now consider two neighboring patches ${\bf P}$ and ${\bf P'}$ with   
excess densities  $1/2$ and $0$ respectively. Then using the matching 
conditions (see Eq. (\ref{continuity})), it  is easy to show that if the boundary between  them 
is a horizontal line $\eta =\eta_{\p}$, we must have 
\begin{eqnarray}
m_{\p'} &=& m_\p+1 \textrm{,   }n_{\p'} = n_\p \textrm{,   } d_{\p'} = d_\p,
\nonumber \\
e_{\p'} &=& e_\p + \eta_\p/2 \textrm{,   } f_{\p'} = f_\p - {\eta_\p}^2/4.
\label{a2}
\end{eqnarray}

There are similar conditions for other boundaries. These result a 
coupled set of linear equations for the coefficients $\{m_\p, n_\p, 
d_\p,e_\p,f_\p\}$. The equations for $m_\p$ and $n_\p$ do not involve 
other variables. In the outermost patch, clearly $\phi(\mathbf{r})=0$, and for this patch
both $m$ and $n$ are zero. It follows that $m_\p$ and 
$n_\p$ are integers, equal to the Cartesian coordinates of the 
vertex corresponding to the patch 
${\bf P}$ in  the discretized Riemann surface in Fig.\ref{fig5}b.  In the
following, we denote a patch by integers
$(m,n)$, and write the corresponding coefficients $d_\p$, $e_\p$, and
$f_\p$ as $d_{m,n}$, $e_{m,n}$ and $f_{m,n}$.   
With this convention,
the matching conditions in Eq.(\ref{a2}) can be rewritten as 
\begin{equation}
d_{m+1,n}=d_{m,n} \textrm{, } e_{m+1,n}-e_{m,n}=\eta_{m,n}/2 \textrm{, $(m+n)$ odd}.
\label{10}
\end{equation}
Using similar matching conditions for the boundary
of patch $(m$, $n)$ with slope $\pm1$, we get the conditions
\begin{eqnarray}
d_{m,n+1}-d_{m,n} &=& e_{m,n} -e_{m,n+1},\textrm{    }(m+n)\rm{~odd},
\nonumber \\
d_{m,n-1}-d_{m,n} &=& e_{m,n-1} - e_{m,n},\textrm{    }(m+n)\rm{~odd}.
\label{11}
\end{eqnarray}
We can eliminate the variables $d_{m,n}$ and $e_{m,n}$ with $(m+n)$
even using Eq. (\ref{10}) and Eq.(\ref{11}). Then the equations become 
\begin{eqnarray}
e_{m+2,n}-e_{m,n} &=& \eta_{m,n}/2, \label{a6}\\
d_{m-2,n}-d_{m,n} &=& \xi_{m,n}/2, \label{a7}\\
d_{m-1,n-1}-d_{m,n} &=& e_{m+1,n-1}-e_{m,n}, \label{a8}\\
d_{m-1,n+1}-d_{m,n} &=& -[e_{m+1,n+1}-e_{m,n}]. \label{a9} 
\end{eqnarray}

It is convenient to introduce the complex variables $z = \xi + i \eta$,
$ M = m + i n$ and $D = d + i e$. In these variables we can write
the potential function, in Eq. (\ref{eq:odd}) and (\ref{eq:even}), as
\begin{equation}
\phi(z) =  \frac{1}{8} z \bar{z} + \frac{1}{8} Re[ z^2 \bar{M} + \bar{D} 
z] +f,
\end{equation}
where overbar denotes complex conjugation.

On the $(m,n)$ lattice, with $(m + n)$ odd, the natural basis vectors are 
$(1,1)$ and $(1,-1)$. Let us call these $\alpha$ and $\beta$. We define 
the finite difference operators  $\Delta_{\pm \alpha}$ and $\Delta_{\pm 
\beta}$ by 
\begin{eqnarray}
\Delta_{\pm \alpha} f(z) = f( z \pm \alpha) - f(z),\nonumber\\
\Delta_{\pm \beta} f(z) = f(z \pm \beta) -f(z).
\end{eqnarray}
Then the equations (\ref{a6}-\ref{a9}) can be written as
\begin{eqnarray}
\Delta_{-\alpha} d = \Delta_{\beta} e, \nonumber \\
\Delta_{-\beta} d = -\Delta_{\alpha} e.
\end{eqnarray}

These equations are the discrete analog of the familiar Cauchy-Riemann
conditions connecting the partial derivatives of real and imaginary 
parts of an analytic function where the role of the analytic function is 
played by   $ D = d + i e$.   

From Eq.$(\ref{a6})$ and Eq.(\ref{a9}), it is easy to deduce that $D$   
satisfies  the  discrete Laplace's equation
\begin{equation}
[\Delta_{\alpha} \Delta_{-\alpha} +\Delta_{\beta} \Delta_{-\beta}] D =0.
\label{a10}
\end{equation}

If $m$ and $n$ are large, the corresponding patch is near the origin
($|\xi|+|\eta|$ is small), and where the leading behavior of 
$\phi(\mathbf{r})$ is given by 
$\tilde{\phi}(\mathbf{r}) \sim -\frac{1}{4\pi}\log(\xi^2+\eta^2)$
(see Eq. \ref{poisson1}).
Consider a point $z_0$, such that at $z_0$
\begin{equation}
\partial^2 \tilde{\phi}/\partial \xi^2 \approx m/4;~~~~~
\partial^2 \tilde{\phi}/\partial \xi \partial \eta \approx n/4.
\end{equation}
Then, $z_0$ would be expected to lie in the patch labeled by $(m,n)$.
This gives $z_0 \approx \pm ( \pi \bar{M}/2)^{-1/2}$. Then, setting
$\partial \tilde{\phi}/\partial z$ equal to $\bar{D}/2$ gives us
\begin{equation}
D_{m,n} \simeq \pm \frac{1}{\sqrt{2\pi}} \sqrt{m + i n}.
\label{a12}
\end{equation}
The equation (\ref{a10}), subjected to the behavior at large
$|m|+|n|$ given by Eq.(\ref{a12}) on the $4 \pi$-wedge graph 
(for each value of $(m,n)$, $D_{m,n}$ has two values) has a unique
solution. Clearly the solution has eight fold rotational symmetry
about the origin in the $(m, n)$ space. This implies that 
\begin{equation}
D_{-n,m}=i^{1/2} D_{m,n}; \rm{~for~all~} (m,n).
\label{a22} 
\end{equation}
Given $D_{m,n}$, its real and imaginary parts determine $d_{m,n}$ and
$e_{m,n}$, and using Eq.(\ref{a6}, \ref{a7}) we determine the exact
positions of all the patch corners. 
The exact eight-fold rotational symmetry of the adjacency graph of the 
pattern, and the fact that $D$ satisfies Eq.(\ref{a22}) on the adjacency graph 
together imply the eight-fold rotational symmetry of all the distances 
in the pattern. 

Note that for the usual square lattice, the solution of Eq.(\ref{a10})
is the well known 2-dimensional lattice Greens function, that is explicitly 
calculable for any finite $(m,n)$, and is a simple polynomial of $1/\pi$ with rational coefficiants 
\cite{spitzer}. However, for our case of the two sheeted Riemann surface, 
we have not been able to find a closed-form formula for $D_{m,n}$.
But the solution can be determined 
numerically to very good precision by
solving it on a finite grid $-L \le m$, $n\le L$ with the condition in
Eq.(\ref{a12}) imposed exactly at the boundary. We determined $d_{m,n}$ 
and $e_{m,n}$ numerically for $L=100,200,400$, and extrapolated our 
results for $L\rightarrow\infty$. We find $d_{1,0}= 0.5000 $ and 
$d_{2,1} = 0.6464$, in perfect agreement with the exact theoretical 
values $1/2$ and $1- 1/2\sqrt{2}$, respectively, determined using the
Fig. \ref{tile}.
\begin{figure}
\begin{center}
    \includegraphics[scale=0.7]{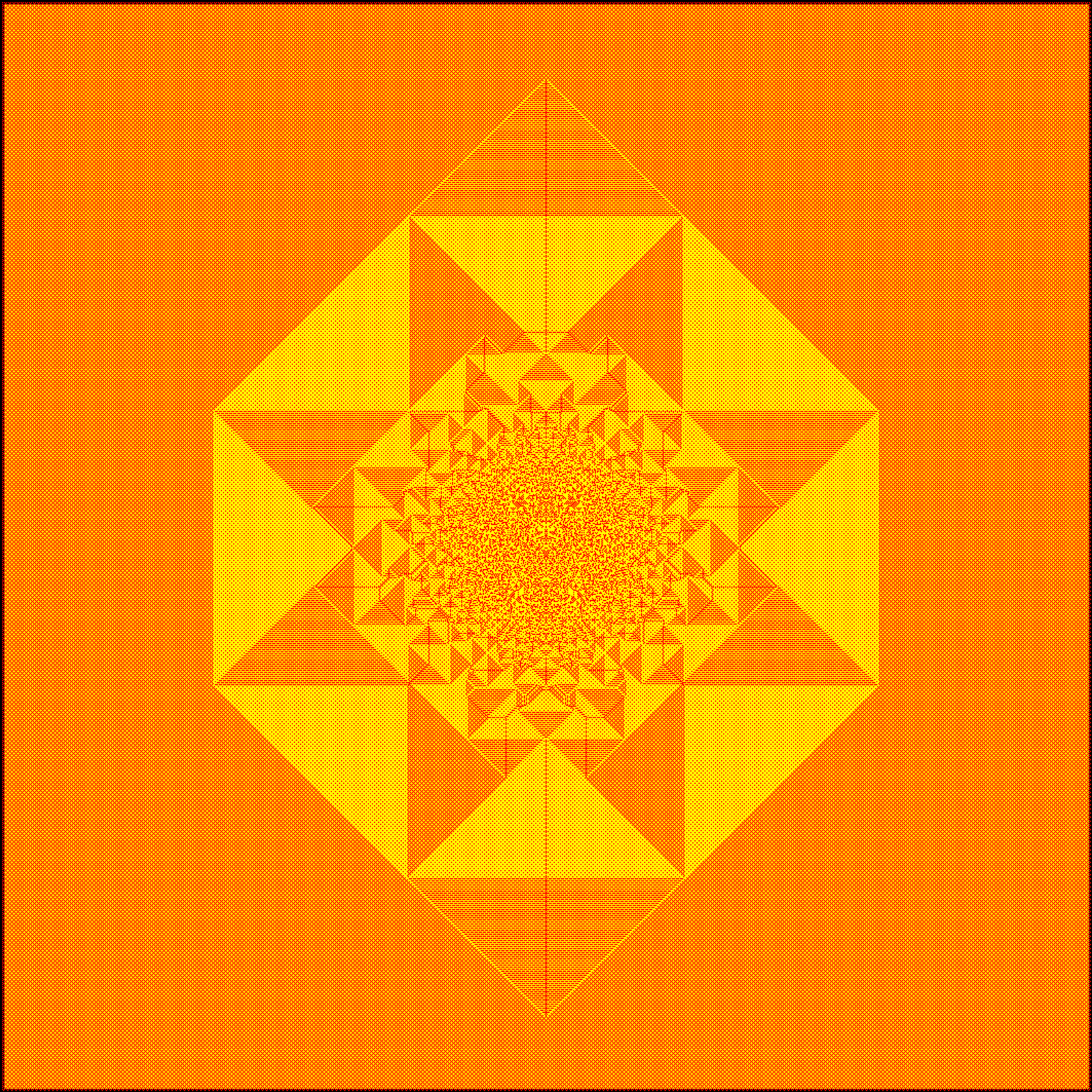} 
    \caption{The stable configuration for the abelian sandpile model on F-lattice, 
obtained by adding $5\times10^4$ particles at one site, initial configuration
with average height $5/8$. Color code: red=0, yellow=1. 
(Details can be seen in the electronic version using zoom in.)}
    \label{f5by8}
\end{center}
\end{figure}
\begin{figure}
\begin{center}
    \includegraphics[scale=0.7]{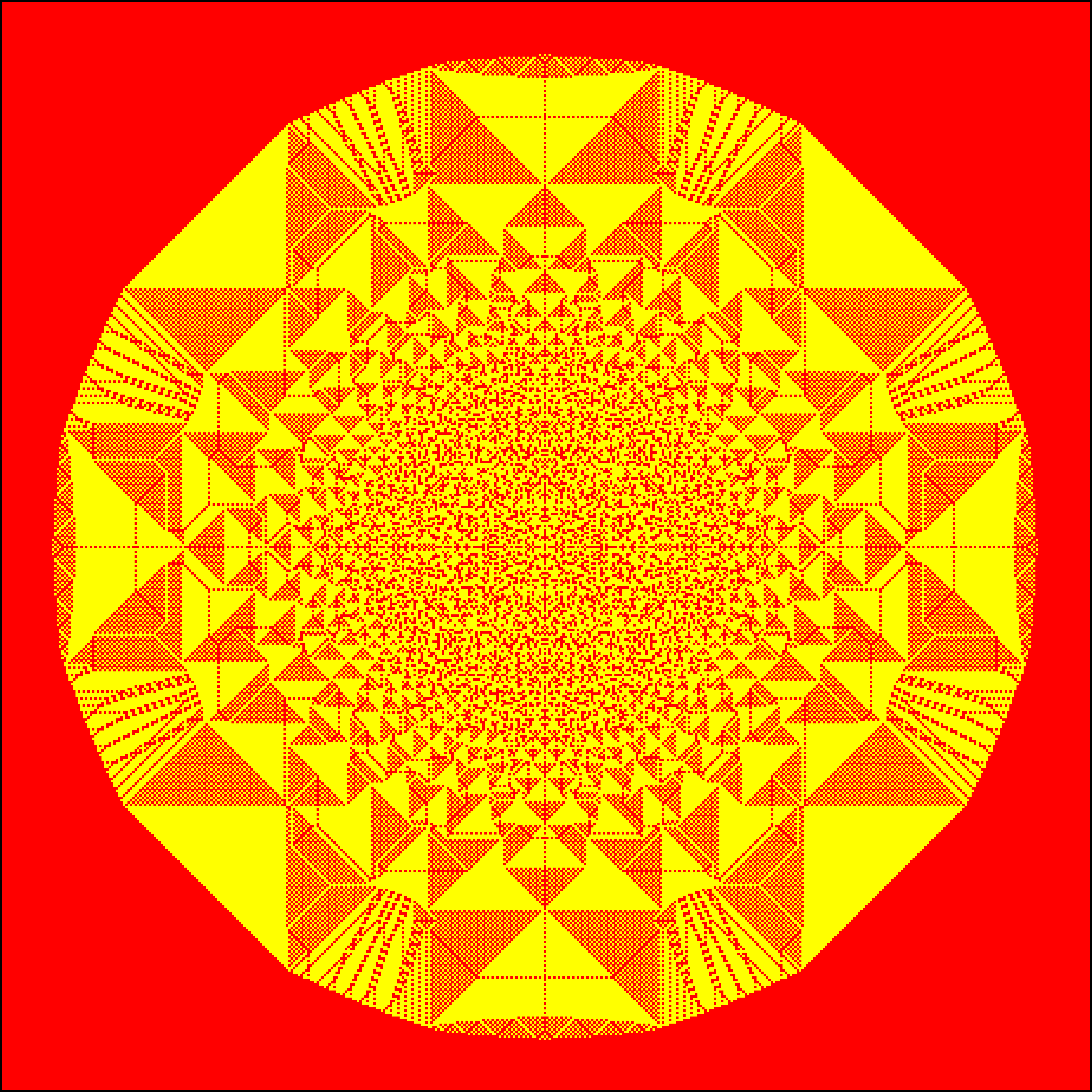} 
    \caption{The stable configuration for the abelian sandpile model on F-lattice, 
obtained by adding $10^5$ particles at one site, 
initial configuration with all heights $0$. Color code: red=0, yellow=1. 
(Details can be seen in the electronic version using zoom in.)}
    \label{fzero}
\end{center}
\end{figure}

An interesting question to ask is, what is the size distribution of
the patches? This can be easily determined from the $1/z^{2}$
transformation of the pattern, as shown in Fig.\ref{zsqr}. In this
representation, the patches are positioned around the sites of the square grid, and all the patches
have similar sizes. The Jacobian $J(\xi,\eta)$ of the coordinate transformation is an estimate
of the area on the $\xi$-$\eta$ plane, corresponding to a unit area around
$(m,n)$ site of the grid on the $\xi'$-$\eta'$ plane. For large $m$ and $n$, it can be shown that
\begin{equation}
J(\xi,\eta)\sim |\xi+i\eta|^{3}\sim \frac{1}{|m+in|^{3/2}}.
\end{equation}
On the other hand, from the same picture, the number of patches of
area larger than or equal to the area of the patch $(m,n)$ is equal to the
number of sites inside the circle of radius $|m+in|$, on the square
grid. This number increases linearly with $ |m+in|$. Then, the number of patches of
area greater than or equal to $A$ is $\sim A^{-2/3}$, which in turn
implies that the number of patches of area lying between $A$ and
$A+dA$ would vary as $A^{-5/3}dA$.
\section{Other patterns}
Our calculations above can be easily extended to the patterns on any other
background, on any other two dimensional lattices, 
so long as there are only patches with two values of $\Delta \rho$. The
matching conditions along the patch boundaries, ensure that the
boundaries are straight lines. Although, we do not have a complete understanding
of what determines the slope of the patch boundaries and the number of them
for a patch, it seems, that as long as there are only two types of
patches, the patches are always quadri-laterals, and the slopes of
boundaries are integer multiples of $\pi/4$. Then our analysis shows
that the asymptotic pattern is same as the one for the F-latiice. For example,
as mentioned already, this is true for the Manhattan lattice (Fig.\ref{manhattan}), for initial 
density $1/2$.  Same happens for the the F-lattice itself, with a different periodic 
background of initial density 5/8 ($z_{i,j}=1$ if $i+j$ 
even, or $(i,j)$ congruent to $(0$, $1)$ or $(2$, $3)$ mod $4$). The
pattern for this case is shown in (Fig.\ref{f5by8}). In this case,
only the density of patches are different from the one on the
checkerboard background, but the patch boundaries for the asymptotic
pattern, in the rescaled coordinate, are at the identical positions.

In some other cases, like the F-lattice, with initially all sites empty, the pattern 
is very similar, but there are some aperiodic patches in the outermost 
ring (Fig.\ref{fzero}). Since the behavior of $\phi(\mathbf{r})$ in
such patches are not known, the equations for $D_{m,n}$ do not close in this case. 

Finally, how much of this analysis applies to the pattern in Fig.
\ref{asm}? As noted in \cite{ostojic}, there are large number,
possibly
infinitely many periodic patches in the asymptotic pattern.
Characterization of such patterns remains an interesting open problem.

\chapter{Effect of multiple sources
and sinks on the growing sandpile pattern\label{ch3}}
\textit{Based on the paper \cite{myjsp2}} by Tridib Sadhu and Deepak Dhar.

\begin{itemize}
\item[\textbf{Abstract}]
In this chapter, we study the effect of sink sites on DASM patterns, discussed in chapter
\ref{ch2}. Sinks
change the scaling of the diameter of the pattern with the number $N$ of sand
grains added. For example, in two dimensions, in the presence of a sink site, the
diameter of the pattern grows as $\sqrt{(N/\log N)}$ for large $N$, whereas it
grows as $\sqrt{N}$ if there are no sink sites. In the presence of a line of sink
sites, this rate reduces to $N^{1/3}$. We determine the growth rates for various sink geometries along with the case when there are
two lines of sink sites forming a wedge, and generalizations to higher dimensions.
We characterize the asymptotic pattern in the large $N$ limit for one such case, the two-dimensional F-lattice
with a single source adjacent to a line of sink sites. The characterization is done in terms of the positions of different
spatial features in the pattern. For this lattice, we also provide an exact
characterization of the pattern with two sources, when the line joining them is along one of the axes of the lattice. 
\end{itemize}

\section{Introduction}
In the previous chapter, we studied growing sandpiles in the abelian model on
the F-lattice and the Manhattan lattice.
We were able to characterize the pattern corresponding to the initial configuration in which each alternate
site of the lattice is occupied, forming a checkerboard pattern. The full characterization of this
pattern reveals an interesting underlying mathematical structure,
which seems to deserve further exploration. This is what we do in
this chapter by adding sink sites or multiple sources.  

The presence of sink sites changes the pattern in interesting ways. In
particular, it changes how different spatial lengths in the pattern
scale with the number of added grains $N$. For example, in the absence of
sink sites, the diameter of the pattern grows as $\sqrt{N}$ for large
$N$, whereas in the presence of a single sink site,
this changes to a $\sqrt{N/\log{N}}$ growth. If there is a line of
sink sites next to the site of addition, the growth rate is $N^{1/3}$.
We also study the case in which the source site is at the corner of a
wedge-shaped region of wedge angle $\omega$, where the wedge boundaries
are absorbing. We show that for any $\omega$ the pattern grows as
$N^{\alpha}$, with $\alpha=\omega/(\pi+2\omega)$. This analysis
is extended to other lattices with different initial
height distributions, and to higher dimensions.

We also study the exact characterization of the asymptotic pattern in the infinite $N$ limit for
the pattern with a line of sink sites. For a single point source, as
discussed in chapter \ref{ch2},
the determination of the different distances in the pattern requires a solution
of the Laplace equation on a discrete Riemann surface of two-sheets.
Interestingly, for the pattern with a line sink, we still have to solve the
discrete Laplace equation, but the structure of the Riemann surface
changes from two-sheets to three-sheets.
 
We then study the effect on the pattern of having multiple sites of addition.
For multiple sources, the pattern of small patches near each source is not
substantially different from a single-source pattern, but some rearrangements
occur in the larger outer patches. Two patches may sometimes join into one,
or, conversely, a patch may break up into two. While the number of patches
undergoing such changes is finite, the sizes and positions of all the patches are
affected by the presence of the other source, and we show how these changes
can be calculated exactly for the asymptotic pattern.

This chapter is organized as follows. In Section \ref{ch3.2}, we discuss scaling of the diameter of the patterns with $N$ for different
sink geometries. First, we consider the pattern in the presence of a line of sink sites. Then,
this analysis is extended to other sink geometries: two intersecting line
sinks in two dimensions and two or three intersecting planes of sink sites in three
dimensions. The problem of a single sink site is a bit different from the others, and
is discussed separately in Section \ref{ch3.3}. In Section
\ref{ch3.4}, we numerically verify the
growth rates. The remaining sections are devoted to
a detailed characterization of some of these patterns. In Section
\ref{ch3.5}, we
characterize the pattern in the presence of a line sink. In Section
\ref{ch3.6},
we discuss the case when there are two sources present. These analytical
calculations for the metric properties of the asymptotic pattern are
compared in Section \ref{ch3.7}, with the measured values for the patterns with finite but large $N$.
Section \ref{ch3.8}, contains a summary and some concluding remarks.

\section{Rate of growth of the patterns}\label{ch3.2}
For the single source pattern, discussed in chapter \ref{ch2}, the diameter $2\Lambda\left( N \right)\simeq2\sqrt{N}$,
for large $N$. We want to study how this dependence gets modified in the presence
of sink sites.

First, consider the pattern formed by adding sand grains at a single site in
the presence of a line of sink sites. In the rest of this chapter we
will use the same notations defined in chapter \ref{ch2}. Any grain reaching
a sink site gets absorbed, and is removed from the system. For simplicity let us consider the
source site at $\mathbf{R}_{o}\equiv\left(x_{o}, 0  \right)$ and the sink sites along
the $y$-axis. A picture of the  pattern produced by adding $14336000$ grains at
$\left( 1, 0 \right)$ is shown in Fig.\ref{fig:lsone}.
\begin{figure}
\begin{center}
\includegraphics[width=18cm,angle=-90]{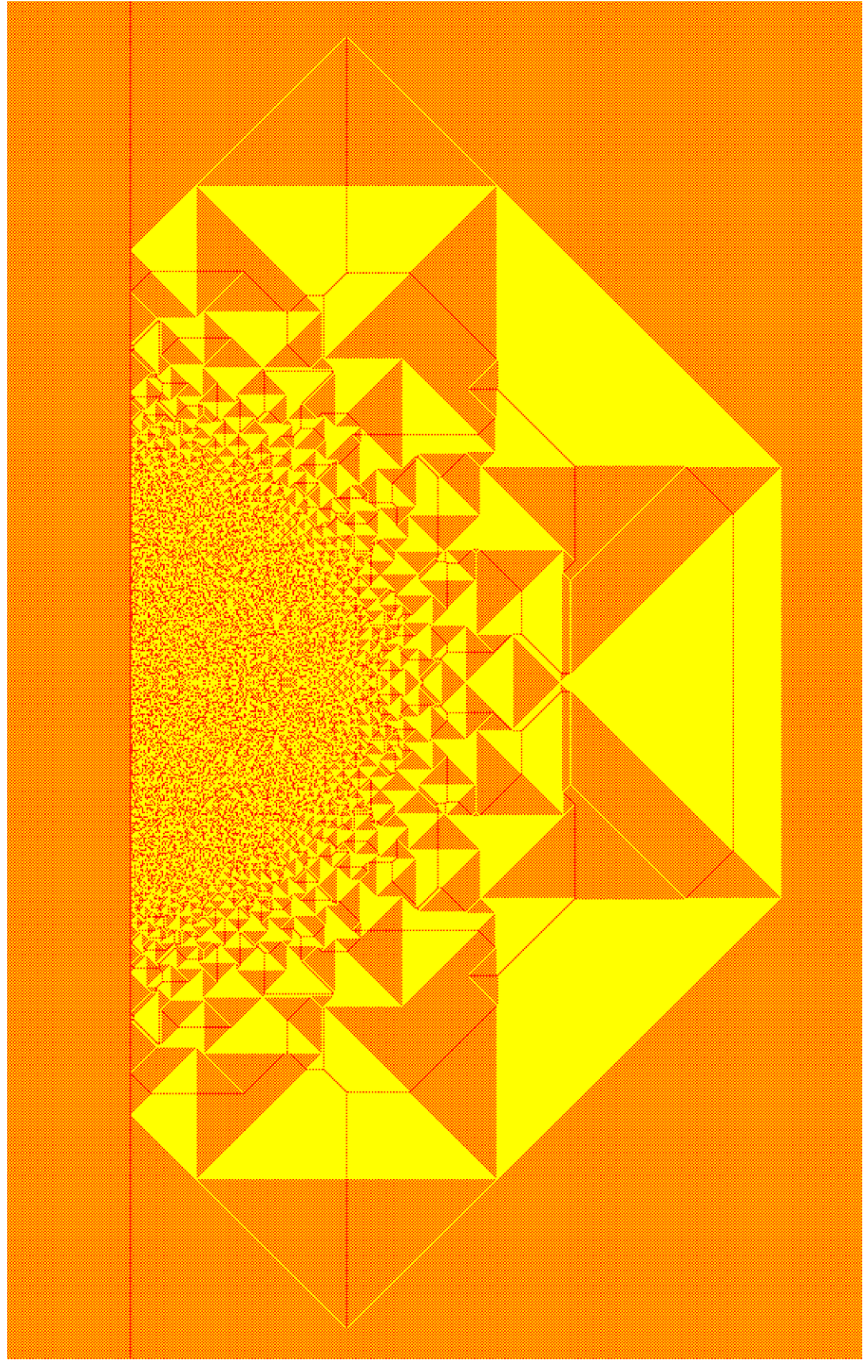}
\caption{ Pattern produced by adding grains at a single site adjacent to a line
of sink sites. Color code: red=0 and yellow=1. Apparent orange regions in the picture represent
patches with checkerboard configuration (Zoom in for details in the
electronic version).}
\label{fig:lsone}
\end{center}
\end{figure}

The equation analogous to Eq. (\ref{poisson0}) for this problem is
\begin{equation}
\nabla^{2}\phi\left( \mathbf{r} \right)=\Delta\rho\left( \mathbf{r} \right)-\frac{N}{\Lambda^{2}}\delta\left( \mathbf{r} - \mathbf{r}_{o} \right),
\label{poisson}
\end{equation}
for all $\mathbf{r}$ in the right-half plane with $\xi > 0$,
where $\mathbf{r}_{o}$ is the position of the source in reduced
coordinates. Also, as there is no toppling at the sink sites, $\phi$
must satisfy the boundary condition
\begin{equation}
\phi\left( \mathbf{r} \right)=0 \textrm{~ ~ for~all~}\mathbf{r}\equiv
\left(0, \eta \right).
\label{bc}
\end{equation}

We can think of $\phi$ as the potential due to a point charge $N/\Lambda^{2}$ at $\mathbf{r}_{o}$
and an areal charge density $-\Delta\rho\left( \mathbf{r} \right)$, in the presence
of a grounded conducting line along the $\eta$-axis. This problem can be solved
using the well-known method of images in electrostatics. Let $\mathbf{r}'$ be
the image point of $\mathbf{r}$ with respect to the $\eta$-axis.
Define $\Delta\rho\left( \mathbf{r} \right)$ in the left half plane as
\begin{equation}
\Delta\rho\left( \mathbf{r}' \right)=-\Delta\rho\left( \mathbf{r} \right).
\label{imden}
\end{equation}
Then the Poisson equation for this new charge configuration is
\begin{equation}
\nabla^2\phi(\mathbf{r})=\Delta\rho(\mathbf{r})-\frac{N}{\Lambda^2}\delta\left( \mathbf{r} - \mathbf{r}_{o} \right)+\frac{N}{\Lambda^2}\delta\left( \mathbf{r} - \mathbf{r'}_{o} \right).
\label{poisson4}
\end{equation}
As the function $\Delta\rho\left( \mathbf{r} \right)$ is odd under reflection,
$\phi$ automatically vanishes along the $\eta$-axis.

We define $N_{r}$ as the number of sand grains that remain unabsorbed. Then
\begin{equation}
N_{r}=\sum_{x>0}\sum_{y}\Delta z\left( x, y \right),
\label{nr1}
\end{equation}
where $\Delta z\left(x, y  \right)$ is the change in the height variables
between its values before and after the system relaxes. Clearly, for large $\Lambda$, we can
write
\begin{equation}
N_{r}\simeq\Lambda^{2}\int_{\mathbb{H}}d\tau\Delta\rho\left( \mathbf{r} \right),
\label{nr2}
\end{equation}
where $d\tau=d\xi d\eta$ is the infinitesimal area around $\mathbf{r}\equiv \left( \xi, \eta \right)$.
The integration is performed over the right half-plane $\mathbb{H}$ with $\xi>0$.
We shall use the sign $\simeq$ to denote equality up to leading order in $\Lambda$. 
Since $\Delta\rho\left( \mathbf{r} \right)$ is a non-negative bounded function,
exactly zero outside a finite region, this integral exists. Let its value be $C_{2}$, then we have
\begin{equation}
N_{r}\simeq C_{2}\Lambda^{2}.
\label{nr3}
\end{equation}

Let $N_{a}$ denote the number of grains that are absorbed by the sink
sites. Then considering that the grains can reach the sink sites only by toppling at its neighbors we have
\begin{equation}
N_{a}\simeq\frac{1}{2}\sum_{y}T_{\Lambda}\left( 1, y \right).
\label{na1}
\end{equation}
The factor $1/2$ comes from the fact that in the F-lattice, only half of the sites
on the column $x=1$ would have arrows going out to the sink sites. Then using
our scaling ansatz in equation (\ref{phi}), for $\Lambda$ large,
\begin{equation}
T_\Lambda\left( 1, y \right)\simeq 2 \Lambda  \left.\frac{\partial \phi}{\partial \xi}\right \arrowvert_{\xi=0}.
\label{T1}
\end{equation}
Hence
\begin{equation}
N_{a}\simeq\Lambda^{2}\int_{-\infty}^{\infty}d\eta\left.\frac{\partial \phi}{\partial \xi}\right \arrowvert_{\xi=0}.
\label{na2}
\end{equation}
Now from equation (\ref{poisson4}) the potential $\phi$ can be written as the sum of two
terms: $\phi_{dipole}$ due to two point charges $N/\Lambda^{2}$ and $-N/\Lambda^{2}$
at $\mathbf{r}_{o}\equiv\left( \xi_{o}, 0 \right)$ and its image point
$\mathbf{r}'_{o}\equiv\left( -\xi_{o}, 0 \right)$ respectively, and the term 
$\phi_{rest}$ due to the areal charge density.
\begin{equation}
\phi\left( \mathbf{r} \right)=\phi_{dipole}\left( \mathbf{r} \right)+\phi_{rest}\left( \mathbf{r} \right),
\label{phi2}
\end{equation}
where
\begin{eqnarray}
\nabla^{2}\phi_{dipole}\left( \mathbf{r} \right)&=&-\frac{N}{\Lambda^2}\delta\left( \mathbf{r} - \mathbf{r}_{o} \right) + \frac{N}{\Lambda^2}\delta\left( \mathbf{r} - \mathbf{r}'_{o} \right),\nonumber \\
\nabla^{2}\phi_{rest}\left( \mathbf{r} \right)&=&\Delta\rho\left( \mathbf{r} \right).
\label{poisson5}
\end{eqnarray}
We first consider the case where $R_{o}$ is finite and $r_{o}=R_{o}/\Lambda$ vanishes
in the large $\Lambda$ limit. Then $\phi_{dipole}$ reduces to a dipole potential,
and it diverges near the origin. However,  $\phi_{rest}\left( \mathbf{r} \right)$ is a
continuous and differentiable function for all $\mathbf{r}$. From the solution of
the dipole potential, it is easy to show that
\begin{equation}
\phi_{dipole}\left( r, \theta \right) =  A \frac{\cos\theta}{r},
\label{dipole1}
\end{equation}
for $1\gg r\gg 1/\Lambda$, where we have used polar coordinates $\left( r, \theta \right)$
with $\theta$ being measured with respect to the $\xi$-axis. Here 
$A$ is a numerical  constant, which is a property of the asymptotic pattern. Then
\begin{equation}
\left.\frac{\partial \phi}{\partial \xi}\right \arrowvert_{\xi=0}=\frac{A}{\eta^{2}},
\label{phi3}
\end{equation}
and the integral in equation (\ref{na2}) diverges as $A/\eta_{min}$, where
$\eta_{min}$ is the cutoff introduced by the lattice. Using $\eta_{min}=\mathcal{O}\left( 1/\Lambda \right)$
it is easy to show that
\begin{equation}
N_{a}\simeq C_{1}\Lambda^{3},
\label{na3}
\end{equation}
where $C_{1}$ is a constant. Then using equations (\ref{nr3}) and (\ref{na3}) and that $N_{a}$ and $N_{r}$ add up to $N$,
we get
\begin{equation}
C_{1}\Lambda^{3}+C_{2}\Lambda^{2}\simeq N.
\label{scalethree}
\end{equation}
Considering the dominant term in the expression for large $\Lambda$, it follows
that $\Lambda$ increases as $N^{1/3}$.

For the patterns in the other limit where the source is placed at a distance
$\mathcal{O}\left( \Lambda \right)$ such that $r_{o}$ is non-zero for $\Lambda\rightarrow\infty$,
$\phi_{dipole}$ is non-singular along  the  sink line. Then, clearly  $N_{a}\sim \Lambda^{2}$ and as a result $\Lambda\left( N \right)\sim N^{1/2}$.

The above analysis can be easily generalized to a case
with the sink sites along two straight lines intersecting at an angle $\omega$
and a point source inside the wedge. For a square lattice,
$\omega=0, \pi/2, \pi, 3\pi/2$ and $2\pi$ are most easily constructed,
and avoid the problems of lines with irrational slopes, or rational numbers 
slopes with large denominators. The wedge with wedge-angle 
$\omega=\pi/2$ is obtained by placing the sink sites along the $x$ and $y$-axis
and the source site at $\mathbf{R}_{o}\equiv\left( 1, 1 \right)$ in
the first quadrant. The pattern with a line sink, discussed in
previous section, corresponds to $\omega=\pi$. 

For the general $\omega$, the corresponding electrostatic problem reduces to
determining the potential function $\phi$ inside a wedge formed by two
intersecting grounded conducting lines. Again the potential has two
contributions: the potential $\phi_{point}\left(\mathbf{r}  \right)$
due to a point charge at the source site and the potential $\phi_{rest}\left( \mathbf{r} \right)$
due to the areal charge density. We first consider the case where the source site is placed at a finite
distance from the wedge corner such that the distance in reduced
coordinates vanishes in the large $\Lambda$ limit. In this limit $\phi_{rest}$ is a non-singular
function of $\mathbf{r}$ while $\phi_{point}$ diverges close
to the origin. A simple  calculation of the electrostatic problem gives
\begin{equation}
\phi_{point}\left( r, \theta \right) \approx A \frac{\sin \alpha \theta}{r^{\alpha}},
\label{mpole}
\end{equation}
where $\alpha=\pi/\omega$ and we have used polar coordinates $\left( r, \theta \right)$ with
the polar angle $\theta$ measured from one of the absorbing lines.
Again $A$ is a constant independent of $N$ or $\Lambda$ and is a property of the asymptotic pattern.
Then arguing as before, we get
\begin{equation}
N_{a}\simeq C_{1}\Lambda^{2+\alpha}\rm{~ ~ and ~ ~ }N_{r}\simeq C_{2}\Lambda^{2}.
\label{na4}
\end{equation}
So the equation analogous to equation (\ref{scalethree}) is
\begin{equation}
C_{1}\Lambda^{2+\alpha}+C_{2}\Lambda^{2}\simeq N.
\label{scalefour}
\end{equation}
For a wedge angle $\omega=\pi$, $\alpha = 1$, and the above equation reduces to 
Eq.(\ref{scalethree}).

Similar arguments involving conformal transformation
have been used earlier in the context of equilibrium statistical physics to determine the wedge-angle dependence of surface critical
exponents near a wedge \cite{dup}.

For the problem where the source site is at a distance
$\mathcal{O}\left( \Lambda \right)$ from the wedge corner 
both the functions $\phi_{rest}$ and $\phi_{point}$ are nonsingular
close to the origin. It is easy to show that $\Lambda\left( N \right)$ grows as $N^{1/2}$.

These arguments can be easily extended to other lattices with
different initial height distributions, or to higher dimensions.
Consider, for example, an abelian sandpile model defined on the  cubic 
lattice. The allowed heights are from $0$ to $5$, and a site topples if the height
exceeds $5$, and sends one particle to each neighbor. The sites are labelled
by the Cartesian coordinates $(x,y,z)$, where $x,y$ and $z$ are integers. We
consider the infinite octant defined by $x\geq 0,y \geq 0,z \geq 0$. We start
with all heights equal to $4$, and add sand grains at the site $(1,1,1)$.  We assume
that the sites on planes $x=0$, $y=0$ and $z=0$ are all sink sites, and any grain reaching there is lost.
We add $N$ grains and determine the diameter of the resulting stable pattern. 

We again write the potential function in two parts: $\phi_{point}$ due to a point charge at 
$\left( 1/\Lambda,1/\Lambda,1/\Lambda \right)$ and $\phi_{rest}$ due to the bulk charge density in the presence
of three conducting grounded planes. Then, a simple electrostatic
calculation shows that the potential $\phi_{point}$ is the octapolar
potential with it's form in spherical polar coordinate as
\begin{equation}
\phi\left( r, \theta, \Phi \right)\approx \frac{f\left( \theta,\phi \right)}{r^{4}}.
\label{octupole}
\end{equation}
This then implies that the equation determining the dependence
of $\Lambda$ on $N$ is
\begin{equation}
C_{1}\Lambda^{6}+C_{2}\Lambda^{3}\simeq N
\label{scalefive}
\end{equation}

\section{A single sink site}\label{ch3.3}
Let the site of addition be the origin, with the sink site placed at $\mathbf{R}_{o}$.
We shall show that when  $\mathbf{R}_{o}$ lies in a high-density patch (color yellow in
Fig.\ref{fig:psone}), the asymptotic patterns are identical to the one
produced in the absence of the sink site.

The patterns, produced for $r_{o}$ close to $1$, with the sink sites placed deep inside 
a high-density patch are simple to analyze, even for finite but large $\Lambda$. One such pattern
is presented in Fig.\ref{fig:psone}. 
\begin{figure}
\begin{center}
\includegraphics[scale=0.30]{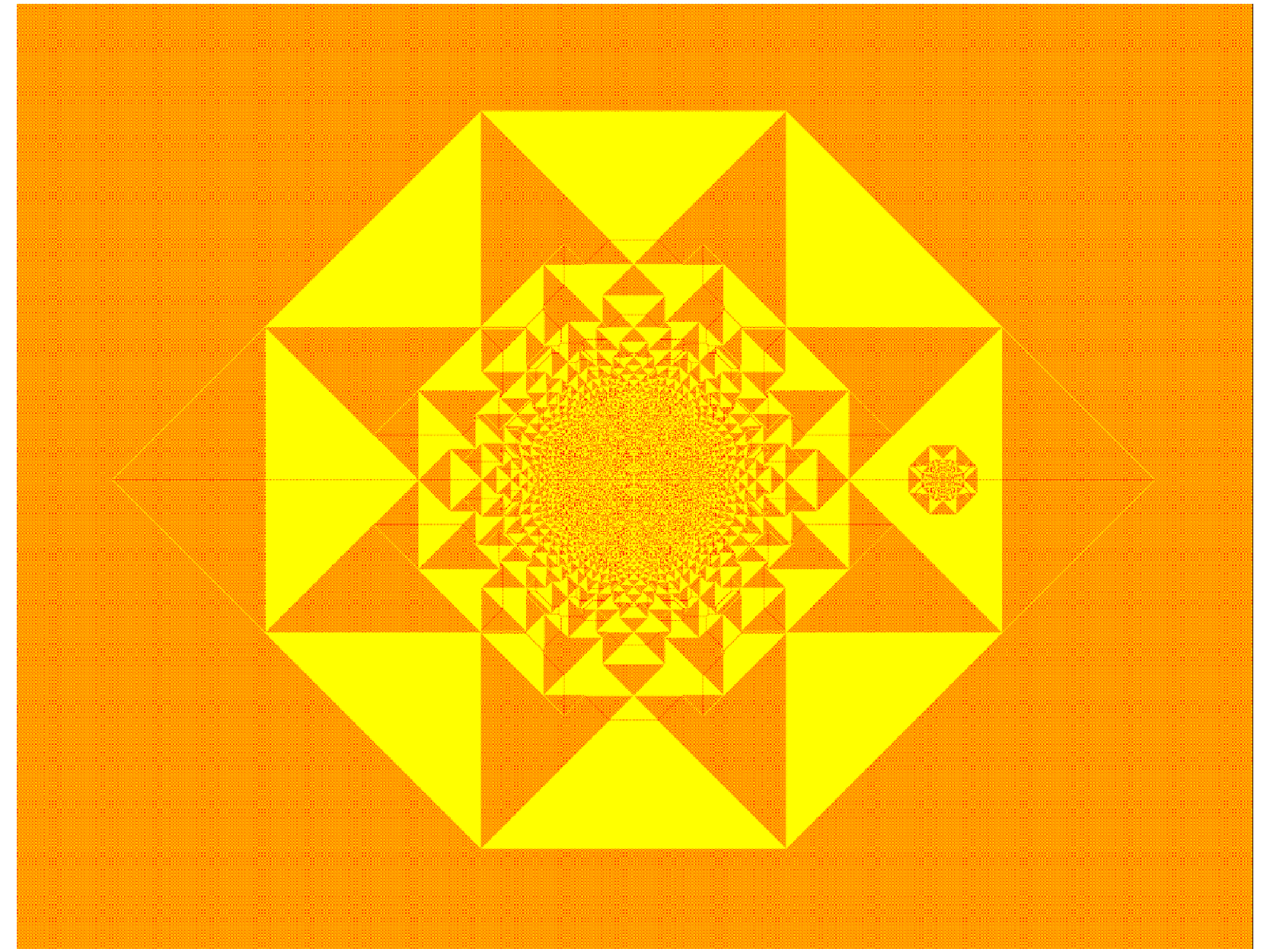}
\caption{ The pattern produced by adding $224000$ grains at the origin
with a sink site at $\left( 400,0 \right)$, inside a patch of density $1$ (color yellow). Color code: red$=0$ and
yellow$=1$. The apparent orange regions correspond to the checkerboard height distribution. (Zoom in for details in the electronic version.)}
\label{fig:psone}
\end{center}
\end{figure}
\begin{figure}[t]
\begin{center}
\includegraphics[scale=0.30]{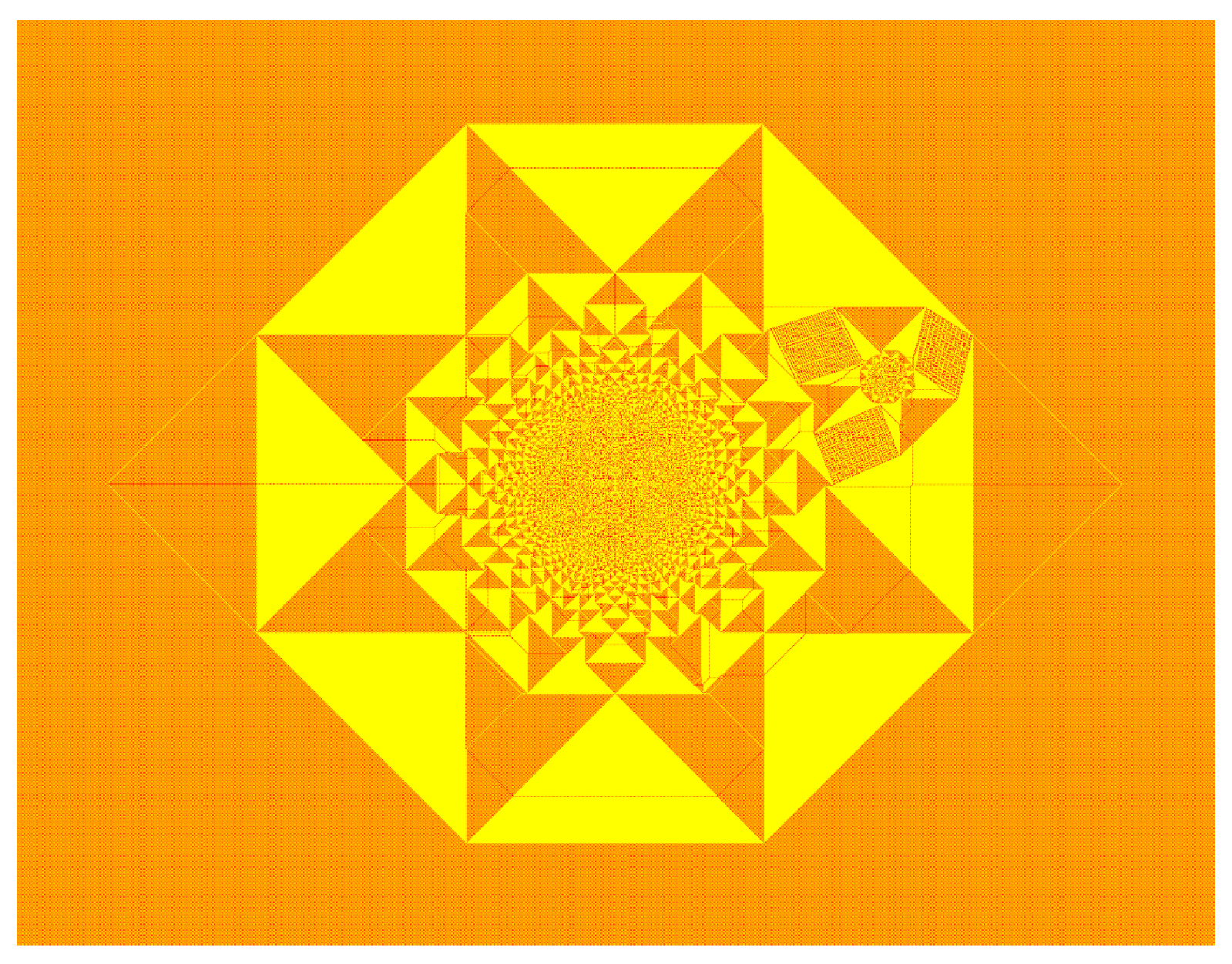}
\caption{ The pattern produced by adding $224000$ grains at the origin
with a sink site placed at $\left( 360,140 \right)$, inside a low-density patch. 
Color code red$=0$ and
yellow$=1$. The apparent orange regions correspond to the checkerboard height distribution. (Details can be seen in the electronic version using zoom in.)}
\label{fig:pstwo}
\end{center}
\end{figure}

We see that the effect of the sink site on the pattern is to produce a
depletion pattern centered at this site. The depletion pattern
is a smaller negative copy of the single source pattern, where
negative means $\Delta\rho$ is negative of the original pattern. We define the function
$\Delta z_{sink}\left( \mathbf{R}; N \right)$ as the difference between
the heights at ${\mathbf R}$ in the final stable configuration produced
by adding $N$ grains at the origin, with and without the sink site. 
\begin{equation}
\Delta z_{sink} \left( \mathbf{R}; N \right)= \Delta z_{source+sink}\left( {\mathbf R}; N \right)- \Delta z_{source}\left( \mathbf{R}; N \right). 
\end{equation}
From the figure it is seen that, in this case, $\Delta z_{sink}\left( \mathbf{R}; N \right)$
is the negative of the pattern produced by a smaller source, centered at $\mathbf{R}_0$.
The number of grains required to produce this smaller pattern is exactly the number of grains
$N_a$ absorbed at the sink site.
\begin{equation}
\Delta z_{sink}\left( \mathbf{R}; N \right)= - \Delta z_{source}\left( \mathbf{R}-\mathbf{R}_{o}; N_{a} \right).
\label{zsink}
\end{equation}
This is immediately seen from the fact that the 
toppling function $T_\Lambda\left( \mathbf{R} \right)$ satisfies 
\begin{equation}
\Delta T_{\Lambda}\left( \mathbf{R} \right)=\Delta z_{source+sink}\left( \mathbf{R}; N \right)-N \delta_{\mathbf{R}, \mathbf{0}} + N_{a}\delta_{\mathbf{R}, \mathbf{R}_{o}},
\label{discrpois}
\end{equation}
where  $\Delta$ is the toppling matrix for the sandple model on the
F-lattice (see chapter \ref{ch:intro}).
Let $T_{source} ({\mathbf R}; N)$ be the number of topplings at ${\mathbf R}$, when we
add $N$ particles at the origin in the absense of any sink site. Since Eq. (\ref{discrpois}) is  a linear equation,
it follows that a solution of this equation is
\begin{equation}
T_{\Lambda}\left( \mathbf{R} \right)=T_{source}\left( \mathbf{R}; N \right)-T_{source}\left( \mathbf{R}-\mathbf{R}_{o}; N_{a} \right).
\label{lcomb}
\end{equation}
This is a valid solution for our problem, if the corresponding heights in the final configuration with the sink are all non-negative.
This happens when the region with nonzero $\Delta z_{sink}$ is confined within a high-density patch of the single
source pattern.

The number $N_{a}$  can be determined from the requirement that the number of topplings at the sink site is zero.  
The potential function for the single source problem diverges as $\left( 4\pi \right)^{-1}\log r$
near the source. Considering the ultraviolate cutoff due to the lattice,
$T_{source}\left( \mathbf{R}, N \right)$ at $\mathbf{R}=0$ can be approximated by
$\left( 4\pi \right)^{-1}N \log N$ to leading order in $N$.
Then at $\mathbf{R} = \mathbf{R}_0$, $T_{source}\left( \mathbf{R}-\mathbf{R}_{o}; N_{a} \right)$ is approximately equal to
$(4\pi)^{-1}N_{a}\log N_{a}$ whereas $T_{source}\left( \mathbf{R}_0; N \right) \approx N \phi_{source}( \mathbf{r}_0) $,
where $\phi_{source}\left( \mathbf{r} \right)$ is the potential function for the problem without a sink. Then 
from the equation (\ref{lcomb}) we have
\begin{equation}
\frac{1}{4\pi}N_a \log N_a \simeq N \phi_{source}\left( \mathbf{r}_{o} \right). 
\end{equation}
For large $N$, this implies that  
\begin{equation}
N_a \simeq 4\pi\phi_{source}\left( \mathbf{r}_{o} \right)N/\log N.
\label{nss1}
\end{equation}
Then, in the large $N$ limit, for a sink at a fixed reduced coordinate $\mathbf{r}_o$,
the relative size of the defect produced by the sink site decreases as
$1/\sqrt{\log N}$. Hence asymptotically, the fractional area of the defect
region will decrease to zero, if the sink position $\mathbf{r}_o$ is inside a high-density patch.

When the sink site is inside a low-density patch, the subtraction procedure in
equation (\ref{zsink}) gives negative heights, and no longer gives the correct solution.  
However it is observed for the patches in the outer layer,
where the patches are large, that the effect of the sink site is confined within the neighboring 
high-density patches (Fig.\ref{fig:pstwo}) and rest of the pattern in the asymptotic 
limit remains unaffected.

The pattern in which the source and the sink sites are adjacent to each other, appears to be
very similar to the one produced without the sink site. This is easy to see. 
The Poisson equation analogous to equation (\ref{poisson0}) for this problem is
\begin{equation}
\nabla^2\phi(\mathbf{r})=\Delta\rho(\mathbf{r})-\frac{N}{\Lambda^2}\delta(\mathbf{r})+\frac{N_{a}}{\Lambda^2}\delta(\mathbf{r}-\mathbf{r}_o),
\label{poisson3}
\end{equation}
where $N_a$ is the number of grains absorbed in the sink site at $\mathbf{r}_{o}$. 
In an electrostatic analogy, as discussed earlier, $\phi$ can be considered
as the potential due to a distributed charge of density $-\Delta\rho\left( \mathbf{r}_{o} \right)$
and two point charges of strength $N/\Lambda^{2}$ and $-N_{a}/\Lambda^{2}$, placed
at the origin and at $\mathbf{r}_{o}$ respectively.
It is easy to see  that the dominant contribution to the potential is the  monopole term with net
charge $(N- N_a)/\Lambda^2$. The contribution due to other terms decreases as
$1/\Lambda$ for large $\Lambda$, and the asymptotic pattern is the same as without a sink, with $N-N_a$ particles added.

The number $N_a$ of particles absorbed is determined by the condition that the number
of topplings at $(1,0)$ (the sink position) is zero. The potential produced at $(1,0)$ and $(0,0)$, by the
areal charge density is nearly the same. The number of topplings
at $(1,0)$, if we add $N_a$ particles at the sink site, is approximately $\left (4 \pi\right )^{-1} N_a \log N_a$.
Now, from the solution of the discrete Laplacian, the number of topplings produced at
$(1,0)$ due to $N$ particles added at $(0,0)$ is approximately $\left( 4\pi \right)^{-1}\left (N \log N -C N\right )$ with
$C$ being an undetermined constant. Equating these two, we get
\begin{equation}
 N_a \log N_a  \simeq   N \log N - C N.
\label{nans}
\end{equation}
As the asymptotic pattern is the same as that produced by adding $\left( N-N_{a} \right)$
grains at the origin without a sink, we have $N - N_a \simeq \Lambda^2$, and
\begin{equation}
(N -\Lambda^{2})\log ( N -\Lambda^2) \simeq  N \log N - C N.
\label{scaleone}
\end{equation} 
Simplification of this equation for large $N$ shows that $\Lambda$ grows as $\sqrt{N/\log N}$ with $N$. 

For finite $N$, the leading  correction to $\phi\left( \mathbf{r} \right)$ comes from
the dipole term in the potential. This term breaks the reflection 
symmetry of the pattern about the origin. A measure of the bilateral asymmetry is the
difference of the boundary distances on two opposite sides of the source.
As the relative contribution of the dipole potential compared to the monopole
term decays as $\log \Lambda/\Lambda$, for large $\Lambda$, this difference
vanishes in the asymptotic pattern in the reduced coordinates.

\section{Numerical results}\label{ch3.4}
All the above scaling behaviors are verified by the measurement of lengths in the patterns
for finite, but large $N$. Let $\Lambda^{\ast}_{line}\left( N \right)$ be the real positive root of
equation (\ref{scalethree}) for a given integer value of $N$. As
$\Lambda_{line}$ takes only the integer values on the lattice, an estimate of
it would be $Nint\left[ \Lambda^{\ast}_{line}(N) \right]$, the integer nearest
to $\Lambda^{\ast}_{line}\left( N \right)$.
Interestingly, we found that for
a choice of $C_{1}=0.1853$ and $C_{2}=0.528$, this estimate gives
values which differ from the measured values
at most by $1$ for all $N$ in the range of $100$ to $3\times10^{6}$.
We rewrite the equation (\ref{scalethree}) as
\begin{equation}
0.1853\Lambda_{line}^{3}+0.528\Lambda_{line}^{2}\doteqdot N,
\label{scalethreehalf}
\end{equation}
where we used the symbol $\doteqdot$ to denote that both sides differ at most by $1$.
Clearly more precise estimates of $C_{1}$ and $C_{2}$ would be
required if we want this to work for larger $N$.

Similarly for the other two equations (\ref{scalefour}) and (\ref{scalefive})
we find that they are in very good agreement with our numerical data.
We consider the case of wedge angle $\omega=2\pi$. 
This corresponds to the case with the source site next to an infinite half-line of sink sites.
Here $\alpha=1/2$ and equation (\ref{scalefour}) reduces to
\begin{equation}
C_{1}\Lambda_{\omega=2\pi}^{5/2}+C_{2}\Lambda_{\omega=2\pi}^{2}\simeq N.
\end{equation}
Choosing $C_{1}=0.863408$ and $C_{2}=0.043311$, we find that the function
$Nint\left[ \Lambda^{\ast}_{\omega=2\pi}\left( N \right) \right]$
differs from the measured values by at most $1$ for all $N$ in the range of $100$ to $2\times10^{5}$.
Then, as in equation (\ref{scalethreehalf}), we write
\begin{equation}
0.863408\Lambda_{\omega=2\pi}^{5/2}+0.043311\Lambda_{\omega=2\pi}^{2}\doteqdot N.
\label{scalefourhalf}
\end{equation}
Similarly, for the three dimensional abelian sandpile model with the source site inside the first octant
and $x=0$, $y=0$, and $z=0$ as the absorbing planes, the equation determining
the dependence of the diameter on $N$ is
\begin{equation}
0.0159\Lambda_{3d}^{6}+88\Lambda_{3d}^{3}\doteqdot N
\label{scalefivehalf}
\end{equation}
We have verified this equation for $N$ between $5\times10^{5}$ to $5\times10^{8}$.

We obtained these equations by determining the number of absorbed grains $N_{a}$ and the
remaining grains $N_{r}$ from dimensional counting grounds, and
the final equations are then only a statement
of the conservation of the sand grains. It is quite remarkable that this scaling
analysis gives almost the exact values of the diameter.
In addition, these equations have an important feature that they include a ``correction to
scaling'' term whereas the usual scaling analysis ignores the sub-leading powers.

We also verify equation (\ref{nss1}) using patterns with fixed $\mathbf{r}_{o}$
and the sink site inside a high-density patch in the outer layer of the pattern.
It is found that for a change of $N$ from $224000$ to $896000$, $N_{a}\log{N}/N$
changes by less than $7\%$, which is consistent with the scaling relation. 

In the other limit, where the sink site is next to the source, the dependence of $\Lambda$
on $N$ is given in equation (\ref{scaleone}). We measure $\Lambda\left( N \right)$ for the patterns with
the sink site at $\left( 1, 0 \right)$ and the source at the origin. For $N$ in the
range of $100$ to $5\times 10^{5}$ we find
that the function $Nint[\Lambda^{\ast}_{point}\left( N \right)]$ with $C=2.190$ in equation (\ref{scaleone}),
gives almost exact values of $\Lambda\left( N \right)$, with their difference being at most $1$. Then we write
\begin{equation}
(N -\Lambda_{point}^{2})\log ( N -\Lambda_{point}^2) \doteqdot  N \log N - 2.190 N.
\end{equation} 

In the last case, let $R_{1}$ and $R_{2}$ be the boundary distances measured along the
positive and the negative $x$ axis. The difference $R_{2}-R_{1}$ is plotted in 
Fig.\ref{fig:asymmetry} where the data is found to fit to the
function $1.22\log{(R_{2}+0.5)}$. This confirms the result that the relative
bilateral asymmetry $(R_{2}-R_{1})/R_{2}$ vanishes in the asymptotic pattern as $\log\Lambda/\Lambda$.
\begin{SCfigure}
\includegraphics[scale=0.75]{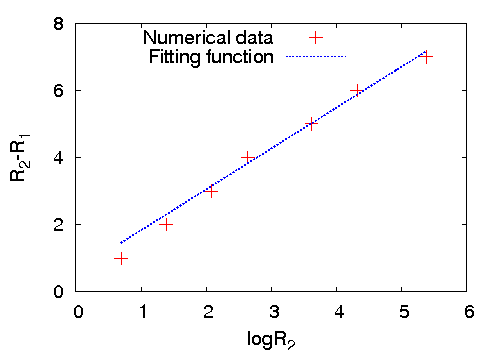}
\caption{The bilateral asymmetry due to the presence of a sink site in Fig.\ref{fig:psone}.}
\label{fig:asymmetry}
\end{SCfigure}

\section{Characterization of the pattern with a line
sink}\label{ch3.5}
The pattern with a line sink (Fig.\ref{fig:lsone}), discussed in
Section \ref{ch3.2}, retained two important properties present in 
the single source pattern (Fig. \ref{flattice}). These are: The asymptotic pattern is made of
the union of two types of patches of excess density $1/2$ and $0$
and the separating boundaries of the patches are straight lines of slope $0$, $\pm1$ or
$\infty$. However the adjacency graph is changed significantly and this changes the sizes
of the patches as well. In this section we show how to explicitly determine the potential
function on this adjacency graph.

The adjacency graph of the patches is shown in Fig.\ref{fig:adjLS}. 
This representation of the graph is easier to see by taking the $1/r^{3}$
transformation of the pattern and then joining neighboring patches by straight lines (Fig.\ref{fig:1byrsqr}).
Each vertex in the graph is connected to four neighbors except for the vertices corresponding to
the patches next to the absorbing line. These have coordination number $3$. Also the vertex
at the center corresponding to the exterior of the pattern is connected to seven neighbors.
\begin{figure}[t]
\begin{center}
\includegraphics[scale=0.43]{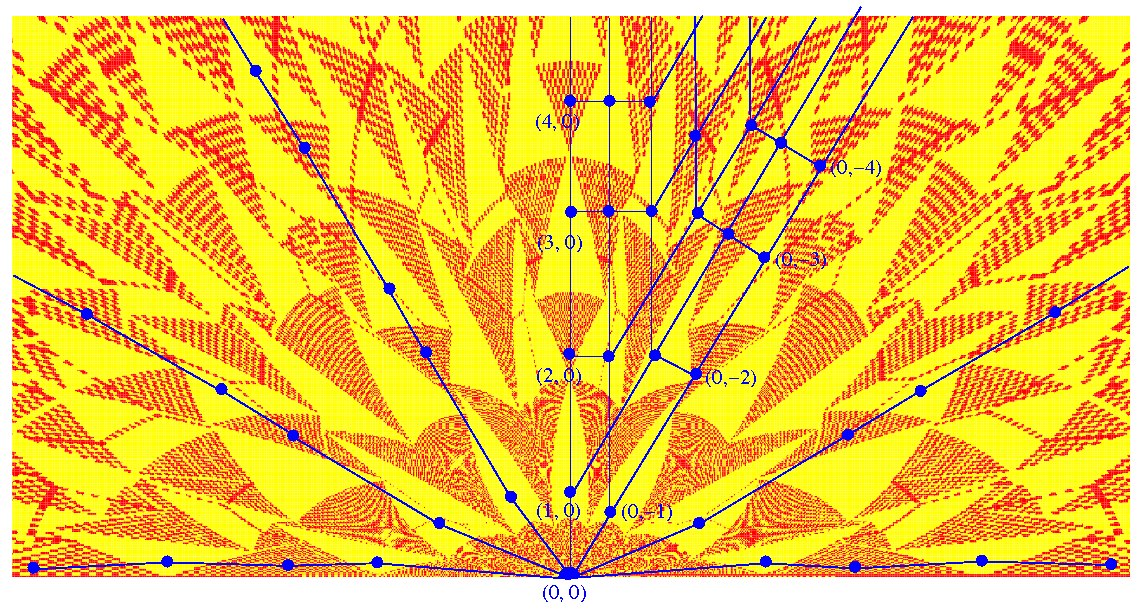}
\caption{ $1/r^{3}$ transformation of the pattern in Fig.\ref{fig:lsone}. Two adjoining patches are connected by drawing a straight line.}
\label{fig:1byrsqr}
\end{center}
\end{figure}
\begin{figure}
\begin{center}
\includegraphics[scale=0.30]{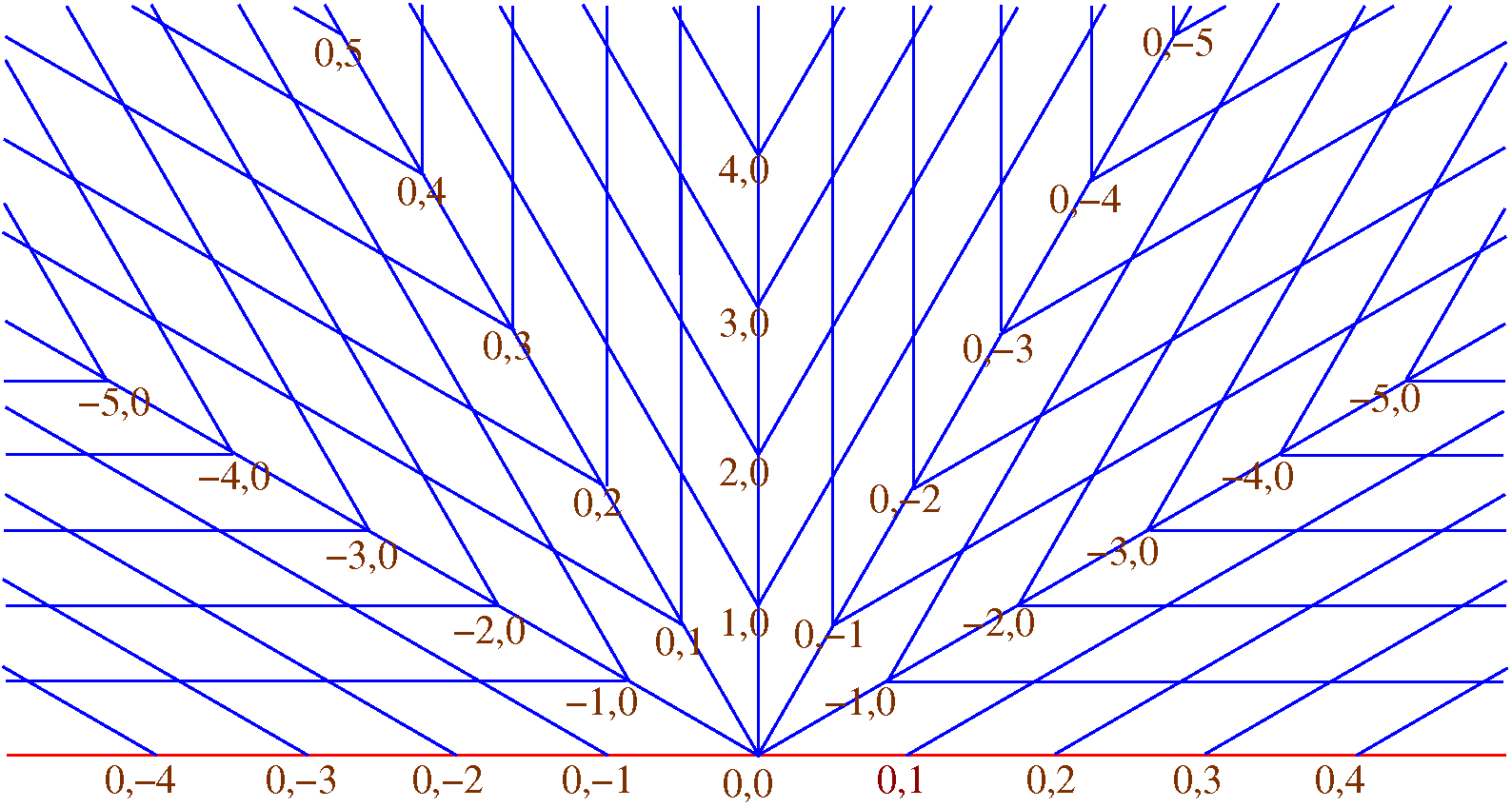}
\caption{ The adjacency graph of the patches corresponding to the pattern in Fig.\ref{fig:lsone}.}
\label{fig:adjLS}
\end{center}
\end{figure}

Let us write the quadratic potential function in a patch $P$ having excess density $1/2$ as 
\begin{equation}
\phi_{_P}(\mathbf{r})=\frac{1}{8}(m_{_P}+1)\xi^2+\frac{1}{4}n_{_P}\xi\eta+\frac{1}{8}(1-m_{_P})\eta^2 + d_{_P}\xi+e_{_P}\eta+f_{_P},
\label{fform1}
\end{equation}
where the parameters $m$, $n$, $d$, $e$ and $f$ take 
constant values within a patch. Similarly for a low-density patch $P'$
\begin{equation}
\phi_{_{P'}}(\mathbf{r})=\frac{1}{8}m_{_{P'}}(\xi^{2}-\eta^2)+\frac{1}{4}n_{_{P'}}\xi\eta+d_{_{P'}}\xi+e_{_{P'}}\eta+f_{_{P'}}.
\label{fform2}
\end{equation}

Using the continuity of $\phi\left( \mathbf{r} \right)$ and its first derivatives along the common boundaries
between neighboring patches it has been shown in chapter
\ref{ch2}, that for the single source pattern
without sink sites $m$, and $n$ take integer values. The same argument also applies to this
problem and it can be shown that $\left( m, n \right)$ are the coordinates of the patches in the
adjacency graph in Fig.\ref{fig:adjLS}. These coordinates are shown next to some of the vertices.
There are two different patches corresponding to the same set of $\left( m, n \right)$
values. In fact, as in the single source pattern the adjacency graph
forms a square grid
on a two sheeted Riemann surface, the same is formed for this pattern, but on a
three sheeted Riemann surface. This can be constructed by modifying the graph
in Fig \ref{fig:adjLS} keeping its topology the same. In this representation the pattern covers half of the surface with
$\left( m, n \right)$ being the Cartesian coordinates on the surface.

Define function $D\left( m, n \right)=d\left( m, n \right)+ie\left( m, n \right)$ on this
lattice. As discussed in \cite{myepl}, the continuity of $\phi(\mathbf{r})$ and its first derivatives along the
common boundary between neighboring patches imposes linear relations between $d$ and $e$ of the corresponding
patches. Using these matching conditions it can be
shown that $d$ and $e$ satisfy the discrete Cauchy-Riemann conditions \cite{myepl}
\begin{eqnarray}
d\left( m+1, n+1 \right) - d\left( m, n \right)&=&e\left( m, n+1 \right) - e\left( m+1, n \right), \nonumber \\
e\left( m+1, n+1 \right) - e\left( m, n \right)&=&d\left( m+1, n \right) - d\left( m, n+1 \right),
\label{cr}
\end{eqnarray}
and then the function $D$ satisfies the discrete Laplace equation
\begin{equation}
\sum_{i=\pm1}\sum_{j=\pm1}D(m+i,n+j)-4D(m,n)=0,
\label{laplace}
\end{equation} 
on this adjacency graph.

Let us define $M=m+in$ and $z=\xi+i\eta$. As argued before, close to the origin the potential $\phi$ diverges as
$1/r$ (equation (\ref{dipole1})). Then, the corresponding complex potential function
$\Phi\left( z \right) \sim 1/z$. As $M \sim {d^2 \Phi}/{dz^2} $, and
$D \sim d \Phi /dz$,  it follows that for large $|M|$,
\begin{equation}
D \sim M^{2/3}.
\label{asdls}
\end{equation}
Also, the condition that on the absorbing line $\phi\left( \mathbf{r} \right)$ must vanish implies that for the vertices
with even $n$ along the red line in Fig.\ref{fig:adjLS} $e(0,n)$ vanishes. These
vertices correspond to the patches with the absorbing line as the horizontal boundary in Fig.\ref{fig:lsone}.

Equation (\ref{laplace}) with the above constraint and the boundary condition (Eq.(\ref{asdls}))
has a unique solution. The normalization of $\phi$ is fixed by the requirement that
$d(1,0)=-1$, which fixes the diameter of the pattern to be $2$ in reduced units. All the
spatial distances in the pattern can be expressed in terms of this solution $D\left( m, n \right)$
using the matching conditions between two neighboring patches. As an example, consider
the boundary between the patches corresponding to $\left( m, n \right)$ and $\left( m+1, n \right)$
with $\left( m+n \right)$ being odd. The matching conditions only allow a horizontal boundary
between them with the equation $\eta=\eta_{p}$, where
\begin{equation}
e\left( m+1, n \right) - e\left( m, n \right)=\eta_{p}/2.
\end{equation}
Similarly there is a vertical boundary between the patches $\left( m, n \right)$ and $\left( m-1, n \right)$,
with the equation $\xi=\xi_{p}$, where
\begin{equation}
d\left( m-1, n \right) - d\left( m, n \right)=\xi_{p}/2.
\end{equation}
The other boundaries can similarly be determined using the solution for $D\left( m, n \right)$.
The characterization of the asymptotic patterns for $\omega=\pi/2$, $3\pi/2$ and $2\pi$ is qualitatively similar and
will not be discussed here.

\section{Patterns with two sources}\label{ch3.6}
In this section we discuss patterns produced by adding $N$ grains each at two
sites placed at a distance $2\Lambda\mathbf{r_o}$ from each other along the $x$-axis,
at $\Lambda\mathbf{r}_{o}$ and $-\Lambda\mathbf{r}_{o}$ with $\mathbf{r}_{o}\equiv \left( \xi_{o}, 0 \right)$. 
Again, the diameter $2\Lambda$ is defined as the height of the smallest
rectangle enclosing all sites that have toppled at least once.
The two limits, $r_0$ close to zero and $r_0$ large are trivial:
For $r_o\rightarrow 0$, the asymptotic pattern is the same as that produced by 
adding grains at a single site. On the other hand if $r_{o} > 1$,  each source produces its own pattern, which do not overlap, 
and the final pattern is a simple superposition of the two patterns. 
\begin{figure}
\begin{center}
\includegraphics[scale=0.23]{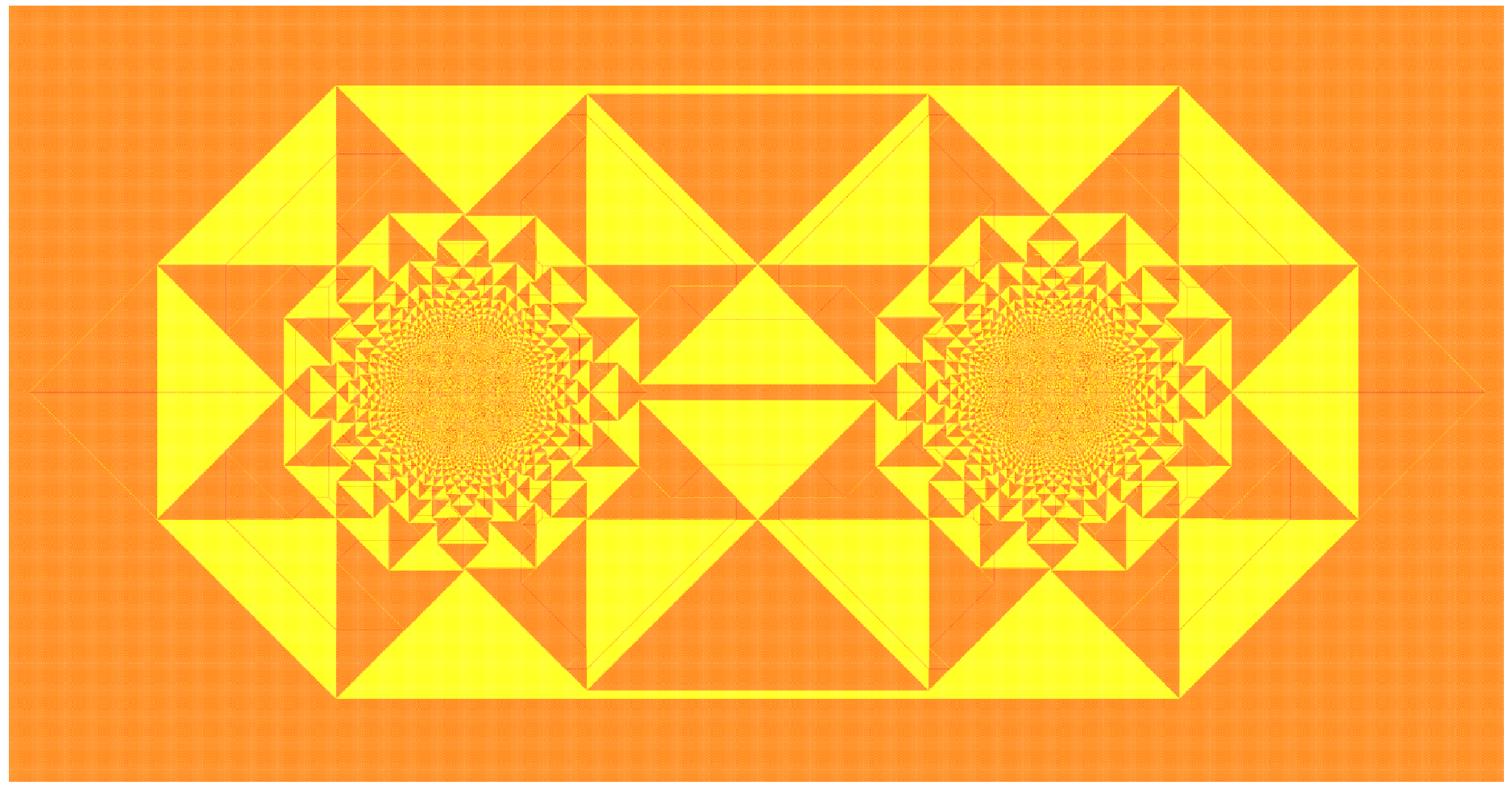}
\caption{ The pattern produced by adding $N=640000$ grains each at ($-760,0$) and 
($760,0$) on the F-lattice with the initial checkerboard distribution of grains and 
relaxing. This corresponds to $r_{o}=0.95$. Color code red=0 and yellow=1.
(Details can be seen in the electronic version using zoom in )}
\label{fig:twosource}
\end{center}
\end{figure}
\begin{figure}[t]
\begin{center}
\includegraphics[scale=0.14]{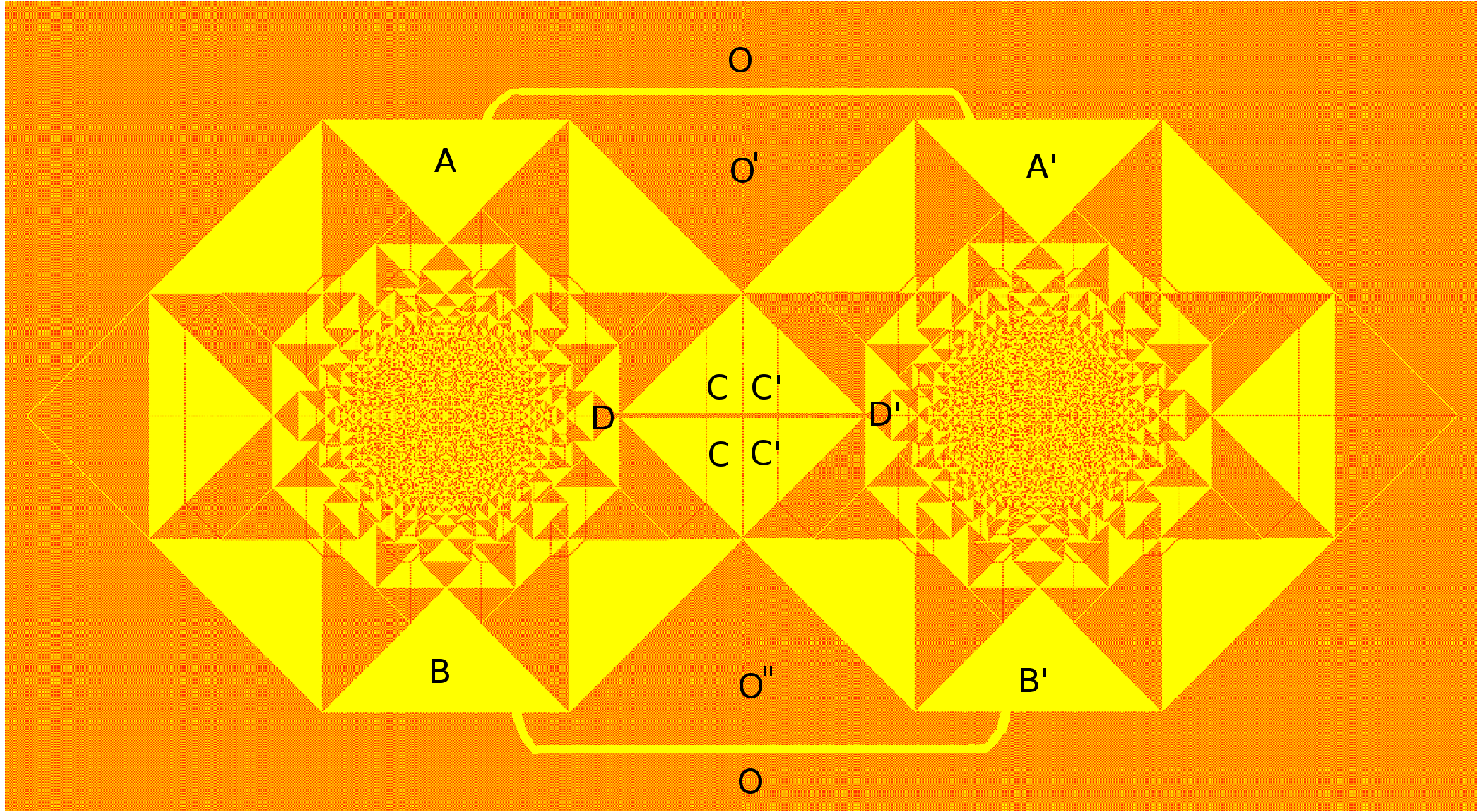}
\caption{ The pattern constructed by combining two single source patterns and drawing
connecting lines between few patches following the connectivity in the pattern in 
Fig.\ref{fig:twosource}.}
\label{fig:edited}
\end{center}
\end{figure}

As noted before, the adjacency graph for the single source
pattern has a square lattice structure on a Riemann surface of two-sheets \cite{myepl}. Then
the graph for two non-intersecting single source patterns is a square 
lattice on two disjoint Riemann surfaces, each having two-sheets 
(Fig.\ref{fig:adjnocol}). Only the vertex at the origin
represents the exterior of the pattern, which is the same for both of the
single source patterns. It has sixteen neighbors and is placed
midway between the two Riemann surfaces.
For later convenience let us associate the lower Riemann surface to the pattern
around the left source at $-\mathbf{r}_o$ and denote it by $\Gamma_L$. Similarly the
upper Riemann surface as $\Gamma_R$ corresponding to the pattern around
the right source $\mathbf{r}_o$.

For $ 0 < r_{o} <1$, the two single source patterns overlap. Using the abelian property, we first topple 
as if the second source were absent. The resulting pattern still has some
unstable sites in the region where the patterns overlap. 
Further relaxing these sites transfers these excess grains outward,
and changes the dimensions and positions of the patches: some patches become bigger,
some may merge, and sometimes a patch may break into two disjoint patches.  

The pattern produced with two sources with $r_0 = 0.95$ is shown in Fig.\ref{fig:twosource}. We see that there are still
only two types of periodic patches, corresponding to $\Delta\rho(\mathbf{r})$ values $0$
and $1/2$,  and the slope of the boundaries between patches takes the values 
$0$, $\pm1$ or $\infty$. 

The relaxation due to overlap changes the adjacency graph from the case with 
no overlap. This modified adjacency graph, for $r_{o}$ in the range
$0.70$ to $1.00$, is shown in Fig.\ref{fig:adj}.
For $r_0$ just below $1$, these changes are few and are listed below.
\begin{figure}
\begin{center}
\includegraphics[scale=0.13]{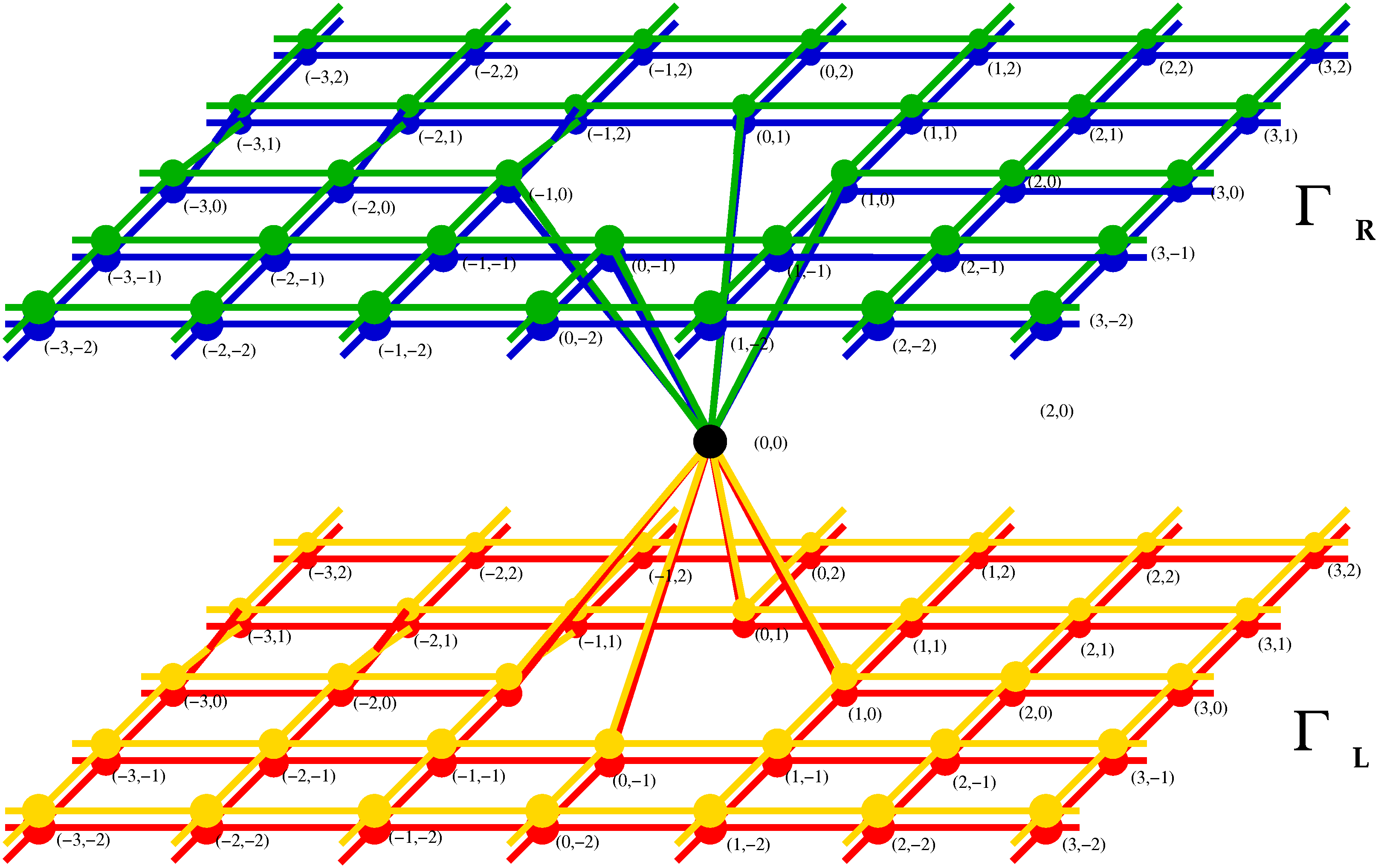}
\end{center}
\caption{ Representation of the adjacency graph of the patches for two non-overlapping single source patterns
as a square grid on two Riemann surfaces each of two-sheets. The vertices with same $(m,n)$ coordinates on different sheets are represented by different colors.}
\label{fig:adjnocol}
\end{figure}
\begin{figure}
\begin{center}
\includegraphics[scale=0.13]{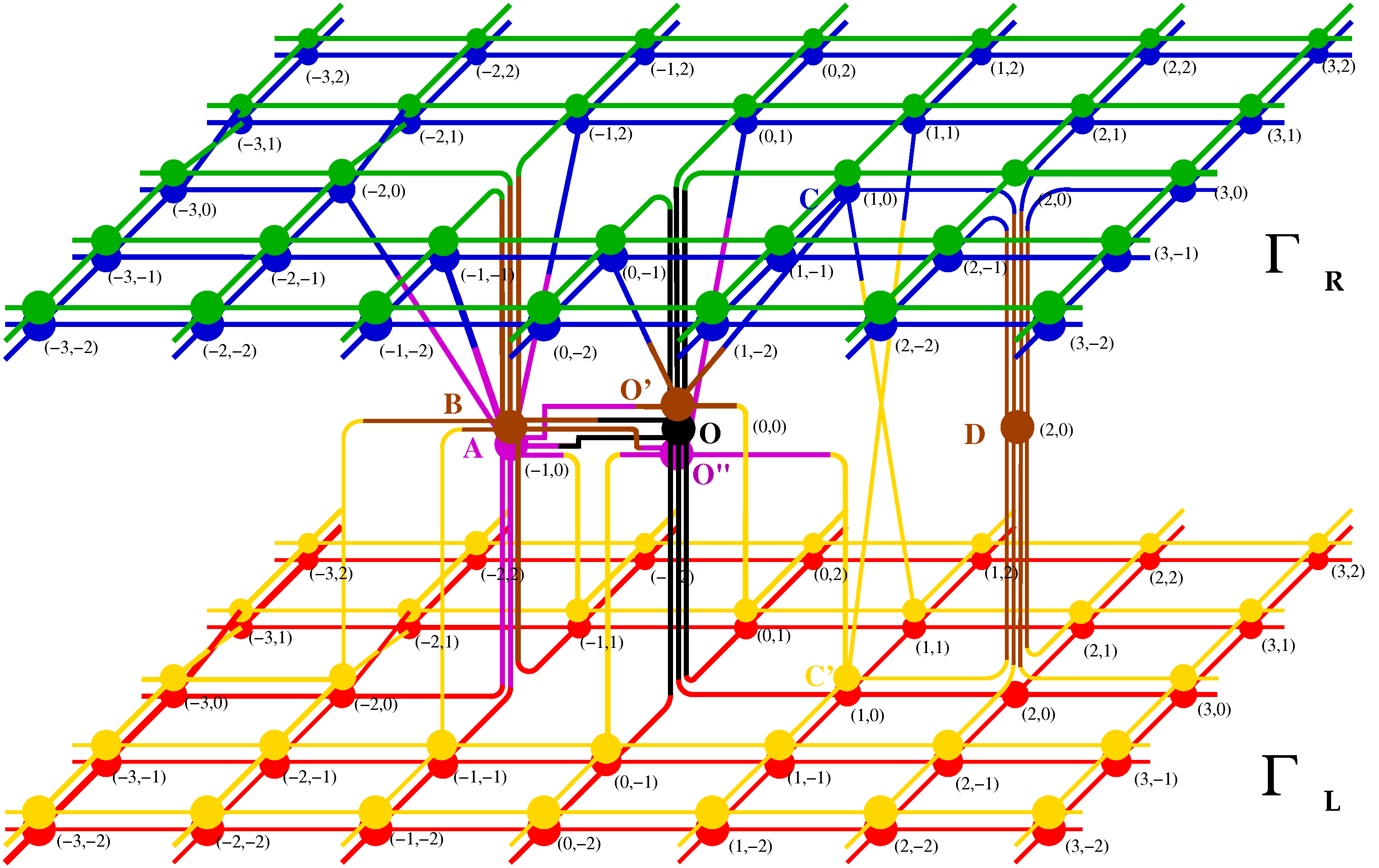}
\end{center}
\caption{ The adjacency graph for two intersecting single source patterns around two 
sites of addition placed at a distance $2 \mathbf{r}_o$ from each other. The graph
has the structure of square grids on four Riemann sheets except for a finite number
of vertices indicated by the alphabates $A$, $B$, $O$, $O'$, $O''$ and $D$ shown placed
in the middle layer. This graph remains unchanged for $r_{o}$ in the range $0.70$ to $1.00$.}
\label{fig:adj}
\end{figure}

(i) We note that the patches labelled $A$ and $A'$ in Fig.\ref{fig:edited} have the same
$\xi$ and $\eta$ dependence of the potential function $\phi$. Then, for $r_0$ just below $1$, these patterns can
join with each other by a thin  strip. This only requires a small movement in the
boundaries of nearby patches ( i.e.  only a small change  in the $d$ and $e$ values
of nearby patches). Thus, in the adjacency graph, the vertices corresponding to $A$
and $A'$ are collapsed into a single vertex $A$ in Fig.\ref{fig:adj}. 

(ii) Similarly, the vertices corresponding to the patches $B$ and $B'$ in 
Fig.\ref{fig:edited} are collapsed into a single vertex $B$ in Fig.\ref{fig:adj}.

(iii) This divides the region outside the pattern in to three parts, $O$, $O'$ and $O''$.
They are also shown in Fig.\ref{fig:adj} as separate vertices.

(iv) The patches marked $C$ and $C'$ also have the same quadratic form, and the
vertical boundary between them disappears. However, the patches $D$ and $D'$ are
also joined by a thin strip. This horizontal strip divides the joined $C$ and $C'$
into two again (Fig.\ref{fig:edited}). 

The adjacency of other patches remains unchanged. The adjacency graph of the
pattern is shown in Fig.\ref{fig:adj}. Interestingly, this new adjacency graph
remains the same for all $0.70 < r_0 < 1$, even though  for $r_{o} < 0.85$,
the sizes of different patches are substantially different. Compare the pattern
for $r_0 =0.70$ in Fig.\ref{fig:bellow}, with the pattern for $r_0=0.95$ in Fig.\ref{fig:twosource}:
The shape of the central patches in Fig.\ref{fig:bellow} is different
from that in Fig.\ref{fig:twosource}.

In Fig.\ref{fig:adj}, we have have placed  the vertices which are formed by merging
or dividing the patches, midway between the Riemann sheets corresponding
to the two sources. As $r_0$ is  decreased  below $0.70$, more collisions between the
growing patches will occur and the number of vertices in this middle region will increase.
For any nonzero $r_0$, the number of vertices in the middle layer is finite. In the
$r_{o}\rightarrow 0$ limit, vertices from both the surfaces $\Gamma_{L}$ and $\Gamma_{R}$
come together and form a single Riemann surface corresponding to a single source pattern
around $\mathbf{r}=0$. For $r_0$ small, but greater than zero, the outer patches are arranged
as in the single-source case, but closer to the sources, one has a crowded pattern near each
source. In the adjacency graph, this corresponds to the vertices near the patch $(0,0)$ roughly
arranged as on a Riemann surface of two-sheets, while the ones farther from the patch $(0,0)$
remain undisturbed on the 4-sheeted Riemann surface.
\begin{figure}
\begin{center}
\includegraphics[scale=0.20]{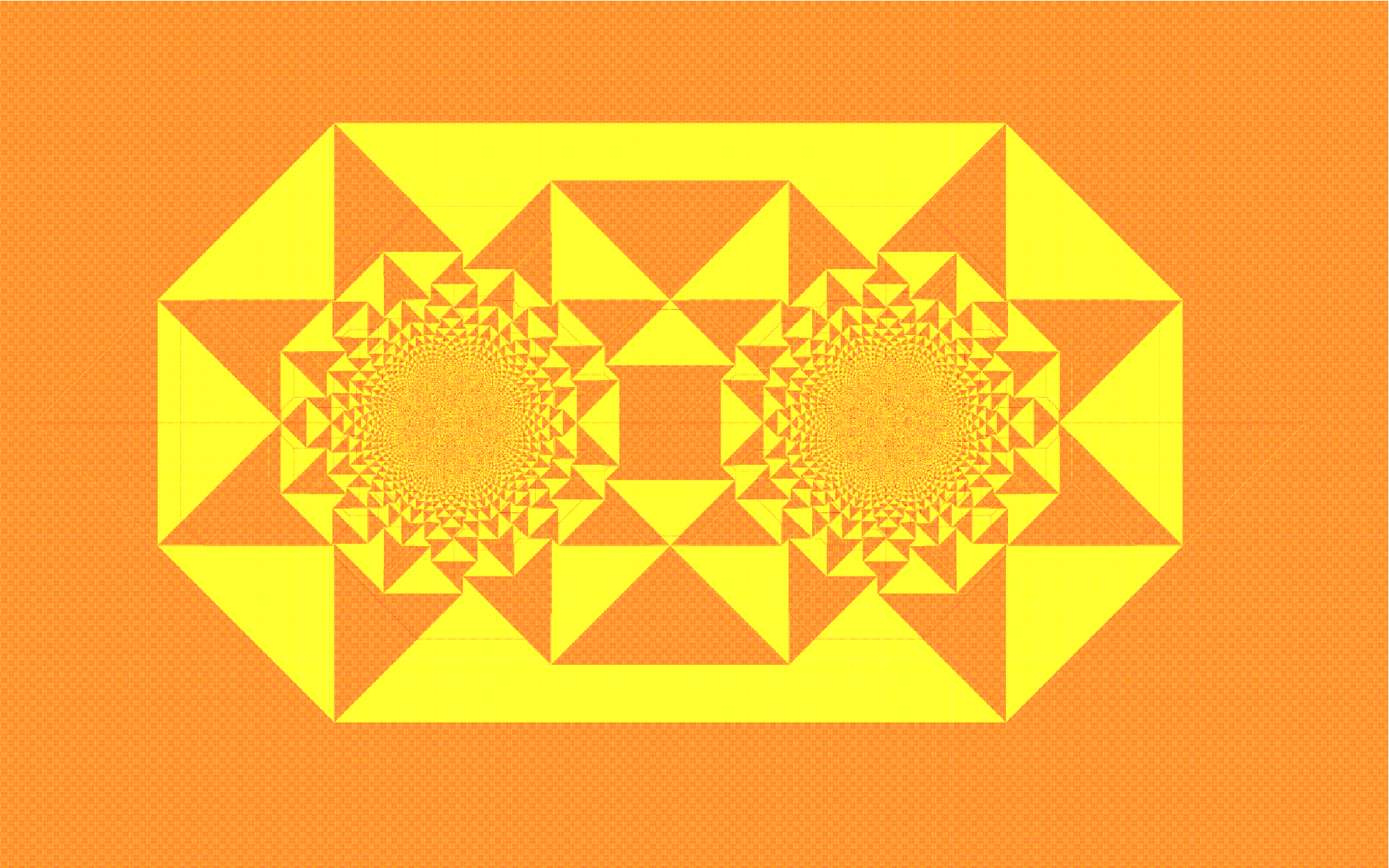}
\end{center}
\caption{ The pattern produced by adding $640000$ grains at site $\left( -600,0 \right)$
and $\left( 600,0 \right)$. Although the pattern is significantly different from
the one in Fig.\ref{fig:twosource}, their adjacency graph is same.}
\label{fig:bellow}
\end{figure}

We now characterize the pattern  with two sources and $r_0 > 0.70$ in detail by explicitly determining
the potential function on this  adjacency graph.

The Poisson equation analogous to Eq.(\ref{poisson0}) for this problem is
\begin{equation}
\nabla^2\phi(\mathbf{r})=\Delta\rho(\mathbf{r})-\frac{N}{\Lambda^2}\delta(\mathbf{r}-\mathbf{r}_o)-\frac{N}{\Lambda^2}\delta(\mathbf{r}+\mathbf{r}_o).
\label{poisson2}
\end{equation}
Let us use the same quadratic form of the potential
function given in equation (\ref{fform1}) and equation (\ref{fform2}).

Again using the same argument given in \cite{myepl}, it can be shown that
$m$ and $n$ are the coordinates of the patches in both the adjacency graphs
in Figs.\ref{fig:adjnocol} and \ref{fig:adj}. These coordinates are shown  next
to each vertex. Also, on this graph, the function $D(m,n)=d(m,n)+ie(m,n)$  satisfies the discrete Laplace equation
\begin{equation}
\sum_{m'}\sum_{n'}D(m',n')-4D(m,n)=0,
\label{laplace2}
\end{equation}
where $\left( m', n' \right)$ denote the neighbors of $\left( m, n \right)$
in the odd or even sublattice \cite{duffin}.
Let us define $z_o=\xi_o+i\eta_o$ where $(\xi_o,\eta_o)$ and
$(-\xi_{o},-\eta_{o})$ are the coordinates corresponding to $\mathbf{r}_{o}$ and $-\mathbf{r}_{o}$.
Considering that close to $\mathbf{r}_{o}$ and $-\mathbf{r}_{o}$ the potential
$\phi\left( \mathbf{r} \right)$ diverges logarithmically it can be shown (as done for single source pattern
in \cite{myepl}) that for large $|M|$,
\begin{eqnarray}
D(m,n)&=&\bar{z}_o\frac{M}{4}\pm\frac{A}{\sqrt{2\pi}}\sqrt{M}+\rm{lower ~ order ~ in ~}M \rm{, ~on~ }\Gamma_L\nonumber \\
&=&-\bar{z}_o\frac{M}{4}\pm\frac{A}{\sqrt{2\pi}}\sqrt{M}+\rm{lower ~ order ~ in ~}M \rm{, ~on~  }\Gamma_R
\label{asymptot}
\end{eqnarray}
where $A$ is a constant independent of $N$ or $\Lambda$. The solution of
the equation (\ref{laplace2}) with the boundary condition $D(0,0)=0$ and that in equation (\ref{asymptot})
for large $|M|$ determines the final pattern.

\section{Numerical analysis}\label{ch3.7}
In both the examples in section $6$ and $7$ the patterns are characterized in terms
of the solution of the standard two dimensional lattice Laplace equation on the corresponding adjacency graphs.
The solution is well-known when $\left( m, n \right)\in\mathbb{Z}^{2}$ \cite{spitzer}.
In our case where the lattice sites form surfaces of multiple sheets, we have not been able to find a 
closed-form expression for $D(m,n)$. However, the
solutions can be determined numerically to very good precision by solving it on a finite
grid $-L\le m, n\le L$ with the corresponding boundary conditions imposed exactly at
the boundary.

For the pattern with the line sink, the calculation is performed with $D=M^{2/3}$ at the
boundary and then the solution is normalized to have $d\left( 1,0 \right)=-1$.
We determined $d$ and $e$ numerically for $L=100$, $200$, $300$, $400$ and $500$ and
extrapolated our results for $L\rightarrow \infty$. 
Comparison of the results from this numerical calculation and that obtained by
measurements on the pattern is presented in Table $1$. 
We consider the four different lengths $R_{1}$, $R_{2}$, $R_{3}$ and $R_{4}$ as
defined in Fig.\ref{fig:LSlengths}. By the
definition of the diameter of the pattern $R_{1}=2\Lambda$. We present the values
of $R_{2}$, $R_{3}$ and $R_{4}$ normalized by $R_{1}$ for different $N$. The asymptotic values of these lengths
are determined from the values of $d$ and $e$. Comparison of these results
shows very good agreement between the theoretical and the measured values.
\begin{figure}
\begin{center}
\includegraphics[scale=0.4]{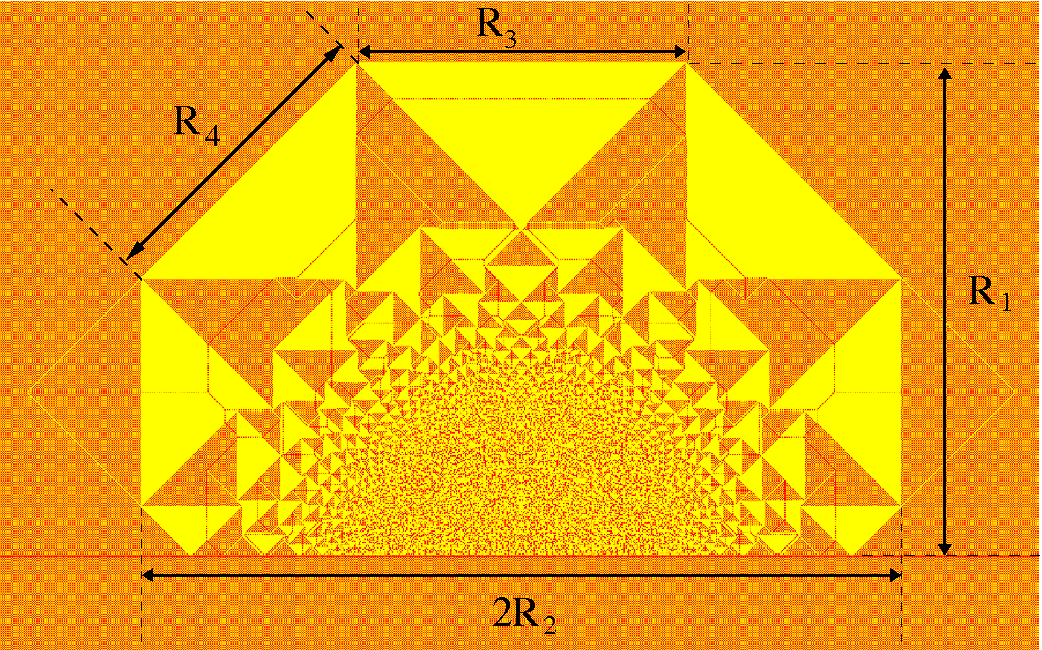}
\end{center}
\caption{ The spatial lengths $R_{1}$, $R_{2}$, $R_{3}$ and $R_{4}$ tabulated in Table $1$.}
\label{fig:LSlengths}
\end{figure}
\begin{table}
  \begin{center}
    \begin{tabular}{|c||c|c|c|c|c|}
      \hline
      ~N~ & $896k$ & $14336k$ & $57344k$ & $229376k$ & Theoretical \\
      \hline
      \hline
      $R_{2}/R_{1}$ & 0.769 & 0.768 & 0.770 & 0.770 & 0.7698  \\
      \hline
      $R_{3}/R_{1}$ & 0.675 & 0.675 & 0.667 & 0.668 & 0.6666 \\
      \hline
      $R_{4}/R_{1}$ & 0.609 & 0.609 & 0.617 & 0.616 & 0.6172 \\
      \hline
    \end{tabular}
    \caption{Comparison of different lengths measured directly from the pattern in Fig.\ref{fig:LSlengths}
for increasing values of $N$, with their theoretical values.}
  \end{center}
  \label{table:first}
\end{table}

A similar numerical calculation is done for the pattern with two sources. In this case
the boundary condition is given by equation (\ref{asymptot}). 
The value of $A$ is determined from a self 
consistency condition that the diameter of the pattern in the reduced
coordinate is $2$ which imposes $2e(-1,0)=-1$ corresponding to the vertex $A$
in Fig.\ref{fig:adj}. 
We determined $d$ and $e$ numerically
for $L=100$, $200$, $300$, $400$ and $500$ and extrapolated our results for 
$L\rightarrow \infty$. 
A comparison of the results from this numerical calculation and that obtained by
measurements on the pattern are presented in Table $2$. We
considered five different spatial lengths in the pattern, corresponding to $r_{o}=0.800$.
These different lengths are drawn in Fig.\ref{fig:lengths} and their values rescaled by $\sqrt{N}$, for the patterns
with increasing $N$, are given in Table $2$. The asymptotic values of these lengths are
obtained using the values of $d$ and $e$. The rescaled
lengths extrapolated to the infinite $N$ limit match very well with the theoretical results.
\begin{figure}
\begin{center}
\includegraphics[scale=0.3]{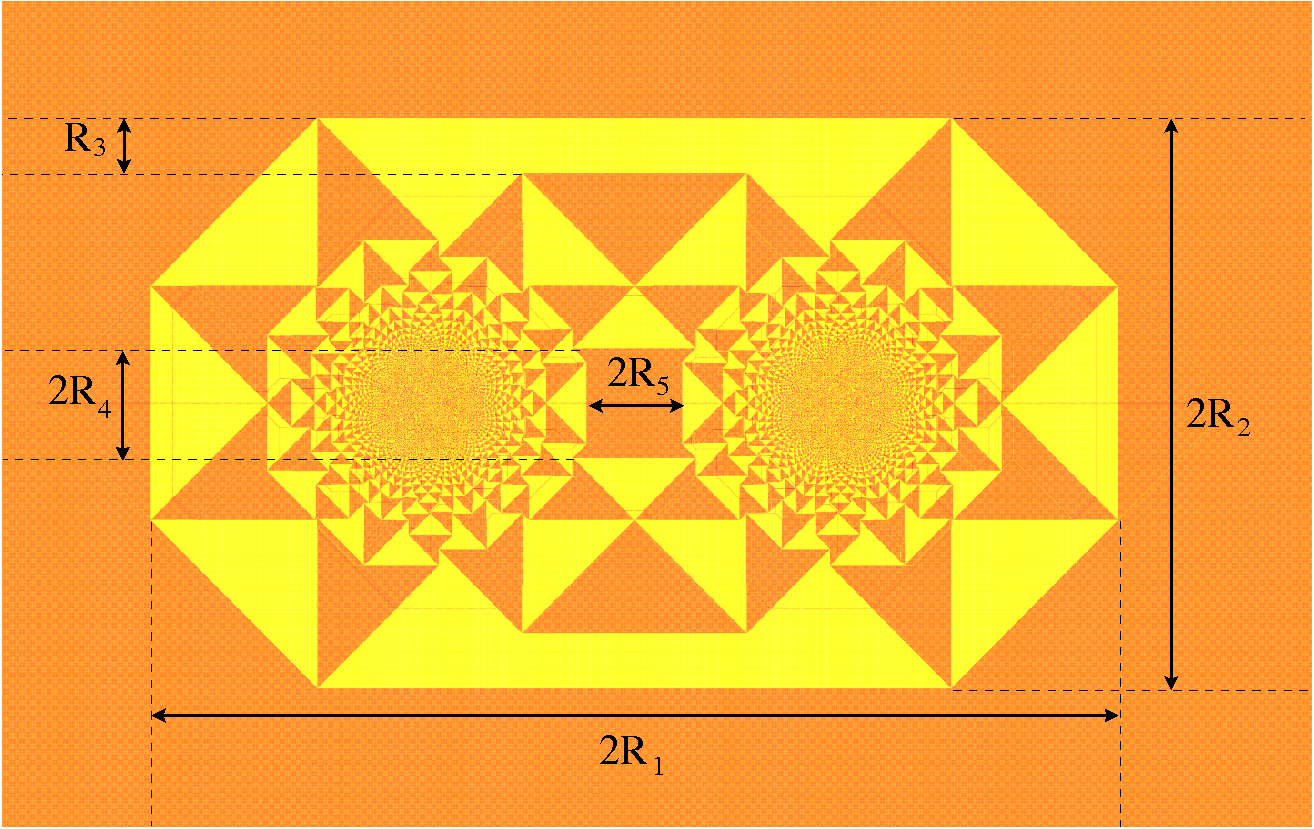}
\end{center}
\caption{ The spatial lengths $R_{1}$, $R_{2}$, $R_{3}$, $R_{4}$ and $R_{5}$ tabulated in Table.$2$}
\label{fig:lengths}
\end{figure}
\begin{table}
  \begin{center}
    \begin{tabular}{|c||c|c|c|c|c|c|}
      \hline
      ~N~ & $2.5k$ & $10k$ & $40k$ & $160k$ & $640k$ & Theoretical \\
      \hline
      \hline
      $R_{1}/\sqrt{N}$ & 1.84 & 1.84 & 1.84 & 1.83 &  1.83 & 1.82  \\
      \hline
      $R_{2}/\sqrt{N}$ & 1.06 & 1.07 & 1.07 & 1.06 & 1.05 & 1.06 \\
      \hline
      $R_{3}/\sqrt{N}$ & 0.22 & 0.21 & 0.20 & 0.19 & 0.18 & 0.18 \\
      \hline
      $R_{4}/\sqrt{N}$ & 0.18 & 0.19 & 0.19 &  0.18 & 0.18 & 0.18 \\
      \hline
      $R_{5}/\sqrt{N}$ & 0.20 & 0.22 & 0.21 & 0.21 & 0.21 & 0.21 \\
      \hline
    \end{tabular}
    \caption{Comparison of different lengths measured directly from the two source pattern
for $r_{o}=0.800$ with their theoretical values.}
  \end{center}
  \label{table:second}
\end{table}

\section{Discussion}\label{ch3.8}
While the results discussed in quantifying the patterns with growing sandpiles are presumably exact (in the
sense that $D\left( m, n \right)$ can be determined to arbitrary precision),
they have not been established rigorously. In particular, as noted before, it would be desirable to
have a direct proof of the proportional growth property from the definition of the problem. Also, we
use the observation that the asymptotic pattern consists of only two types of patches, and the
adjacency graph of the pattern is also taken as observed. It would be nice to see it following from
the definition of the problem. The unexpected accuracy of the scaling arguments giving Eqs. ($34$, $36$, $37$, $38$)
also deserves to be understood better.

We have shown that the exact characterization of the patterns
in the F-lattice on a checkerboard background reduces to solving  a discrete
Laplace equation on the adjacency  graph of the pattern. For the single source pattern
this graph is a square grid on a two-sheeted Riemann surface and in
the presence of a line sink it is on a three-sheeted Riemann surface.
This Riemann surface structure occurs for other sink geometries as well
and the number of sheets can be determined from the way
$\phi$ diverges near the origin. 

If the potential $\phi(r)$ diverges as $ r^{-a}$ near the origin, then the corresponding
complex function $\Phi(z) \sim z^{-a}$.  Then $\frac {d^2}{dz^2} \Phi \sim z^{-2-a}$. In
all the cases studied above, the patch to which point $z$ belongs is characterized by
integers $(m,n)$, where $\frac {d^2}{dz^2} \Phi  \sim m+in$. Also $\frac{d}{dz} \Phi \sim
d+ie $. Writing $D=d+i e$, and $M = m+in$, we see that $D \sim M ^{\frac{1+a}{2+a} }$.
This then gives the number of Riemann sheets. For example, for the wedge angle
$\omega=2\pi$, we have $a=1/2$. Then $D \sim M^{3/5}$, and the Riemann surface would have 5 sheets. 
\begin{figure}
\begin{center}
\includegraphics[scale=0.21]{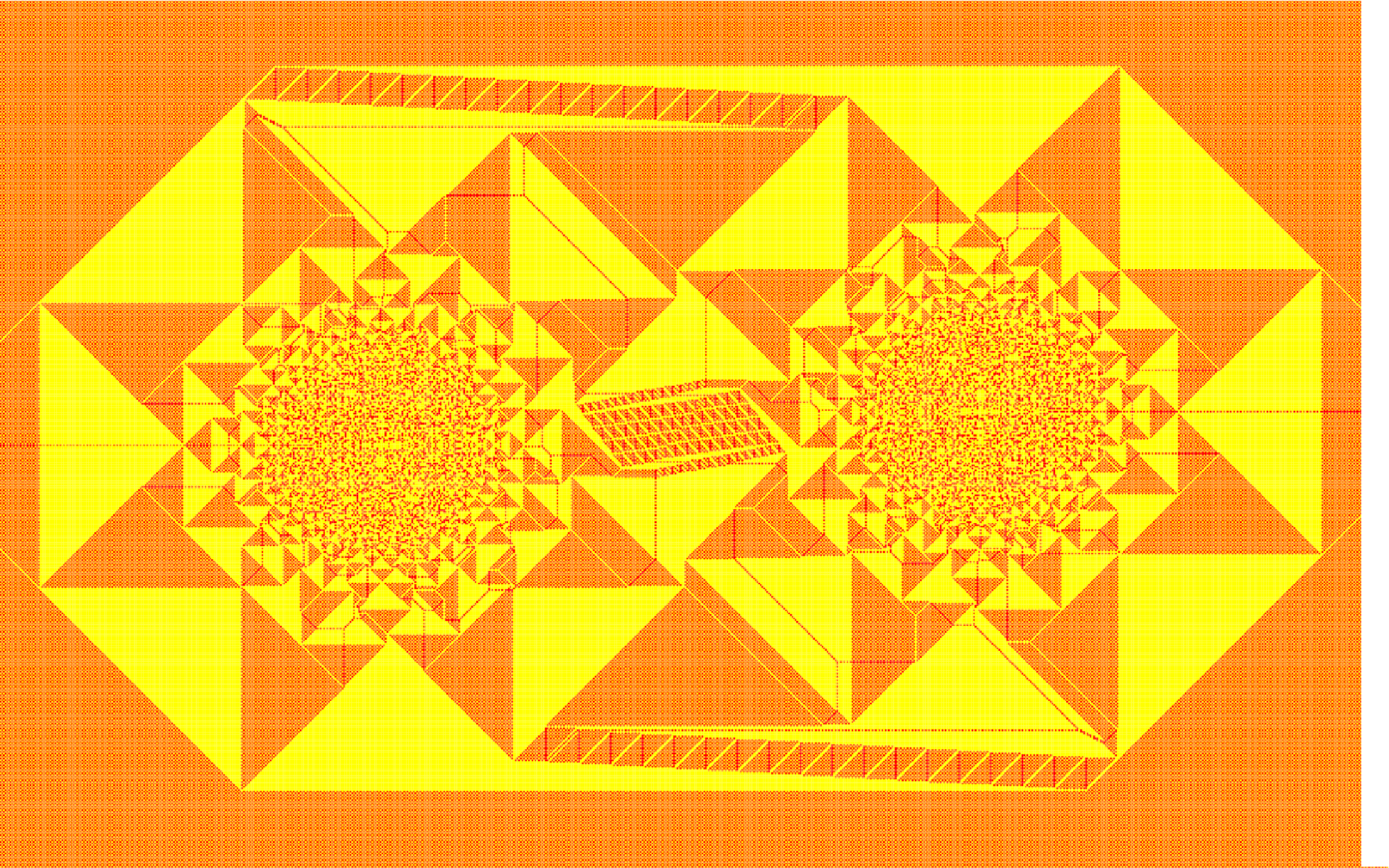}
\caption{ Pattern produced by adding $N=40000$ grains each at ($-180,0$) and 
($180,20$) on the F-lattice with initial checkerboard distribution of grains and 
relaxing. Color code red=0 and yellow=1. (Details can be seen in the online
version using zoom in )}
\label{fig:distr}
\end{center}
\end{figure}

The patterns discussed so far in this chapter have only two types of patches with densities
$1/2$ and $1$. But it is possible to have patterns with patches of other densities. For example,
for the patterns with two sources, even a slight deviation
of the position of the second source in Fig.\ref{fig:twosource} from the x-axis
introduces patches which have areal density different from $1/2$ or $1$. One
such pattern produced by adding $40000$ grains each at $\left( -180, 0 \right)$
and $\left( 180, 20 \right)$ is shown in Fig.\ref{fig:distr}. The regions with
stripes of red and yellow are patches of the new density. In addition, the boundaries of
these patches have slopes other than $0$, $\pm 1$ and $\infty$. Most of
the analysis presented here is appilcable to this pattern, except that the matching
conditions along the common boundary between two patches and the adjacency graph are different.

The cases in which the full pattern can be explicitly determined are clearly special. 
For example, one of the conditions used for the exact characterization of the patterns in this chapter is that
inside each patch the height variables are periodic and hence $\Delta\rho\left( \mathbf{r} \right)$
is constant. It is easy to check that this condition is not met for most sink geometries.
For example, patterns  of the type discussed in Section $4$
with any $\omega$ other than integer multiples of $\pi/4$ have aperiodic
patches. In such cases, the present treatment for characterization of patterns is
clearly not applicable. However, the scaling analysis for the growth of
the spatial lengths in the pattern with $N$ is still valid.

The function $D=d+ie$ satisfies the discrete Cauchy-Riemann condition (equation (\ref{cr})). These functions are known as discrete  holomorphic
functions in the mathematics literature. Usually they have been studied  for a square grid of 
points on the plane \cite{duffin,spitzer}. While more general discretizations of the plane have
been discussed  \cite{mercat,laszlo}, not much is known about the behavior of such functions 
for multi-sheeted Riemann surfaces. A perturbative approach of
determining these functions for square discretization of multi-sheeted
Riemann surface is presented in the Appendix \ref{apndx:dhf}.

In our analysis we have also used the fact that the patterns have nonzero average overall 
excess density ( i.e. $C_2$ in Eq. (\ref{nr3}) is nonzero). 
The case $C_2=0$ is quite different, and requires a substantially different treatment.
We discuss such patterns in the next chapter.

\chapter{Pattern Formation in Fast-Growing Sandpiles}
\textit{Based on the paper \cite{myjsm}} by Tridib Sadhu and Deepak
Dhar.
\begin{itemize}
\item[\textbf{Abstract}]
We study the patterns formed by adding  $N$ sand-grains at a single
site on an initial periodic background in the
Abelian sandpile models, and relaxing the configuration. When the heights at
all sites in the initial background are low enough, one gets patterns
showing proportionate growth, with the diameter of the pattern formed
growing  as $N^{1/d}$ for large $N$, in $d$-dimensions. On the other
hand, if sites with
maximum stable height in the starting configuration form an infinite
cluster, we get  avalanches that do not stop. In this chapter, we
describe our unexpected finding of an interesting class of
backgrounds in two dimensions,  that  show  an  intermediate behavior:
For any $N$, the avalanches are finite, but  the diameter of the
pattern increases   as  $N^{\alpha}$, for large $N$, with $1/2 <
\alpha \leq 1$.  Different values of $\alpha$ can be realized  on
different backgrounds, and the patterns still show proportionate
growth. The non-compact nature of growth simplifies their
analysis significantly. We characterize the asymptotic pattern exactly for one
illustrative example with $\alpha=1$.
\end{itemize}
\section{Introduction}\label{sec:ch4.1}
In the last two chapters, we studied the patterns produced by adding
grains at a single site in a Deterministic Abelian Sandpile Model (DASM), and relaxing. A complete relaxation
process, starting from the addition of sand to reaching the final
stable configuration is called an avalanche. The length of an
avalanche depends on the initial height configuration,
\textit{i e.}, the background. For some
backgrounds on an infinite lattice, topplings may continue for ever,
and the avalanches reach to infinity. For other backgrounds, where the
avalanches are finite, the toppled
sites form patterns in the spatial configuration of sites with
different values of the height variables.

The backgrounds leading to infinite
avalanches have been termed as \textit{explosive}.
\begin{figure}
\begin{center}
\includegraphics[width=10cm,angle=0]{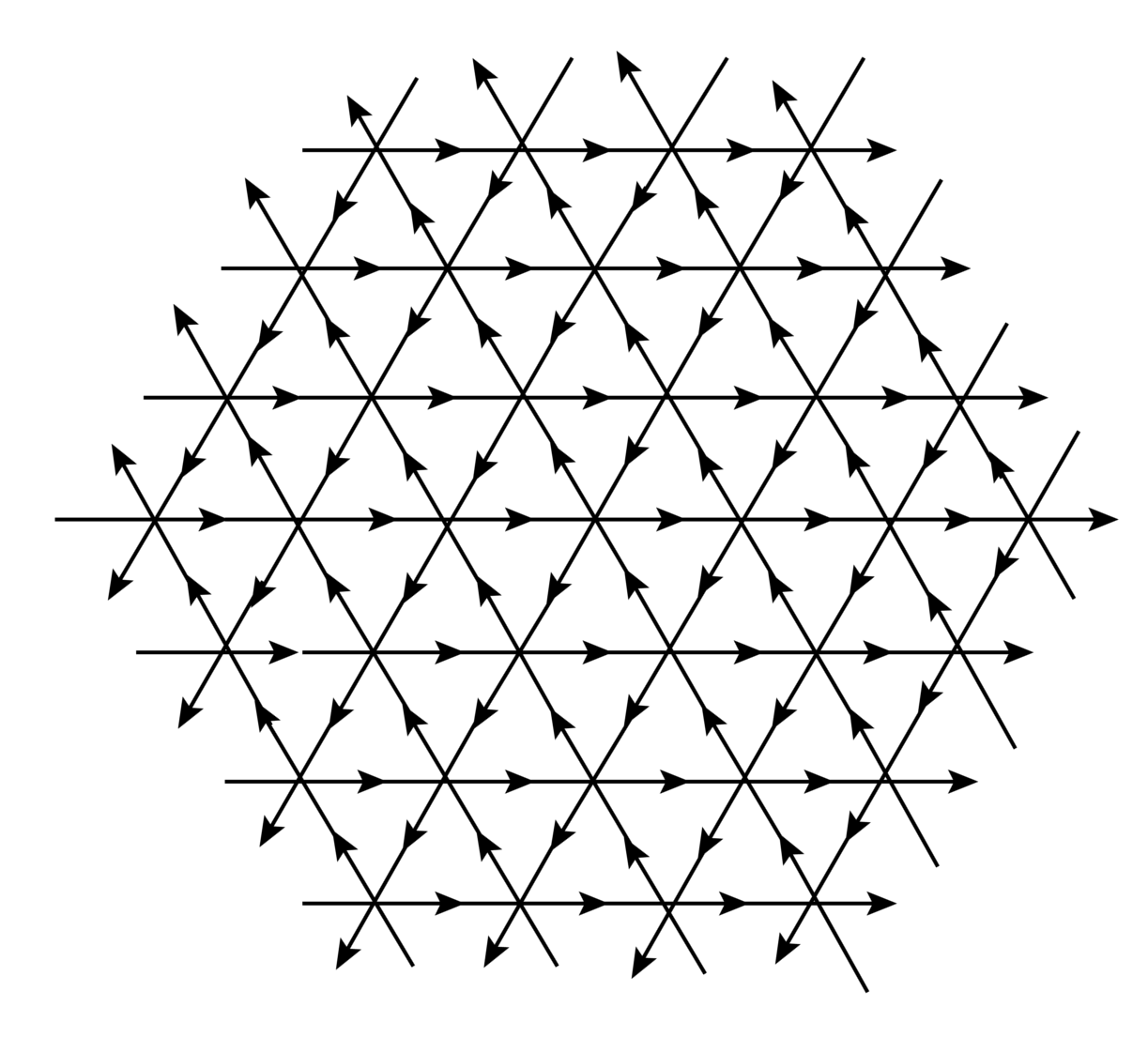}
\caption{A directed triangular lattice.}
\label{fig:trilattice}
\end{center}
\end{figure}
In the other backgrounds with finite avalanches the diameter of the pattern
usually grows as $N^{1/d}$ with increasing number of added grains in
$d-$dimensions. We call this type of growth as \textit{compact} growth. All the
patterns studied so far showed compact growth. In this chapter, we
describe a remarkable class of patterns where the diameter remains
finite for any finite $N$, but grows as $N^{\alpha}$, with
$1/d<\alpha\le 1$. We call this type of growth as \textit{non-compact} growth.
Characterization of these patterns, as will be shown, is simpler than
the ones with compact growth.

For a sandpile model with stochastic toppling rule, the size of the pattern is
determined only by the density $\rho_{o}$ of heights in the background,
and the specific arrangement of heights does not matter. There exist a
critical density $\rho_{c}$, depending on the toppling rules,
such that, on a background with sub-critical density $\rho_{o} < \rho_{c}$,
finite avalanches occur with probability $1$. The corresponding
asymptotic pattern is a simple circle,
with density $\rho=\rho_{c}$, inside, and
the diameter $2\Lambda$ of the circle growing as $\sqrt{N}$. For densities
$\rho_{o}>\rho_{c}$, probability of finite avalanches vanishes in the
large $N$ limit.

For a deterministic sandpile a similar critical density can not be defined.
One can construct backgrounds with densities very close to zero, and
still there are infinite avalanches. A simple example of such
backgrounds, on any lattice, is the one where the sites with height
$z_{c}-1$ form an infinite connected cluster, with $z_{c}$ being the threshold height.
Height at other sites could be zero, and thus the density
$\rho_{o}$ of the background could be made very small.
On the other hand, it is
possible to construct backgrounds with mean density arbitrarily close to
$z_{c}-1$, and yet the avalanches are always finite \ref{explosion}.
Absence of an infinite avalanche depends on the detailed arrangement
of heights in the background, and not on the density alone.
\begin{figure}
\begin{center}
\includegraphics[scale=0.155,angle=0]{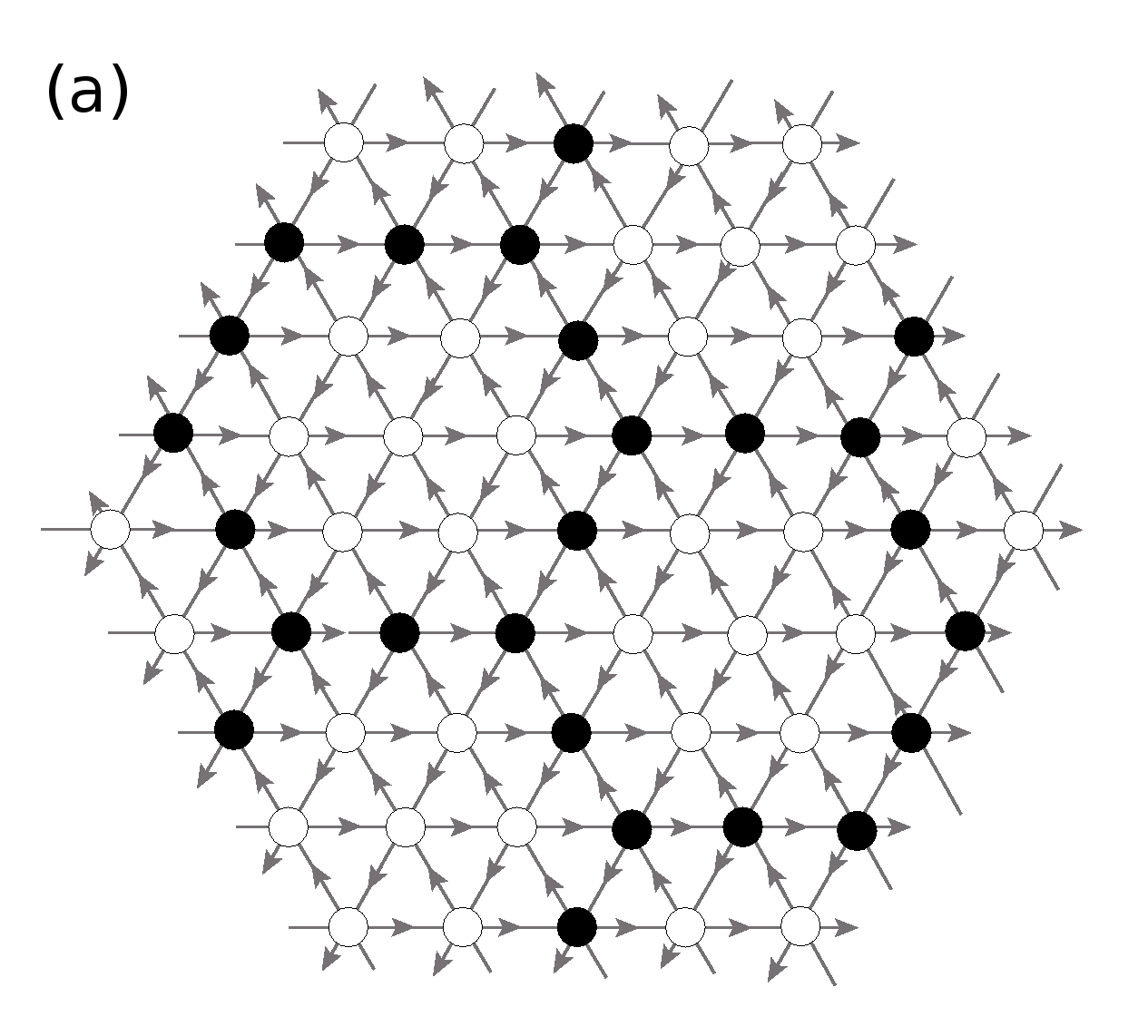}
\includegraphics[scale=0.155,angle=0]{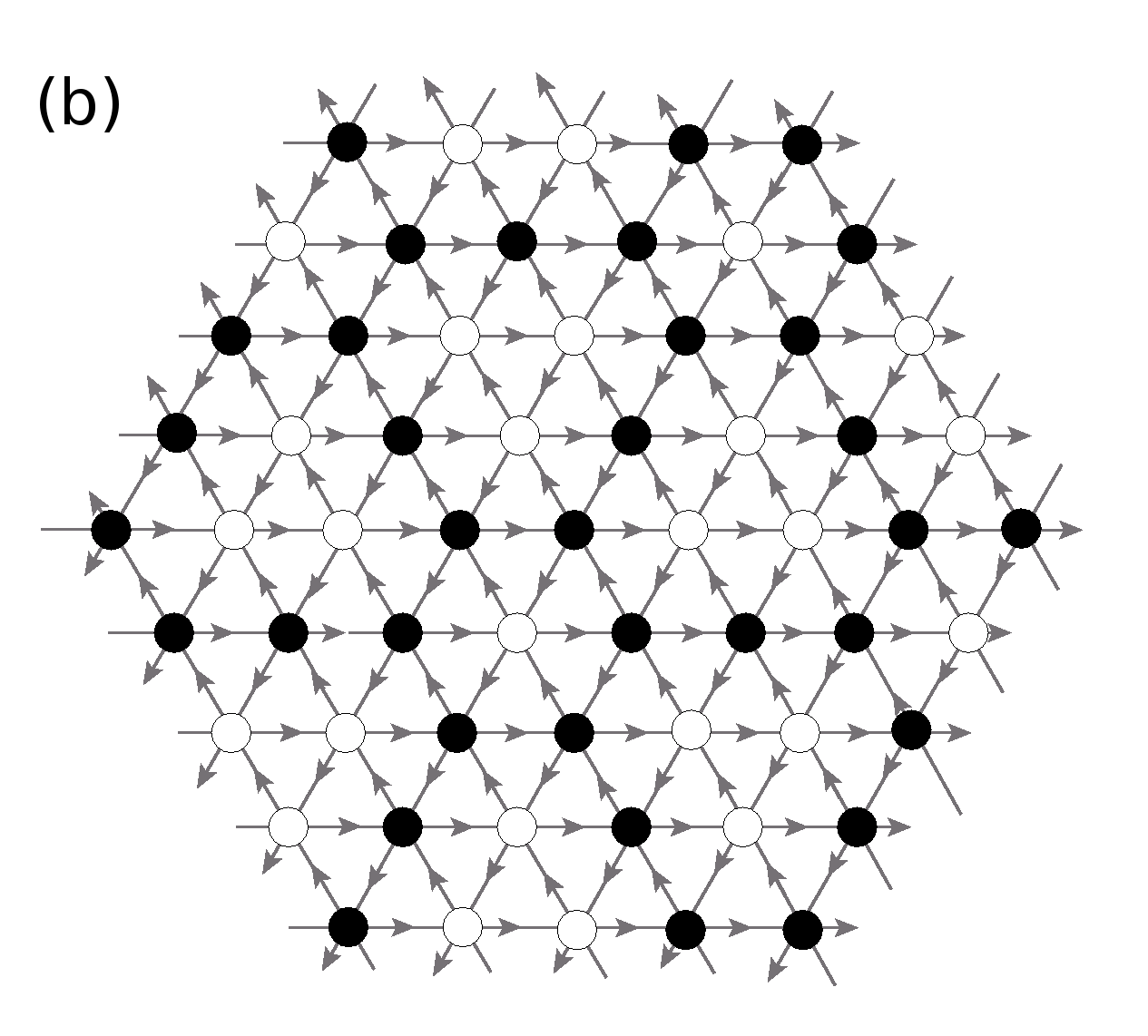}
\caption{Examples of the background of class I and II, respectively. The filled
circles represent height $1$ and unfilled ones $2$.}
\label{fig:tribg}
\end{center}
\end{figure}

There are some earlier work on the growth rate of the sandpile
patterns. Some backgrounds of both types, explosive and non-explosive, for a
deterministic ASM were studied in
\cite{explosion,fey}. In all the examples, studied so far,
the background is either explosive or the growth of the patterns is compact.
For a deterministic ASM on a square lattice, it was shown \cite{fey}, that the pattern produced on a background of constant height
$z\le z_{c}-2$, is always enclosed inside a square whose width grows as $\sqrt{N}$.
Given the absence of any
critical density, it is non-trivial to find a background on which the
patterns grow with a rate faster than $\sqrt{N}$, but finite. In fact,
for the sandpile models on a standard square
lattice there are no known examples of patterns with non-compact
growth.
\begin{figure}
\begin{center}
\includegraphics[scale=0.7,angle=0]{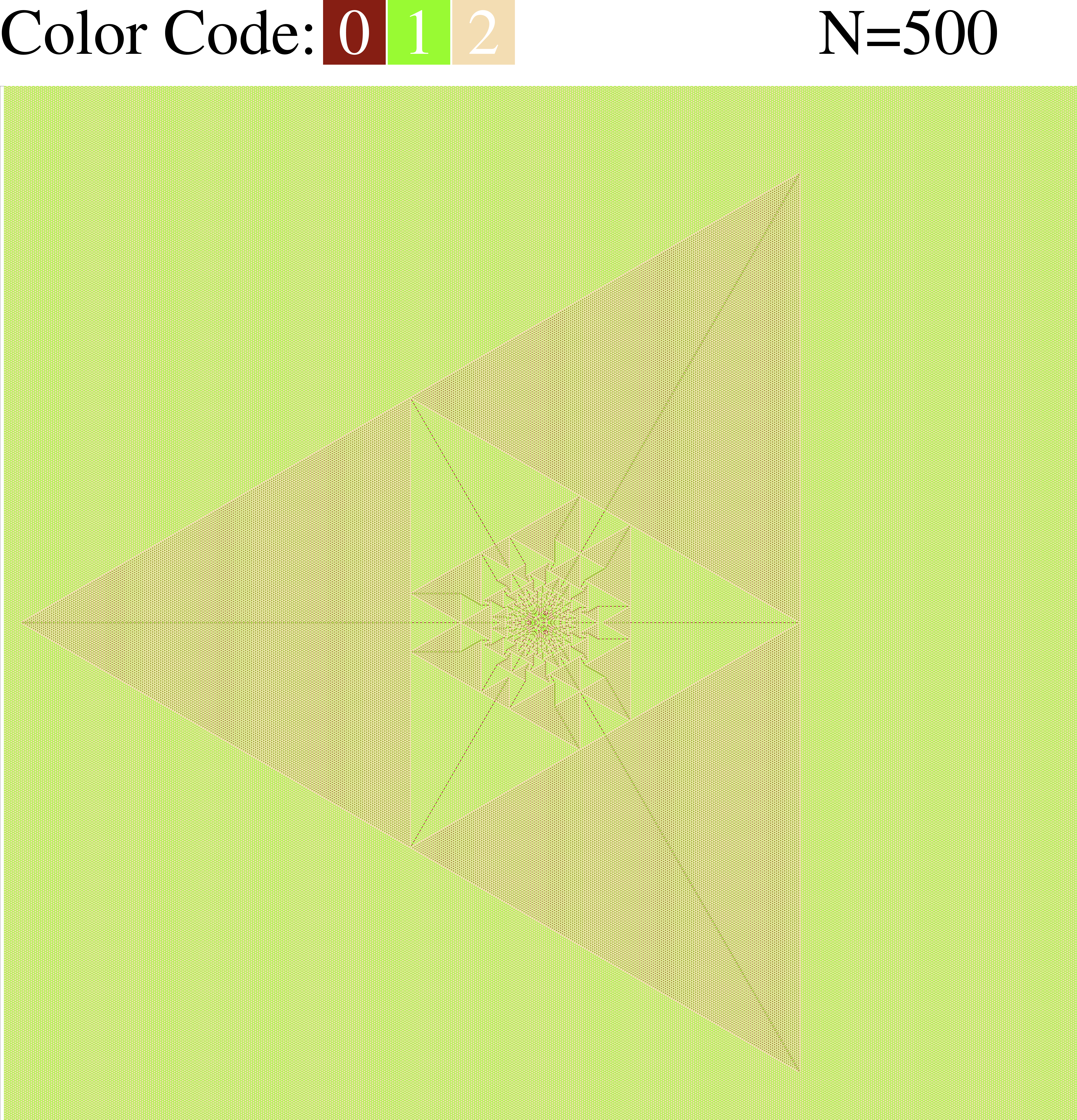}
\caption{(color in the electronic copy) The pattern formed on the background in figure
\ref{fig:tribg}$(a)$by adding $N$ particles
respectively at the origin. Details can be viewed in the electronic
version using zoom in.}
\label{fig:hex}
\end{center}
\end{figure}
\begin{figure}
\begin{center}
\includegraphics[scale=0.7,angle=0]{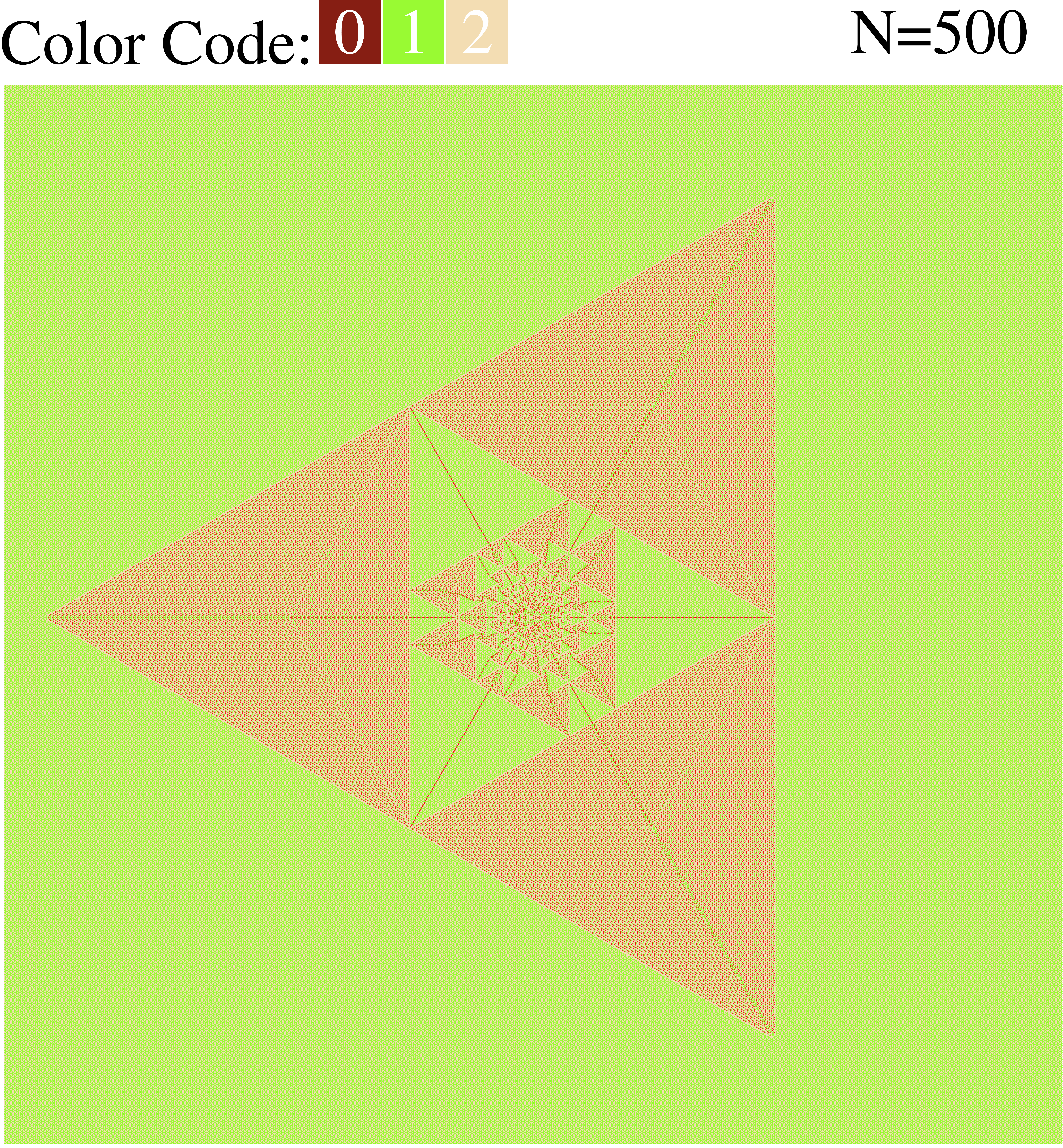}
\caption{ (color in the electronic copy) The pattern formed on the background in figure
\ref{fig:tribg}$(b)$ by adding $N$ particles
respectively at the origin. Details can be viewed in the electronic
version using zoom in.}
\label{fig:tri}
\end{center}
\end{figure}

We found two classes of backgrounds, both infinite, on a directed
triangular lattice (see Fig. \ref{fig:trilattice}), for which the
growth is proportionate, with the growth exponent $\alpha=1$. Examples
of these classes of backgrounds and patterns formed on them are shown
in Fig. \ref{fig:tribg}, Fig. \ref{fig:hex} and Fig. \ref{fig:tri}. Our
numerical study shows, but we have no formal proof, that different
backgrounds belonging to the same class produce the same asymptotic
pattern. In addition, we found infinitely many backgrounds on the
F-lattice which produce patterns with proportionate non-compact
growth. However, in these cases the growth exponent $\alpha$ takes a
different value, with $1/2 < \alpha < 1$ for each member.

We also discuss the exact characterization of the pattern shown in
Fig. \ref{fig:hex}, one of the two asymptotic patterns we have found
with $\alpha=1$. This is described, as in the earlier studied case of
compact growth (see previous chapters), in terms of the scaled
toppling function. However, the analysis of non-compact patterns is
actually simpler. Clearly, for $\alpha>1/d$, the mean excess density
of particles in the toppled region is zero, in the asymptotic
patterns. Infact, the patterns are made of large patches
where heights are periodic, and inside each patch, the mean
density is exactly the same as in the background, and the excess
grains are concentrated along the patch boundaries. There are also
some boundaries where excess grains density is negative. We show that
this leads to the scaled toppling function being a piece-wise linear
function of the rescaled coordinates. Thus, in each patch, the
potential function is specified by only three coefficients. In
contrast, for the compact patterns discussed in chapters \ref{ch2} and
\ref{ch3}, the scaled toppling function is a quadratic function of the
coordinates in each patch, and one has to determine six coefficients
for each patch, to determine the function fully.

We are able to reduce the problem of determining the asymptotic
pattern in Fig. \ref{fig:hex} to that of finding the lattice Green's
function on a hexagonal lattice. This is known to be expressible as
integrals that can be evaluated in closed form (see Appendix \ref{ap:laplace}), and
this leads to a full solution of the problem. This is in contrast to
the characterization of the compact growth patterns in previous
chapters, where one requires solution of Discrete Laplace's equation
on a square grid on Riemann surfaces of multiple sheets, and there are
no known closed form expression for the solution. 

This chapter is organized as follows: In section \ref{sec:cnc}, we discuss, in
details, how different periodic background configurations give rise to
different rates of growth.
In section \ref{sec:enc} we define a deterministic ASM on the directed triangular lattice, and describe the two
classes of periodic backgrounds that produces patterns with non-compact growth.
In section \ref{sec:pwl} we argue that, for any pattern with
non-compact proportionate growth the rescaled toppling function is
piece-wise linear. In section \ref{sec:anc}, we discuss exact
characterization of the simplest of the non-compact growth patterns
with $\alpha=1$. Patterns on the F-lattice, with $1/2<\alpha < 1$ are
discussed in section \ref{sec:ncfl}. The section \ref{sec:tropical}
contains some discussion about connection to tropical polynomials.

\section{Compact and non-compact growth}\label{sec:cnc}
The simplest growing patterns are found in the  Manna-type sandpile
models with stochastic toppling rules \cite{sasm}. In these models, when the
density of particles $\rho_{o}$ in the background is small,
the avalanches are always finite. In the relaxed configuration, the
toppled sites form a nearly circular region (see Fig.
\ref{fig:mannapattern}). The asymptotic pattern seems to be perfectly
circular disc of uniform density, with an average density $\rho^\star$
inside the circle and $\rho_{o}$ outside. The value of $\rho^\star$ is
independent of the background density $\rho_{o}$, and is equal to the
unique steady state density of the corresponding self organized
critical model with random sites of addition, and dissipation at the
boundary \cite{sasm}. The region inside the circle forgets about the
initial height configuration, and is in the self-organized critical
state. The boundary of the affected region is thin with a sharp
transition of density from $\rho^\star$ to $\rho_{o}$ (see Fig. \ref{fig:transition}). Then considering
that, for large $N$, all the added grains are confined inside the
circular region of diameter $2\Lambda$, we get
\begin{equation}
N=\left( \rho^\star-\rho_{o} \right)\pi \Lambda^{2}+\rm{Lower~ order~
in~}\Lambda.
\label{eq1}
\end{equation}
Thus the pattern has a compact growth.

For densities $\rho_{o}$ close to, but below  $\rho^\star$, sometimes a
single particle addition can lead to very large increase in the size of the toppled region. However,
probability of such large jumps decreases exponentially with size, and
for any finite $N$, with probability $1$, avalanches remain finite. As
long as $\rho_{o}$ is less than $\rho^\star$, the system relaxes, forming a
pattern whose diameter grows as $\sqrt{N}$. Adding a single grain on a
background of super-critical density ($\rho > \rho^\star$) gives rise to infinite
avalanches, with non-zero probability. Then, with probability
$1$, such backgrounds will lead to an infinite avalanche for some
finite value of $N$.
In higher dimensions also, a similar  behavior is expected.
\begin{figure}
\begin{center}
\includegraphics[width=12.0cm]{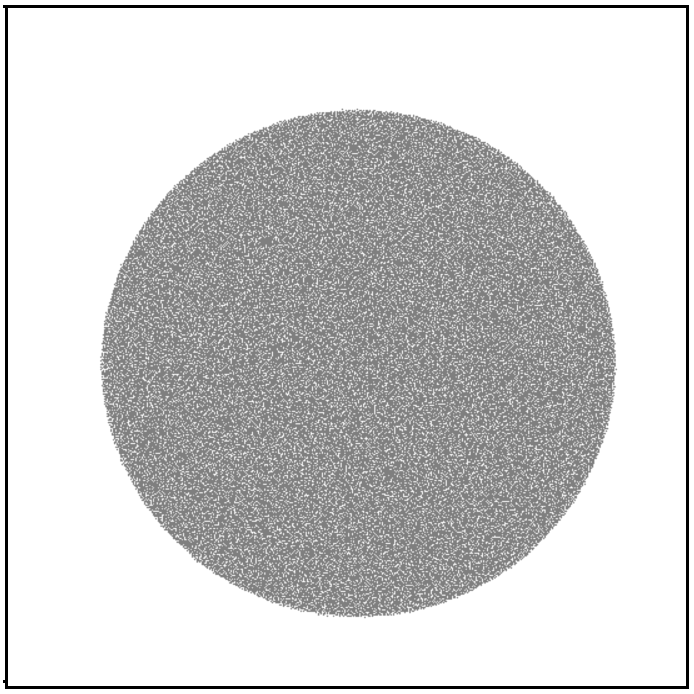}
\caption{The pattern produced by adding $N=10^5$ grains at a single site on a
stochastic ASM defined on an infinite square lattice and relaxing; Initial
configuration with all sites empty. The
threshold height $z_{c}=2$, and on toppling two grains are transfered
either to the vertical or horizontal neighbors, with equal
probability. Color code: White=0, and Black=1.}
\label{fig:mannapattern}
\end{center}
\end{figure}

\begin{SCfigure}
\includegraphics[width=10cm]{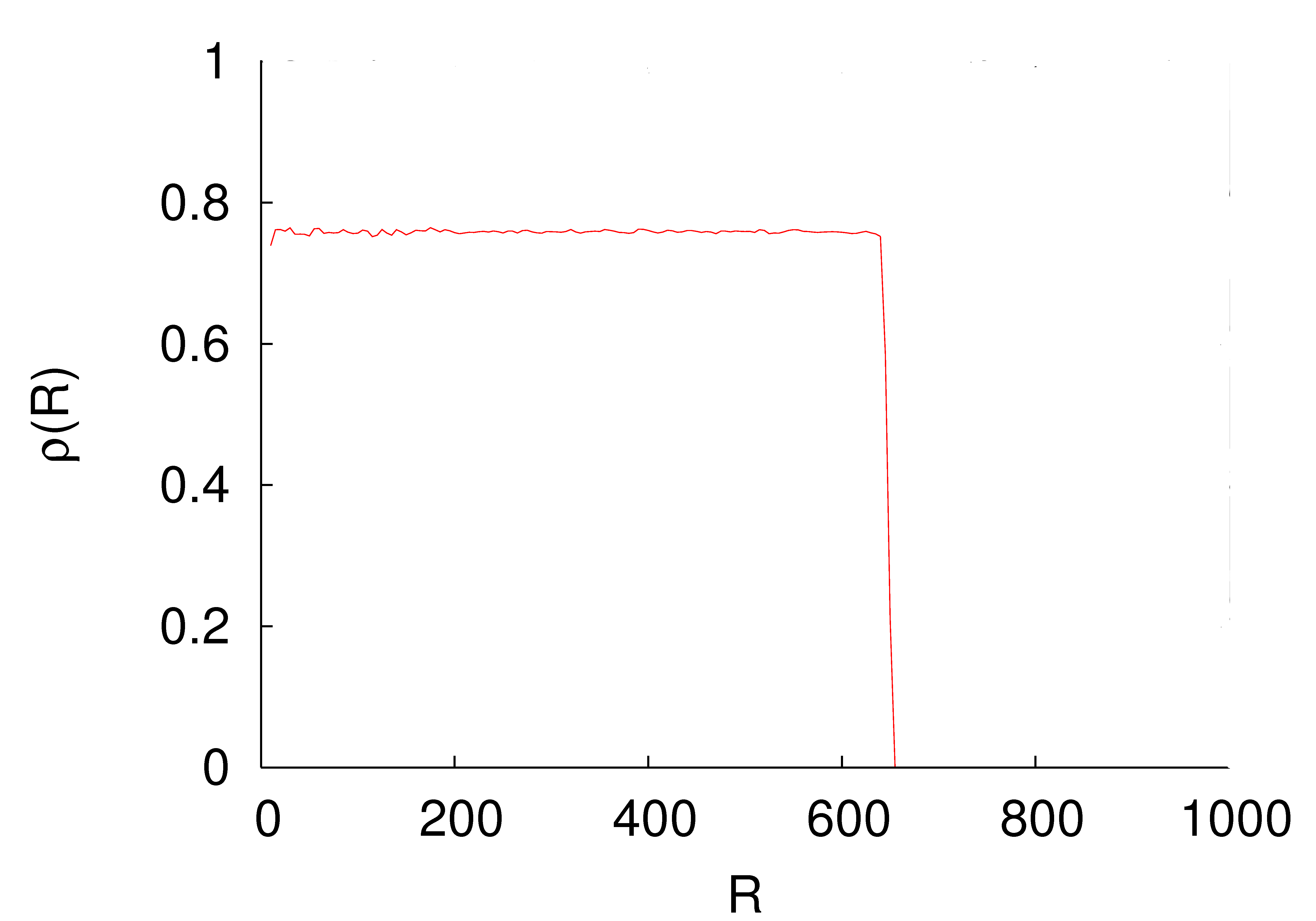}
\caption{The variation of density $\rho$ along the radius of the
circular pattern in Fig. \ref{fig:mannapattern}.}
\label{fig:transition}
\end{SCfigure}

At $\rho_{o}=\rho_{c}$ the probability of large avalanches has a power
law tail, and the above argument does not apply. It may be possible to
construct a robust background with critical density of heights. However, no such example has been
found, so far.

In the models with deterministic relaxation rules there is no well defined
critical density $\rho_{c}$, separating the explosive and
non-explosive backgrounds. The geometry of height distribution plays
the determining role in the robustness of a background.

In the deterministic models, similar circular pattern is produced for a background with random assignment of
heights $z\le z_{c}$ per site, of small average density $\rho_{o}$.
Inside the pattern, density of heights is $\rho_{s}$ which is the steady
state density of the corresponding SOC model.
However, on a background of higher densities, this picture is changed
considerably.
For example consider the BTW model on a square lattice, where the steady state density $\rho_{s}=17/8=2.125$.
It has been shown that a background with a random assignment of height $3$ with probability $\epsilon$, on
a sea of constant height $2$ is explosive, even for arbitrary
small value of $\epsilon$ \cite{explosion}, although
the average density $\rho_{o}=2+\epsilon$ is much less than $\rho_{s}$. 

For a background with periodic heights, it is possible to construct
explosive backgrounds of any density, even with arbitrary small
values. As an example, we consider the BTW model on an infinite square lattice. Define a background made of square unit cells of width $m$ with
empty sites inside the cell, and $3$ grains at each site in the
boundary. For $m>1$, the average density
$\rho_{o}=6/m-9/m^{2}$. If any of the occupied site receives a grain, it starts a chain of toppling events where
all the occupied sites topple, and the avalanche reaches infinity.

On the other hand a constant background of height $3$ at all sites is
a minimally stable configuration \textit{i.e}, addition of a single
grain will produce an infinite avalanche.
Still, it is possible to construct a robust background with density arbitrarily
close to $3$. For example, consider the background in the previous
example, and exchange the height variables: make the occupied
sites empty, and fill the empty sites with $3$ grains. It has been
shown \cite{explosion}, that this background is stable and the
pattern produced has a compact growth.

We will show in the next section that, there is a large class of
backgrounds, with a range of densities, for which the growth is less
than explosive, but more than compact.

\section{Examples of non-compact growth}\label{sec:enc}
We first discuss the patterns with $\alpha =1$. We start with an ASM
on a directed graph corresponding to an infinite two dimensional
triangular lattice, with each site having three incoming and three
outgoing arrows (see Fig. \ref{fig:trilattice}). The threshold
height $z_{c}=3$, for each site. If the height at any site is above or
equal to $z_{c}$, it is unstable, and relaxes by toppling: in each
toppling, three sand grains leave the unstable site, and are
transferred one each along the directed bonds going out of the site.

We consider two classes of backgrounds on this lattice:
\begin{itemize}
\item[]\underline{\textbf{Class $I$:}} We consider the lattice as made of triangular plaquettes, which are joined together to make tiles in the shape of regular hexagons  with edges of length $l$. We cover the two-dimensional plane with these tiles.  Sites that lie on the boundaries of these hexagons are assigned height $1$, and the rest of the sites have height $2$.  Figure 
\ref{fig:tribg}$(a)$ shows the background configuration for the case $l=2$. 

\item[]\underline{\textbf{Class $II$:}}
For these backgrounds, we cover the two-dimensional plane with tiles
in the shape of equilateral triangles of edge-length $l$. The sites
that lie on the boundaries of the triangles, and are shared by two
triangles, are assigned  height  $1$, and remaining sites are assigned
height $2$. Sites that are at the corners of triangular tiles, and
shared by six of them, are also assigned heights $2$. The background
configuration corresponding to $l=4$ is shown in figure \ref{fig:tribg}$(b)$. The pattern
made of triangular tiles with $l=3$ is same as the class I background
with hexagon of edge-length $1$. Hence, only patterns formed  with
triangles of edge-length $ l \ge 4$ will be said to be in this class.
\end{itemize}

The patterns produced by adding $N$ grains, where $N$ is large, at a single
site on the two backgrounds in Fig. \ref{fig:tribg} are shown in
Fig. \ref{fig:hex} and \ref{fig:tri}. While the patterns look quite
similar, a closer examination shows that they are not identical. In
Fig. \ref{fig:tri}, there are extra lines of particles within the
brownish patches which break each patch into smaller parts. In fact,
with the identification of some patches having only a point in common,
as discussed later, we can show that each patch breaks into exactly
three patches. These three parts have similar periodic pattern, but
with different orientations.
\begin{figure}
\begin{center}
\includegraphics[width=8.8cm]{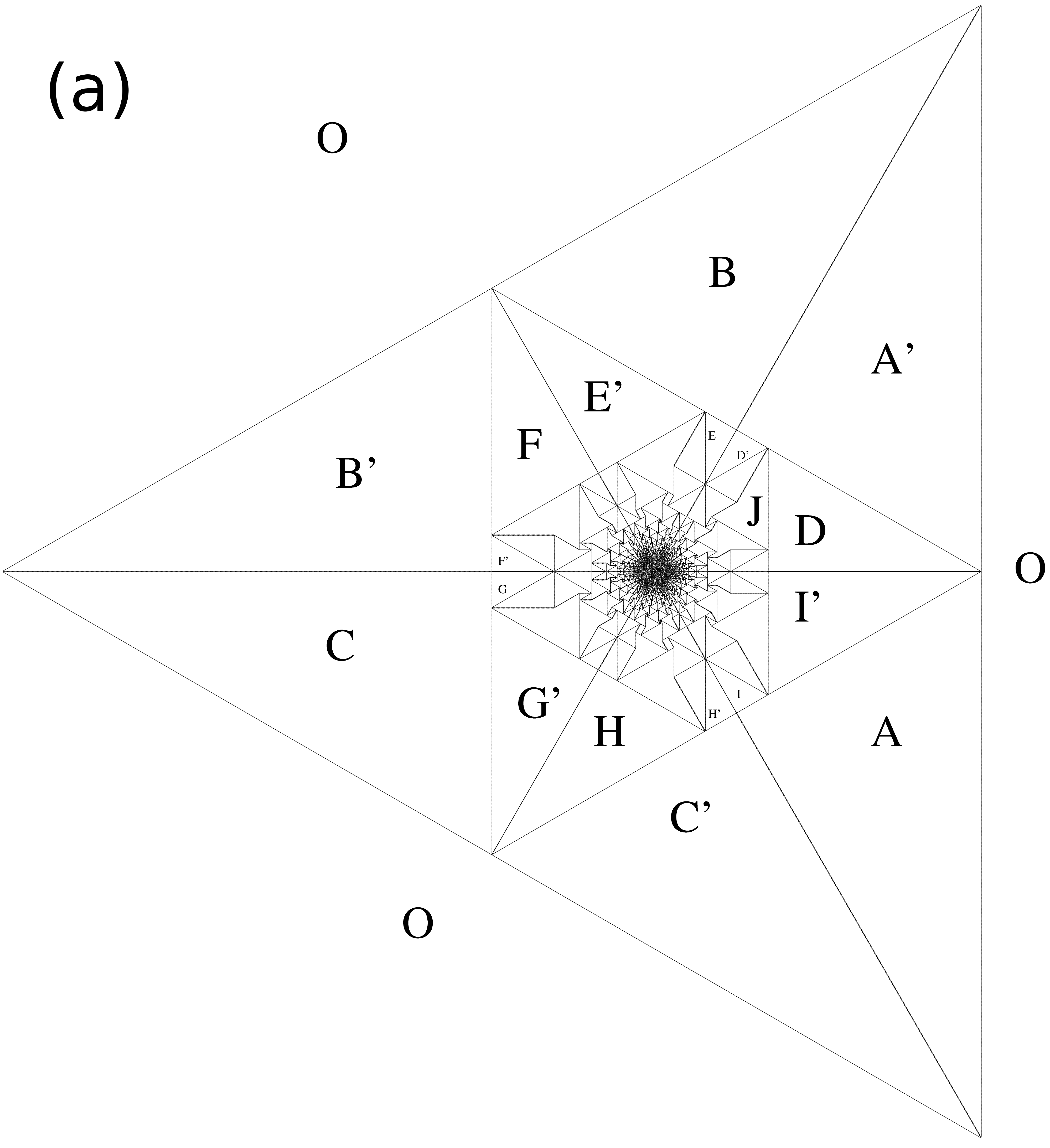}\\
\includegraphics[width=8cm]{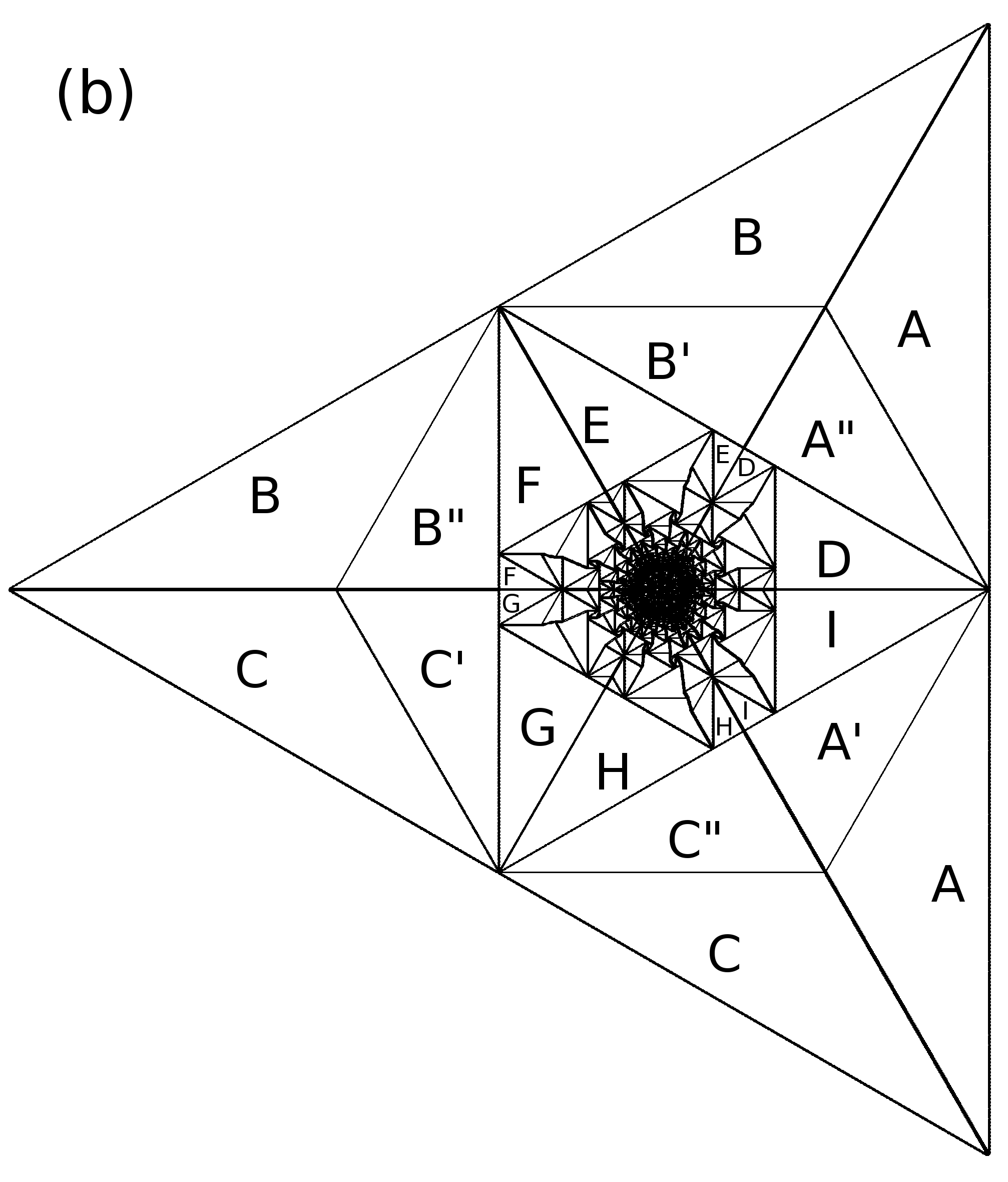}
\caption{The patterns in terms of $Q\left( r \right)$, corresponding
to those in Fig. \ref{fig:hex} and \ref{fig:tri}. Sites with zero
$Q\left( r \right)$ are colored white, and non-zero are colored black.
The larger patches are given identifying labels.}
\label{fig:line}
\end{center}
\end{figure}

This differences can be seen more clearly in terms of the net excess
change in height ${Q\left( R \right)}$ in a unit cell centered at
$R$, where the unit cell is that of the background pattern.
\begin{equation}
Q\left( R \right)=\sum_{R'\in \textrm{unit cell}}\Delta z(R+R'),
\end{equation}
where $\Delta z\left( R \right)$ is the change in height at site $R$.
For example in the first background in Fig. \ref{fig:tribg} a unit
cell is a hexagon of edge length $l=2$, and for the second background
it is a parallelogram of each side length $l=4$. A site that is on the edge of
the unit cell is counted with weight $1/2$, and a site on the corner
of the hexagon with weight $1/3$, and on the corner of the
parallelogram with weight $1/4$. By construction, the function
$Q\left(r\right)$ is zero inside each patch, and non-zero along the
boundaries between patches. The patterns in terms of these variables,
corresponding to those in Fig. \ref{fig:hex} and \ref{fig:tri} are
shown in Fig. \ref{fig:line}.
\begin{SCfigure}
\includegraphics[scale=0.73,angle=0]{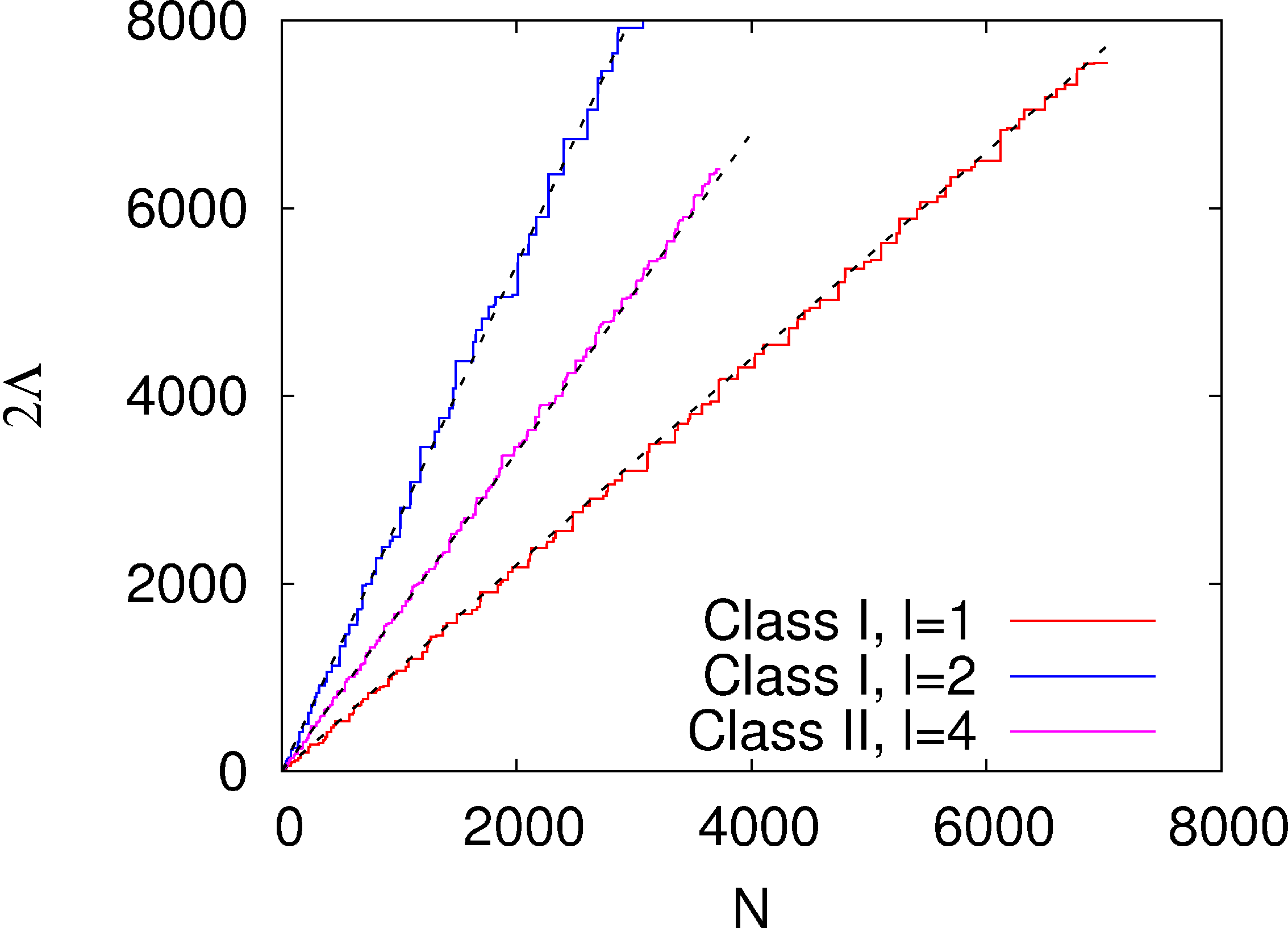}
\caption{(color in the electronic copy) The diameter $2\Lambda$ of the patterns as a function of the
number $N$ of  added  grains. The cases shown are  (i) Class I, $l=1$,
(ii) Class $I$, $l= 2$, and (iii) Class $II$, $l=4$. The corresponding
straight line fits have slopes given by $1.1$, $2.7$ and $1.7$ respectively.}
\label{fig:triln}
\end{SCfigure}

We have seen that the patterns on these two classes of backgrounds
exhibit proportionate growth, \textit{i.e.}, all the spatial features
inside the patterns for large $N$, grow at the same rate with the
diameter. We define the diameter $2\Lambda$, in general, for any pattern in
this paper, as the height of the smallest rectangle containing it. For
the patterns in Fig. \ref{fig:hex} and Fig. \ref{fig:tri}, it is then
the length of a side of the bounding
equilateral triangle.
This particular choice makes $2\Lambda$ as
an integer multiple of $\sqrt{3}$, on the triangular lattice. We find that for both types of backgrounds
in Fig. \ref{fig:tribg}, the diameter of the pattern grows linearly
with $N$ (Fig. \ref{fig:triln}).
\section{Piece-wise linearity of the toppling function\label{sec:pwl}}
Considering the proportionate growth, let us define a rescaled
coordinate $\vec r=\vec R /N^{\alpha}$,
where $\vec R\equiv\left( x,y \right)$ is the position vector of a
site on the lattice.
The number of topplings at any site inside the pattern, scales linearly with
$N$. Let us define
\begin{equation}
\phi\left( \vec r \right)=\lim_{N\rightarrow
\infty}\frac{T_{N}( \vec R )}{N}.
\end{equation}
We now show, using an extension of the argument given in chapter
\ref{ch2},
that the function $\phi$ is linear inside periodic patches
in all the patterns with non-compact growth, \textit{i.e.}, with
$\alpha>1/2$. Within a patch, the function
$\phi(\vec r)$ is expandable in Taylor series around any point
$\vec r_{o}$, not on the boundary of the patch. Defining $\vec r_{o}\equiv\left( \xi_{o},
\eta_{o} \right)$, and
$\Delta\vec r_{o}\equiv\left( \Delta\xi, \Delta\eta \right)$ we have
\begin{equation}
\phi\left( \xi_{o}+ \Delta \xi, \eta_{o}+\Delta\eta \right)-\phi\left(
\xi_{o}, \eta_{o} \right)=d\Delta\xi+e\Delta\eta+\mathcal{O}\left(
\Delta\xi^{2},\Delta\eta^{2},\Delta\xi\Delta\eta \right).
\end{equation}
Consider any term of order $\ge 2$ in the expansion, for example, the term
$\sim(\Delta\xi)^2$. This can only arise due to a term $\sim (\Delta
x)^2N^{1-2\alpha}$ in the toppling function $T_{N}(
\vec R)$.
Then, considering the fact that $T_{N}(
\vec R )$ is an integer function of $x$ and $y$, it is easy
to see that this term would lead to discontinuous changes in
$T_{N}(\vec R)$ at intervals of $\Delta x \sim
\mathcal{O}(N^{\alpha-1/2})$. As $\alpha>1/2$ for non-compact growth
patterns, this leads to a change in the periodicity of heights at such intervals
inside each patch which themselves are of size $\sim N^{\alpha}$. This
would then result in many defect lines within a patch, in the pattern at large $N$. However
there are no such features in Fig. \ref{fig:hexpicl1}.
Therefore inside each periodic patch,
$\phi(\vec r)$ must be exactly linear in $\vec r$. In fact, it turns out
that the integer toppling function $T_{N}(\vec R)$ is exactly linear inside
a patch even for any finite $N$, except for an additional periodic term of
periodicity equal to that of the heights inside the patch.
\begin{SCfigure}
\includegraphics[width=9cm,angle=0]{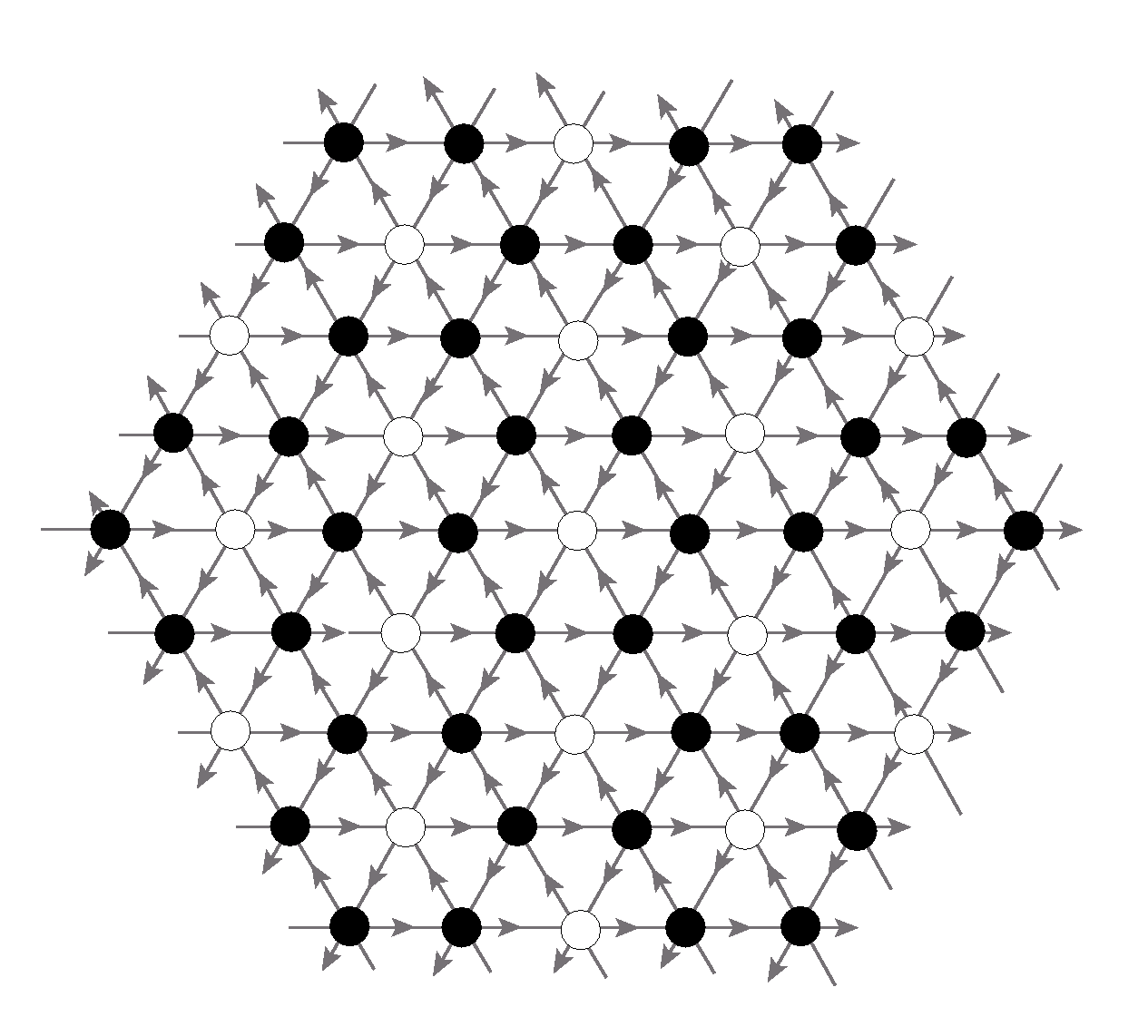}
\caption{The background of class I corresponding to $l=1$. The filled
circles represent height $z=1$ and unfilled ones $z=2$.}
\label{fig:bkg0}
\end{SCfigure}

Another consequence of the  exact linearity of the potential function in each
patch is that all patch boundaries in the asymptotic pattern  are straight lines.

The argument finally relies on the two observed (not rigorously
established) features of the
patterns, \textit{i.e.}, there is proportionate growth, and that the
patterns can be decomposed in terms of periodic patches which are
themselves of size $\mathcal{O}\left( N^{\alpha} \right)$.
\begin{figure}
\begin{center}
\includegraphics[width=12cm]{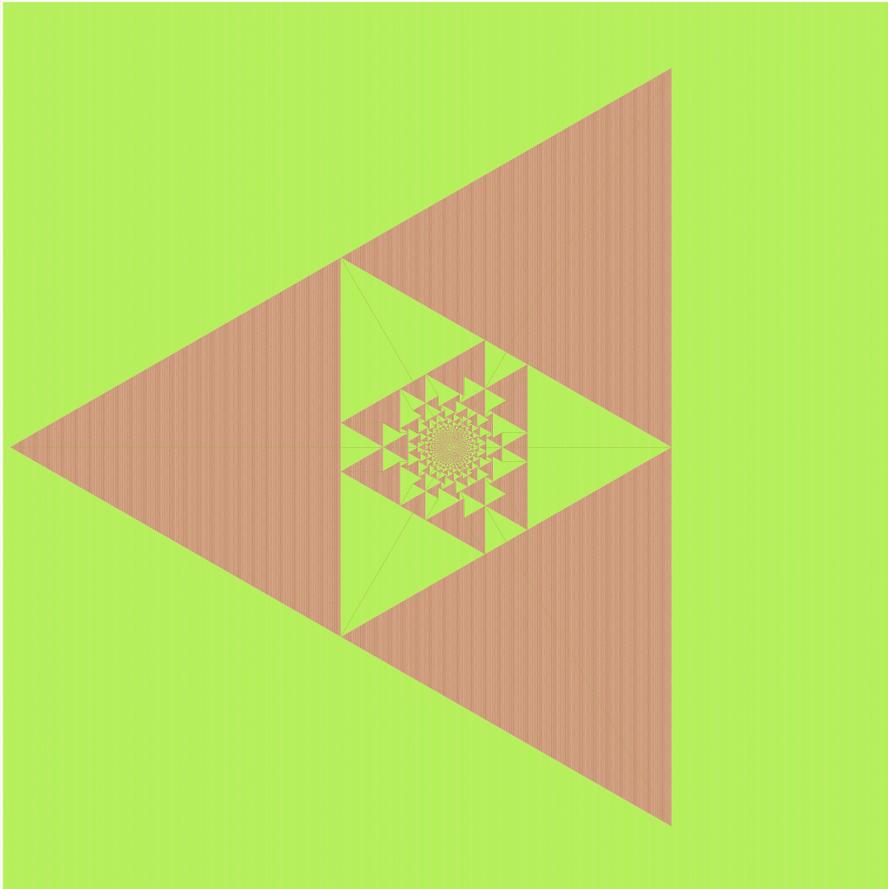}
\caption{(color in the electronic copy) The pattern produced by adding $N=3760$ grains at a single site
on the background in Fig. \ref{fig:bkg0}, and relaxing. Details can be
seen in the electronic copy using zoom-in}
\label{fig:hexpicl1}
\end{center}
\end{figure}

Let us write the toppling function $T_{N}(\vec R)$ within a single
patch $P$ as
\begin{equation}
T_{N}( \vec R )=A_P+\vec K_{P}\cdot
\vec R+T_{periodic}( \vec R ),
\label{eq:intT}
\end{equation}
where $T_{periodic}( \vec R )$ is a periodic function of
its argument with zero mean value. If $\hat{e}_{1}$ and
$\hat{e}_{2}$ are the basis vectors at the unit cell of the periodic
pattern then we have
\begin{eqnarray}
T_{N}( \vec R+\hat{e}_{1} )-T_{N}( \vec R)&=&\vec K_{P}\cdot
\hat{e}_{1},\nonumber \\
T_{N}( \vec R+\hat{e}_{2} )-T_{N}( \vec R )&=&\vec K_{P}\cdot
\hat{e}_{2}.
\end{eqnarray}

As $T_{N}( \vec R )$ are integer valued functions,
$\vec K_{P}\cdot \hat{e}_{1}$ and
$\vec K_{P}\cdot\hat{e}_{2}$ can only take integer values. If
$\hat{g}_{1}$ and $\hat{g}_{2}$ are the unit vectors in the reciprocal
space of the super lattice of the periodic pattern,
\begin{equation}
\hat{g}_{i}\cdot\hat{e}_{j}=\delta_{ij}\textrm{~ ~ ~ ~ }i,j=1,2,
\end{equation}
then $\vec K_{P}$ must be an integer linear combination of
$\hat{g}_{1}$ and $\hat{g}_{2}$, and can be written as
\begin{equation}
\vec K_{P}=n_1 \hat{g}_{1}+ n_2 \hat{g}_{2},
\label{eq:recp1}
\end{equation}
where $n_1$ and $n_2$ are some integers. For example, in the background
pattern in Fig. \ref{fig:bkg0}, a choice
of the basis vectors and its reciprocal vectors is
\begin{eqnarray}
\hat{e}_{1}\equiv\left( \frac{3}{2}, \frac{\sqrt{3}}{2}
\right);&&\mathrm{ ~ ~ ~ ~ ~   }
\hat{e}_{2}\equiv\left( \frac{3}{2}, -\frac{\sqrt{3}}{2}  \right)
\nonumber\\
\hat{g}_{1}\equiv\frac{2}{3}\left( -\frac{1}{2},
-\frac{\sqrt{3}}{2} \right);&&\mathrm{ ~ ~ ~ ~ ~   }
\hat{g}_{2}\equiv\frac{2}{3}\left( -\frac{1}{2},\frac{\sqrt{3}}{2} \right).
\label{eq:recp}
\end{eqnarray}
The fact that $K_{_P}$ is constant
inside a patch, implies that the patches can be labeled by the pair
of integers $\left( n_1, n_2 \right)$.

An interesting consequence of this
linear dependence of $T_{N}( \vec R )$ is that there are no transient structures within
the patches. On increasing $N$, if the $A_{_P}$ function
increases, all sites in the patch $P$, except possibly those at the patch
boundaries, undergo same number of additional
topplings.

\section{Characterizing the class $I$ asymptotic patterns\label{sec:anc}}
We now discuss characterization of the asymptotic pattern of class I,
showing $\alpha =1$. In this section we quantitatively characterize
the asymptotic pattern for the case $l=1$. The background
configuration is shown in Fig. \ref{fig:bkg0}.  A site on the
triangular lattice can be labeled uniquely by a pair of integers
$\left(p, q\right)$, such that its position on a complex plane can be
written as $\mathbf{R}=p+q \omega$, where $\omega=\exp\left( i2\pi/3 \right)$ is
a complex cube root of unity. Then, the height variables in the
background pattern in Fig. \ref{fig:bkg0}, can be written as
\begin{eqnarray}
h(p+q\omega) &=&2 \textrm{ if } p + q =0 \textrm{ (mod 3),}\nonumber\\
  &=&1 \rm{~ ~ ~otherwise.}
\end{eqnarray}
The average height in the background, $\langle z \rangle=4/3$. The
configuration of the pile  produced on this background, by adding
$N=3760$ grains at the origin is shown in Fig. \ref{fig:hexpicl1}.
\begin{SCfigure}
\includegraphics[width=8cm,angle=0]{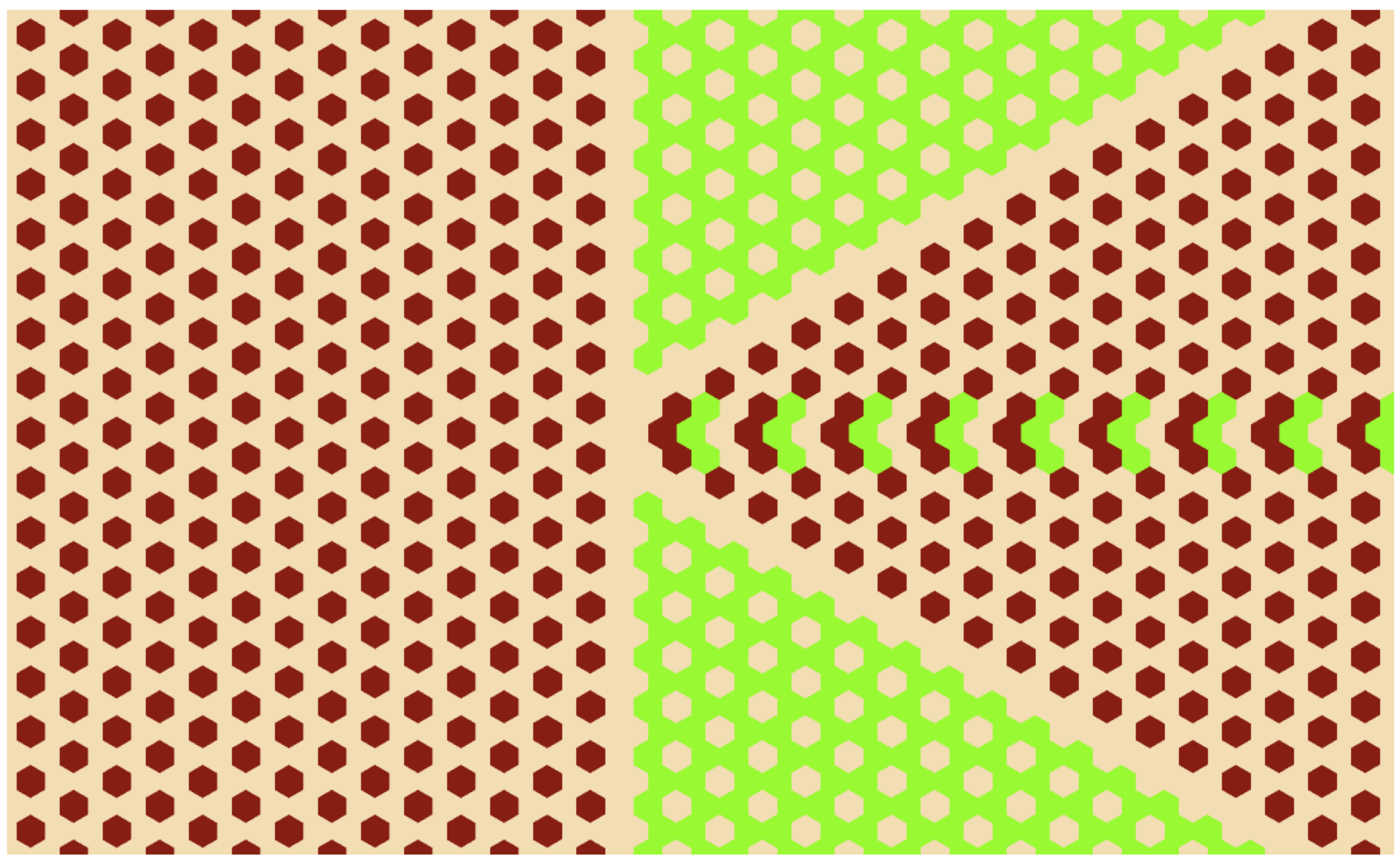}
\caption{(color in the electronic copy) An example of patch boundaries in Fig. \ref{fig:hexpicl1}
meeting each other. Each filled hexagon represents Wigner cell around
a site, and the color in them denotes height of that site. The color code is same as in
Fig. \ref{fig:hexpicl1}.}
\label{boundary}
\end{SCfigure}

We see that the sites toppled due to addition of the grains are
confined within an equilateral triangle. The pattern can be thought of
as a union of patches, inside which the heights are periodic. A
zoom-in showing the height configuration with five patches meeting at
a point is shown in Fig. \ref{boundary}. There are only two types of
periodic patches seen: one is like the background, where the sites of
height $2$ are surrounded by sites of height $1$, and the other with
heights $0$ surrounded by heights $2$. Then, the average height
$\langle z \rangle$ inside both types of patches are same. In fact,
it is equal to that of the background, $\langle z \rangle=4/3$.

The patches in the outer region of the pattern are big, and they
become smaller, and more numerous as we go inwards. Along the common
boundary of adjacent patches, we see line-like defect structures, and
only along these lines the density is different from the background. In Fig. \ref{boundary}, one
can also see the periodicity of the structures along the patch
boundaries.  Some patch boundaries, like the horizontal boundary in
Fig. \ref{boundary}, have a deficit of particles compared to the background.

The boundaries of the patches are seen more clearly in terms of
$Q(\mathbf{R})$ variables, as shown in figure
\ref{fig:line}(a), where we have labelled different patches as
$\mathbf{A, A', B, B'}...$ \textit{etc}..
\begin{SCfigure}
\includegraphics[width=8cm,angle=0]{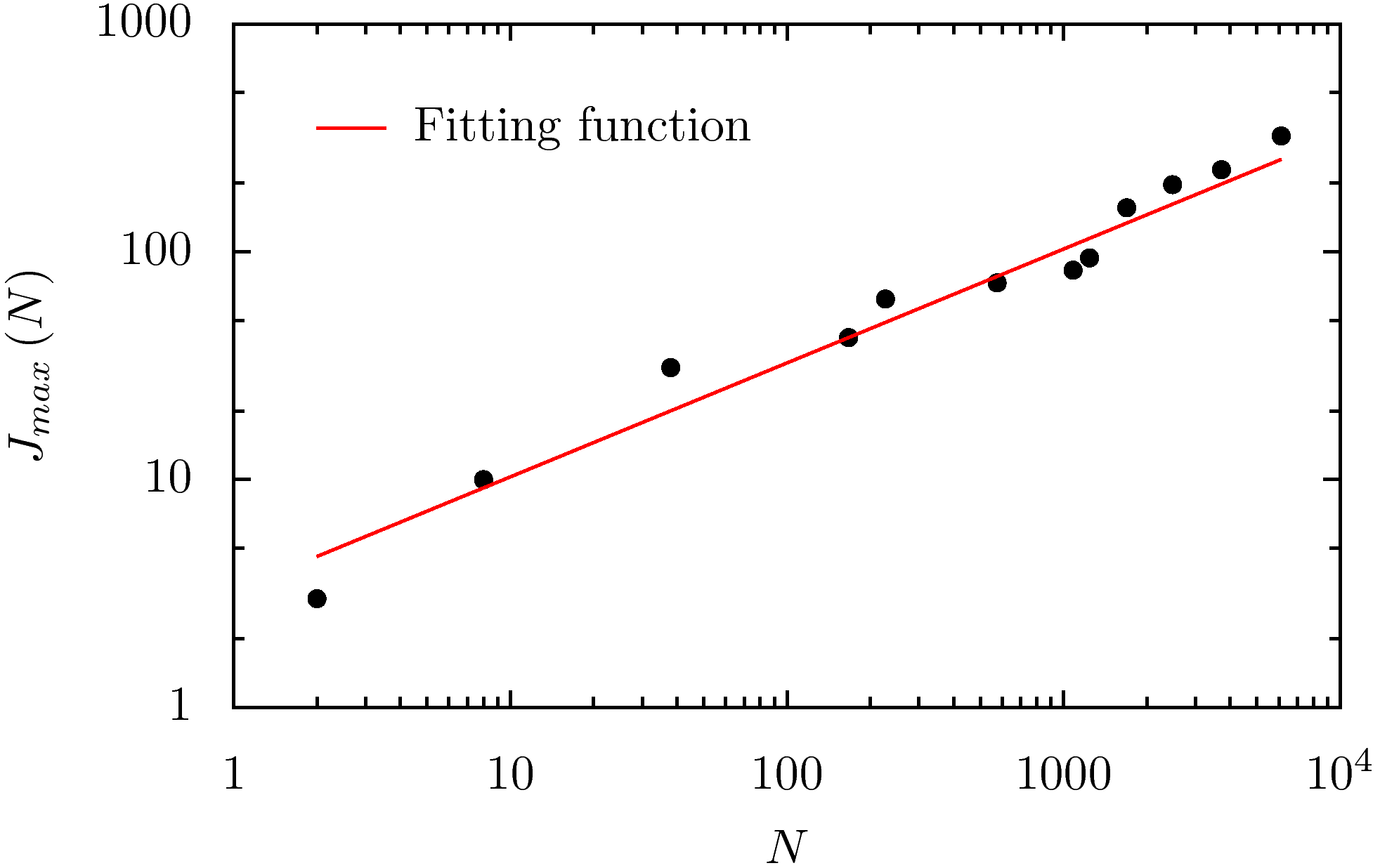}
\caption{$J_{max}(N)$ for the background in Fig.
\ref{fig:bkg0} has a square root dependence on $N$ with a fitting function $3.25\sqrt{x}$.}.
\label{jump}
\end{SCfigure}

The dependence of $2\Lambda$ on $N$ for this background is shown in
Fig. \ref{fig:triln}. We see that the diameter for the pattern  grows
asymptotically linearly with $N$, but it grows in bursts: it remains constant for a
long interval as more and more grains are added, and suddenly
increases by a large amount at certain values of $N$. For example, at
$N=3721$,
the $2\Lambda$ is $2276 \sqrt{3}$, and it jumps to a value
$2408 \sqrt{3}$ when one more
grain is added. Let $J_{max}(N_m)$ denote the size of the
maximum jump in $2\Lambda$ encountered, as $N$ is varied from $1$ to
$N_m$. In Fig. \ref{jump}, we have plotted the variation of $J_{max}(N_m)$ with $N_m$. The graph is consistent with a
power-law  growth, with a power around $1/2$. Thus the fractional size
of the bursts decreases for large $N$.
\begin{figure}
\begin{center}
\includegraphics[width=12cm,angle=0]{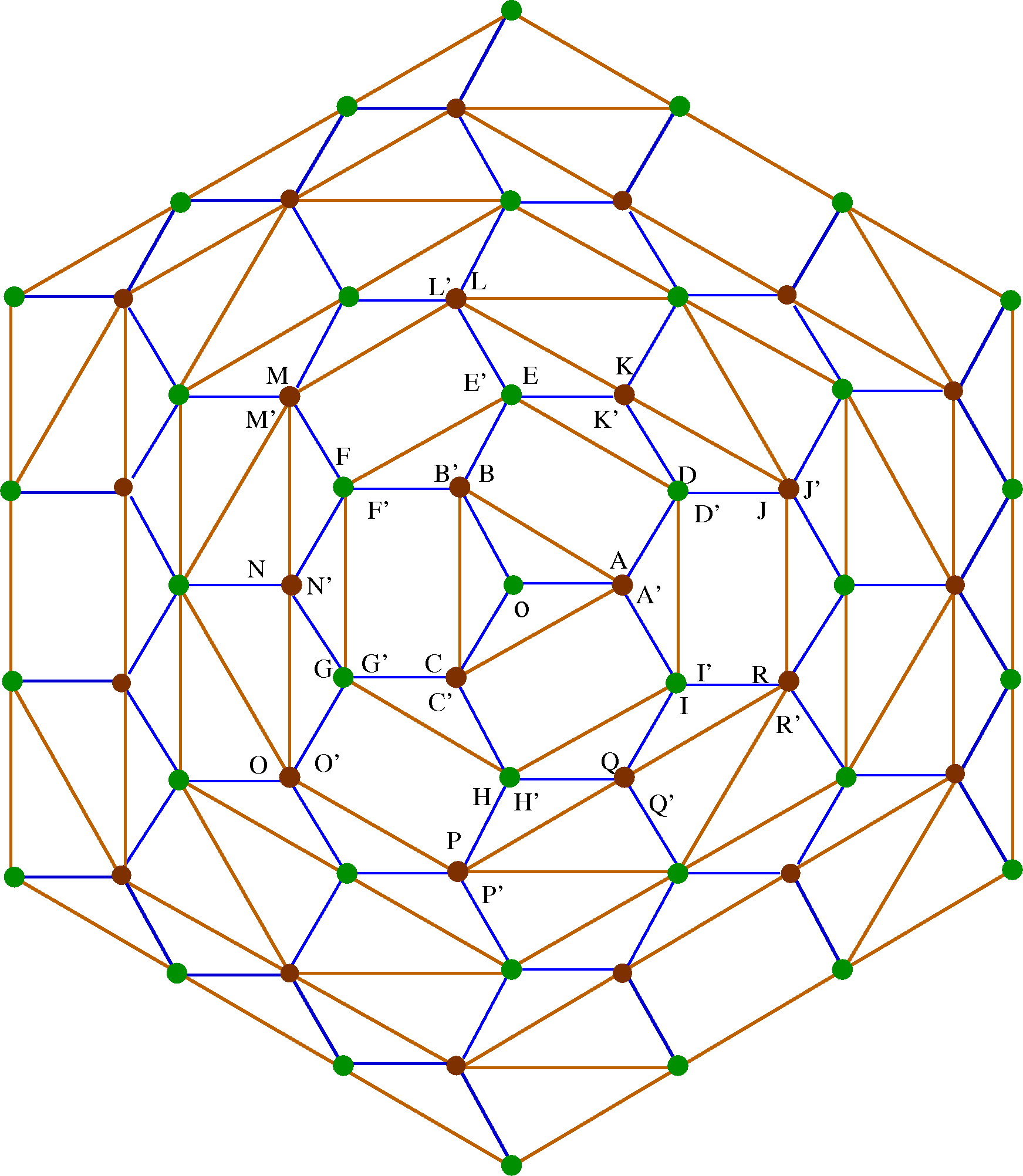}
\caption{The adjacency graph of patches in the pattern in
Fig. \ref{fig:hexpicl1}. The vertices corresponding
to the brownish and greenish patches in the pattern are denoted by
different colors. The pair of patches labeled
by the alphabets and its corresponding primed alphabets in Fig.
\ref{fig:line}(a) are
represented by same vertex on the graph.}
\label{adjhex}
\end{center}
\end{figure}

We define scaled complex coordinates $\mathbf{r}=\mathbf{R}/N$, where
$\mathbf{R}=p+q\omega$ is the complex coordinate of the site
$\left( p,q \right)$. We define the rescaled toppling function for
this pattern as 
\begin{equation}
\phi\left( \mathbf{r} \right)=\lim_{N\rightarrow
\infty}\frac{\sqrt{3}T_{N}\left( \mathbf{r}N \right)}{2 N}.
\label{eq:rscT}
\end{equation}
Then it is easy to see that $\nabla \phi=\left( \partial_{\xi}\phi,
\partial_{\eta}\phi \right)$ is equal to the mean flux of particles at
$\mathbf{r}$. If we consider a small line element
$d\mathbf{l}\equiv\left(d\xi, d\eta\right)$, then the net flux of particles across the line
$d\mathbf{l}$ equals $N \nabla\phi\cdot d\mathbf{l}$. Then, the
conservation of sand grains implies that the toppling function
$T_{N}\left( \mathbf{R} \right)$ satisfies the equation
\begin{equation}
\nabla_{\circ}^{2}T_{N}\left( \mathbf{R} \right)=\delta z\left(
\mathbf{R}\right)-N\delta\left( \mathbf{R} \right),
\end{equation}
where $\nabla_{\circ}^{2}$ is the finite-difference operator on the
lattice, corresponding to the Laplacian $\nabla^{2}$. It is easy to
see that this implies that the scaled potential function $\phi$
satisfies the Poisson equation
\begin{equation}
\nabla^{2}\phi \left( \mathbf{r} \right)=\Delta \rho\left(
\mathbf{r}\right)-\delta\left( \mathbf{r} \right),
\end{equation}
where $\Delta\rho\left( \mathbf{r} \right)$ is the areal density of excess grains at
$\mathbf{r}$. It is related to $\langle\Delta z\left( \mathbf{r}
\right)\rangle$, the mean excess grain density {\it per site} by
\begin{equation}
\Delta \rho\left( \mathbf{r} \right)= \frac{2}{\sqrt{3}} \langle \Delta
z\left( \mathbf{r} \right)\rangle.
\end{equation}

The piece-wise linearity of $\phi$ simplifies the analysis of the pattern,
significantly. The potential function can be characterized by only
three parameters.
Using Eq. (\ref{eq:recp1}), (\ref{eq:recp}) and (\ref{eq:rscT}), for each patch
$P$, we can find a pair of integers $\left( m,n \right)$ such that the
potential in patch $P$ is characterized by
\begin{equation}
\phi\left( \mathbf{r} \right)=- \frac{1}{2\sqrt{3}}\left(
\mathbf{D}_{m,n}\overline{\mathbf{r}}+ \overline{\mathbf{D}}_{m,n}r
\right)+f_{m,n},
\end{equation}
where
\begin{equation}
\mathbf{D}_{m,n}=m+n\omega,
\label{eq:Dunn}
\end{equation}
and $f_{m,n}$ is a real number, constant everywhere inside the patch.
Here $\overline{z}$ denotes the complex conjugate of $z$.

Each patch is characterized by a complex number $\mathbf{D}_{m,n}$ which is the
coefficient  in the potential function $\phi\left( \mathbf{r} \right)$ of the patch. 
In the complex $\mathbf{D}$-plane, each patch with labels as in Fig. \ref{fig:line}(a) can then  be represented by a point. We
connect two patches by a line if they share a common boundary.
Then the resulting figure, shown in Fig. \ref{adjhex}, is the
adjacency graph of the patches. 
\begin{figure}[t]
\begin{center}
\includegraphics[width=10cm,angle=0]{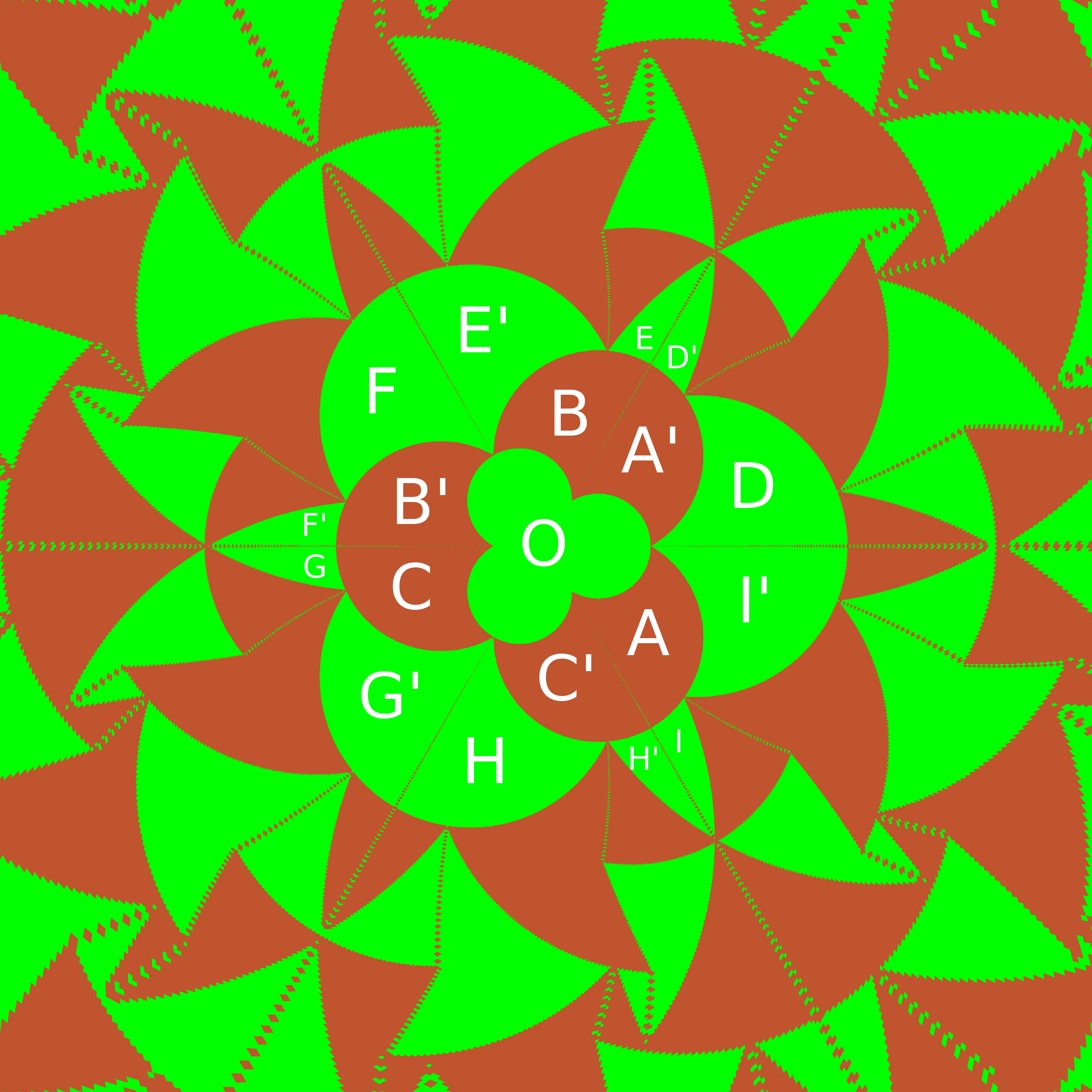}
\caption{The $1/\mathbf{\bar{R}}$ transformation of the pattern in
Fig. \ref{fig:hexpicl1}, where $\mathbf{\overline{R}}$ is the complex
conjugate of $\mathbf{R}$. Labels are the same as used in
Fig. \ref{fig:line}(a).}
\label{fig:1byz}
\end{center}
\end{figure}

We can determine the connectivity structure of this graph, without
knowing the full potential function in each patch. We first take
$1/\mathbf{\bar{R}}$
transformation of the pattern. This is shown in Fig. \ref{fig:1byz}. 
Some of the bigger patches are denoted by capital alphabets in
Fig. \ref{fig:line}(a) and their corresponding patches on the
transformed pattern in Fig. \ref{fig:1byz}.
The patches $\textbf{A}$ and $\textbf{A}'$ in Fig. \ref{fig:line}(a) are adjacent
to the outer region $\textbf{O}$ through the same vertical boundary.
Matching the values of the function $\phi( \mathbf{r})$ and fixing the discontinuity in its normal derivatives at the boundary,
it is easy to see that $\phi(\mathbf{r})$ has the same functional form in the
patches $\textbf{A}$ and $\textbf{A}'$. In fact, it is convenient to
imagine that the boundary between $\textbf{O}$ and $\textbf{A}$ moved
to the right by an infinitesimal amount, so that it
does not touch the patches $\textbf{D}$ and $\textbf{I}'$, and then
$\textbf{A}$ and $\textbf{A}'$ would actually
join to form a single connected patch $\textbf{A}$. We thus consider
$\textbf{A}$ and $\textbf{A}'$ as one patch, and both can be
represented as one point
on the $\mathbf{D}$-plane. Similarly, we identify $\textbf{B}$ and $\textbf{B}'$, $\textbf{C}$
and $\textbf{C}'$, \textit{etc}. 
Then the adjacency graph can be
constructed by joining the sites on the $\mathbf{D}$-plane, according
to the adjacency of patches in Fig. \ref{fig:1byz}.

It turns out that the patches corresponding to $m+n=2\left(
\right.$mod$\left. 3 \right)$ do not appear in the pattern, and
the adjacency graph, as shown in Fig. \ref{adjhex}, is a hexagonal lattice with some extra edges shown in
brown color. These extra edges connect all the vertices at same distance from the origin $\left( 0,0
\right)$ (in the $L^{1}$ metric), and also connect some of the
diagonally opposite sites on the rectangular faces of the graph as shown in figure.

The charge density $\Delta\rho\left( \mathbf{r} \right)$ is zero
inside the patches, and the excess grains due to addition are
distributed along the patch
boundaries, leading to nonzero line charge densities separating
neighboring patches. Then the density function $\Delta\rho\left(
\mathbf{r}
\right)$ is
a superposition of the line charge densities along the patch
boundaries. There are three kinds of line charges of charge density
$\lambda=-1/\sqrt{3}$, $1$, and $2/\sqrt{3}$.

From the electrostatic analogy, it is seen that $\phi\left(
\mathbf{r} \right)$ is
continuous across the common boundary between neighboring patches, and
its normal derivative is discontinuous by an amount equal to the line
charge density $\lambda$ along the boundary. Let $P$ and $P'$ be the two
neighboring patches with the equation of the boundary between them
\begin{equation}
\mathbf{r}=|\mathbf{r}| \exp\left( i \theta \right)+\mathbf{A},
\end{equation}
such that the patch $P'$ is on the left of the boundary.
Then using the continuity condition, it is easy to show that
\begin{eqnarray}
\mathbf{D}_{p'}-\mathbf{D}_{p}&=& i\lambda\sqrt{3} \exp\left( i\theta \right) \rm{~and~} \nonumber \\
f_{p'}-f_{p}&=&Re[\overline{\mathbf{A}}\left( \mathbf{D}_{p'}-\mathbf{D}_{p} \right)]/\sqrt{3},
\label{eq:bc}
\end{eqnarray}
where $\overline{\mathbf{A}}$ is the complex conjugate of $\mathbf{A}$.
We note that, there are only six different types of patch boundaries in the pattern, with angle $\theta$
an integer multiple of $\pi/6$.

It is easy to check that the matching conditions along the edges of hexagonal lattice
(denoted by blue solid line in Fig. \ref{adjhex}) are sufficient to
determine $D_{m,n}$ for
all the vertices. The line charge density $\lambda=-1/\sqrt{3}$ for
the patch boundaries corresponding
to these edges. 
Also, the potential function $\phi=0$, for the vertex at the origin, and
hence, $D$ and $f$ both vanishes.
Then using the matching condition, it is easy to check that, the values
of $D_{m,n}$ are consistent with the form in Eq. (\ref{eq:Dunn}).

The function $f_{m,n}$ satisfies the discrete Laplace's equation on the underlying hexagonal lattice
of the adjacency graph \textit{i.e.}
\begin{equation}
\sum_{m',n'}f_{m',n'}-3f_{m,n}=0 \textrm{ ~ ~ ~  for }\left( m,n \right)\ne 0,
\label{laplace}
\end{equation}
where $\left( m', n' \right)$ denotes the three neighbors of the vertex $\left(m,n\right)$ on the hexagonal lattice.
This can be checked from the concurrency condition of patch boundaries. For
example consider the edges $\mathbf{OA}$, $\mathbf{DA'}$ and
$\mathbf{I'A}$ on the adjacency graph. The corresponding patch boundaries in the
pattern intersect at the same point (Fig. \ref{fig:line}(a)).
Then it is easy to check using the matching condition in
Eq.(\ref{eq:bc}) that,
\begin{equation}
f_{O}+f_{D}+f_{I}=3f_{A}.
\end{equation}
Similar equations hold for the other vertices.

In the region outside the pattern, where none of the sites toppled, the potential function $\phi\left( z \right)=0$. This corresponds to $m =n =0$, and $f_{0,0}=0$. The solution of the Laplace's equation with the above
boundary condition can be written in the following integral form \cite{atkinson}
\begin{equation}
f_{m,n}=\frac{I}{4\pi^{2}}\int_{-\pi}^{\pi}\int_{-\pi}^{\pi}\frac{1-\cos\left(
k_{1}(2m-n)/3+k_{2}n \right)}{1-\left(
\cos{2k_{2}}+2\cos{k_{1}}\cos{k_{2}} \right)/3}dk_{1}dk_{2},
\label{solution}
\end{equation}
for $m+n=0$ (mod $3$),
where $I$ is a normalizing constant, which determines the pattern up to
a scale factor. For the
sites with $m+n=1$ (mod $3$), $f_{m,n}$
are the average of those corresponding to the neighboring sites. As an
example the potential function in region $\textbf{A}$, and $\textbf{C}'$ is
\begin{eqnarray}
\phi_{_{\textbf{A}}}(\mathbf{r})&=&\frac{I}{3}-\frac{\xi}{\sqrt{3}}, \\
\phi_{_{\textbf{C}'}}(\mathbf{r})&=&\frac{I}{3}+\frac{1}{\sqrt{3}}\left( \frac{1}{2}\xi +
\frac{\sqrt{3}}{2}\eta \right),
\end{eqnarray}
where $\mathbf{r}=\xi+i \eta$, and $i=\sqrt{-1}$.
Then the equation of the patch boundary between patches $\textbf{A}$ and $\textbf{O}$ is
\begin{equation}
\xi=I/\sqrt{3},
\label{eq:aob}
\end{equation}
and that of the boundary between patches $\textbf{C}'$ and $\textbf{O}$ is
\begin{equation}
\sqrt{3}\xi+3\eta+2I=0.
\end{equation}
Equivalently, the length of an edge of the bounding
equilateral triangle of the pattern is equal to $2I N$, for large
$N$.
\begin{SCfigure}
\includegraphics[width=8.0cm]{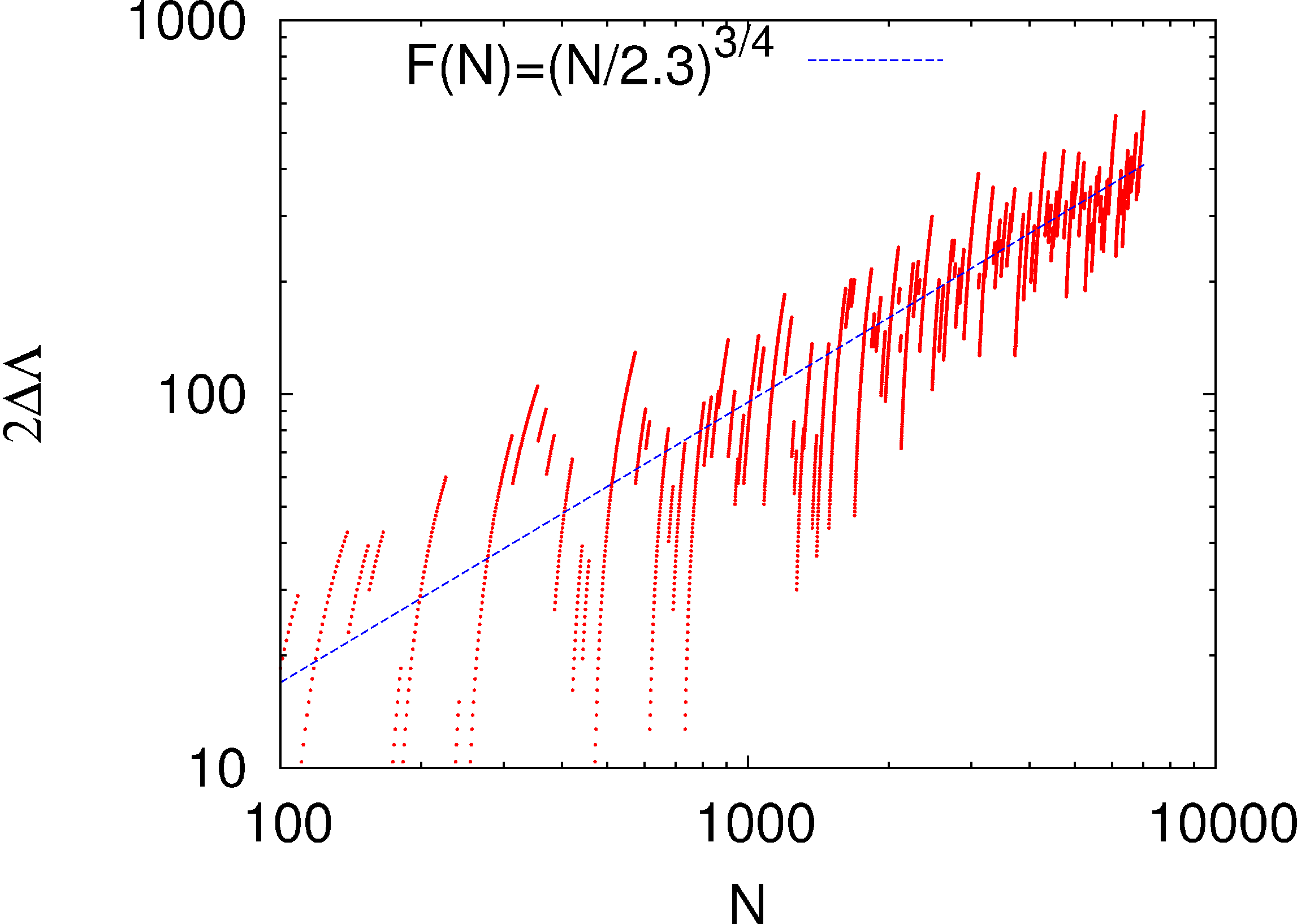}
\caption{The discrepancy $2 \Delta \Lambda$ between the actual height
of bounding triangle, and the asymptotic value $2 N/\sqrt{3}$ plotted
as a function of $N$. The straight line shows a simple power-law fit
with power $3/4$.}
\label{fig:diffdia}
\end{SCfigure}
\begin{SCfigure}
\includegraphics[width=8.0cm]{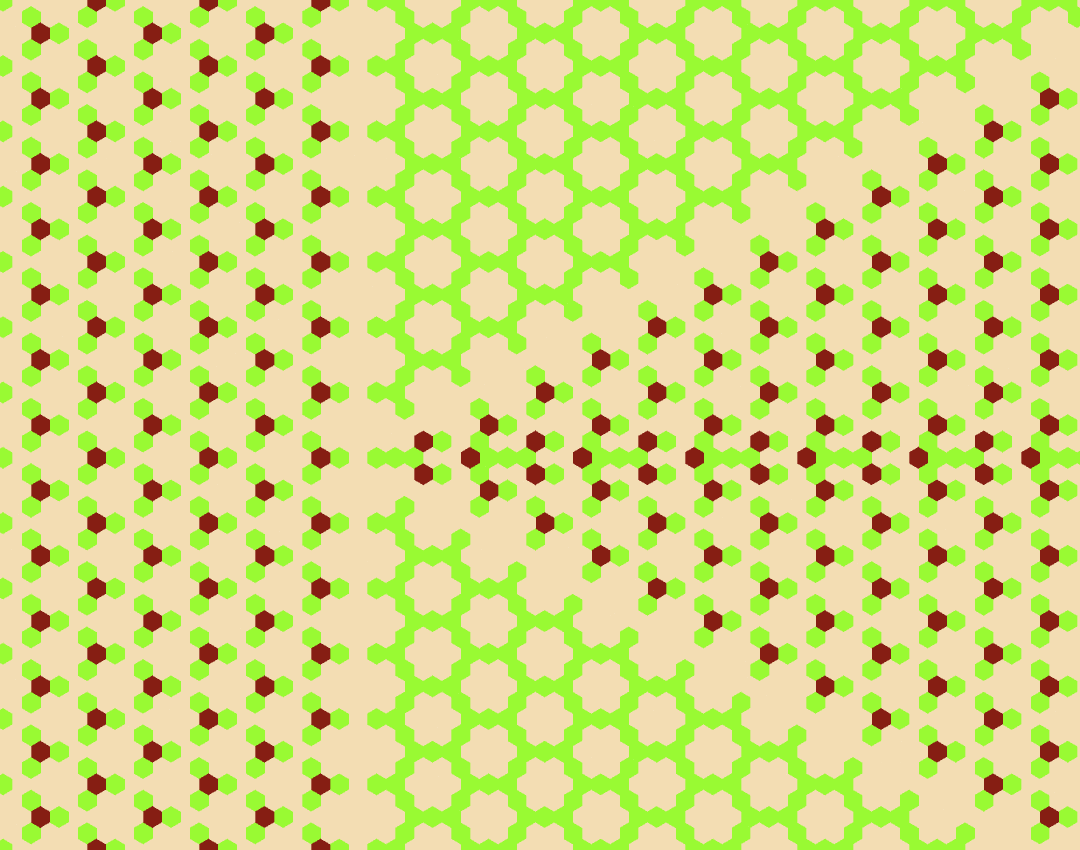}
\caption{An example of five patches meeting at a point, for a
pattern on the background of Class $I$, $\ell=2$. It is easy to check
that the line charge density for the vertical boundary
$\lambda=-1/\sqrt{3}$, same as in Fig. \ref{boundary}. The color code
is same as in Fig. \ref{fig:hexpicl1}.}
\label{fig:pb2}
\end{SCfigure}

The constant $I$ in Eq. (\ref{solution}) can be calculated using the form of the potential
function near the site of addition.
As noted, the function $\phi$ can be considered as the potential due to
line charges along the patch boundaries and a point
charge of unit amount at the origin. Then, close to the origin the
solution diverges logarithmically as $\widetilde{\phi}\left(
\mathbf{r}
\right)=-\left(2\pi\right)^{-1}\log\left( |\mathbf{r}| \right)$,
and the potential function is an approximation to this solution by a piece-wise linear function.
Then, there are coordinates $\mathbf{r_{o}}$ inside each patch $\left( m,n
\right)$ with $|m|+|n|$ large, where the $\phi$ and its first
derivatives are equal to $\tilde\phi$ and its first derivatives,
respectively. Then,
\begin{eqnarray}
2\sqrt{3}\frac{\partial}{\partial
\overline{\mathbf{r}}}\widetilde{\phi}\left( \mathbf{r}
\right)\vert_{\mathbf{r_{o}}}&\simeq&-\mathbf{D}_{m,n} \rm{~ ~ and ~ ~} \nonumber \\
-\frac{1}{2\sqrt{3}}\left\{
\mathbf{D}_{m,n}\overline{\mathbf{r_{o}}}+\overline{\mathbf{D}}_{m,n}\mathbf{r_{o}}
\right\}+f_{m,n}&\simeq&-\frac{1}{2\pi}\log\left( |\mathbf{r_{o}}|
\right).\nonumber\\
\end{eqnarray}
The above two equations imply
\begin{equation}
f_{m,n}\simeq\frac{1}{2\pi}\log\left( |m+n\omega| \right),
\label{logasymp}
\end{equation}
for $|m|+|n|$ large.
Comparing it with the Eq. (\ref{solution}) for large $|m|+|n|$ we find
that the numerical constant $I=1/\sqrt{3}$. This determines the potential function completely, and thus
characterizes the pattern. For example, as in figure
\ref{fig:line}(a), the equation of the rightmost boundary of the
pattern, using Eq. (\ref{eq:aob}) is $x=N/3$. Equations of other boundaries
of patches can be calculated similarly. For example, the reduced
coordinates of the point where the patches $D$ and $D'$ meet in Fig.
\ref{fig:line}(a), is determined by the condition that it is a
common point of patches $D$, $J$ and $A'$, and that the function
$\phi$ is continuous.
\begin{eqnarray}
f_{1,0}-\frac{1}{\sqrt{3}}\xi &=& f_{2,1}-\frac{1}{\sqrt{3}}\left( \frac{3}{2}\xi +
\frac{\sqrt{3}}{2}\eta \right) \nonumber \\
&=& f_{3,1}-\frac{1}{\sqrt{3}}\left( \frac{5}{2}\xi +
\frac{\sqrt{3}}{2}\eta \right).
\end{eqnarray}
Then using the values $f_{1,0}=1/3\sqrt{3}$, $f_{2,1}=
1/2\sqrt{3}$, and
$f_{3,1}=7/6\sqrt{3}-1/\pi$ \cite{atkinson} we get the reduced coordinates of this point as 
\begin{equation}
\left(\xi,\eta\right) =
\left(\frac{2}{3}-\frac{\sqrt{3}}{\pi},-\frac{1}{3\sqrt{3}}+\frac{1}{\pi}\right).
\end{equation}

Equivalently, the height of the bounding equilateral triangle
increases as $ 2 N/\sqrt{3} \simeq 1.154 N$. The estimated slope of
the fitting line in Fig. \ref{fig:triln} is 1.1, in reasonable agreement with the
theory. However,  even though the exact function $\Lambda(N)$  has
large  fluctuations of number theoretic origin, the estimated slope is
noticeably lower than the calculated asymptotic value.  To examine
this discrepancy closer, we have plotted in Fig. \ref{fig:diffdia} the
discrepancy $2 \Delta \Lambda  = 2 N/\sqrt{3} - 2 \Lambda(N)$ as a
function of $N$. We find that this appears to increase with $N$ as
$N^{3/4}$, for large $N$. The reason for this behavior is not
understood yet.
\begin{SCfigure}
\fbox{\includegraphics[width=7cm,angle=0]{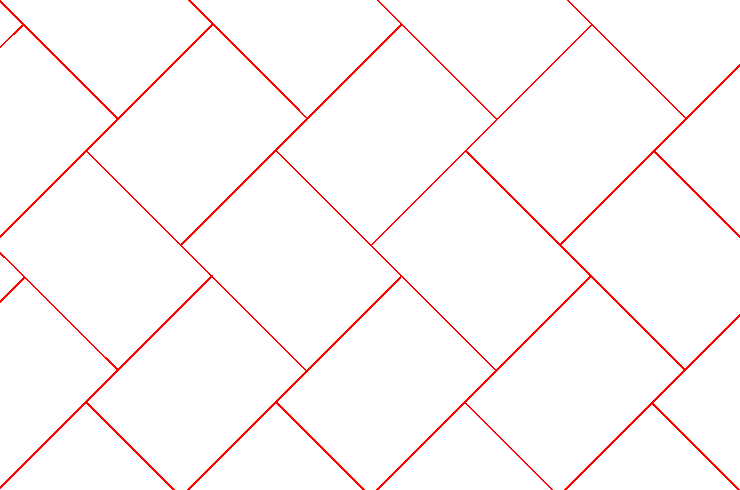}}
\caption{A schematic representation of the periodic tiling of the  plane using tilted rectangles. Background 
height patterns with such periodicities on the F-lattice give rise to non-compact growth with the growth-exponent between  $1/2$  and  $ 1$  }
\label{fig:tile_f}
\end{SCfigure}

For backgrounds, with $l > 1$, our numerical results suggest that there is a
crossover length $R^\star(l)$, and initially, for $R < R^\star(l)$, the
avalanches grow ``explosively" in size. As a result, the number of
particles  inside a disc of radius $R^\star$ in the final  pattern  is less
than that in the initial background. The net flux of particles going
out of the disc increases with $R$ until the radius becomes of order
$R^\star$. After this, the large-scale properties of the pattern
are the same as that of $l=1$ pattern, with the number of particles added
$A_{\ell} N$, where $A_{\ell}$ is an $\ell$-dependent constant. In
particular, the size of the pattern is $A_{\ell}$ -times the size
of the pattern for $l=1$ with same $N$.  The crossover length
$R^\star$ is expected to grows as $\sqrt{N}$.

For a background with   $\ell > 1$,  the
basis vectors at the unit cell are $\ell\hat{e}_1$ and
$\ell\hat{e}_2$, where $\hat{e}_1$, $\hat{e}_2$ are the basis vectors
for $\ell=1$ background (see Eq. (\ref{eq:recp})). Then the reciprocal basis vectors are
$\hat{g}_1/\ell$ and
$\hat{g}_2/\ell$.  From the observed patterns, we find  that the line charge
densities remain same for any $\ell$ (see Fig. \ref{fig:pb2} for an
example of the patch boundaries).  This implies that $n_1$ and $n_2$ in eq. (8) are constrained to be
multiples of $l$. Writing $n_1 = l m$, $n_2 = l n$, we see that 
the patches can be labeled by the same pair of integers $\left( m,n
\right)$ as in the $\ell=1$ case, and the potential function
$\phi_{(l)}\left( \mathbf{r} \right)$ for general $l$ is related to the $l=1$ case by simple scaling:
\begin{equation}
\phi_{(\ell)}\left( \mathbf{r} \right) =  A_{\ell} \phi_{(1)}\left(
\frac{\mathbf{r} }{A_{\ell} }\right),
\end{equation}
where $A_{\ell}$ is a scale factor. For $\ell=2$, $3$, $4$ and
$5$ the values of $A_\ell$ are approximately $2.34$, $3.38$, $4.41$ and $5.37$,
respectively. We note that $A_\ell$ increases approximately linearly
with $\ell$.

\section{Non-compact patterns with exponent $\alpha<1$.\label{sec:ncfl}}
On the F-lattice, after some experimentation, we found that  the background pattern 
having the periodicity of the tiling of plane with tilted rectangles,
shown in Fig. \ref{fig:tile_f}, produces patterns with interesting non-compact
growth. We studied rectangles
with aspect ratio $l:(l+1)$, and the rectangles are tilted by
$45^{\circ}$ to the x-axis. Two such periodic backgrounds are shown in
Fig. \ref{fig:fbg}. In
these background patterns, the sites with height zero, are arranged
along the boundaries of tilted rectangles with two
possible orientations, and rest of the sites have
height one. The stable height-patterns generated by adding $N$
particles and relaxing the configuration on these two backgrounds are shown in Fig.
\ref{fig:fpic1} and Fig. \ref{fig:fpic2}, respectively. The growing boundaries of the
patches in the patterns are shown, in
terms of the $Q$ variables, in Fig. \ref{fig:flinepic1} and
\ref{fig:flinepic2}, respectively. Again, we see that the patch boundaries
are straight lines, with rational slopes.
The plot of diameter $2\Lambda$ vs N, for these two patterns are shown in Fig.
\ref{fig:flinegrowth}. We see
that the growth exponent $\alpha $ is approximately $0.6$ for figure
\ref{fig:fpic1} and $0.725$ for figure
\ref{fig:fpic2}. In general, value of the exponent $\alpha$ is in range
$1/2<\alpha <1$, and approaches value $1$ as density
$\rho_{o}$ of the background becomes close to $1$.
\begin{figure}
\begin{center}
\includegraphics[width=6cm,angle=0]{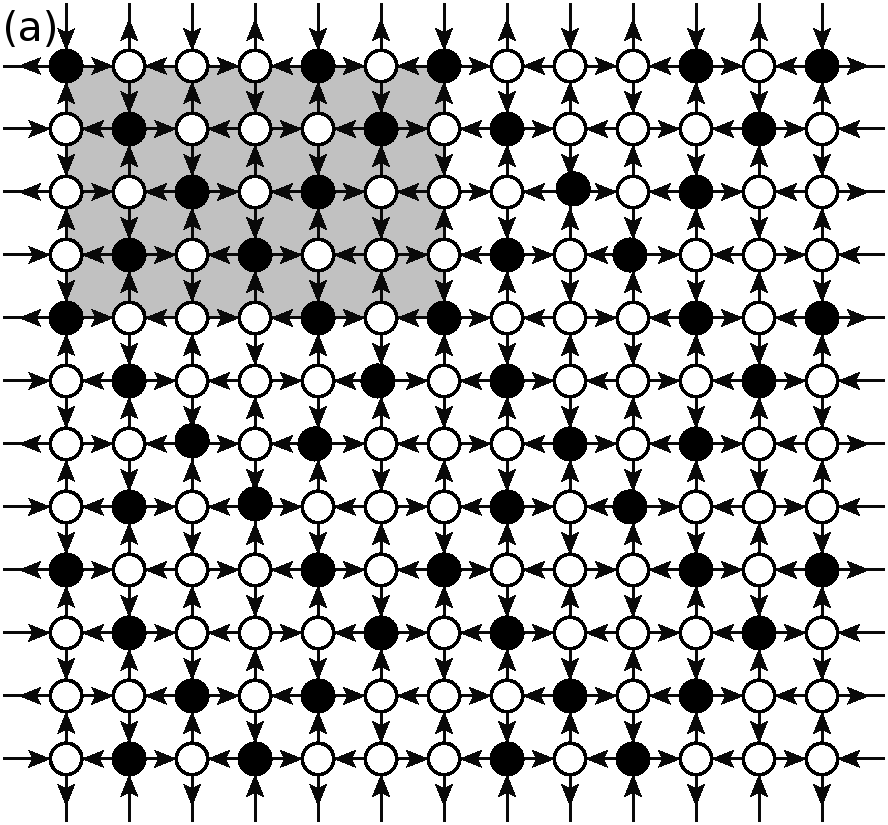}
\includegraphics[width=6cm,angle=0]{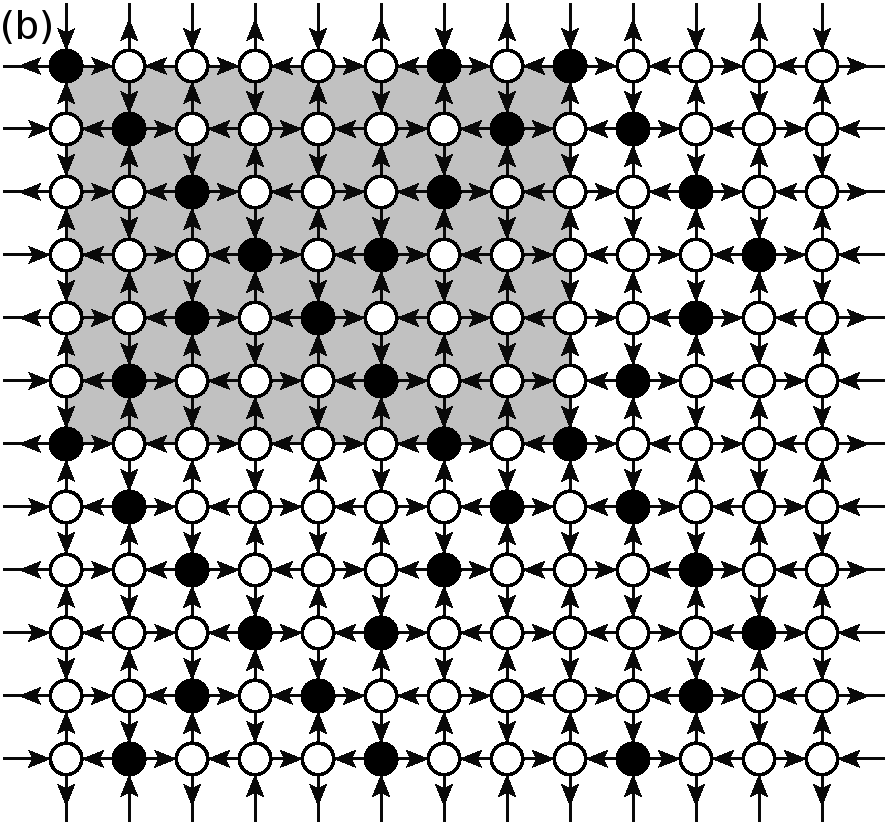}
\caption{The two backgrounds studied on the F-lattice. Unit cells of the periodic
distribution of particles are shown by gray rectangular shades. The filled
circles represent height $0$ and unfilled ones $1$.}
\label{fig:fbg}
\end{center}
\end{figure}
\begin{figure}
\begin{center}
\includegraphics[width=18cm,angle=90]{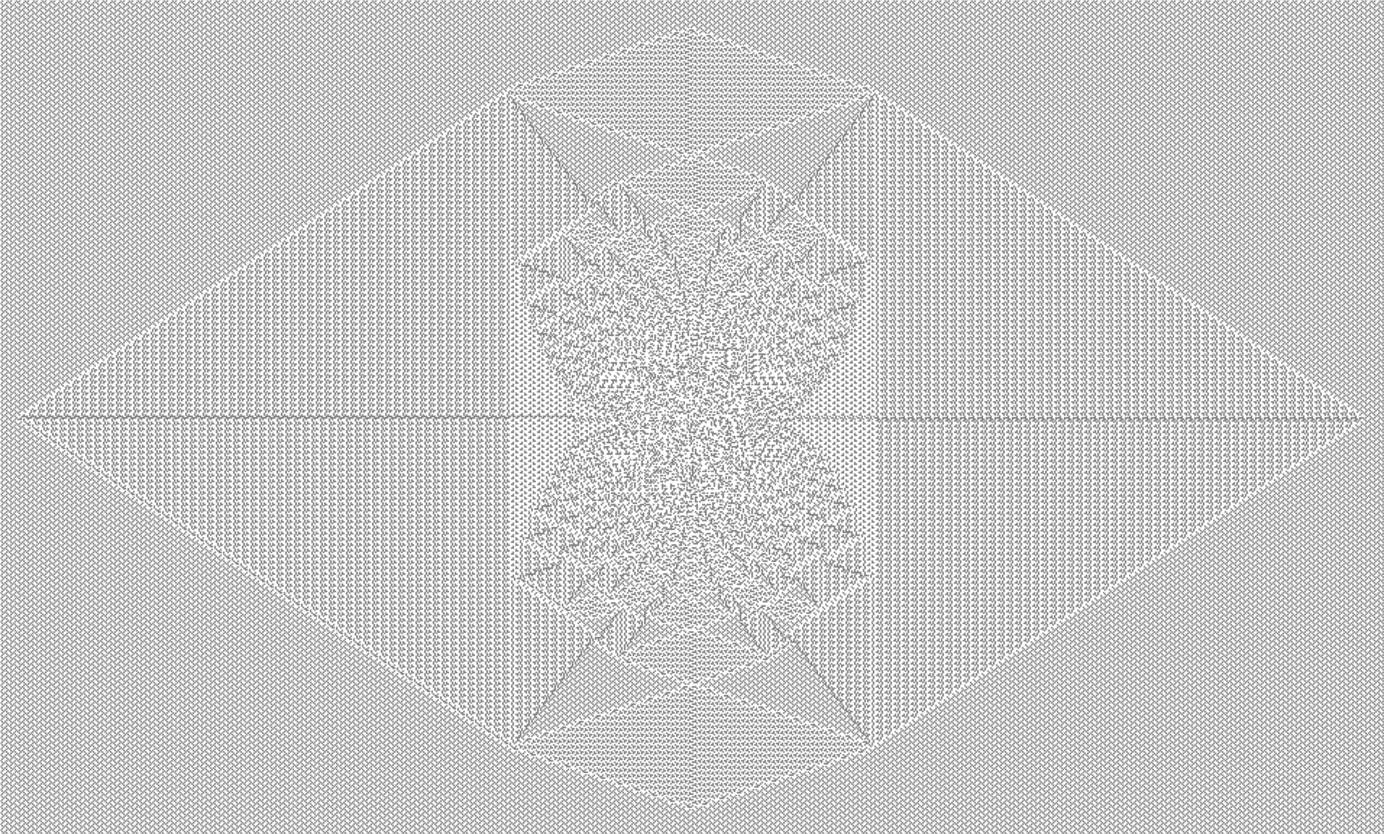}
\caption{The pattern produced on the first background in
Fig. \ref{fig:fbg}, by adding $N=2200$ grains at a single site, and
relaxing the configuration. Color code: White$=1$ and Black$=0$.
Details can be viewed in the electronic
version using zoom in.}
\label{fig:fpic1}
\end{center}
\end{figure}
\begin{figure}
\begin{center}
\includegraphics[width=18cm,angle=90]{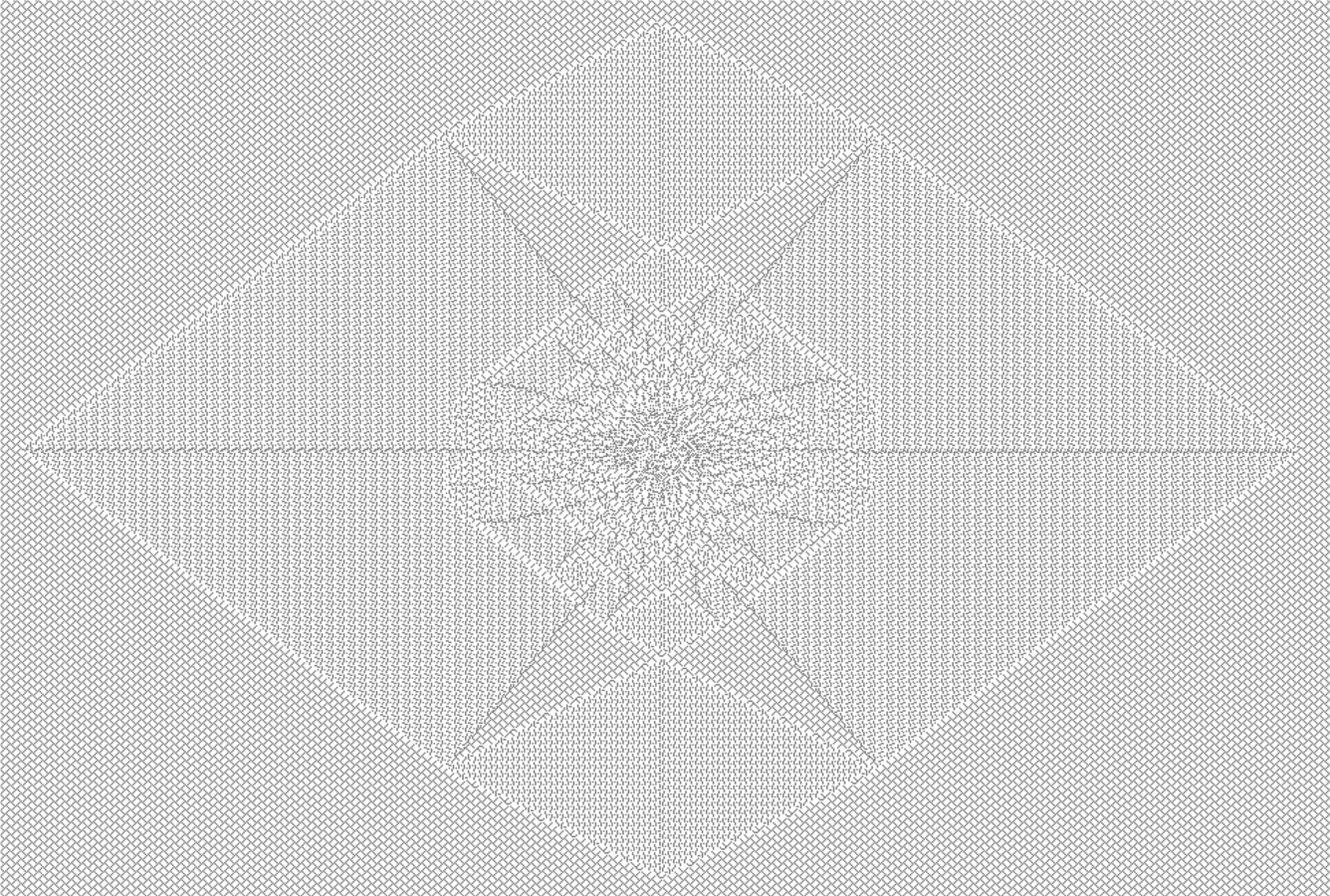}
\caption{The pattern produced on the second background in
Fig. \ref{fig:fbg} by adding $N=600$ grains at a single site, and
relaxing the configuration. Color code: White$=1$ and Black$=0$.
Details can be viewed in the electronic
version using zoom in.}
\label{fig:fpic2}
\end{center}
\end{figure}
\begin{figure}
\includegraphics[width=18cm,angle=90]{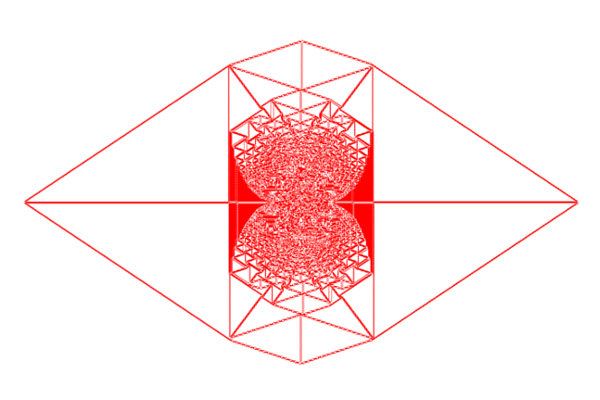}
\caption{The pattern in terms of $Q(r)$, showing the boundaries of 
patches
corresponding to Fig. \ref{fig:fpic1}. Color code: White$=0$ and Red$=$Non-zero.
Details can be viewed in the electronic
version using zoom in.}
\label{fig:flinepic1}
\end{figure}
\begin{figure}
\begin{center}
\includegraphics[width=16cm,angle=90]{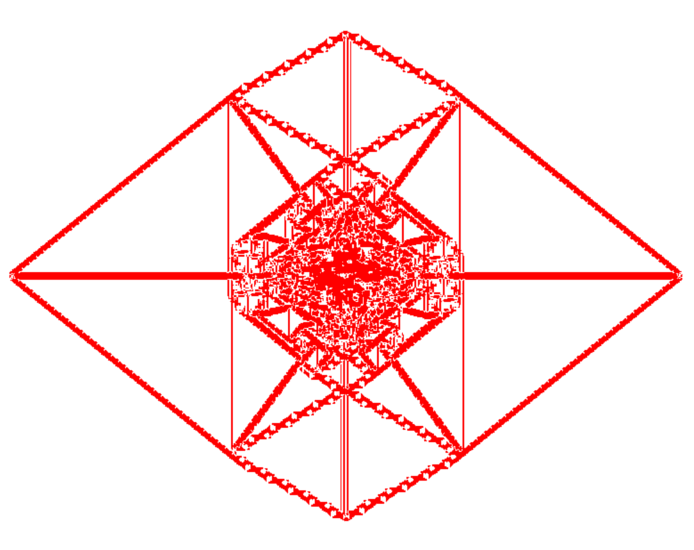}
\caption{The pattern in terms of $Q(r)$, showing the boundaries of patches
corresponding to Fig. \ref{fig:fpic2}. Color code: White$=0$ and Red$=$Non-zero.
Details can be viewed in the electronic
version using zoom in.}
\label{fig:flinepic2}
\end{center}
\end{figure}

There are unresolved areas of apparent solid color in the patterns, taking up a
sizable fraction of the total area, \textit{e.g.},
two large regions of red color on both sides of Fig.
\ref{fig:flinepic1}. In these regions, the pattern appears to be complex,
suggesting either a large number of patch boundaries, or patches of non-zero areal excess charge
density. However, the fractional area of these
regions decreases with larger $N$. Also on comparing patterns
with different $l$, we have seen that the fractional area of such
regions decreases as $l$ increases. A more detailed study of these
patterns seems like an interesting problem for future investigations.

\section{Summary and concluding remarks\label{sec:tropical}}
In this chapter, we have studied two dimensional patterns formed in
Abelian sandpile models by adding particles at one site on an initial
periodic background, where the diameter of the pattern grows as
$N^{\alpha}$, with $\alpha>1/2$. Using some features observed in the
pattern of adjacency of patches as an input, we are able to determine
the exact asymptotic pattern in the  specific case with $\alpha=1$, on
a class $I$ background.
\begin{figure}
\begin{center}
\includegraphics[width=12cm,angle=0]{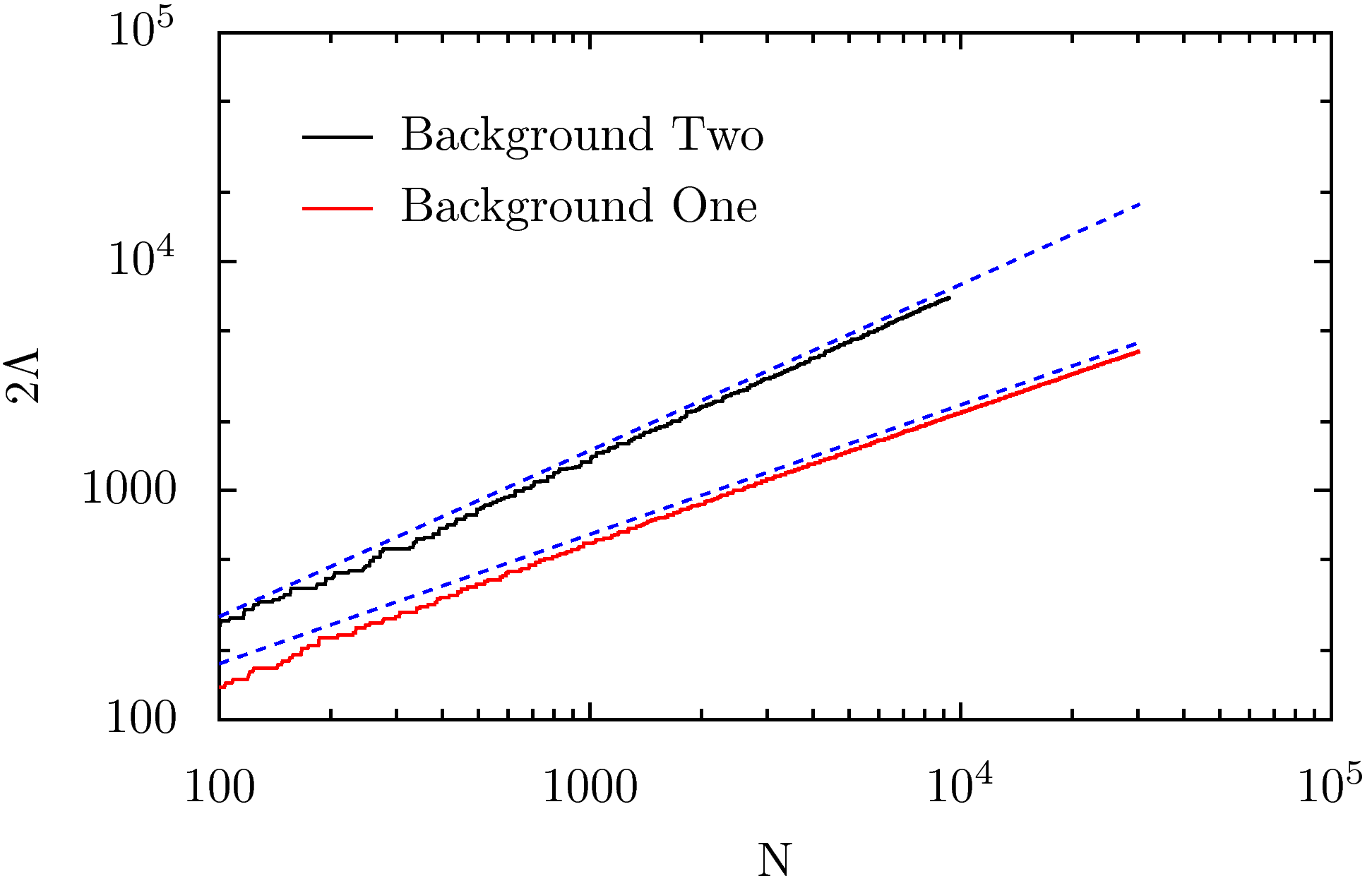}
\caption{The change in diameter as a function of $N$, for the patterns
on the two backgrounds in Fig. \ref{fig:fbg}. The numerical results fit well
with straight lines of slope $0.6$ and $0.725$, for the backgrounds one
and two, respectively.}
\label{fig:flinegrowth}
\end{center}
\end{figure}

The patterns on class $II$ backgrounds can also be characterized
similarly. As noted earlier, some of the patches split into smaller parts.
By using the $1/\overline{\mathbf{R}}$ transformation, we can again determine the structure of
the adjacency graph. The graph for the pattern in Fig.
\ref{fig:line}(b) is shown in Fig. \ref{fig:adjtwo}. It is a periodic
lattice where half of the vertices of the hexagonal lattice are
replaced by $3$
vertices (colored in brown). The exact D-values for different  patches can be easily
determined. The determination of $f_{m,n}$ for this pattern then
requires the solution of the Laplace's equation on this graph.
It can be shown that a slight alteration of the graph, by
drawing the missing edges in the small triangles shown in pink colors,
does not change the pattern. Then the solution of the Laplace's
equation can be reduced to the solution of a
resistor network on this modified graph. The later can be further
reduced to the resistor network on a
hexagonal lattice, discussed by Atkinson \textit{et.al.}
\cite{atkinson}, using the well-known $Y-\Delta$ transformation.
We present the analysis in the Appendix \ref{apndx:laplace}.
\begin{figure}[t]
\begin{center}
\includegraphics[width=10.0cm]{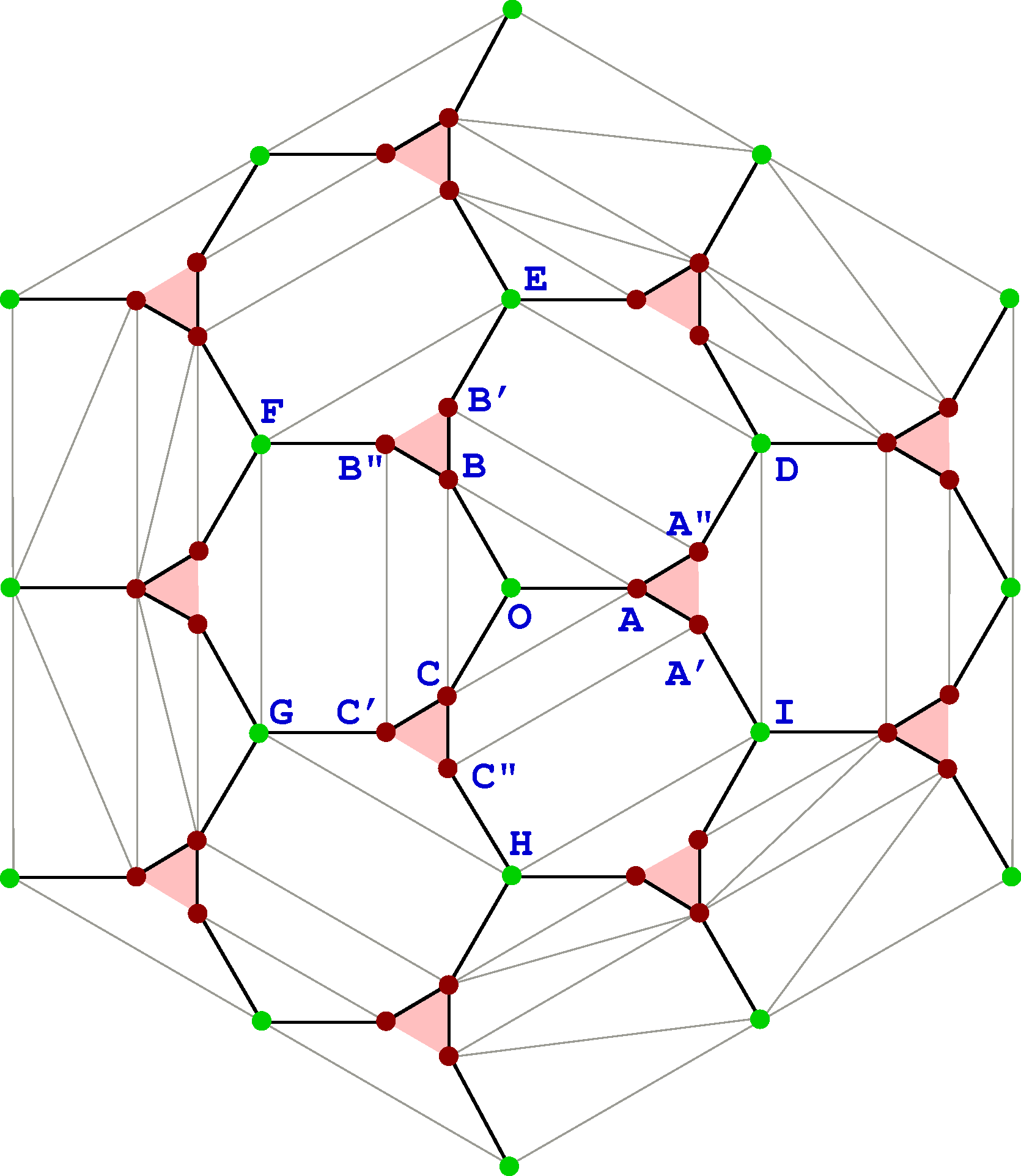}
\caption{The adjacency graph of the patches in the pattern in Fig.
\ref{fig:line}(b). The vertices corresponding to the brownish and greenish
patches in the pattern (Fig.\ref{fig:tri}) are denoted by different colors.}
\label{fig:adjtwo}
\end{center}
\end{figure}

An important feature of the non-compact patterns is that,
it can be characterized by a piece-wise
linear function. This characterization is simpler than that of the
patterns with compact growth, where one requires piece-wise quadratic
polynomials. We have shown that there are
infinitely many backgrounds, on which the patterns have non-compact growth.
It would be desirable to determine  the exact value of $\alpha$ for
different backgrounds showing non-compact growth  studied in section
\ref{sec:ncfl}.
\begin{figure}[t]
\begin{center}
\includegraphics[width=9cm]{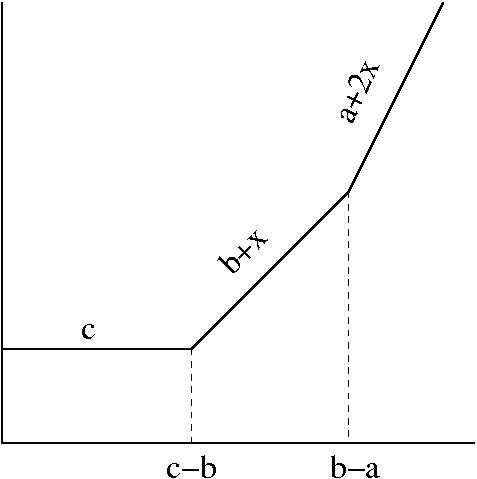}
\caption{Graph corresponding to the tropical function in equation
(\ref{eq:trop})}
\label{fig:trop}
\end{center}
\end{figure}

Another interesting question is a possible connection of this
problem to tropical algebra \cite{tropical}. In tropical mathematics, one defines
operations similar to `addition' and `multiplication' (denoted by $\oplus$ and
$\otimes$ here) by
\begin{eqnarray}
a\oplus b&=&\textrm{max}\left\{ a,b \right\},\nonumber\\
a\otimes b&=&a+b,
\end{eqnarray}
where $a$, $b$ are real parameters.
Familiar properties of addition and multiplication operators, like
commutativity, associativity, existence of identity, distributive
property continues to hold in the new definition. One can then define
polynomials in several variables. The graph of a tropical polynomial
is a piecewise linear function which is also convex. For example, consider the
tropical function
\begin{equation}
f(x)=a\otimes x^{2}\oplus b \otimes x \oplus c.
\label{eq:trop}
\end{equation}
In terms of standard algebra
\begin{equation}
f(x)=max\left\{ a+2x,bx,c\right\}.
\end{equation}
The graph corresponding to this function is shown in Fig. \ref{fig:trop}

We note that for the pattern discussed in section $4$, the potential
function is piece-wise linear. It seems plausible that  tropical polynomials may be useful
to describe this function. In fact, tropical geometry have been
discussed as possibly related to sandpile models \cite{norine,propp2}.
For small values of $N$, our numerical study showed that  $\phi$
is convex, if restricted to one sextant. However, for larger $N$, as
shown in Fig. \ref{surfaceplot}, we see that $\phi$ is not convex even
within one sextant.
\begin{figure}[t]
\begin{center}
\includegraphics[width=10cm]{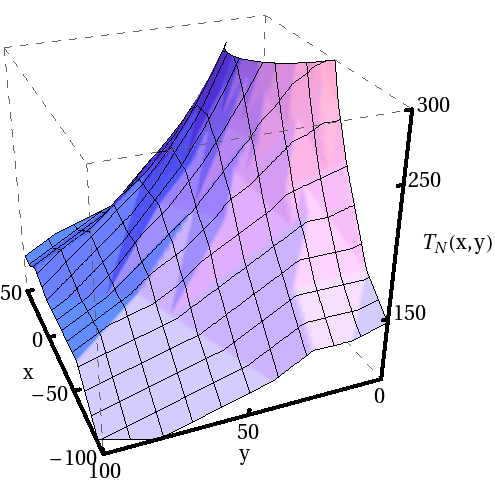}
\caption{(color in the electronic copy) Three dimension plot of the integer toppling function
$T_{N}\left( x,y \right)$ for a triangular pattern like in Fig.
\ref{fig:hexpicl1}, but with $N=800$. The plot shows a zoomed-in
section in the region $y\ge0$ and $y+\sqrt{3}x\ge0$.}
\label{surfaceplot}
\end{center}
\end{figure}
We conclude that it is not possible to
represent the potential $\phi$ as a simple tropical polynomial.
\chapter{A continuous height sandpile model}\label{ch:zhang}
\textit{Based on the paper \cite{mypre}} by Tridib Sadhu and Deepak
Dhar.

\begin{itemize}
\item[\textbf{Abstract}:]
In this chapter, we study the Zhang model of sandpile, defined in the first
chapter, on a one dimensional chain of length $L$, driven by adding a random amount of height
at a randomly chosen site at each addition step. We show that in spite
of this randomness in the input height, the probability distribution 
function of height at a site in the steady state is sharply peaked, and 
the width of the peak decreases as $ {L}^{-1/2}$ for large $L$.

We also
discuss how the height added at one time is distributed among 
different sites by topplings with time. We relate this distribution to 
the time-dependent probability distribution of the position of a 
marked grain in the 
one dimensional abelian model with discrete heights.  We argue that in 
the large $L$ limit, the variance of height at site $x$ has a scaling 
form $L^{-1}g(x/L)$, where $g(\xi)$ varies as $\ln(1/\xi)$ for 
small $\xi$, which agrees well with the results from 
numerical simulations. 
\end{itemize}

\section{Introduction}\label{ch4.1}
After Dhar first discovered the abelian property of the BTW model
\cite{dharprl}, many more models in the general class known as abelian distributed 
processors, were studied, as the abelian 
property makes their theoretical study simpler
\cite{dharphysica06}. The 
original sandpile model of Bak \textit{et al.} \cite{btw}, the Eulerian walkers 
model \cite{euler}, and the abelian variant \cite{sasm} 
of the model originally proposed by Manna \cite{manna} are all members 
of 
this class.  Models which do not have the abelian property have been 
studied mostly by numerical simulations. As discussed in chapter
\ref{ch:intro}, the Zhang model is one such model, and
this is what we study in this chapter.

In the Zhang model, the amount of height added at a randomly chosen site 
at each addition step is not fixed, but random.  In spite of 
this, the model in one dimension has the remarkable property that the 
height at a site in the steady state has a very sharply peaked 
distribution in which the width of the peak is much less than the 
spread in the input amount per time step, and the width decreases with 
increasing system size $L$. This behavior was noticed by Zhang 
using numerical simulations in one and two dimension \cite{zhang}, and he called it 
the `emergence of quasi-units' in the 
steady state of the model. He argued that for large systems, the 
behavior would be same as in the discrete model. Recently, A. Fey 
\textit{et al.} \cite{redig} have proved that for some choices of the distribution of input height, in one dimension, the 
variance of height does go to zero as the length of the chain $L$ goes 
to infinity. However, they did 
not study how fast the variance decreases with $L$.

We study this emergence of `quasi-units' in one 
dimensional Zhang sandpile by looking at how the added height is redistributed 
among different sites in the avalanche process. We show that the 
distribution function of the fraction of height, added at a site $x'$, 
reaching a site $x$
after $t$ time steps following the addition is exactly equal to the 
probability distribution that a marked grain in the 
one-dimensional height type BTW model added at site $x'$, reaches site $x$ in 
time $t$. The latter problem was
studied earlier \cite{punya}. We use this to show that the 
variance of height asymptotically vanishes as $ 1/L$.  We also 
discuss the spatial dependence of the variance along the system length.  In 
the large $L$ limit, the variance at site $x$ has a scaling form 
$L^{-1}g(x/L)$. We determine an approximate form of the scaling 
function $g(\xi)$, which agrees very well with the results of our 
numerical simulations.

There have been other studies of the Zhang model earlier. 
Blanchard \textit{et al.} \cite{blanchard} have studied the steady state of
the model where the amount of addition of height is fixed but
the site of addition is chosen randomly, and found that the 
distribution of energies even 
for the two site problem is very complicated, and has a multi-fractal 
character. In two dimensions, the distribution of height seems 
to sharpen for larger $L$, but the rate of decrease of the width is very 
slow \cite{janosi}. Most other studies have dealt with the question as to whether the critical exponents of the avalanche distribution in this 
model are the same as in the discrete abelian model \cite{lubeck,milshtein}. A. Fey \textit{et al.}'s results imply that the asymptotic 
behavior of the avalanche distribution in one dimension, for specific
cases, is identical to the discrete case, but the situation in higher dimension remains unclear 
\cite{vespignaniz, guilera}.

The plan of the chapter is as follows. In Section \ref{ch4.2}, we define the model 
precisely.  In Section \ref{ch4.3}, we show that the way the 
height added at a site is distributed among different sites by 
toppling is same as the time-dependent probability 
distribution of the position  of a marked grain in the 
discrete abelian sandpile model. This correspondence is used in Section 
\ref{ch4.4} to determine the qualitative dependence of the variance of the height 
variable at a site on its position $x$, and on the system size $L$. We 
propose a simple extrapolation form that incorporates this dependence. 
We check our theoretical arguments with numerical simulations in
Section \ref{ch4.5}.  
 Section \ref{ch4.6} contains a summary and concluding remarks. A detailed calculation
of the solution of an equation, required in Section \ref{ch4.4}, is
given in the Appendix \ref{apndx:zhang}.

\section{Definition and preliminaries}\label{ch4.2}

We consider our model on a linear chain of size $L$. The sites are 
labelled by integers $1$ to $L$ and a real continuous height variable is 
assigned to each site. Let $h(x,t)$ be the height variable at site $x$ 
at the {\it end} of the time-step $t$. We define a threshold 
height value $h_c$, same for each site, and sites with $h(x,t) \ge h_c $ are 
called unstable, while those with $h(x,t) < h_c$ are called stable.  
Starting from a configuration where all sites are stable, the 
dynamics is defined as follows.

(i) The system is driven by adding a random amount of height at the 
\textit{beginning} of every   time-step at a randomly chosen site. Let the amount 
of height added 
at time $t$ be $\Delta_t$.  We will assume that all $\Delta$'s are 
independent, identically distributed random variables, each picked 
randomly from an uniform interval $1-\epsilon \le \Delta_t \le 
1+\epsilon$. Let the site of addition chosen at time $t$ be denoted by 
$a_t$.

(ii) We make a list of all sites whose height exceeds or becomes equal to 
the critical value $h_c$.  All these sites are relaxed in parallel by 
topplings. In a toppling, the height of the site is equally distributed to its 
two neighbors and the height at that site is reset to zero. If there is 
toppling at a boundary site, half of the height at that site before 
toppling is lost. 

(iii) We iterate Step (ii) until all topplings stop. This completes one 
time step. 

This is the slow driving limit, and we have assumed that all avalanche 
activity stops before the next addition event. In this limit, the model 
is characterized by two parameters $\epsilon$ and $h_c$. In the limit 
$\epsilon =0$, and $1 < h_c \le 2$, the model reduces to the discrete 
case, where the behavior is well understood \cite{ruelle}. For non-zero 
but small $\epsilon$, the behavior does not depend on the precise value 
of $h_c$. In fact, starting with a recurrent configuration of the pile, 
and adding height at some chosen site, we get exactly the same sequence 
of topplings for a range of values of $h_c$ \cite{redig}. To be precise, 
for any fixed initial configuration, and fixed driving sequence (of 
sites chosen for addition of height), whether a site $x$ topples at time 
$t$ or not is independent of $h_c$, so long as we have $1 + \epsilon < 
h_c \le 2 -2 \epsilon$. In the following, we assume for simplicity that 
$h_c = 3/2$, and $0 \leq \epsilon \le 1/4$.

 It was shown in \cite{redig} that in this case, the stationary state has at 
most one site with height $h(x,t)=0$ and all other sites have height in the 
range $ 1 -\epsilon \le h(x,t) \le 1 +\epsilon$. The position of the empty site 
is equally distributed among all the lattice points. There are also some 
recurrent configurations in which all sites have height $h(x,t) \geq 1 - \epsilon$.   
In such cases, we shall say that the site with zero height is the site 
$L+1$.  Then, in the steady state, there is exactly one site with height equal to $0$, and the 
$L+1$ different positions of the site are equally likely.

If $ h_c$ does not satisfy the inequality $ 1 + \epsilon  <  h_c \le
2 - 2 \epsilon$,  this simple characterization of the steady state is no 
longer valid. However, our treatment can be easily extended to those 
cases. Since the qualitative behavior of the model is the same in all 
cases, we restrict ourselves to the simplest case here.

It is easy to see that the toppling rules are in general not abelian. 
For example, start with a two site model in configuration $(1.6,2.0)$ 
and $h_c=1.5$. The final configuration would be $(1.4, 0)$, or $(0, 
1.3)$,  depending on whether the first or the second site is toppled initially.  In our model, 
using the parallel update rule, the final configuration would be 
$( 1.0, 0.8)$.  A. Fey \textit{et al.} \cite{redig} have shown that only 
in one dimension, for $1+\epsilon < h_c$, the Zhang model has a restricted abelian 
character, namely, that the final state does not depend on the order of topplings 
within an avalanche. However, topplings in two different avalanches do 
not commute.

\section{The propagator, and its  relation to the discrete abelian model} \label{ch4.3}
It is useful to look at the Zhang model as a perturbation about the 
$\epsilon =0$ limit.  For sufficiently small $\epsilon$, given the site of 
addition and initial configuration,  the toppling sequence is {\it 
independent} of $\epsilon$.  It is also 
independent of the amount of height of addition $\Delta_t$, and is same as 
the model with $\epsilon = 0$, which is the $1$-dimensional abelian 
sandpile model with integer heights (hereafter referred to simply as 
ASM, without  further 
qualifiers).  
We decompose the height variables as
\begin{equation}
\label{eq:1}
  h(x,t)=\mbox{Nint}[h(x,t)] + \epsilon \eta(x,t),
\end{equation}
where Nint refers to the nearest integer value. Then the integer part of 
the height evolves as in the ASM. We write
\begin{equation}
\Delta_t = 1 + \epsilon u_t , \mathrm{~for ~~all~~}t.
\end{equation}
Here $u_t$ is uniformly distributed in the interval $[-1,+1]$. The 
linearity of height transfer in toppling implies that the evolution of the 
variables $\eta(x,t)$ is independent of $\epsilon$. Thus, $\eta(x,t)$ is a 
linear function of $u_t$; the precise function depends on the sequence 
of topplings that took place. These are determined by the sequence of
addition sites $\{a_t\}$ up to the time $t$, and the initial configuration 
$C_0$. These together will be called the evolution history of the 
system up to time $t$, and denoted by  $\mathcal{H}_t$. We assume that at 
the starting time $t = 0$, the 
variables $\eta(x, t=0)$ are zero for all $x$, and the initial 
configuration is a recurrent configuration $C_0$ of the ASM. Then, from 
the linearity of the toppling rules, we can write $\eta(x,t)$ as a 
linear function of $\{u_{t'}\}$ for $1 \leq t' \le t$, and we can write for 
a given history $\mathcal{H}_t$,
\begin{equation}
  \label{eq:3}
  \eta(x,t|\{u_t\},\mathcal{H}_t)=  
  \sum_{t'=1}^{t}G(x,t|a_{t'},t',\mathcal{H}_t)u_{t'}.
\end{equation}
This defines the matrix elements $G(x,t|a_{t'},t',\mathcal{H}_t)$. These 
can be understood in terms of the probability distribution of the 
position of a marked grain in the ASM as follows. Consider the motion of 
a marked grain in the one dimensional height type BTW model. We start 
with configuration $C_0$ and add grains at sites according to the 
sequence $\{a_t\}$. All grains are identical except the one added at 
time $t'$, which is marked. In each toppling, the marked grain jumps to 
one of its two neighbors with equal probability. Consider the 
probability that the marked grain will be found at site $x$ after a 
sequence of relaxation processes at time $t$. We denote this probability as 
${\rm Prob}(x,t|a_{t'},t',\mathcal{H}_t)$. From the toppling rules in 
both the models, it is easy to see that 
\begin{equation}
 G(x,t|a_{t'},t',\mathcal{H}_t) = {\rm Prob}(x,t|a_{t'},t',\mathcal{H}_t).
\end{equation}
Averaging over different histories $\mathcal{H}_t$, we get the probability 
that a marked grain added at $x'=a_{t'}$ at time $t'$  is found at a 
position $x$ at time $t \ge t'$ in the steady state of the ASM.
Denoting the latter probability by $\mathrm{Prob_{ASM}}(x,t| x',t')$, we get
\begin{equation}
  \label{eq:5}
\overline{ G(x,t|x'=a_{t'},t',\mathcal{H}_t)} = 
\mathrm{Prob_{ASM}}(x,t|x',t'),
\end{equation}
where the over bar denotes averaging over  different 
histories $\mathcal{H}_t$, consistent with the specified constraints. 
Here, 
the constraint is that $\mathcal{H}_t$ must satisfy  $a_{t'} = x'$. At 
other places, the constraints may be different, and will be  specified  
if not clear from the context.

We shall denote the variance of a random variable $\xi$ by  $Var[\xi ]$.
For the specific case with $\epsilon=0$, using the definition
in Eq. \eqref{eq:1}, it is easy to show that
$Var[h(x,t)]=L/(L+1)^2$. For non-zero $\epsilon$, in addition to the previous
term, there will be a term proportional to $\epsilon^2$, as the term linear in $\epsilon$ vanishes. 
Hence, we can write
\begin{equation}
Var[h(x,t)] =   L / (L+1)^2 + \epsilon^2 Var[\eta(x,t)].
\end{equation}
Different $u_t$ are independent random variables, also independent of 
$\mathcal{H}_t$ and have zero mean. Let $Var[ u_t] = \sigma^2$. For the case 
when $u_t$ has a uniform distribution between $ -1$ and $ +1$, we have 
$\sigma^2  =  1/3$. Then, from Eq. \eqref{eq:3}, 
we get
\begin{equation}
  \label{eq:7}
  Var[\eta(x,t)] =  \sigma^2
\sum_{t'=1}^{t} \overline{G^2(x,t|a_{t'},t',\mathcal{H}_t)}.
\end{equation}
As $t \rightarrow \infty$, the system tends to a steady state, and the 
average in the 
right hand side of Eq. \eqref{eq:7} becomes a function of $t-t'$. Also, for a 
given $t'$, all values of $a_{t'}$ are equally likely. We define
\begin{equation}
  \label{eq:8}
F(x,\tau) \equiv \frac{1}{L} \lim_{t' \rightarrow \infty}\sum_{x'} 
\overline{G^2(x,t'+ \tau | 
x',t',\mathcal{H}_t)}.
\end{equation}
Then, for large $L$, in the steady state ($t$ large), the variance of height at site 
$x$ is $1/L+\epsilon^2\Sigma^2(x)$, where 
\begin{equation}
  \label{eq:9}
  \Sigma^2(x) = \lim_{t \rightarrow \infty} Var[\eta(x,t)] = \sigma^2 \sum_{\tau=0}^{\infty} F(x,\tau).
\end{equation}
We define $\overline{\Sigma^2}$ to be the  average of $\Sigma^2(x)$ over 
$x$ as
\begin{equation}
\overline{\Sigma^2} = \frac{1}{L} \sum_x \Sigma^2(x).
\end{equation}
Evaluation of $G(x,t|x',t',\mathcal{H}_t)$ for a given history 
$\mathcal{H}_t$ and averaging over $\mathcal{H}_t$ is quite tedious for 
$t > 1$ or $2$.  For $\overline{G}$, the problem has been studied in 
the context of residence times of grains in sand piles, and some exact results 
are known in specific cases \cite{punya}.  For $\overline{G^2}$, 
the calculations are much more difficult.  However, some simplifications 
occur in large $L$ limit. We discuss these in the next section.
\section{Calculation of $\Sigma^2(x)$ in large-$L$ limit}\label{ch4.4}
In order to find the quantity 
$F(x,\tau)$ in Eq. (8), we have to 
average $ G^2(x,t | x',t',\mathcal{H}_t)$ over all possible histories 
$\mathcal{H}_t$, which is quite difficult to evaluate exactly. 
However, we can determine the leading behavior of $F(x,\tau)$ in this limit. 

We use the fact that the path of a marked grain in the ASM is a random 
walk \cite{punya}.  Consider a particle that starts away from the 
boundaries at $x'= \xi L$, with $L$ large, and $0 < \xi < 1$. If it 
undergoes $r(\mathcal{H}_t)$ topplings between the time $t'$ and $t = t' + 
\tau$ under 
some particular history $\mathcal{H}_t$, then its probability distribution 
is approximately a Gaussian, centered at $x'$ with width $\sqrt{r}$.  
Then, we have 
\begin{equation} 
G(x,t|x',t', \mathcal{H}_t) \simeq  \frac{1}{\sqrt{2 \pi r(\mathcal{H}_t) }} \exp\left( - 
\frac{( x - x')^2} {2 r(\mathcal{H}_t)}\right). 
\end{equation}
Using this approximation for $G$, summing over $x'$, we get
\begin{equation}
  \label{eq:12}
\sum_{x'} G^2(x,t|x',t',\mathcal{H}_t) \simeq \frac{1}{2\sqrt{\pi 
r(\mathcal{H}_t)}}.
\end{equation}
Thus, we have to calculate the average of $1/\sqrt{r(\mathcal{H}_t)}$ over 
different histories. Here $r(\mathcal{H}_t)$ was defined as the number of 
topplings undergone by the marked grain. Different possible 
trajectories of a marked grain, for a given history, do not have the 
same number of topplings. However, if the typical displacement of the 
grain is much smaller than its distance from the end, differences 
between these are small, and can be neglected. There are typically 
${\cal O}(L)$ topplings per grain per avalanche in the model, and a grain moves a 
typical distance of ${\cal O}(\sqrt{L})$ in one avalanche. Then, we can 
approximate $r(\mathcal{H}_t)$ by $N(x')$, the number of topplings at $x'$.

 Let the number of topplings at $x'$ at time steps $\tau = 0, 1 , 2, \ldots$ be 
denoted by $N_0, N_1, N_2, \ldots$. Then, $N(x') = N_0 + N_1 + N_2 + \cdots$.  
It can be shown 
that the number of topplings in 
different avalanches in the one dimensional ASM are nearly uncorrelated (In 
fact the correlation function between $N_i$ and $N_j$ varies as 
$(1/L)^{|i -j|}$.).  By 
the central limit theorem for sum of weakly correlated random variables, 
the  mean value of $N$ grows linearly with $\tau$, but the standard 
deviation increases only as $\sqrt{\tau}$. Then, for $\tau \gg 0$, the 
distribution is sharply peaked about the mean, and $\langle1/\sqrt{N} \rangle \simeq 
1/\sqrt{\langle N \rangle}$. 

Clearly, for $\tau \gg 0$,  $\langle N \rangle = \tau 
\bar{n}(x')$, where $\bar{n}(x')$ is the mean number of topplings per 
avalanche at $x'$ in the ASM,  given by
\begin{equation}
\bar{n}(x = \xi L)= L \xi  (1 - \xi)/2.
\end{equation}
The upper limit on $\tau$  for the validity of the above argument 
comes from the 
requirement that the width of the Gaussian be much less than the 
distance from the boundary, (without any loss of generality, we can assume 
that $\xi < 1/2$, so that it is the left boundary ), else we cannot neglect 
events where the marked grain 
leaves the pile. This gives  
$\sqrt{\tau \bar{n}(x)} \ll \xi L$, or equivalently, $\tau \ll \xi L. $
Thus we get,
\begin{equation}
  F(x,\tau) \simeq \frac{C_1}{L} [ \tau L \xi (1 -\xi)]^{-1/2}, \mathrm{~ for~~} 0 \ll \tau 
\ll \xi L,
\end{equation}
where $C_1$ is some constant.

Also, we know that for $\tau \gg L$, the probability that the grain 
stays in the pile decays exponentially as $\exp( -\tau/L)$ \cite{punya}.
Thus, $\overline{G}$, and also $\overline{G^2}$ will decay exponentially 
with $\tau$, for $\tau \gg L$. Thus, we have, for some constants $C_2$ and 
$a$,
\begin{equation}
  F(x,\tau) \simeq \frac{C_2}{L^2} \exp( - a \tau/L), \mathrm{~~for~} \tau \gg L.
\end{equation}
It only remains to determine the behavior of $ F(x,\tau)$, for $\xi L 
\ll \tau \ll L$.   In this 
case, in the ASM, there is a significant probability that the marked 
grain leaves 
the pile from the end. This results in a faster decay of $G$, and hence 
of $F$ with time.   We argue below that the behavior of the function 
$F(x,\tau)$ is 
given by
\begin{equation}
  F(x,\tau) \sim \frac{C_3}{L\tau}, \mathrm{~~for~~} \xi L  \ll \tau \ll L,
\end{equation}
where $C_3$ is some constant.
 This can be seen as follows:  Let us consider 
the special case when the particle starts at a site close to the boundary. 
Then $\bar{n}(x)$ is approximately a linear function of $x$ 
for small $x$. Its spatial variation cannot be neglected, and Eq. \eqref{eq:12} is 
no longer valid.  We 
will now argue that  in this case
\begin{equation}
  \overline{G(x,t' + \tau |x', t',\mathcal{H}_t)} \simeq x' \tau^{-2} \exp( -x/ \tau),
\end{equation}
for $0\ll \tau \ll L$. The time evolution of $Prob_{ASM}(x,t|x',t')$
in Eq. \eqref{eq:5} is well described as a diffusion
with diffusion coefficient proportional to $\bar{n}(x)$ which is the mean
number of topplings per avalanche at $x$ in the ASM \cite{punya}.
For understanding the long-time survival probability in this problem, we 
can equivalently  consider  the problem in a continuous-time version: 
consider a  random walk on 
a half line where sites are labelled by positive integers, and the jump 
rate out of a site $x$ is proportional to $x$. A particle starts at site 
$x =x_0$ at time $t=0$. If $P_j(t)$ is the probability that the particle is 
at $j$ at time $t$, then the equations for the time-evolution of $P_j(t)$ 
are, for all $ j>0$,
\begin{equation}
  \label{eq:18}
\frac{d}{dt} P_j(t) = (j+1) P_{j+1}(t) +(j-1) P_{j-1}(t) - 2j 
P_j(t).
\end{equation}
The long time solution starting with $P_j(0) = \delta_{j,x_0}$ is 
\begin{equation}
  P_j(t) \simeq x_0 t^{-2} \exp(- j/t)
\end{equation}
for $t \gg x_0$ and large $j$. The 
probability that the particle survives till time $t$ decreases as $1/t$ 
for large $t$. We have discussed the calculation in the Appendix
\ref{apndx:zhang}. 

Using Eq. \eqref{eq:5}, we see that $ \overline{G(j,t'+\tau|x_0,t')}$ scales as $ 
x_0/\tau^2 $.  It seems reasonable to assume that $\overline{G^2}$ will 
scale as $\overline{G}^2$. Then, each term in the 
summation for $F(x,\tau)$ in Eq. \eqref{eq:8} scales as 
$x_0^2/\tau^4$, and there are $\tau$ such terms, as the  sum over $x_0$ has an 
upper cutoff proportional to $\tau$, and so $F(x,\tau)$ varies as 
$1/\tau$ for $ L \gg \tau \gg x_0$. This concludes the argument.

We can put these three limiting behaviors into a single functional form 
that interpolates between these, as
\begin{equation}
  F(x,\tau) \simeq \frac{1}{L} \frac{K \exp( - a \tau/L)}{\tau + B \sqrt{\tau L \xi (1 
-\xi)}},
\end{equation}
where $K$, $a$ and $B$ are some constants.
In Section V, we will see that results from numerical simulation 
are consistent with this phenomenological 
expression.  
 
Using this interpolation form in Eq. \eqref{eq:9}, and converting the sum 
over $\tau$ to an integration over a variable $u = \tau/L$,  we can write
\begin{equation}
\Sigma^2(x = \xi L) \simeq \frac{\sigma^2}{L} \int_0^{\infty}du \frac{ K 
\exp( - a u)}{u + B 
\sqrt{ u \xi ( 1 - \xi)}}.
\end{equation}
This integral can be simplified by a change of variable $ au = z^2$, 
giving
\begin{equation}
\Sigma^2(x = \xi L) \simeq \frac{ K \sigma^2}{L} I\left(B'\sqrt{\xi(1 - \xi)}\right),
\end{equation}
where $K, B'$ are  constants, and $I(y)$ is a function defined by
\begin{equation}
I(y) = 2\int_0^{\infty} dz \frac{ \exp( - z^2)}{ z + y}.
\end{equation}
It is easy to verify that $I(y)$ diverges as $\ln ( 1/y)$ for small 
$y$.  In 
particular, we note that the exponential term in the integral 
expression for $I(y)$ has a significant contribution only for $z$ near 
$1$.  We may approximate this by dropping the exponential factor, and 
changing the upper limit of the integral to $1$.  The resulting integral 
is easily done, giving
\begin{equation}
  \label{eq:24}
\Sigma^2(x=\xi L) \simeq \frac{ K' \sigma^2}{ L} \ln \left( 1 + 
\frac{1}{B'\sqrt{\xi(1 -\xi)}}\right),
\end{equation} 
where $K'$ is some constant.
Averaging $\Sigma^2(x)$ over $x$, we get a behavior $ 
\overline{\Sigma^2} \simeq 1/L$. Of course, the answer is not exact, and one could have constructed 
other interpolation forms that have the same asymptotic behavior.
We will see in the next Section that results from numerical simulations
for $\Sigma^2(x)$ can be fitted very well to the phenomenological
expression in Eq. \eqref{eq:24}.
\section{Numerical results}\label{ch4.5}
We have tested  our non-rigorous  theoretical arguments against results 
obtained from numerical simulations. In Fig. \ref{patch1}, we have plotted the 
probability distribution $\mathcal{P}_L(h)$ of height at a site, averaged over all sites. We used $L= 
200$, $500$ and $1000$, and averaged over $10^8$ different configurations 
in the steady state. We plot the scaled distribution function 
$ \mathcal{P}_L(h)/\sqrt{L}$ versus the scaled height $( h - \bar{h}) 
\sqrt{L}$, where $\bar{h}$ is the average height per site. 
Using law of mass balance it is easy to
show that the average height per site is exactly equal to the average value of the 
addition of height, hence $\bar{h}= 1.0$ in our case. A good collapse is seen, which verifies the fact that the width 
of the peak varies as ${L^{-1/2}}$. 
\begin{SCfigure}
\includegraphics[scale=0.65,angle=270]{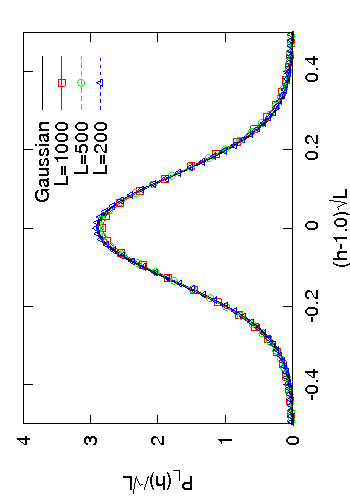}
\caption{ Scaling collapse of the probability distribution
$\mathcal{P}_L(h)$ of height per site in the steady state for different systems 
of size $200$, $500$ and $1000$. The distribution is well described by a Gaussian of width $0.136$.}
\label{patch1}
\end{SCfigure}  

The  dependence of the variance of $h(x,t)$ on
$x$ is plotted in Fig. \ref{patch2} for systems of length $200$, $300$ and 
$400$. The data was  obtained by averaging over $10^8$ 
avalanches. We plot $( L + \lambda) \Sigma^2(x)/\sigma^2$ versus 
$x_{eff}/L_{eff}$, where $x_{eff}$ differs from $x$ by an amount 
$\delta$ to take into account the corrections due to end effects. Then, for 
consistency, $L$ is replaced by $L_{eff} = L + 2 \delta$.   For 
the specific choice of $\lambda=5 \pm 1$ and $\delta=1.0 \pm 0.2$, we get 
a good collapse of the curves for different $L$.  We also show a fit to 
the proposed interpolation form in Eq. \eqref{eq:24}, with $K'=1.00 \pm
0.01$ and $B'= 1.5 \pm 0.2$. We see that the fit is very good.
\begin{SCfigure}
\includegraphics[scale=0.65,angle=270]{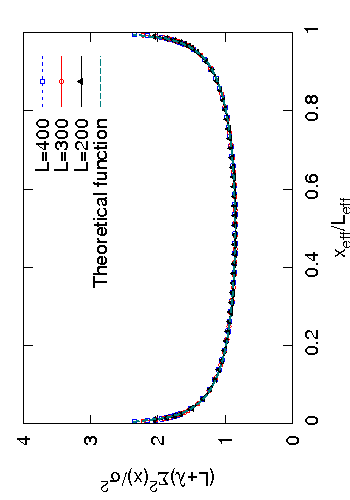}
\caption{ Scaling  collapse of $\Sigma^2(x)/\sigma^2$ at site $x$ 
for systems of different length $L$. }
\label{patch2}
\end{SCfigure}

In order to check the logarithmic dependence of $\Sigma^2(x)$
on $x$ for small $x$, we re-plot the data in Fig. \ref{patch3} using 
logarithmic scale for $x$.  We get a good collapse of the data for 
different $L$, supporting our proposed dependence in Eq. \eqref{eq:24}.
\begin{SCfigure}
\includegraphics[scale=0.65,angle=270]{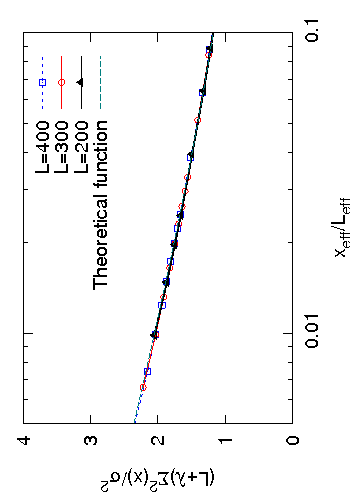}
\caption{ The same plot in Fig. \ref{patch2} resolved more at the  left boundary of 
the model and taking $x$ axis  in log scale.}
\label{patch3}
\end{SCfigure} 
\section{Concluding remarks}\label{ch4.6}
To summarize, we have studied the emergence of quasi-units in the
one-dimensional Zhang sandpile model. The variance of height variables 
in the steady state is governed by the balance between two competing 
processes. The randomness in the drive i.e., the height of addition, 
tends to increase the variance in time. On the other hand, the topplings 
of height variables tend to equalize the excess height by distributing 
it to the nearby sites. There are on an average $O(L^2)$ topplings per 
avalanche. Hence, in one dimension there are, on an average, $O(L)$ 
topplings per site per avalanche. For large system size, the second 
process dominates over the first and the variance becomes low. We have 
shown that the variance vanishes as $1/L$ with increasing system size 
and the probability distribution of height concentrates around a 
non-random value which depends on the height of addition. We have also 
proposed a functional form for the spatial dependence of variance of 
height which incorporates the correct limiting behaviors, and matches 
very well with the numerical data.

An interesting question is whether one can extend these arguments to the 
two-dimensional Zhang model. There is some numerical evidence for the sharpening of height
peaks as the system size is increased \cite{lubeck}. In the simplest scenario
\cite{lubeck}, there are $z-1$ peaks located at multiples 
of the quasi-unit $h_0=h_c(z+1)/z^2$, where 
$z$ is the lattice coordination number. This would imply
that the asymptotic behavior of the two dimensional Zhang model is same as the height type BTW model
in two dimension.
However, this simple scenario can not be fully correct. During an avalanche, there is a
finite probability that a site  receives height from two neighbors in the previous time-step. If the height at the site before
was $(z-1)h_0$, it will tranfer an height approximately equal to $h_0' = (z+1) h_0/z$ to its neighbors. As such events occur with nonzero probability,
in the one site height distribution function, there would have to be  peaks around $h_0'$, $h_0 + h_0',\cdots$ also.
These peaks then give rise to other peaks. With many peaks,
the definition of the width of a peak becomes somewhat ambiguous.
As the number of topplings per site varies only as $\ln L$,
the width is expected to decrease much more slowly with $L$ than in one dimension.

\chapter{Stochastic sandpile models}\label{ch:manna}
\textit{Based on the paper \cite{myjsp1}} by Tridib Sadhu and Deepak Dhar.

\begin{itemize}
\item[\textbf{Abstract}:]
In this chapter, we study the steady state of the abelian sandpile models with stochastic 
toppling rules. The particle addition operators commute with each other, 
but in general these operators need not be diagonalizable. We 
use their abelian algebra to determine their eigenvalues, and the 
Jordan block structure. These are then used to determine the
probability of different configurations in the steady state. We
illustrate this procedure by explicitly determining the numerically
exact steady state for a one dimensional example, for systems of
size $\le12$, and also study the change in density of particles along
the lattice, in the steady state.
\end{itemize}

\section{Introduction}
Sandpile models with stochastic toppling rules are important subclass of 
sandpile models \cite{dharphysica06}.  The first such model was studied by 
Manna \cite{manna}, and  these are usually known as Manna models in
the literature. They are able to describe the avalanche 
behavior seen experimentally in the piles of granular media much better than 
the deterministic models \cite{osloexpt}. Also, in numerical studies, one 
gets better scaling collapse, and consequently, more reliable estimates 
for the values of the critical exponents, than for models with deterministic 
toppling rules \cite{stochnum}.

Unfortunately, the theoretical understanding of models with 
stochastic toppling rules is much less than that of their deterministic 
counterparts, e.g. the Bak-Tang -Wiesenfeld (BTW) model \cite{btw}.  For 
example, there is no analogue of the burning test  
to distinguish the transient and  the recurrent states of a 
general Manna model. For the deterministic case, it is known that 
all the recurrent configurations occur with equal probability in the 
steady state. A similar characterization of the steady state is not 
known in the Manna case.  The steady state has been 
explicitly determined only for the fully directed stochastic models 
\cite{asap,rittenberg,kloster,paczuski}. In some cases, one can formally characterize the 
recurrent states of the model, e.g. the $1$-dimensional Oslo rice pile 
model, but a straightforward direct depth-first
calculation of the exact
probabilities of different configurations in the steady state takes 
$\mathcal{O}(exp{(L^3)})$ steps where $L$ is the system length
\cite{dharoslo}. While 
the exact values of the critical exponents have been conjectured for $(1+1)$ 
dimensional directed Manna model \cite{kloster,paczuski}, the 
prototypical undirected Manna model in one dimension has resisted an 
exact solution so far \cite{dickman1,dickman2,dickman3}. 
In higher 
dimensions, most of the studies are only numerical.

The conditions under which different scaling behaviors are seen in stochastic models is a long debated issue.
Initial studies suggested that the stochastic sandpile model and the BTW model exhibit
similar scaling behavior
\cite{guilera,pietronero,vespignani}. However
later large
scale simulations showed that the stochastic
sandpile models constitute a universality class different from
their deterministic counterparts: the critical exponents, scaling
functions and geometrical features are different for the two classes
of models
\cite{biham,lubeck1,milshtein,lubeck3,tebaldi,campelo}. Further evidence came from the
qualitative differences in their avalanche distribution which has a
multi-fractal nature for BTW model, whereas it follows simple finite
size scaling \cite{dickman1,tebaldi,campelo} for the Manna model. Also the directed version of the above models exhibit different scaling
behavior than their undirected counterparts \cite{satorras}.
Numerical results suggest that both the abelian and non-abelian
Manna model constitute a universality class \cite{biham2} different from
Directed-Percolation (DP) \cite{bonachela1}. 
However, the corresponding fixed points are unstable with respect to
introduction of perturbation (``stickiness'') and with stickiness the
critical behavior flows to the DP universality class
\cite{mohanty1,bonachela2,mohanty2}. While there is
a controversy about the generic DP behavior of undirected stochastic
sandpile models; for the directed case, the numerical evidence for
asymptotic DP behavior is quite convincing.

While the original Manna model did not have the abelian property of
the BTW model, one can construct stochastic toppling rules with
abelian property \cite{sasm}. In this paper, we discuss this abelian
version of the stochastic Manna model. We 
shall use the terms Deterministic abelian Sandpile Models (DASM)
(see Ch. \ref{ch:intro})  and 
Stochastic abelian Sandpile Models (SASM), if we need to distinguish 
between these two classes of models. 
In DASM the relaxation rules satisfy pair wise balance
\cite{barma}, which makes the model analytically tractable and the
recurrent configurations become equally probable in the steady state. However
the stochastic models do not have this property, and the steady
state can not be determined easily.

We use the algebra of the addition operators to determine the steady 
state of the model. This algebraic approach provides a computationally 
efficient method to determine the Markov evolution matrix of the model. 
The addition operators of SASM are not necessarily diagonalizable even 
if we restrict ourselves to the space of recurrent configurations. Using 
the abelian algebra we determine a generalized eigenvector basis in 
which the operators reduce to Jordan block form. We also define a 
transformation matrix between this basis and the configuration basis, 
and express the steady state in the latter. This procedure is 
illustrated by explicitly working out the case of a one dimensional 
Manna model. In this special case, we can show that each Jordan block is 
at most of dimension $2$. We determine the numerically exact steady 
state of the model for systems of size up to $12$ and determine the 
asymptotic density profile by extrapolating the results.

This chapter is organized as follows: In section \ref{sec:ch5.2}, we define the model 
precisely. In section \ref{sec:ch5.2.5} we recapitulate the algebra of
addition operators for DASM, and use it to determine the steady state.
For the stochastic models same definition for the addition operators
does not work and need to be redefined. We do this in section
\ref{sec:ch5.3}, and discuss their algebra. Calculation of the eigenvalues and the 
Jordan block structure of the addition operators are given in section
\ref{sec:ch5.4}. 
The transformation matrix between the generalized eigenvector basis 
and the configuration basis is determined in section \ref{sec:ch5.5}, and is used to 
determine the steady state vector in the configuration basis in section \ref{sec:ch5.6}. The exact numerical determination of the steady state is discussed 
in section \ref{sec:ch5.7} with some concluding remarks in section \ref{sec:ch5.8}.
 
\section{The Model}\label{sec:ch5.2}
 
 We define a generalized Manna model on a graph of $N$ sites with a 
non-negative integer height variable $z_i$ defined at each site $i$. Let 
the threshold height at $i$ be $z_i^c$, and the site is unstable if $z_i 
\geq z_i^c$.  If the system is stable, a sand grain is added at a 
randomly chosen site which increases the height by $1$.  For each site 
$i$, there is a set of $\alpha_i^{max}$ lists $E_{\alpha,i}$ with 
$\alpha=1,2,\cdots,\alpha_i^{max}$.  If a site is unstable, it relaxes 
by the following toppling rule: we decrease its height by $z_{i}^c$. 
Then, with probability $p_{\alpha,i}$, we select the list 
$E_{\alpha,i}$, independent of any previous selections, and then add one 
grain to each site in that list. If a site occurs more than once in the 
list, we add that many grains to that site.

 Toppling at a site can make other sites unstable and they topple in 
their turn, until all the lattice sites are stable. Using an argument
given in \cite{sasm}, it can be easily shown that the above toppling
rule is abelian. Then, it follows from the 
abelian property that the probabilities of different final 
stable configurations are independent of the order in which different 
unstable sites are toppled.

We illustrate these rules with some examples below.
\begin{itemize}
\item
\textbf{Model A} (The one dimensional Manna model): The graph is $L$
sites on a line and $z_i^c= 2$, for all sites. On toppling each
grain is transfered to its neighbors with equal probability. Hence
we have $\alpha_i^{max}=3$, for all $i$, with $E_{1,i}=\{i-1, i-1\}$, 
$E_{2, i}=\{i-1,
i+1\}$, and $E_{3, i}=\{i+1, i+1\}$ and
$p_{1,i}=p_{3,i}=1/4$ and $p_{2,i}=1/2$. Also grains can move out of the
system  if  toppling occurs  at a boundary site.
\item
\textbf{Model B} (The one dimensional dissipative Manna model): Same
as model A except that on toppling a grain
can move out of the system with probability $\epsilon$. Then
$\alpha_i^{max}=6$ and the lists of neighbors $E_1=\{i-1,i-1\}$,
$E_2=\{i-1,i+1\}$, $E_3=\{i+1,i+1\}$, $E_4=\{i-1\}$, $E_5=\{i+1\}$ and
$E_6=\Phi$, where $\Phi$ is an empty set. The corresponding probabilities are
$p_{1,i}=p_{3,i}=(1-\epsilon)^2/4$, $p_{2,i}=(1-\epsilon)^2/2$,
$p_{4,i}=p_{5,i}=\epsilon(1-\epsilon)$ and $p_{6,i}=\epsilon^2$.

In this case, one can use periodic boundary conditions, as there is 
dissipation at all sites. the steady state is critical only in the limit 
$\epsilon \rightarrow 0$.
For the models A and B, it is easy to see that all stable configurations 
occur in the steady state with non-zero probability. We can also define 
stochastic models where the recurrent configurations form only an 
exponentially small fraction of all stable configurations. An example of 
this type is 

\item
\textbf{Model C}:
The graph is a square lattice with $N$ sites and
$z_{\mathbf{i}}^c=2$. Under
toppling, with equal probability two particles are transfered to
either horizontal or vertical neighbors.  Hence $\alpha_i^{max}=2$ with
$E_{1,\mathbf{i}}=\{\mathbf{i}+\mathbf{e}_x,\mathbf{i}-\mathbf{e}_x\}$
and $E_{2,\mathbf{i}}=\{\mathbf{i}+\mathbf{e}_y,\mathbf{i}-\mathbf{e}_y\}$ with
$p_{1,\mathbf{i}}=p_{2,\mathbf{i}}=1/2$.
\end{itemize}

In the following we will mostly confine ourselves to Model A.
Extensions to other cases present no special difficulties.

\section{Determination of the steady state for a
DASM}\label{sec:ch5.2.5}
Before we carry out the analysis of the SASM, we recapitulate how the
steady state for the DASM, defined in section \ref{sec:dasm}, can be determined using the operator algebra
of addition operators \cite{dharphysica06}.
Let us denote the space of stable states (see section
\ref{sec:dasm}) in the DASM as $\Gamma$ spanned by
$\Omega=\prod_{i=1}^{N}z_{i}^{c}$ basis vectors labeled by $C$. We define $P(C,t)$ as the probability of
finding the system in the basis $C$ at time $t$. The time is in
driving steps, \textit{i.e.}, it increases by one when a grain is
added and the system is fully relaxed. To each set
$\{P(C,t)\}$, we associate a vector $|P(t)\rangle$ belonging to the
vector space $\Gamma$, and write 
\begin{equation}
  |P(t)\rangle = \sum_C P(C,t)|C\rangle.
  \label{P(t)}
\end{equation}
This defines a state of the system at time $t$.
Recall that, in section \ref{sec:dasm}, we have defined the particle addition
operators $\textbf{a}_i$, for all $i$, which correspond to adding a sand grain at site $i$
when the system is in state configuration $C$, and relaxing it until a stable
configuration is reached. Thus, these are linear operators which act
on the vector space $\Gamma$ and maps a configuration $C$, in it, to another
configuration $a_{i}C$, which is reached by the avalanche.

The time-evolution of the system is Markovian \cite{vkampen} and the evolution
operator $\mathbf{W}$ is defined by the master equation
\begin{equation}
|P\left( t+1 \right)\rangle =\mathbf{W}|P\left(t\right)\rangle,
\end{equation}
We can write the time-evolution operator in terms of the addition
operators as
\begin{equation}
\mathbf{W}=\frac{1}{L}\sum_{i}\mathbf{a}_{i},
\end{equation}
where $L$ is the number of sites on the lattice.
To solve the time evolution, in general, we have to diagonalize the
evolution operator $\mathbf{W}$. Now, we have shown in section \ref{sec:dasm},
that the addition operators commute with each other. Then all
the addition operators $\left\{ \mathbf{a}_{i} \right\}$ and hence, also $\mathbf{W}$ have a common set
of eigenvectors. Let $|\psi\rangle$ be one such simultaneous eigenvector of
the operators $\left\{ \mathbf{a}_{i} \right\}$, with eigenvalues
$\left\{e^{i\psi_{i}}\right\}$, respectively. Then
\begin{equation}
\mathbf{a}_{i}|\psi\rangle=e^{i\psi_{i}}|\psi\rangle.
\end{equation}

Recall the definition of the toppling matrix $\Delta_{i,j}$,
introduced in section \ref{sec:dasm}, for the DASM. Then, from the toppling
rule, one can easily show, that
\begin{equation}
\mathbf{a}_{i}^{\Delta_{i,i}}=\prod_{j\ne
i}\mathbf{a}_{j}^{-\Delta_{i,j}}.
\label{eq:dasmalgbr}
\end{equation}
Also, recall, we have shown in section \ref{sec:dasm}, that within the
recurrent state space, each addition operator has an inverse. Then,
the above relation can be written as
\begin{equation}
\prod_{j}\mathbf{a}_{j}^{\Delta_{i,j}}=\mathbf{1}, \textrm{  for all
}i.
\end{equation}
Applying the LHS to the eigenvector $|\psi\rangle$ gives $\exp\left(
i\sum_{j}\Delta_{i,j}\psi_{j} \right)=1$, for every $i$, so that
$\sum_{j}\Delta_{i,j}\psi_{j}=2\pi m_{i}$, where $m_{i}$'s are
arbitrary integers. Then inverting,
\begin{equation}
\psi_{j}=2\pi\sum_{i}\left[\Delta^{-1}\right]_{j,i}m_{i},
\end{equation}
where $\Delta^{-1}$ is the inverse of $\Delta$.

The particular eigenstate $|0\rangle$, corresponding to
$\psi_{j}=0$ for all $j$, is invariant under the action of all the
$\mathbf{a}$'s, \textit{i.e.}, $\mathbf{a}_{i}|0\rangle=|0\rangle$.
Thus this must be the stationary state of the system.

\section{Algebra of the addition operators for SASM}\label{sec:ch5.3}
We use the same notations, as in the last section, but this time for
a SASM. So, $\Gamma$ is the space of stable states and
$C\equiv\left\{ z_{i} \right\}$ is a stable height configuration
constituting a complete set of basis vectors.

For stochastic toppling rules, the 
resulting state from the action of $\mathbf{a}_{i}$ on a basis vector
$C$, is not necessarily another basis vector, but a linear combination of
them. So, the addition operators have to be redefined. If the resulting configuration is $C'$
with probability $P_i(C'|C)$, we define
\begin{equation}
  \mathbf{a}_i|C\rangle = \sum_{C'} P_i(C'|C) |C'\rangle,
  \label{a_i}
\end{equation}
for all $C$. Note that the action of any of these operators on a given
configuration gives a unique probability state vector. 

Eq. (\ref{a_i}) is a formal definition of the operators 
$\{\mathbf{a}_i\}$. One can think of these as $\Omega \times \Omega$ 
matrices, but, unlike the DASM, it is quite non-trivial to actually determine
the matrix elements $P_i(C'|C)$ explicitly from the toppling rules. This 
is because of the non-zero probability of an arbitrary large 
number of toppling before a steady state is reached.

For an example, consider the avalanches in model A for system of size $L=3$.
Consider the $2^3$ stable configurations as the basis vectors 
and denote them by their height values $|z_1, z_2, z_3 \rangle$.
The action of $\textbf{a}_2$ on $|0,1,0\rangle$ will
generate a unstable state $|0,2,0\rangle$. Using the toppling rules
we can write the following set of equations for three unstable states
\begin{eqnarray}
  |0, 2, 0\rangle &=& \frac{1}{4}|2, 0, 0\rangle + \frac{1}{2}|1, 0,
1\rangle + \frac{1}{4}|0, 0, 2\rangle, \nonumber \\
  |2, 0, 0\rangle &=& \frac{1}{4}|0, 2, 0\rangle + \frac{1}{2}|0, 1,
0\rangle + \frac{1}{4}|0, 0, 0\rangle, \nonumber \\
  |0, 0, 2\rangle &=& \frac{1}{4}|0, 2, 0\rangle + \frac{1}{2}|0, 1,
0\rangle + \frac{1}{4}|0, 0, 0\rangle.
  \label{|0,2,0}
\end{eqnarray}
We see that there is a nonzero probability that the avalanche
can continue for more than $s$ toppling, for any finite $s$. e.g.  in 
the sequence 
$|0,2,0 \rangle \rightarrow |2,0,0\rangle \rightarrow |0,2,0\rangle \cdots$. 
Thus straight forward application of the relaxation rules does not result 
in a finite procedure to determine the unstable vector $|0,2,0\rangle$
in terms of the stable configurations. Instead, we have to  write Eq. 
(\ref{|0,2,0})
as a matrix equation
\begin{align}
 \textbf{M} 
  \begin{bmatrix}
    |0,2,0\rangle \\
    |2,0,0\rangle \\
    |0,0,2\rangle
 \end{bmatrix}
  =
  \begin{bmatrix}
    |1,0,1\rangle \\
    |0,1,0\rangle \\
    |0,0,0\rangle
 \end{bmatrix},
\end{align}
and then invert it.
More generally, the determination of $P(C'|C)$ involves working in
a larger space of unstable configurations. 

For example in model A, there are $2^L$  stable configurations, 
where each site has $0$ or $1$ particle. Total number of particles is at 
most $L$. On adding one particle, the  number of 
particles can become $L+1$, where initially, only one site will have 
height $2$. However, it is easy to verify that by toppling one can 
generate configurations  where the number of particles at a site is much 
greater than $2$. In fact, all the $L+1$ particles could be at the same 
site. Then the total number of stable and unstable configurations  
$\Omega'$ is the number of 
ways one can distribute $L+1$ or less particles on $L$ sites. It is easily 
seen that $\Omega'$ varies as $4^L$, and one
needs to invert a matrix of size $\Omega' \times \Omega'$.

There are models, generally known as the restricted sandpile models
\cite{restr1,restr2,restr3}, where the toppling rules ensure that the heights do
not exceed a fixed value. For these, the space of allowed configurations
is much smaller. However, the height restriction makes the model
non-abelian.

In this chapter we will use the operator algebra
to obtain an efficient method to determine the probabilities $P(C'|C)$
explicitly which requires inverting a matrix only of size $2^L \times 2^L$.
It has been shown \cite{sasm} that the addition operators for different sites
commute i.e.
\begin{equation}
  [\mathbf{a}_i, \mathbf{a}_j] =0 \rm{,~ ~ ~ for ~ all ~} i,j.
\end{equation}
The proof uses the fact that any stochastic toppling event can be
simulated by a pseudo-random generator. This essentially makes the
toppling deterministic, which has the abelian property. 

However, unlike the DASM, the inverse operators $\{\mathbf{a}^{-1}_i\}$
for SASM need not exist, even if we restrict ourselves to the set of 
recurrent configurations. This is because among the recurrent
states, one can have two different
initial probability vectors that yield the same resultant vector.
This makes the determination of the matrix form of the operators
difficult for this model.

Apart from the abelian property, the operators also satisfy a set 
of algebraic equations, like the Eq. \eqref{eq:dasmalgbr} in DASM. For simplicity of presentation, now on we
consider $z_i^c=z_c$ and $p_{\alpha,i}=p_{\alpha}$ for all sites. Then
consecutive addition of $z_c$ grains at a site 
ensures that the site will topple once and transfers $z_c$ grains to
its neighbors, irrespective of the initial height. Then
the operators obey the following equation
\begin{equation}
  \mathbf{a}_i^{z_c} =
\sum_{\alpha}p_{\alpha}\mathbf{a}^{E_{\alpha, i}} \rm {~for~
}1\le i\le N, 
\end{equation}
where we have used the notation $\mathbf{a}^{E}=\displaystyle
\prod_{x\epsilon E}\mathbf{a}_{x}$ for any list $E$, and
\begin{equation}
  \mathbf{a}_i = \mathbf{1},
  \label{a_0}
\end{equation}
for sites $i$ outside the lattice. In particular for the examples
in section \ref{sec:ch5.3}, these equations are as follows
\begin{eqnarray}
\mathbf{a}_i^2&=&\frac{1}{4}(\mathbf{a}_{i-1}+\mathbf{a}_{i+1})^2
\rm{~~~~~~~~~~~~~~~~~~~~~~~for~Model~A,} \label{modA} \\
\mathbf{a}_i^2&=&[\frac{1-\epsilon}{2}\mathbf{a}_{i-1}+\frac{1-\epsilon}{2}\mathbf{a}_{i+1}+\epsilon
\mathbf{1}]^2 \rm{~~~~~for~Model~B,~and~} \\
\mathbf{a}_\mathbf{i}^2&=&\frac{1}{2}(\mathbf{a}_{\mathbf{i}-\mathbf{e}_x}\mathbf{a}_{\mathbf{i}+\mathbf{e}_x}+\mathbf{a}_{\mathbf{i}-\mathbf{e}_y}\mathbf{a}_{\mathbf{i}+\mathbf{e}_y})\rm{~~~~~~~~for~Model~C.}
\end{eqnarray}

\section{Jordan Block structure of the addition operators}\label{sec:ch5.4}
In general the matrices $\{\mathbf{a}_i\}$ need not be diagonalizable. 
However, using the abelian property, we can construct a common set of
generalized eigenvectors for all the operators $\{a_i\}$ such that in this basis
the matrices simultaneously reduce to Jordan block form. These 
generalized eigenvectors split the vector space $\Gamma$ into
disjoint subspaces, each corresponding to distinct set of eigenvalues.
\begin{lemma}
There will be at least one common eigenvector in
each subspace, for all the addition operators. 
\end{lemma}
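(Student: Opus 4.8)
The plan is to show that in each of the disjoint subspaces into which the generalized eigenvectors split $\Gamma$, the family $\{\mathbf{a}_i\}$ has a genuine (not merely generalized) common eigenvector. First I would fix one such subspace $\Gamma_{\lambda}$, where the operators act with a single common tuple of eigenvalues $\{a_i\}$ (the subspace being, by construction, the simultaneous generalized eigenspace: the intersection of the kernels of suitable powers of $(\mathbf{a}_i - a_i \mathbf{1})$). On $\Gamma_\lambda$ write $\mathbf{a}_i = a_i \mathbf{1} + \mathbf{n}_i$, where each $\mathbf{n}_i$ is nilpotent. Because the $\mathbf{a}_i$ commute, the $\mathbf{n}_i$ commute with each other. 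The claim ``there is a common eigenvector'' is then exactly the claim that $\bigcap_i \ker \mathbf{n}_i \neq \{0\}$ inside $\Gamma_\lambda$.

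The key step is a standard linear-algebra fact about commuting nilpotent operators: a finite commuting family of nilpotent operators on a nonzero finite-dimensional space has a nonzero common kernel vector. I would prove this by induction on the number of operators. For a single nilpotent $\mathbf{n}_1 \neq 0$ on a nonzero space, $\ker \mathbf{n}_1 \neq \{0\}$ trivially (take $\mathbf{n}_1^{k-1} v$ for any $v$ with $\mathbf{n}_1^{k-1} v \neq 0$, $\mathbf{n}_1^k = 0$). For the inductive step, suppose $V_0 := \bigcap_{i=1}^{m-1}\ker \mathbf{n}_i \neq \{0\}$. Since $\mathbf{n}_m$ commutes with each $\mathbf{n}_i$, $i<m$, it maps $V_0$ into itself: if $\mathbf{n}_i v = 0$ then $\mathbf{n}_i (\mathbf{n}_m v) = \mathbf{n}_m(\mathbf{n}_i v) = 0$. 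So $\mathbf{n}_m|_{V_0}$ is a nilpotent operator on the nonzero space $V_0$, hence has nonzero kernel there, and any such kernel vector lies in $\bigcap_{i=1}^{m}\ker \mathbf{n}_i$. Applying this with the finite family $\{\mathbf{n}_i\}$ restricted to $\Gamma_\lambda$ yields a nonzero $v \in \Gamma_\lambda$ with $\mathbf{n}_i v = 0$ for all $i$, i.e. $\mathbf{a}_i v = a_i v$ for all $i$; this $v$ is the desired common eigenvector.

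The only genuinely model-specific input is that the generalized-eigenvector decomposition into such subspaces $\Gamma_\lambda$ actually exists — but this is already asserted in the surrounding text (the $\{\mathbf{a}_i\}$ commute, $\Gamma$ is finite-dimensional, so the space decomposes into simultaneous generalized eigenspaces by the primary decomposition applied to the commuting family), so I would simply invoke it. I expect no serious obstacle here: the statement is essentially a packaging of the primary decomposition plus the commuting-nilpotents lemma, and the main thing to be careful about is phrasing the induction cleanly and making sure the restriction argument (that $\mathbf{n}_m$ preserves $V_0$) uses commutativity and nothing else. A minor point worth a remark: the eigenvalues $\{a_i\}$ attached to $\Gamma_\lambda$ are precisely those constrained by the algebraic relations $a_i^{z_c} = \sum_\alpha p_\alpha a^{E_{\alpha,i}}$ with $a_i = 1$ outside the lattice, which is why finitely many such subspaces occur, but that enumeration is the subject of the next section and is not needed for this lemma.
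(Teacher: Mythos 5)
Your proof is correct. It is close in spirit to the paper's argument but not identical, and it is actually the more complete of the two. The paper proceeds operator by operator: it takes the generalized eigenspace of $\mathbf{a}_1$ for a fixed eigenvalue, uses commutativity to show $\mathbf{a}_2$ preserves it, restricts, and iterates, ending with a nonzero simultaneous generalized eigenspace --- and then simply asserts that this yields simultaneous eigenvectors. That last passage from \emph{generalized} eigenvectors to \emph{genuine} ones is exactly the step the paper glosses over, and it is exactly what your commuting-nilpotents lemma supplies: on the simultaneous generalized eigenspace $\Gamma_\lambda$ each $\mathbf{n}_i=\mathbf{a}_i-a_i\mathbf{1}$ is nilpotent, the $\mathbf{n}_i$ commute, and your induction (kernel of the first is nonzero; each subsequent $\mathbf{n}_m$ preserves the running intersection $V_0$ by commutativity and, being nilpotent on a nonzero space, has nonzero kernel there) produces a nonzero common kernel vector, i.e.\ a true common eigenvector \emph{inside} $\Gamma_\lambda$. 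This localization matters, since the lemma as stated asks for a common eigenvector in \emph{each} subspace, which your argument delivers directly. The only point you lean on without proof is the existence of the simultaneous generalized-eigenspace decomposition for a commuting family on a finite-dimensional space, but that is standard primary-decomposition material and the surrounding text already presupposes it, so invoking it is fair.
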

\begin{proof}
Consider one of the operators, say $\mathbf{a}_1$.
Let $\Gamma_1$
be the subspace of $\Gamma$ spanned by the (right) generalized eigenvectors of 
$\mathbf{a}_1$ corresponding to the eigenvalue $a_1$. There is at
least one such generalized eigenvector, so $\Gamma_1$ is non-null. We pick one
of the other addition operators, say $\mathbf{a}_2$. From the fact that
$\mathbf{a}_2$ commutes with $\mathbf{a}_1$, it immediately follows
that $\mathbf{a}_2$ acting on any vector in the subspace $\Gamma_1$
leaves it within the same subspace. Diagonalizing $\mathbf{a}_2$
within this subspace, we construct a possibly smaller but still
non-null subspace $\Gamma_2$ which is spanned by simultaneous 
generalized eigenvectors of $\mathbf{a}_1$ and $\mathbf{a}_2$ with eigenvalues
$a_1$ and $a_2$. Repeating this argument with the other operators,
one can construct vectors which are simultaneous eigenvectors of 
all the $\{\mathbf{a}_i\}$. \qed
\end{proof}

Let $|\psi \rangle$ be such an eigenvector, with
\begin{equation}
  \mathbf{a}_i|\psi\rangle = a_i |\psi \rangle \rm{,~ ~ ~ for ~} \rm{~
}\rm{~}1 \le i \le N. 
\end{equation}
Then from Eq.($6$)the eigenvalues satisfy the following set of equations
\begin{equation}
  a_i^{z_c} = \sum_{\alpha}p_{\alpha}a^{E_{\alpha,i}}\rm{~ ~ ~ for
~ } \rm{~ }\rm{~}1 \le i \le N,
\end{equation}
where we have used the notation
$a^{E}=\displaystyle\prod_{x\epsilon E}a_{x}$, for any list $E$.

Rather than work with this general case, we will consider the special
case in model A for simplicity. No extra complications occur in
the more general case. Then, from Eq.(\ref{modA}), the corresponding
eigenvalue equation is
\begin{equation}
  a_i^2 = \frac{1}{4}(a_{i-1}+a_{i+1})^2 \rm{,~ ~ ~ for ~ } \rm{~ }\rm{~}1 \le i \le L 
  \label{eveqn}
\end{equation}
These are L coupled quadratic equations in $L$ complex variables $\{a_i\}$.
We can reduce them to $L$ linear equations by taking square root
\begin{equation}
  \eta_ia_i=\frac{1}{2}(a_{i-1}+a_{i+1}),
  \label{eta}
\end{equation}
where $\eta_i=\pm1$. The Eq. (\ref{a_0}) sets the values for the 
eigenvalues of $\mathbf{a}_0$ and $\mathbf{a}_{L+1}$ which are 
\begin{equation}
  a_0=a_{L+1}=1.
  \label{a_1}
\end{equation}
There are $2^L$ different choices for the set of $L$
different $\eta$'s and for each such choice, we get a set of
eigenvalues $\{a_i\}$. In general, there will be
degenerate sets of eigenvalues and the degeneracy arises
if one of the $a_i$ is zero. Using the triangular inequality in \eqref{eta}
we get
\begin{equation}
  2|a_i|\le|a_{i-1}|+|a_{i+1}|,
\end{equation}
i.e. $|a_i|$ are convex functions of discrete variables $i$. Then,
given the boundary condition in Eq. (\ref{a_1}), there could
at most be one $a_i=0$ in the solution for a given $\{\eta_i\}$,
which means that each eigenvalue set can be at most doubly degenerate.

%
Finding the number of  degeneracies of solutions is interesting but
difficult in general. We show that 
\begin{lemma}
For $L=3$ (mod $4$) the number of degenerate sets of eigenvalues $\ge2^{(L+1)/2}$.
\end{lemma}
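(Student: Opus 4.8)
The plan is to prove the bound by exhibiting explicitly a large family of sign-choices $\{\eta_i\}$ whose solution of $\eta_ia_i=\tfrac12(a_{i-1}+a_{i+1})$, $a_0=a_{L+1}=1$, has a vanishing coordinate: by the discussion preceding the lemma this is exactly the condition for the corresponding eigenvalue set to be (doubly) degenerate, the extra sign-choice being the free value of $\eta$ at the zero. Write $L=4t+3$, so the central site is at $k=(L+1)/2=2t+2$, and the two arms $\{1,\dots,2t+1\}$ and $\{2t+3,\dots,L\}$ each contain an \emph{odd} number $2t+1$ of sites — this parity is precisely where the hypothesis $L\equiv 3\pmod 4$ will be used. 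I would impose $a_{2t+2}=0$ and solve the recurrence separately on the two arms, then match them at the centre.

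First I would treat each arm as a two-point boundary value problem: the left arm with $a_0=1$, $a_{2t+2}=0$, and the right arm with $a_{2t+2}=0$, $a_{L+1}=1$. For a fixed sign-choice on an arm the solution is unique and well defined, because the far boundary value is an affine function of the one free initial datum whose leading coefficient never vanishes; this follows from the convexity inequality $2|a_i|\le|a_{i-1}|+|a_{i+1}|$ (already used in the text): it forces the modulus of the solution of the initial value problem with data $|a|=0$ and $1$ at two consecutive sites to have non-decreasing increments, hence to grow at least linearly and never return to $0$. The same inequality, together with the fact that two consecutive zeros force the whole arm (and then $a_0$) to vanish, shows that the arm BVP with one endpoint $0$ and the other nonzero has \emph{no} interior zero; in particular $a_{2t+1}\neq 0$ and $a_{2t+3}\neq 0$, and the arm's sign-choice is recovered from its solution by $\eta_i=(a_{i-1}+a_{i+1})/(2a_i)$.

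The crux is the matching condition at the centre, which is just $a_{2t+1}+a_{2t+3}=0$. Let $S$ be the set of values of $a_{2t+1}$ obtained as the left sign-choice ranges over $\{\pm1\}^{2t+1}$. The reflection $i\mapsto L+1-i$ carries the right arm BVP onto the left arm BVP, so the attainable values of $a_{2t+3}$ also form $S$; and negating all $2t+1$ signs on an arm sends its solution $a_j$ to $(-1)^{\,j-(2t+2)}a_j$, a transformation compatible with the recurrence that leaves the far boundary value (at even distance $2t+2$ from the centre, because $2t+1$ is odd) equal to $1$ while negating the value adjacent to the centre. Hence $S=-S$. Consequently, for every left sign-choice, with value $\ell=a_{2t+1}\in S$, there is a right sign-choice realising $a_{2t+3}=-\ell\in S$, and, with $\eta_{2t+2}$ arbitrary, this produces a genuine solution having $a_{2t+2}=0$, i.e. a degenerate eigenvalue set. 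Since the left solution is nonzero at every site it determines the left sign-choice, so distinct left sign-choices give distinct solutions; counting the two admissible values of $\eta_{2t+2}$ in each case yields at least $2\cdot 2^{2t+1}=2^{(L+1)/2}$ sign-choices that lead to a degenerate eigenvalue set (and at least $2^{(L-1)/2}$ distinct such sets). I expect the main technical work to be the bookkeeping around the arm BVPs — proving the transfer coefficient never vanishes and excluding spurious interior zeros via the convexity inequality with the correct base cases — while the one genuinely structural point, that the parity of the arm length forces $L\equiv 3\pmod 4$ (for $L\equiv 1\pmod 4$ the far boundary value would flip to $-1$ and $S$ need not be symmetric), is exactly what makes the symmetry argument close.
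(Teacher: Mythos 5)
Your proof is correct and follows essentially the same route as the paper's: impose $a_{(L+1)/2}=0$, solve the left arm as a boundary-value problem for each of the $2^{(L-1)/2}$ sign choices, extend to the right arm by the sign-twisted reflection $a_i=(-1)^i a_{L+1-i}$ (which is where $L\equiv 3\pmod 4$ enters), and count the free central sign $\eta_{(L+1)/2}$ to reach $2^{(L+1)/2}$. You additionally supply two details the paper leaves implicit --- solvability of the arm BVP via the convexity inequality and injectivity of the map from sign choices to solutions --- both of which are argued correctly.
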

\begin{proof}
Consider the system of length $L=4m+3$, with $m$ being a non negative
integer. For any given set $\{\eta_i\}$, $i=1$ to $2m+2$, it is
possible to construct a solution $\{b_i\}$ of Eq. (\ref{eta}) with
$i\le2m+2$ which satisfies $b_0=1$ and $b_{2m+2}=0$.
Clearly, from  Eq.(\ref{eta}), if we have the solution $\{a_i\}$ 
corresponding to a particular set $\{\eta_j\}$, one can construct the 
solution $\{a_i'\}$ corresponding to $\{ \eta_j'= 
-\eta_j\}$  using 
$a_j' = (-1)^j a_j$. Using this symmetry we
extend $\{b_i\}$ ($i = 1 $ to $(L+1)/2$) to form a set $\{a_i\}$ for 
$i = 1$ to $L$  as follows:
\begin{eqnarray}
a_{i}&=&b_{i} \rm{~~~~~~~~~~~~~~~~~for~}i\le 2m+2, \\
     &=&(-1)^{i}b_{L+1-i} \rm{~~~for~}i>2m+2. 
\end{eqnarray}
This is a solution of Eq.(\ref{eta}) for the set $\{\eta'_i\}$ with 
\begin{eqnarray}
\eta_{i}'&=&\eta_{i} \rm{~~~~~~~~~~~~~~~~~for~} i\le 2m+2, \\
         &=&-\eta_{L+1-i} \rm{~~~~~~~~for~}i>2m+2.
\end{eqnarray}
and this solution $\{a_i\}$ satisfies the boundary conditions
$a_0=1$, $a_{L+1}=1$, and $a_{2m+2}=0$ (Fig.$1$). There are
$2^{2m+2}$ such solutions possible corresponding to all possible sets
of $\{\eta_i'\}$, and this gives the lower bound for the number of
degenerate solutions.\qed
\end{proof}

\begin{figure}
   \begin{center}
    \includegraphics[scale=0.25,angle=0]{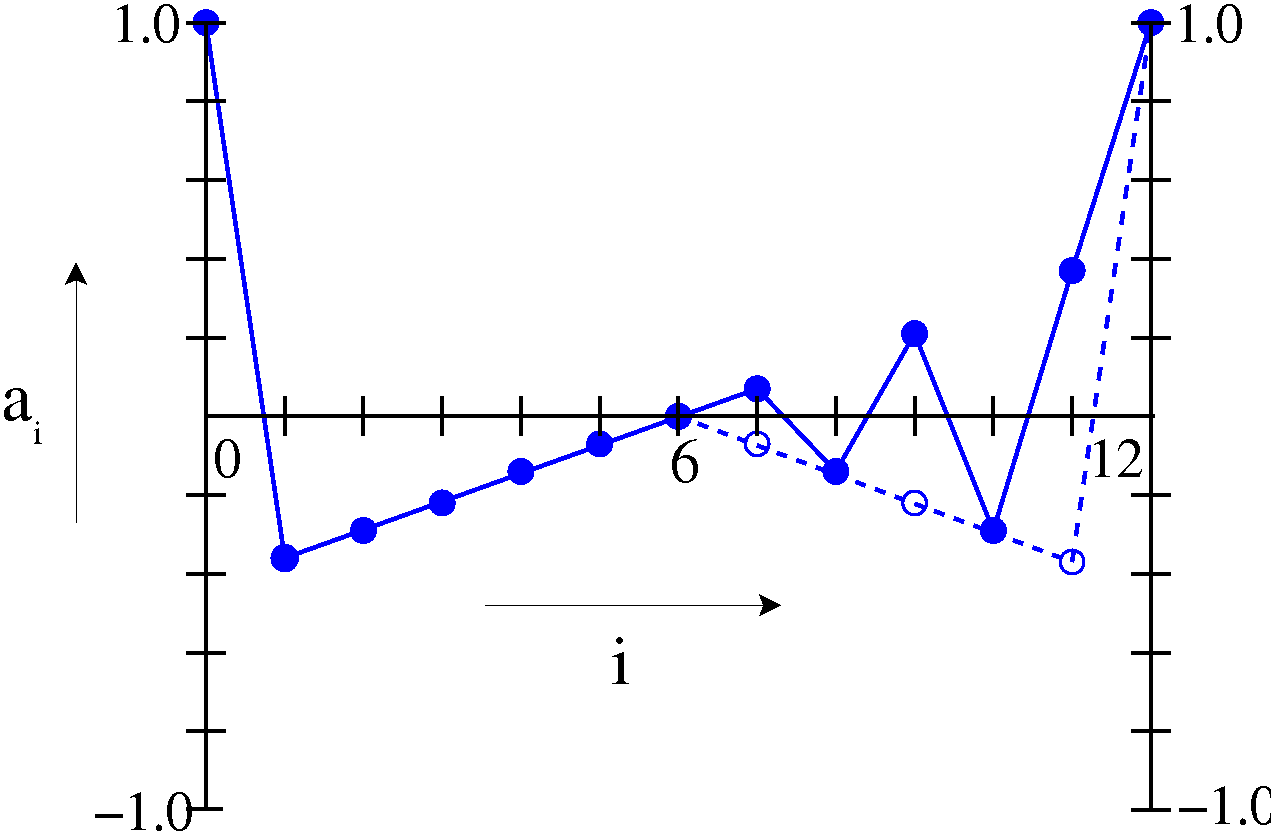}
    \caption{The filled circles denote a dependence of the eigenvalues
$a_i$ on $i$ for $L=11$ with $a_6=0$. The unfilled circles show the
results obtained by an inversion of $a_i$, with $i\le6$, around
the central site. The eigenvalues on the right-half of the lattice are obtained by
multiplying $(-1)^i$ to the values of the unfilled circles.}
    \end{center}
    \label{figure1}
\end{figure}

A direct numerical calculation for $L\le20$ shows
that if $L\ne3$ (mod $4$), all $2^L$ sets of eigenvalues are
distinct. We present the degeneracies of the solutions in
Table.$1$. Calculation for simple choices of $\eta$ shows that
the degeneracies are possible only if $L=3$ (mod $4$).

For example,
consider $\eta_i=-1$ for $i=L$ and for the rest of the sites it is
$1$.  Then $a_i$ is of the form $a_i=1- \alpha i$, for all $i$. If we 
want this to be zero for $i=k$, we must have $\alpha = 1/k$.  Then, 
requiring that the equation (\ref{eta}) be satisfied at $i=L$, gives   
$3L=4k+1$, i.e. $L=3$ (mod $4$). Similarly for
the set with $\eta_{L-1}=-1$ and $1$ for rest of the sites imposes a
condition on length $3L=8k+1$, which is also a subset of $L=3$ (mod
$4$). Finding a general proof that degeneracies occur only if $L 
=3$ (mod $4$) remain an  open problem.

For each degenerate subspace there is a generalized eigenvector
linearly independent of the eigenvector corresponding to the
eigenvalue of the subspace. In general, let us denote them by
$|\{a_i\};n \rangle$, where $n=1$ for the eigenvector and $n=2$ for
the generalized eigenvector. For non-degenerate subspace $n$ can only
be $1$. The vectors satisfy the following equations
\begin{eqnarray}
  \mathbf{a}_i|\{a_j\};1\rangle &=& a_i|\{a_j\};1\rangle, \nonumber \\
  \mathbf{a}_i|\{a_j\};2\rangle &=& a_i|\{a_j\};2\rangle + \alpha_i
|\{a_j\};1\rangle,
  \label{gen}
\end{eqnarray}
where $\alpha$'s are complex numbers. Then using the Eq. (\ref{eta}) it can be shown
easily that $\alpha$'s satisfy the following equation
\begin{equation}
  \eta_i\alpha_i=\frac{1}{2}(\alpha_{i-1}+\alpha_{i+1}).
  \label{etaalpha}
\end{equation}
This is similar to the Eq. (\ref{eta}), except the boundary conditions
which are
\begin{equation}
  \alpha_0=\alpha_{L+1}=0.
\end{equation}
For a given set of $\{\eta_i\}$, these are $L$ simultaneous set of homogeneous 
linear equations. If
$\left\{ \alpha_{i} \right\}$ is a solution, then $\left\{ \lambda
\alpha_{i}\right\}$ is also a solution. Thus there are infinitely many possible
solutions, each corresponding to different choices of $\lambda$.
In order to get a single solution we choose $\alpha_i=1$ if $a_i=0$, without
loss of generality. This corresponds to choosing a particular 
normalization of the rank $2$ eigenvectors.
The solution of both the equations (\ref{eta}) and
(\ref{etaalpha}) can be easily obtained numerically. 
The generalized eigenvectors and the Jordan block
form of the addition operators for the
system of size $L=3$ are given in the appendix.
\begin{table}
  \begin{center}
    \begin{tabular}{|c||c||c|c|c|c|c|c|c|c|c|c|}
      \hline
      ~L~ & $g$ & $N_1$ & $N_2$ & $N_3$ & $N_4$ & $N_5$ & $N_6$ & $N_7$ &
$N_8$ & $N_9$ & $N_{10}$ \\
      \hline
      3 & 4 & 0 & 4 & & & & & & & &  \\
      \hline
      7 & 40 & 0 & 0 & 8 & 24 & & & & & & \\
      \hline
      11 & 136 & 0 & 0 & 0 & 8 & 0 & 120 & & & & \\
      \hline
      15 & 1304 & 0 & 0 &  0 & 4 & 32 & 48 & 288 & 560 & & \\
      \hline
      19 & 3024 & 0 & 0 & 0 & 0 & 0 & 8 & 0 & 288 & 0 & 2432 \\
      \hline
    \end{tabular}
    \caption{Degeneracies arise if one of the $a_i$ is
zero in a solution of
Eq.(\ref{eta}). In the table, $g$ denotes the total number of solutions with one of the $a_i=0$  i.e. the
total number of degenerate sets of solution. $N_i$ is the
number of solutions with the eigenvalue $a_i=0$.
Values for the other half of the system can be obtained using symmetry.}
  \end{center}
  \label{first}
\end{table}

\section{Matrix representation in the configuration basis}\label{sec:ch5.5}
Given the well-defined action of the addition operators on the
generalized eigenvectors it is
possible to define a transformation
matrix $\mathbf{M}$ between the  configuration basis and the generalized 
eigenvector basis. 
\begin{equation}
|\{z_i\}\rangle = \sum_{j}\mathbf{M}_{\{z_i\},j}|\psi_j\rangle,
\end{equation}
where $|\{z_i\}\rangle$ is the  basis vector of $\Gamma$  corresponding to 
the height configuration
$\{z_i\}$ and $|\psi_j\rangle$ is the $j$th generalized eigenvector. Let us
express the configuration $|\{0\}\rangle$, with all sites empty, as a 
linear combination of all the generalized eigenvectors.
\begin{equation}
  |\{0\}\rangle = \sum_{j}c_j|\psi_j\rangle,
\end{equation}
where $c_j$s are constants. Then all the stable configurations can be
obtained by adding grains at properly chosen sites in $|\{0\}\rangle$.
\begin{equation}
  |\{z_i\}\rangle = \prod_i\mathbf{a}_i^{z_i}|\{0\}\rangle =
\sum_jc_j\prod_i\mathbf{a}_i^{z_i}|\psi_j\rangle,
\end{equation}
and hence
\begin{equation}
\mathbf{M}_{\{z_i\},j}=\langle \{z_i\}|\prod_i\mathbf{a}_i^{z_i}|\psi_j\rangle.
\end{equation}

The action of the addition operators on the generalized eigenvectors, 
for example Eq.(\ref{gen}) for model A, would generate the elements of 
the matrix $\mathbf{M}$. 
Given $\mathbf{M}$, we can get the eigenvectors of $\mathbf{a}_i$, in
the configuration basis, in 
particular, the steady state vector, by the inverse transformation 
\begin{equation}
|\psi_j\rangle = \mathbf{M}^{-1} |\{z_i\}\rangle.
\end{equation}
The  addition operators in the  
configuration basis  are obtained using the similarity transformation 
$\mathbf{M}\mathbf{a}^J_i\mathbf{M}^{-1}$. An explicit form of 
$\mathbf{M}$ for model A of length $L=3$ is given in the appendix.

\section{Determination of the steady state vector}\label{sec:ch5.6}
Just like the DASM in section \ref{sec:ch5.2.5},
the time-evolution of the system is Markovian and the
evolution operator can be written in terms of the addition operators as
\begin{equation}
  \mathbf{W}=\frac{1}{L}\sum_i\mathbf{a}_i.
\end{equation}
Then the common eigenvector of all the addition operators
corresponding to eigenvalue $1$ is the steady
state vector of the system. The steady state vector can be determined
in the stable configuration basis using the matrix $\mathbf{M}^{-1}$.
For model A of length $L=3$ the steady state vector is 
\begin{eqnarray}
  |S\rangle = &&\frac{13}{392}|0, 0, 0\rangle + \frac{1}{16} |1, 0, 0\rangle 
+ \frac{47}{392} |0, 1, 0\rangle + \frac{3}{16} |1, 1, 0\rangle \nonumber \\
   &&+  \frac{1}{16} |0, 0, 1\rangle + \frac{13}{98} |1, 0, 1\rangle 
+ \frac{3}{16} |0, 1, 1\rangle + \frac{3}{14} |1, 1, 1\rangle,
\end{eqnarray}
where the stable configurations are denoted by $|z_1,z_2,z_3\rangle$ with
$z_i$ as the height of the $i$th site.
The amplitude of each term in the expansion is the probability
of finding the corresponding height configuration in the steady state.

\begin{SCfigure}
    \includegraphics[scale=0.70,angle=0]{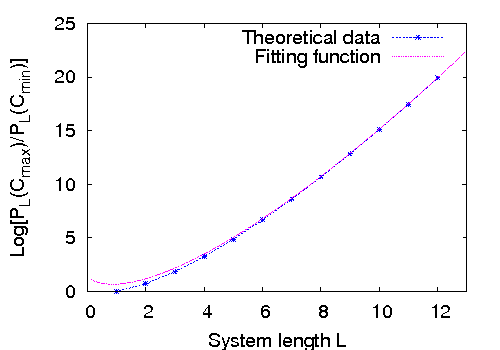}
    \caption{The ratio of the probability of the most probable
configuration $C_{max}$ (all occupied) and the least probable
configuration $C_{min}$ (all sites empty) plotted as a function of 
the system
length $L$. The fitting function $f(x)=a-bL+cL\log L$, with $a=1.50,
b=0.80$ and $c=0.94$.}
    \label{fig2}
\end{SCfigure}

\section{Numerical Results}\label{sec:ch5.7}
Here we describe the results of numerical calculations for the exact steady state of model A for
different system lengths and discuss its properties.
As shown in Eq. (\ref{eta}) and Eq. (\ref{etaalpha}) the
eigenvalues $\{a_i\}$ and the off-diagonal matrix elements
$\{\alpha_i\}$ form sets of linear equations
for a given set of $\{\eta_i\}$. We solve them by LU
decomposition method. Because of the tridiagonal structure
of the equations, only $\mathcal{O}(2^L)$ number of steps are required
to get the solution. The maximum number of steps ($\mathcal{O}(2^{3L})$)
are required for the inversion of the transformation matrix
$\mathbf{M}$. We have used the Gauss-Jordan elimination method
for the inversion. It is important to note that, the
maximum system length $L$, possible to treat by this method,
is determined by the limited memory of the computers,
and not by the computation time. Using desktop computers we were able
to determine $\mathbf{M}$ exactly for systems of size $L\le12$. 
\begin{figure}
 \begin{center}
    \includegraphics[scale=0.75,angle=0]{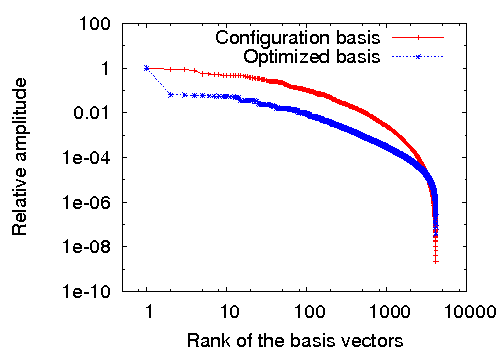}
    \caption{The amplitudes, normalized with its largest value, corresponding to the basis vectors
 in the steady state plotted
as a function of the rank of the basis vectors.
The vectors are arranged in decreasing orders of their
amplitudes. The plot is given for the configuration basis and the
optimized basis for model A of size $L=12$.}
    \label{fig3}
  \end{center}
\end{figure}

We note that as $L$ is increased, the second largest eigenvalue of
$\mathbf{W}$
tends to $1/2$. Thus, the gap between the largest and the next largest
eigenvalue of the relaxation matrix does not tend to zero. 
This gap measures the relaxation time of the system in terms of the
macro-time unit of interval between addition of grains. However, the
average duration of an avalanche measured in terms of micro-time unit
of duration of a single toppling event does diverge, as system size
increases.

An interesting question is the extent of  variation between 
probabilities of
different configurations in the steady state. In the one-dimensional
Oslo model, for a system of $L$ sites, the ratio of probabilities of
the most probable to the least probable configuration varies as
$\exp(L^3)$\cite{dharoslo}.
However in model A, we find that the ratio is not quit as large, and it
only varies approximately as $\exp(0.94L\log L)$ (Fig.\ref{fig2}) for 
large $L$. 

This suggests that the
exact steady state is not a product measure state. To check this we define a 
product basis $|\psi'\rangle =
\displaystyle\prod_i|\psi_i'\rangle$,
where $|\psi_i'\rangle$ could be any one of the two orthogonal vectors
\begin{eqnarray}
  |1'\rangle &=& \cos\phi_i|1\rangle + \sin\phi_i|0\rangle, \nonumber \\
  |0'\rangle &=& \sin\phi_i|1\rangle - \cos\phi_i|0\rangle,
\end{eqnarray}
with $\phi_i$ a real number. Then in this basis the steady state can
be written as
\begin{equation}
  |S\rangle = \sum_{\psi'} P(\psi')|\psi'\rangle.
\end{equation}
We choose $\{\phi_i\}$ so that the ratio between the amplitudes
of basis vectors with next-largest and largest 
amplitudes becomes as small as possible (this would become zero, 
if the state was a product measure state). 
In Fig.\ref{fig3}, we have plotted for system of 
size $L=12$, the relative amplitudes in both configuration
basis and the optimized product basis
as a function of the rank of the basis vectors with the vectors
arranged in decreasing orders of their amplitudes. In the
optimized basis the second highest probability is only $10$ times
smaller than the highest probability. This shows that the steady
state measure is not a product measure.

\begin{figure}
  \begin{center}
    \includegraphics[scale=0.75,angle=0]{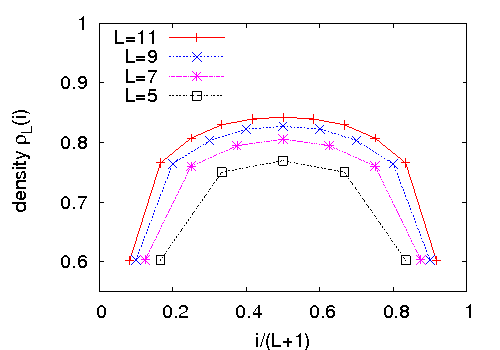}
    \caption{Average steady state density $\rho_{L}(i)$ at site $i$ for
model A of different length $L$.}
    \label{fig4}
  \end{center}
\end{figure}
\begin{table}
  \begin{center}
    \begin{tabular}{|c|c|c|}
      \hline
      $ ~ ~ ~ ~ ~ L ~ ~ ~ ~ ~ $ & $ ~ ~ ~ ~ ~ ~ \bar{\rho}_L ~ ~ ~ ~ ~ ~ $ & $ ~ ~ ~ ~ ~ ~ \rho_L(x_m) ~ ~ ~ ~ ~ ~ $ \\
      \hline
      2 & 0.583333 & 0.583333 \\
      \hline
      3 & 0.634354 & 0.709184 \\
      \hline
      4 & 0.669262 & 0.737000 \\
      \hline
      5 & 0.695210 & 0.769704 \\
      \hline
      6 & 0.715472 & 0.786491 \\
      \hline
      7 & 0.731879 & 0.805897 \\
      \hline
      8 & 0.745514 & 0.816009 \\
      \hline
      9 & 0.757080 & 0.827217 \\
      \hline
      10 & 0.767051 & 0.834600 \\
      \hline
      11 & 0.775760 & 0.842665 \\
      \hline
      12 & 0.783451 & 0.848054 \\
      \hline
    \end{tabular}
    \caption{The values of particle density in the steady state 
    for the model A of different length $L$. Here $\bar{\rho}_L$ denotes the
    steady state particle density averaged over all sites and $\rho_L(x_m)$ denotes the
    steady state particle density at the central site.}
  \end{center}
  \label{second}
\end{table}
\begin{table}
  \begin{center}
    \begin{tabular}{|c|c|c|c|}
      \hline
      ~L~ & $1/\bar{\rho}_\infty$ & $B$ & $\nu_\perp$ \\
      \hline
      \hline
      3 & ~ ~ 1.061 ~ ~ & ~ ~ 1.128 ~ ~ & ~ ~ 0.656 ~ ~ \\
      \hline
      4 & 1.049 & 1.132 & 0.641 \\
      \hline
      5 & 1.053 & 1.132 & 0.646 \\
      \hline
      6 & 1.049 & 1.131 & 0.640 \\
      \hline
      7 & 1.050 & 1.131 & 0.641 \\
      \hline
      8 & 1.049 & 1.130 & 0.639 \\
      \hline
      9 & 1.049 & 1.130 & 0.639 \\
      \hline
      10 & 1.049 & 1.130 & 0.639 \\
      \hline
      11 & 1.049 & 1.130 & 0.639 \\
      \hline
    \end{tabular}
    \caption{The sequential fit of the functional form in Eq. (\ref{rhobar}) to the
    data for average particle density for model A of different length $L$ given in Table $2$.}
  \end{center}
  \label{third}
\end{table}

The steady state density for different sites
are plotted in Fig. \ref{fig4} for different system sizes.
Amongst the different fitting forms that we tried,  the following 
functional form gives the best fit
\begin{equation}
  \frac{1}{\rho_L(x)} = \frac{1}{\bar{\rho}_\infty} + b[\frac{1}{(x+d)^{\nu_\perp}} + \frac{1}{(L+1-x+d)^{\nu_\perp}}],
\end{equation}
where $\bar{\rho}_\infty, b, \nu_\perp$ and $d$ are real numbers. 
Using this functional form the steady state particle density
averaged over all sites for system of size $L$ can be written as
\begin{equation}
  \frac{1}{\bar{\rho}_L} = \frac{1}{\bar{\rho}_\infty} + \frac{B}{(L+\delta)^{\nu_\perp}},
  \label{rhobar}
\end{equation}
where $B$ is a real number and $\bar{\rho}_\infty$
is the asymptotic value of the average particle density.
The exact value of $\bar{\rho}_L$ and the particle density
at the central site $\rho_L(x_m)$ are listed in the Table $2$
for different system sizes. 
The sequential fitting method is used to find the values of 
$\bar{\rho}_\infty$, $B$, $\nu_\perp$ and $\delta$ from these data. For a given choice
of $\delta$, these values are obtained numerically by solving the Eq. (\ref{rhobar})
for three consecutive lengths $L-1$, $L$ and $L+1$. Best
convergence of the values of $\bar{\rho}_\infty$, $B$ and $\nu_\perp$ are obtained for
$\delta=1.1$, which are tabulated in Table $3$.
The asymptotic value of the average particle density
converges to $\bar{\rho}_\infty =0.953$ which is close
to the more precise estimate $0.94885(7)$, from Monte 
Carlo simulations  \cite{dickman1}.

\section{Concluding remarks.}\label{sec:ch5.8}
For a general SASM with $N$ sites, the calculation of eigenvalues 
involves solving $N$ coupled polynomial equations in $N$ variables. This 
can be done in time polynomial in $\Omega$, the number of stable 
configurations of the model. These are then used to construct the 
transformation matrix ${\bf M}$ of size $\Omega \times \Omega$. Finally 
inverting the matrix ${\bf M}$ gives us the eigenvectors of the 
evolution operator, in particular the steady state.

Of course, to determine the steady state of any Markov chain on $\Omega$ 
states, we need to determine the eigenvectors of the evolution matrix of 
size $\Omega \times \Omega$. The point here is that the specification of 
the toppling rules does not directly specify the evolution matrix, and 
determining the matrix elements of the latter from the toppling rules is 
computationally very nontrivial. Using  the 
abelian property, we are able to tackle this problem.

For a generic model with some parameters, e.g. the model B, except for 
special symmetries, one does not expect degeneracies in eigenvalues to
occur 
for a generic value of the parameters.  For special values of the 
parameters, if there is a  non-trivial Jordan 
block structure of the evolution operator, it would  show up in the 
time-dependent correlation
functions of the model by the presence of terms of the type $ t \exp( 
-\lambda_j 
t)$, in addition to the usual sum of terms of the type $\exp( 
-\lambda_j 
t)$.

In particular we have
explicitly calculated the steady state for a specific model (model A
in section 2) of length L$\le12$. Extrapolating the results
we determined the asymptotic density profile in the steady state.
The power-law profile of deviations from the mean value near the ends
would be important for determining the avalanche exponents of the model 
\cite{lubeck4}. This remains an interesting open problem.

\appendix
\chapter{Solution of Laplace's equation on hexagonal
lattice\label{ap:laplace}}
We follow the calculation given in \cite{atkinson}.
Let $F_{r}$ be the solution of the Poisson equation
\begin{equation}
\sum_{r'}F_{r'}-mF_{r}=I_{r},
\end{equation}
defined on a lattice, where the sum is carried over the nearest
neighbors of the vertex $r$
and $m$ is the total number of them. For a hexagonal lattice $m=3$ and for a triangular lattice $m=6$.
Clearly a hexagonal lattice can be constructed from a triangular
lattice by a $\Delta-Y$
transformation as shown in the Fig.\ref{fig:dy}. Let $F_{r}^{hex}$ and
$F_{r}^{tri}$ are the solutions of the
Poisson equation with same $I_{r}$ defined at the common vertices and
$I_{r}=0$ for the additional vertices in the
hexagonal lattice. It is easy to check that $F_{r}^{hex}=3F_{r}^{tri}$ for the common vertices and for
the additional ones the solution is the average of its value at the neighboring sites (see Fig.\ref{fig:tran}).
We first detrmine $F^{tri}$ and then use the relation to determine $F^{hex}$.
\begin{SCfigure}
\includegraphics[width=8cm,angle=0]{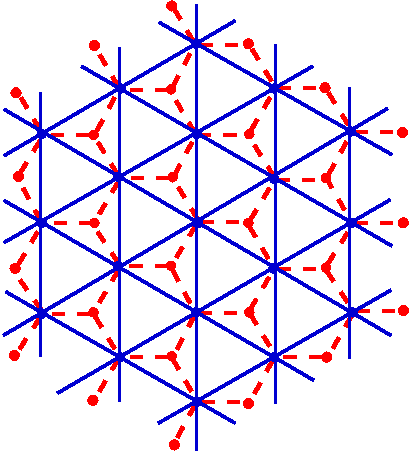}
\label{fig:dy}
\caption{A hexagonal lattice can be constructed from a triangular lattice by a $\Delta-Y$ transformation. Blue solid lines form a triangular lattice and the red dotted lines form a hexagonal lattice.}
\end{SCfigure}
\begin{figure}
\begin{center}
\includegraphics[width=8cm,angle=0]{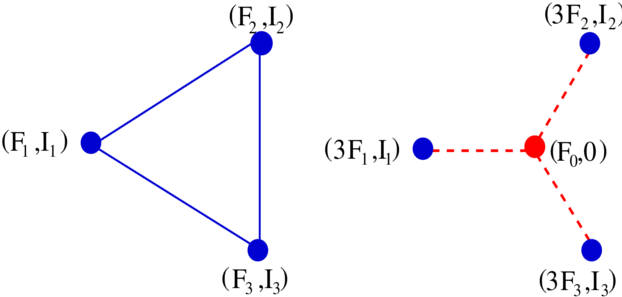}
\label{fig:tran}
\caption{If $F_{r}$ is a solution of the Poisson equation on the
triangle with charge $-I_{r}$ defined at the vertices, then $3F_{r}$ is a solution on the $Y$ shaped graph with the same charges.}
\end{center}
\end{figure}

Consider a triangular lattice with lattice edges of length $1$. Each
vertex on the lattice is denoted by a pair of integers $\left( p, q \right)$ where the complex coordinate
of a site is
\begin{equation}
z=p\frac{\sqrt{3}}{2}+iq\frac{1}{2},
\end{equation}
with $p+q=$ even.
The Poisson equation on this lattice is
\begin{equation}
F^{tri}_{p+1,q+1}+F^{tri}_{p+1,q-1}+F^{tri}_{p-1,q+1}+F^{tri}_{p-1,q-1}+F^{tri}_{p,q+2}+F^{tri}_{p,q-2}-6F^{tri}_{p,q}=I_{p,q}
\label{poisson2}
\end{equation}
Define the discrete Fourier transformation
\begin{equation}
V\left( k_{1}, k_{2} \right)=\sum_{p}\sum_{q}\exp\left[-i\left( \frac{\sqrt{3}}{2}k_{1}p+\frac{1}{2}k_{2}q \right)\right]F_{p,q}
\end{equation}
\begin{equation}
F_{p,q}=\sqrt{3}\int_{-\frac{2\pi}{\sqrt{3}}}^{\frac{2\pi}{\sqrt{3}}}\frac{dk_{1}}{4\pi} \int_{-2\pi}^{2\pi}\frac{dk_{2}}{4\pi}\exp\left[i\left( \frac{\sqrt{3}}{2}k_{1}p+\frac{1}{2}k_{2}q \right)\right]V\left( k_{1},k_{2} \right)
\label{ftrans}
\end{equation}
Using the expression in equation \ref{ftrans} it can be shown that
\begin{equation}
V\left( k_{1},k_{2} \right)=\frac{\widetilde{I}\left( k_{1}, k_{2} \right)}{4\cos\left( \sqrt{3}k_{1}/2 \right)\cos\left( k_{2}/2 \right)+2\cos\left( k_{2} \right)-6},
\end{equation}
where $\widetilde{I}\left( k_{1},k_{2} \right)$ is the discrete Fourier transformation of $I_{p, q}$. We are interested in a solution with
$I_{p,q}=I\delta_{p,0}\delta_{q,0}$, where $I$ is normalization constant. Also
considering the condition that the function $F_{p,q}^{tri}=0$, at the origin, it can be shown that
\begin{equation}
F^{tri}_{p,q}=\frac{I\sqrt{3}}{96\pi^{2}}\int_{-\frac{2\pi}{\sqrt{3}}}^{\frac{2\pi}{\sqrt{3}}}\int_{-2\pi}^{2\pi}\frac{1-\cos\left( \sqrt{3}k_{1}p/2+k_{2}q/2 \right)}{1-\left( 2\cos\left( \sqrt{3}k_{1}/2 \right)\cos\left( k_{2}/2 \right)+\cos\left( k_{2} \right) \right)}.
\end{equation}
Using the relation between $F_{r}^{hex}$ and $F_{r}^{tri}$ and with a simple change of variables we get the expression in equation \ref{solution}.

\chapter{Relation to the theory of discrete analytic functions
\label{apndx:dhf}}
The sandpile patterns we studied are characterized in terms of
discrete analytic functions (DAF) on different discretizations of the
complex plane. For the pattern in Fig. \ref{fig:hexpicl1}, it is the DAF
on a hexagonal lattice, which increases logarithmically at large
distances as in Eq. (\ref{logasymp}).

Studies of DAF started with the work of Kirchhoff on resistor
networks \cite{hu,cserti,Doyle}, and has been studied subsequently by many others
\cite{duffin,mercat}. However, we have not encountered any work on DAF
on many sheeted Riemann surfaces. In the following we present
a way to determine DAF on a square discretization of
Riemann surfaces.

Consider a square grid of points $z=m \epsilon+in\epsilon$, where $m,n$ are
integers and $\epsilon$ is the lattice spacing. Let $f\left(
m\epsilon, n\epsilon \right)$ be a complex function
defined at every site on the grid.
The function $f$ is defined to be discrete analytic
\cite{laszlo} if it satisfies
the discrete Cauchy Riemann condition
\begin{equation}
\frac{f\left(z_{3}\right)-f\left(z_{1}\right)}{z_{3}-z_{1}}=\frac{f\left(z_{4}\right)-f\left(z_{2}\right)}{z_{4}-z_{2}},
\label{eq:dcr}
\end{equation}
at all elementary squares on the grid as shown in Fig. \ref{fig:square}.

In complex analysis, simple examples of analytic functions are
$z^{n}$, and any polynomial of $z^{n}$ is also analytic. For
DAF, it is clear, using the linearity of equation (\ref{eq:dcr}), that
sum of DAF is also discrete analytic. However, not all positive integer powers of
$z$ are discrete analytic. It is easy to check that the functions
$1$, $z$, $z^{2}$, $z^{3}$ are discrete analytic, but $z^{4}$ is not.
We can however construct polynomial functions of $Re(z)$ and
$Im(z)$, that are discrete analytic. Two such examples are
$z^{4}-z\overline{z}\epsilon^{2}$ and $z^{5}-5 z^{2}\overline{z}\epsilon^{2}/2$.
\begin{figure}
\begin{center}
\includegraphics[width=6cm,angle=0]{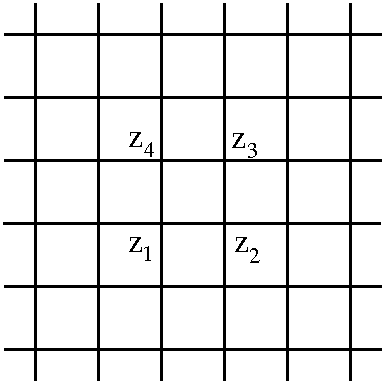}
\caption{\label{fig:square} A square grid on the complex plane.}
\end{center}
\end{figure}

We define a function $F_{n}(z,\epsilon)$ as a homogeneous polynomial in
$z$, $\overline{z}$ and $\epsilon$, of degree $n$, which is a DAF. Using homogeneity, we have
\begin{equation}
F_n(z,\epsilon)=a^n F_n(\frac{z}{a},\frac{\epsilon}{a}),
\end{equation}
and then using $a=\epsilon$, we can express $F_n(z,\epsilon)$ in terms
of $F_n(z,1)$.
This fixes $F_{n}(z,\epsilon)$ up to a multiplicative constant. The
normalization is fixed by requiring that as $\epsilon$ tends to zero,
$F_{n}\left( z, \epsilon \right)\rightarrow z^{n}$.
Then using Cauchy Riemann conditions it is easily seen that
$F_{n}(z, \epsilon)$, for all integers $n\ge0$, has a series expansion in
$\epsilon^{2}$ of the form
\begin{equation}
F_{n}(z,\epsilon)=z^{n}\left[ 1+
\frac{\epsilon^{2}}{z^{2}}g_{1}^{(n)}(\frac{\overline{z}}{z})+\frac{1}{2!}
\frac{\epsilon^{4}}{z^{4}}g_{2}^{(n)}(\frac{\overline{z}}{z})+\cdots
\right],
\end{equation}
where
\begin{eqnarray}
g_{1}^{(n)}(x)&=&-\frac{1}{n-3}\binom{n}{4} x,\\
g_{2}^{(n)}(x)&=&\frac{7!}{(4!)^{2}}\frac{1}{n-6}\binom{n}{7}x^{2}, \\
g_{3}^{(n)}(x)&=&-\frac{10!}{(4!)^{3}}\frac{1}{n-9}\binom{n}{10}x^{3}-\frac{27}{n-7}\binom{n}{8}x,\nonumber\\
\end{eqnarray}
and so on. For an integer $n$, this series will terminate after a
finite number of terms, and all of them can be determined
iteratively.

It is possible to analytically continue the functions for
rational values of $n$. For example,
\begin{equation}
g_{1}^{(n)}(x)=-\frac{\Gamma(n+1)}{4!\Gamma(n-2)}x.
\end{equation}
Then, this analytic continuation of $F_{n}(z,\epsilon)$ provides us the discrete
analytic functions which in the limit $|z|\rightarrow \infty$ grows as
$z^{n}$, for any real positive values of $n$. It is interesting to
note that  the function $D_{m,n}$, used in \cite{myepl} to characterize
the pattern in Fig. \ref{fig:flatone}(b), is equal to
$F_{1/2}(z=m+in,\epsilon=1)$, up to a multiplicative constant. The patterns in the presence of a line
of sinks, or near wedges studied in \cite{myjsp} involve other rational
values. For example, for the pattern near a line sink, one requires
the function $F_{1/3}(z,1)$.

\chapter{Solution of the Eq. (5.18)}\label{apndx:zhang}

Here we discuss the solution of the Eq. \eqref{eq:18} for the starting values 
given by  
\begin{equation}
  \label{eq:A1}
  P_j(t =0)=\alpha^{j-1}.
\end{equation}
We start with an ansatz $P_j(t)=b_t\exp(-a_tj)$, where both $a_t$ and $b_t$ are functions only of 
$t$. This form satisfies the Eq. \eqref{eq:18} for all $j$,
$t>0$, if $a_t$ and $b_t$ satisfy
\begin{eqnarray}
  \label{eq:A2}
  \frac{da_t}{dt}&=& 2-e^{a_t}-e^{-a_t}, \\
  \label{eq:A3}
  \frac{db_t}{dt}&=& b_t (e^{-a_t}-e^{a_t}).
\end{eqnarray}
To solve the Eq. \eqref{eq:A2}, we first make a change of variable $z=e^{-a_t}$. 
In terms of $z$, the equation becomes $dz/dt=(1-z)^2$, which can be easily integrated
to give
\begin{equation}
  \label{eq:A4}
  e^{-a_t}=\frac{t+A-1}{t+A},
\end{equation}
where $A$ is an integration constant. To satisfy the
initial condition in Eq. \eqref{eq:A1}, we choose 
\begin{equation}
  A=(1-\alpha)^{-1}.
\end{equation}
Similarly, to solve the equation for $b_t$, we use the form of $e^{-a_t}$ given in 
Eq. \eqref{eq:A4} and get
\begin{equation}
  \frac{db_t}{dt}= b_t\frac{1-2(t+A)}{(t+A)(t+A-1)}.
\end{equation}
This can be integrated to give
\begin{equation}
  b_t=\frac{B}{(t+A)(t+A-1)},
\end{equation}
\\
where $B$ is an integration constant. 
Then the probability can be written as 
\begin{equation}
  P_j(t)=B\frac{(t+A-1)^{j-1}}{(t+A)^{j+1}},
\end{equation}
where to satisfy the initial condition at $t=0$,
we have used the integration constant $B=(1-\alpha)^{-2}$.
Then, with these values of $A$ and $B$, we have the solution for all $j$, $t>0$, given by
\begin{eqnarray}
  \label{eq:A9}
  P_j(t)&=&\frac{\left[(1 -\alpha) t+\alpha \right]^{j-1}}{\left[ ( 1-\alpha) t+ 1 \right]^{j+1}}\nonumber\\
  &=&\phi_j(\alpha,t),\mathrm{~ ~ ~ ~ ~ say}.
\end{eqnarray}
Now, as $\phi_j(\alpha,t)$ satisfies the Eq. \eqref{eq:18},
\begin{equation}
  \psi_{j,n}(\alpha,t)=\frac{1}{(n-1)!}\frac{\partial^{n-1}\phi_j(t)}{\partial\alpha^{n-1}}
\end{equation}
will also satisfy the equation for any natural number $n$. In addition,
\begin{equation}
  \psi_{j,n}(\alpha=0,t=0)=\delta_{j,n}.
\end{equation}
Hence, we see that the solution of the Eq. \eqref{eq:18}, starting
with $P_j(t)=\delta_{j,n}$ at $t=0$ is
\begin{equation}
P_j(t)=\psi_{j,{n}}(\alpha=0,t)=\frac{1}{(n-1)!}\frac{\partial^{n-1}\phi_j(\alpha,t)}{\partial\alpha^{n-1}}\mid_{\alpha=0},
\label{eq:B12}
\end{equation}
for all $j$, $t>0$, where $\phi_j(\alpha,t)$ is given in Eq. \eqref{eq:A9} and 
$n$ is any natural number.

From \eqref{eq:A9} and \eqref{eq:B12} it easily follows that for large $t$ and $j$, the solution
asymptotically is $P_j(t)\simeq n t^{-2}\exp(-j/t)$.

\chapter{Jordon Block for L=3 SASM}
Here we give some details of the explicit calculation of the steady 
state, and the matrix representation of addition operators for model A 
of length $L=3$. 

The eight sets of eigenvalues  obtained by solving Eq.(\ref{eta})  are 
$(1, 1, 1)$, $(-1, 1, -1)$, $(\frac{1}{3}, -\frac{1}{3},\frac{1}{3})$, 
$(-\frac{1}{3}, -\frac{1}{3}, -\frac{1}{3})$, $ 
(\frac{1}{2}, 0, -\frac{1}{2})$, and $(-\frac{1}{2}, 0, \frac{1}{2})$ 
with the last two sets repeated twice.

For writing the matrix structure of the addition operators, we choose 
the  order of the eigenvectors same  
as the order of the eigenvalues mentioned above. For the degenerate subspace we 
order the eigenvector $|\{a_i\};1\rangle$, defined in 
Eq.(\ref{gen}), before the generalized eigenvector $|\{a_i\};2\rangle$.  Then in 
this basis the matrices corresponding to the addition operators have the 
following Jordan block form. \begin{align}
 \textbf{a}^J_1 = 
  \begin{pmatrix}
    1 ~&~ 0 ~&~ 0 ~&~ 0 ~&~ 0 ~&~ 0 ~&~ 0 ~&~ 0 \\
    0 ~& -1 &~ 0 ~&~ 0 ~&~ 0 ~&~ 0 ~&~ 0 ~&~ 0 \\
    0 ~&~ 0 ~&~ \frac{1}{3} ~&~ 0 ~&~ 0 ~&~ 0 ~&~ 0 ~&~ 0 \\
    0 ~&~ 0 ~&~ 0 ~& -\frac{1}{3} &~ 0 ~&~ 0 ~&~ 0 ~&~ 0 \\
    0 ~&~ 0 ~&~ 0 ~&~ 0 ~&~ \frac{1}{2} ~&~ \frac{1}{2} ~&~ 0 ~&~ 0 \\
    0 ~&~ 0 ~&~ 0 ~&~ 0 ~&~ 0 ~&~ \frac{1}{2} ~&~ 0 ~&~ 0 \\
    0 ~&~ 0 ~&~ 0 ~&~ 0 ~&~ 0 ~&~ 0 ~& -\frac{1}{2} & -\frac{1}{2} \\
    0 ~&~ 0 ~&~ 0 ~&~ 0 ~&~ 0 ~&~ 0 ~&~ 0 ~& -\frac{1}{2} 
 \end{pmatrix},
\end{align}
\begin{align}
 \textbf{a}^J_2 = 
  \begin{pmatrix}
    1 ~&~ 0 ~&~ 0 ~&~ 0 ~&~ 0 ~&~ 0 ~&~ 0 ~&~ 0 \\
    0 ~&~ 1 ~&~ 0 ~&~ 0 ~&~ 0 ~&~ 0 ~&~ 0 ~&~ 0 \\
    0 ~&~ 0 ~& -\frac{1}{3} &~ 0 ~&~ 0 ~&~ 0 ~&~ 0 ~&~ 0 \\
    0 ~&~ 0 ~&~ 0 ~& -\frac{1}{3} &~ 0 ~&~ 0 ~&~ 0 ~&~ 0 \\
    0 ~&~ 0 ~&~ 0 ~&~ 0 ~&~ 0 ~&~ 1 ~&~ 0 ~&~ 0 \\
    0 ~&~ 0 ~&~ 0 ~&~ 0 ~&~ 0 ~&~ 0 ~&~ 0 ~&~ 0 \\
    0 ~&~ 0 ~&~ 0 ~&~ 0 ~&~ 0 ~&~ 0 ~&~ 0 ~&~ 1 \\
    0 ~&~ 0 ~&~ 0 ~&~ 0 ~&~ 0 ~&~ 0 ~&~ 0 ~&~ 0 
 \end{pmatrix},
\end{align}
\begin{align}
 \textbf{a}^J_3 = 
  \begin{pmatrix}
    1 ~&~ 0 ~&~ 0 ~&~ 0 ~&~ 0 ~&~ 0 ~&~ 0 ~&~ 0 \\
    0 ~& -1 &~ 0 ~&~ 0 ~&~ 0 ~&~ 0 ~&~ 0 ~&~ 0 \\
    0 ~&~ 0 ~&~ \frac{1}{3} ~&~ 0 ~&~ 0 ~&~ 0 ~&~ 0 ~&~ 0 \\
    0 ~&~ 0 ~&~ 0 ~& -\frac{1}{3} &~ 0 ~&~ 0 ~&~ 0 ~&~ 0 \\
    0 ~&~ 0 ~&~ 0 ~&~ 0 ~& -\frac{1}{2} & -\frac{1}{2} &~ 0 ~&~ 0 \\
    0 ~&~ 0 ~&~ 0 ~&~ 0 ~&~ 0 ~& -\frac{1}{2} &~ 0 ~&~ 0 \\
    0 ~&~ 0 ~&~ 0 ~&~ 0 ~&~ 0 ~&~ 0 ~&~ \frac{1}{2} ~&~ \frac{1}{2} \\
    0 ~&~ 0 ~&~ 0 ~&~ 0 ~&~ 0 ~&~ 0 ~&~ 0 ~&~ \frac{1}{2} 
  \end{pmatrix},
\end{align}
The transformation matrix $\mathbf{M}$, discussed in section $5$, between
the generalized eigenvector basis and the
configuration basis has the following form
\begin{align}
  \textbf{M} = 
  \begin{pmatrix}
    1 ~&~ 1 ~&~ 1 ~&~ 1 ~&~ 1 ~&~ 1 ~&~ 1 ~&~ 1 \\
    1 ~& -1 ~&~ 1/3 ~& -1/3 ~&~ 1 ~&~ 1/2 ~& -1 ~& -1/2 \\
    1 ~&~ 1 ~& -1/3 ~& -1/3 ~&~ 1 ~&~ 0 ~&~ 1 ~&~ 0 \\
    1 ~& -1 ~& -1/9 ~&~ 1/9 ~&~ 1/2 ~&~ 0 ~& -1/2 ~&~ 0 \\
    1 ~& -1 ~&~ 1/3 ~& -1/3 ~& -1 ~& -1/2 ~&~ 1 ~&~ 1/2 \\
    1 ~&~ 1 ~&~ 1/9 ~&~ 1/9 ~& -3/4 ~& -1/4 ~& -3/4 ~& -1/4 \\
    1 ~& -1 ~& -1/9 ~&~ 1/9 ~& -1/2 ~&~ 0 ~&~ 1/2 ~&~ 0 \\
    1 ~&~ 1 ~& -1/27 ~& -1/27 ~& -1/4 ~&~ 0 ~& -1/4 ~&~ 0 
  \end{pmatrix},
\end{align}
where the configuration basis vectors are chosen in the following order
$(0, 0, 0)$, $(1, 0, 0)$, $(0, 1, 0)$, $(1, 1, 0)$, $(0, 0, 1)$,
$(1, 0, 1)$, $(0, 1, 1)$, and $(1, 1, 1)$. The matrix is non-singular,
and the inverse can be calculated numerically.
Using the similarity transformation $\mathbf{M}\mathbf{a}_1^J\mathbf{M}^{-1}$ we find matrix
representation of the addition operator $\mathbf{a}_1$ in the configuration
basis.
\begin{align}
  \textbf{a}_1 =
  \begin{pmatrix}
    0 ~&~ \frac{2}{7} ~&~ 0 ~&~ \frac{4}{49} ~&~ 0 ~&~ 0 ~&~ 0 ~&~ 0 \\
    1 ~&~ 0 ~&~ 0 ~&~ 0 ~&~ 0 ~&~ \frac{1}{24} ~&~ 0 ~&~ \frac{1}{9} \\
    0 ~&~ \frac{4}{7} ~&~ 0 ~&~ \frac{22}{49} ~&~ 0 ~&~ 0 ~&~ 0 ~&~ 0 \\
    0 ~&~ 0 ~&~ 1 ~&~ 0 ~&~ 0 ~&~ \frac{1}{12} ~&~ 0 ~&~ \frac{19}{72} \\
    0 ~&~ 0 ~&~ 0 ~&~ 0 ~&~ 0 ~&~ \frac{7}{24} ~&~ 0 ~&~ \frac{1}{9} \\
    0 ~&~ \frac{1}{7} ~&~ 0 ~&~ \frac{16}{49} ~&~ 1 ~&~ 0 ~&~ 0 ~&~ 0 \\
    0 ~&~ 0 ~&~ 0 ~&~ 0 ~&~ 0 ~&~ \frac{7}{12} ~&~ 0 ~&~ \frac{37}{72} \\
    0 ~&~ 0 ~&~ 0 ~&~ \frac{1}{7} ~&~ 0 ~&~ 0 ~&~ 1 ~&~ 0 
  \end{pmatrix},
\end{align}
The other operators can also be determined similarly.

\bibliographystyle{amsalpha}	
\bibliography{mybib}{}

\newcommand{\etalchar}[1]{$^{#1}$}
\providecommand{\bysame}{\leavevmode\hbox to3em{\hrulefill}\thinspace}
\providecommand{\MR}{\relax\ifhmode\unskip\space\fi MR }
\providecommand{\MRhref}[2]{%
  \href{http://www.ams.org/mathscinet-getitem?mr=#1}{#2}
}
\providecommand{\href}[2]{#2}
\begin{thebibliography}{DAAMn{\etalchar{+}}01}

\bibitem[AD95]{ali}
Agha~Afsar Ali and Deepak Dhar, \emph{Breakdown of simple scaling in abelian
  sandpile models in one dimension}, Phys. Rev. E \textbf{51} (1995), no.~4,
  R2705--R2708.

\bibitem[Ali95a]{threshold}
Agha~Afsar Ali, \emph{Self-organized criticality in a sandpile model with
  threshold dissipation}, Phys. Rev. E \textbf{52} (1995), no.~5, R4595--R4598.

\bibitem[Ali95b]{thresdis}
\bysame, \emph{Self-organized criticality in a sandpile model with threshold
  dissipation}, Phys. Rev. E \textbf{52} (1995), no.~5, R4595--R4598.

\bibitem[AR08]{rittenberg}
F.~C. Alcaraz and V.~Rittenberg, \emph{Directed abelian algebras and their
  application to stochastic models}, Phys. Rev. E \textbf{78} (2008), no.~4,
  041126.

\bibitem[AS99]{atkinson}
D.~Atkinson and F.J.van Steenwijk, \emph{Infinite resistive lattices}, Am. J.
  Phys. \textbf{67} (1999), no.~6, 486--492.

\bibitem[Bak96a]{levy}
P.~Bak, \emph{How nature works}, first ed., ch.~1, p.~14, Copernicus, New York,
  1996.

\bibitem[Bak96b]{blind}
\bysame, \emph{How nature works}, first ed., ch.~1, p.~31, Copernicus, New
  York, 1996.

\bibitem[BCF{\etalchar{+}}01]{hackslaw2}
Jayanth~R. Banavar, Francesca Colaiori, Alessandro Flammini, Amos Maritan, and
  Andrea Rinaldo, \emph{Scaling, optimality, and landscape evolution}, Journal
  of Statistical Physics \textbf{104} (2001), 1--48.

\bibitem[BCK97]{blanchard}
Ph. Blanchard, B.~Cessac, and T.~Kr\"{u}ger, \emph{A dynamical system approach
  to soc models of {Z}hang's type}, Journal of Statistical Physics \textbf{88}
  (1997), 307--318.

\bibitem[BHB96a]{univ1}
A.~Ben-Hur and O.~Biham, \emph{{Universality in sandpile models}}, Phys. Rev.
  Lett. \textbf{53} (1996), 1317--1320.

\bibitem[BHB96b]{biham}
Asa Ben-Hur and Ofer Biham, \emph{Universality in sandpile models}, Phys. Rev.
  E \textbf{53} (1996), no.~2, R1317--R1320.

\bibitem[BMM01]{biham2}
Ofer Biham, Erel Milshtein, and Ofer Malcai, \emph{Evidence for universality
  within the classes of deterministic and stochastic sandpile models}, Phys.
  Rev. E \textbf{63} (2001), no.~6, 061309.

\bibitem[BMn08]{bonachela1}
Juan~A. Bonachela and Miguel~A. Mu\~{n}oz, \emph{Confirming and extending the
  hypothesis of universality in sandpiles}, Phys. Rev. E \textbf{78} (2008),
  no.~4, 041102.

\bibitem[BMQR08]{fey}
A.~F. Boer, R.~Meester, C.~Quant, and F.~Redig, \emph{A probabilistic approach
  to {Z}hang's sandpile model}, Communications in Mathematical Physics
  \textbf{280} (2008), 351--388.

\bibitem[BN07]{norine}
Matthew Baker and Serguei Norine, \emph{Riemann-roch and abel-jacobi theory on
  a finite graph}, Advances in Mathematics \textbf{215} (2007), no.~2, 766 --
  788.

\bibitem[BR02]{borgne}
Yvan~Le Borgne and Dominique Rossin, \emph{On the identity of the sandpile
  group}, Discrete Math. \textbf{256} (2002), no.~3, 775--790.

\bibitem[BRC{\etalchar{+}}06]{bonachela2}
Juan~A. Bonachela, Jos\'e~J. Ramasco, Hugues Chat\'e, Ivan Dornic, and
  Miguel~A. Mu\~noz, \emph{Sticky grains do not change the universality class
  of isotropic sandpiles}, Phys. Rev. E \textbf{74} (2006), no.~5, 050102.

\bibitem[BS93]{sneppen2}
Per Bak and Kim Sneppen, \emph{Punctuated equilibrium and criticality in a
  simple model of evolution}, Phys. Rev. Lett. \textbf{71} (1993), no.~24,
  4083--4086.

\bibitem[BTW87]{btw}
P.~Bak, C.~Tang, and K.~Wiesenfeld, \emph{Self-organized criticality: An
  explanation of the 1/f noise}, Phys. Rev. Lett. \textbf{59} (1987), no.~4,
  381--384.

\bibitem[CGMP09]{corte}
L.~Cort\'e, S.~J. Gerbode, W.~Man, and D.~J. Pine, \emph{Self-organized
  criticality in sheared suspensions}, Phys. Rev. Lett. \textbf{103} (2009),
  no.~24, 248301.

\bibitem[CPS08]{caracciolo}
Sergio Caracciolo, Guglielmo Paoletti, and Andrea Sportiello, \emph{Explicit
  characterization of the identity configuration in an abelian sandpile model},
  Journal of Physics A: Mathematical and Theoretical \textbf{41} (2008),
  no.~49, 495003.

\bibitem[Cre91]{identity}
M.~Creutz, \emph{Abelian sandpiles}, Comput. Phys. \textbf{5} (1991), 198--203.

\bibitem[Cse00]{cserti}
József Cserti, \emph{Application of the lattice green's function for
  calculating the resistance of an infinite network of resistors}, Amer. J.
  Phys. \textbf{68} (2000), no.~10, 896--906.

\bibitem[CVZ99]{stochnum}
A.~Chessa, A.~Vespignani, and S.~Zapperi, \emph{Critical exponents in
  stochastic sandpile models}, Computer Physics Communications \textbf{121-122}
  (1999), 299 -- 302.

\bibitem[DAAMn{\etalchar{+}}01]{dickman1}
Ronald Dickman, Mikko Alava, Miguel A.~Mu\~noz, Jarkko Peltola, Alessandro
  Vespignani, and Stefano Zapperi, \emph{Critical behavior of a one-dimensional
  fixed-energy stochastic sandpile}, Phys. Rev. E \textbf{64} (2001), no.~5,
  056104.

\bibitem[DBM95]{feromag1}
G.~Durin, G.~Bertotti, and A.~Magni, \emph{Fractals, scaling and the question
  of self-organized criticality in magnetization processes}, Fractals
  \textbf{3} (1995), no.~2, 351--370.

\bibitem[DC03]{campelo}
R.~Dickman and J.~M.~M. Campelo, \emph{Avalanche exponents and corrections to
  scaling for a stochastic sandpile}, Phys. Rev. E \textbf{67} (2003), no.~6,
  066111.

\bibitem[DD97]{loop}
Deepak Dhar and Abhishek Dhar, \emph{Distribution of sizes of erased loops for
  loop-erased random walks}, Phys. Rev. E \textbf{55} (1997), no.~3,
  R2093--R2096.

\bibitem[Dha90]{dharprl}
Deepak Dhar, \emph{Self-organized critical state of sandpile automaton models},
  Phys. Rev. Lett. \textbf{64} (1990), no.~14, 1613--1616.

\bibitem[{Dha}99a]{dhar99}
D.~{Dhar}, \emph{{Studying Self-Organized Criticality with Exactly Solved
  Models}}, arXiv:cond-mat/9909009 (1999).

\bibitem[Dha99b]{sasm}
Deepak Dhar, \emph{The abelian sandpile and related models}, Physica A:
  Statistical Mechanics and its Applications \textbf{263} (1999), no.~1-4, 4 --
  25.

\bibitem[Dha99c]{dm}
\bysame, \emph{Some results and a conjecture for manna's stochastic sandpile
  model}, Physica A: Statistical Mechanics and its Applications \textbf{270}
  (1999), no.~1-2, 69 -- 81.

\bibitem[Dha04]{dharoslo}
\bysame, \emph{Steady state and relaxation spectrum of the {O}slo rice-pile
  model}, Physica A: Statistical Mechanics and its Applications \textbf{340}
  (2004), no.~4, 535 -- 543.

\bibitem[Dha06]{dharphysica06}
\bysame, \emph{Theoretical studies of self-organized criticality}, Physica A:
  Statistical and Theoretical Physics \textbf{369} (2006), no.~1, 29 -- 70.

\bibitem[Dic06]{restr2}
Ronald Dickman, \emph{Critical exponents for the restricted sandpile}, Phys.
  Rev. E \textbf{73} (2006), no.~3, 036131.

\bibitem[DM90]{bethe}
D~Dhar and S~N Majumdar, \emph{Abelian sandpile model on the bethe lattice},
  Journal of Physics A: Mathematical and General \textbf{23} (1990), no.~19,
  4333.

\bibitem[DMS00]{tebaldi}
Mario De~Menech and Attilio~L. Stella, \emph{From waves to avalanches: Two
  different mechanisms of sandpile dynamics}, Phys. Rev. E \textbf{62} (2000),
  no.~4, R4528--R4531.

\bibitem[DP04]{punya}
D.~Dhar and P.~Pradhan, \emph{Probability distribution of residence times of
  grains in sand-pile models}, Journal of Statistical Mechanics: Theory and
  Experiment \textbf{2004} (2004), no.~05, P05002.

\bibitem[DR99]{hackslaw1}
Peter~Sheridan Dodds and Daniel~H. Rothman, \emph{Unified view of scaling laws
  for river networks}, Phys. Rev. E \textbf{59} (1999), no.~5, 4865--4877.

\bibitem[DS84]{Doyle}
Peter~G. Doyle and J.~Laurie Snell, \emph{Random walks and electric networks},
  Mathematical Association of America, Washington, DC, 1984.

\bibitem[DS86]{dup}
B.~Duplantier and H.~Saleur, \emph{Exact surface and wedge exponents for
  polymers in two dimensions}, Phys. Rev. Lett. \textbf{57} (1986), no.~25,
  3179--3182.

\bibitem[DS92]{forest}
B.~Drossel and F.~Schwabl, \emph{Self-organized critical forest-fire model},
  Phys. Rev. Lett. \textbf{69} (1992), no.~11, 1629--1632.

\bibitem[DSC09]{myepl}
D.~Dhar, T.~Sadhu, and S.~Chandra, \emph{Pattern formation in growing
  sandpiles}, Europhys. Lett. \textbf{85} (2009), no.~4, 48002.

\bibitem[dSdO09]{restr3}
Evandro~F da~Silva and Mário~J de~Oliveira, \emph{An asymmetric sandpile model
  with height restriction}, Journal of Physics A: Mathematical and Theoretical
  \textbf{42} (2009), no.~38, 385003.

\bibitem[dSV92]{train1}
Maria de~Sousa~Vieira, \emph{Self-organized criticality in a deterministic
  mechanical model}, Phys. Rev. A \textbf{46} (1992), no.~10, 6288--6293.

\bibitem[DTdO02]{restr1}
Ronald Dickman, T\^ania Tom\'e, and M\'ario~J. de~Oliveira, \emph{Sandpiles
  with height restrictions}, Phys. Rev. E \textbf{66} (2002), no.~1, 016111.

\bibitem[Duf56]{duffin}
R.~J. Duffin, \emph{Basic properties of discrete analytic functions.}, Duke
  Mathematical Journal \textbf{23} (1956), no.~2, 335--363.

\bibitem[FCMS{\etalchar{+}}96]{osloexpt}
V.~Frette, K.~Christensen, A.~Malthe-Sorenssen, J.~Feder, T.~Jossang, and
  P.~Meakin, \emph{Avalanche dynamics in a pile of rice}, nature \textbf{379}
  (1996), 49--52.

\bibitem[FdBR08]{redig}
Anne Fey-den Boer and Frank Redig, \emph{Limiting shapes for deterministic
  centrally seeded growth models}, Journal of Statistical Physics \textbf{130}
  (2008), 579--597.

\bibitem[Fei78]{chaos}
Mitchell~J. Feigenbaum, \emph{Quantitative universality for a class of
  nonlinear transformations}, Journal of Statistical Physics \textbf{19}
  (1978), 25--52.

\bibitem[FLP10]{explosion}
Anne Fey, Lionel Levine, and Yuval Peres, \emph{Growth rates and explosions in
  sandpiles}, Journal of Statistical Physics \textbf{138} (2010), 143--159.

\bibitem[FV06]{falcone}
Maurizio Falcone and Stefano~Finzi Vita, \emph{A finite-difference
  approximation of a two-layer system for growing sandpiles}, SIAM Journal on
  Scientific Computing \textbf{28} (2006), no.~3, 1120--1132.

\bibitem[FWNL95]{sup1}
Stuart Field, Jeff Witt, Franco Nori, and Xinsheng Ling, \emph{Superconducting
  vortex avalanches}, Phys. Rev. Lett. \textbf{74} (1995), no.~7, 1206--1209.

\bibitem[GDG98]{guilera}
Achille Giacometti and Albert D\'\i{}az-Guilera, \emph{Dynamical properties of
  the {Z}hang model of self-organized criticality}, Phys. Rev. E \textbf{58}
  (1998), no.~1, 247--253.

\bibitem[GH02]{pruessner}
{G. Pruessner} and {H. Jeldtoft Jensen}, \emph{A solvable non-conservative
  model of self-organised criticality}, Europhys. Lett. \textbf{58} (2002),
  no.~2, 250--256.

\bibitem[GQ00]{gravner}
J.~Gravner and J.~Quastel, \emph{Internal {DLA} and the stefan problem.}, Ann.
  Prob. \textbf{28} (2000), no.~4, 1528--1562.

\bibitem[GR56]{gutenberg}
B.~Gutenberg and C.~F. Richter, Ann. Geophysics. \textbf{9} (1956), 1.

\bibitem[GZ96]{grass}
Peter Grassberger and Yi-Cheng Zhang, \emph{Self-organized formulation of
  standard percolation phenomena}, Physica A: Statistical and Theoretical
  Physics \textbf{224} (1996), no.~1-2, 169 -- 179.

\bibitem[Her86a]{eden}
H.~J. Herrmann, \emph{Geometrical cluster growth models and kinetic gelation},
  Physics Reports \textbf{136} (1986), no.~3, 153 -- 224.

\bibitem[Her86b]{herrmann}
H.~J. Herrmann, \emph{Geometrical cluster growth models and kinetic gelation},
  Physics Reports \textbf{136} (1986), no.~3, 153 -- 224.

\bibitem[HK99]{hadeler}
K.~P. Hadeler and C.~Kuttler, \emph{Dynamical models for granular matter.},
  vol.~2, pp.~9--18, Springer-Verlag, Berlin, 1999.

\bibitem[J\'90]{janosi}
Imre~M. J\'anosi, \emph{Effect of anisotropy on the self-organized critical
  state}, Phys. Rev. A \textbf{42} (1990), no.~2, 769--774.

\bibitem[J.98]{jensen}
Jensen~H. J., \emph{Self-organized criticality: Emergent complex behavior in
  physical and biological systems}, Cambridge University Press, 1998.

\bibitem[JLN89]{sandexpt1}
H.~M. Jaeger, Chu-heng Liu, and Sidney~R. Nagel, \emph{Relaxation at the angle
  of repose}, Phys. Rev. Lett. \textbf{62} (1989), no.~1, 40--43.

\bibitem[JNB96]{sandexpt2}
Heinrich~M. Jaeger, Sidney~R. Nagel, and Robert~P. Behringer, \emph{Granular
  solids, liquids, and gases}, Rev. Mod. Phys. \textbf{68} (1996), no.~4,
  1259--1273.

\bibitem[Kam07]{vkampen}
NG~Van Kampen, \emph{Stochastic processes in physics and chemistry}, North
  Holland, 2007.

\bibitem[Kin85]{dp}
W.~Kinzel, \emph{Phase transitions of cellular automata}, Zeitschrift für
  Physik B Condensed Matter \textbf{58} (1985), 229--244.

\bibitem[KMT01]{kloster}
Morten Kloster, Sergei Maslov, and Chao Tang, \emph{Exact solution of a
  stochastic directed sandpile model}, Phys. Rev. E \textbf{63} (2001), no.~2,
  026111.

\bibitem[KNWZ89]{kadanoff}
Leo~P. Kadanoff, Sidney~R. Nagel, Lei Wu, and Su-min Zhou, \emph{Scaling and
  universality in avalanches}, Phys. Rev. A \textbf{39} (1989), no.~12,
  6524--6537.

\bibitem[L\"97]{lubeck}
S.~L\"ubeck, \emph{Large-scale simulations of the {Z}hang sandpile model},
  Phys. Rev. E \textbf{56} (1997), no.~2, 1590--1594.

\bibitem[L\"00]{lubeck3}
\bysame, \emph{Moment analysis of the probability distribution of different
  sandpile models}, Phys. Rev. E \textbf{61} (2000), no.~1, 204--209.

\bibitem[LD01]{lubeck4}
S.~L\"{u}beck and D.~Dhar, \emph{Continuously varying exponents in a sandpile
  model with dissipation near surface}, Journal of Statistical Physics
  \textbf{102} (2001), 1--14.

\bibitem[LE95]{barabasi}
Barabasi L. and Stanley~H. E., \emph{Fractal concepts in surface growth},
  Cambridge University Press, 1995.

\bibitem[LHG07]{nunet}
A.~Levina, J.~M. Herrmann, and T.~Geisel, \emph{Dynamical synapses causing
  self-organized criticality in neural networks}, Nat Phys \textbf{3} (2007),
  857--860.

\bibitem[LKG90]{liu}
S.~H. Liu, Theodore Kaplan, and L.~J. Gray, \emph{Geometry and dynamics of
  deterministic sand piles}, Phys. Rev. A \textbf{42} (1990), no.~6,
  3207--3212.

\bibitem[LLT91]{lee}
S.-C. Lee, N.~Y. Liang, and W.-J. Tzeng, \emph{Exact solution of a
  deterministic sandpile model in one dimension}, Phys. Rev. Lett. \textbf{67}
  (1991), no.~12, 1479--1481.

\bibitem[Lov04]{laszlo}
L.~Lov$\acute{a}$sz, \emph{Discrete analytic functions: An exposition.}, second
  ed., pp.~241--273, Int. Press, Somerville, MA, 2004.

\bibitem[LP07]{levine_peres}
L.~Levine and Y.~Peres, \emph{Scaling limits for internal aggregation models
  with multiple sources}, arXiv:0712.3378v2 (2007).

\bibitem[LP08]{lionel}
\bysame, \emph{Spherical asymptotics for the rotor-router model in
  $\mathbb{Z}^d$}, Indiana Univ. Math. J. \textbf{57} (2008), 431--450.

\bibitem[LP10]{propp2}
L.~Levine and J.~Propp, \emph{What is a sandpile?}, Notices of the {AMS}
  \textbf{57} (2010), 976--979.

\bibitem[LU97a]{numbtw}
S.~{L{\"u}beck} and K.~D. {Usadel}, \emph{{Numerical determination of the
  avalanche exponents of the Bak-Tang-Wiesenfeld model}}, Phys. Rev. E
  \textbf{55} (1997), 4095--4099.

\bibitem[LU97b]{lubeck1}
S.~L\"ubeck and K.~D. Usadel, \emph{Numerical determination of the avalanche
  exponents of the {B}ak-{T}ang-{W}iesenfeld model}, Phys. Rev. E \textbf{55}
  (1997), no.~4, 4095--4099.

\bibitem[L{\"u}b00]{numsand1}
S.~L{\"u}beck, \emph{Moment analysis of the probability distribution of
  different sandpile models}, Phys. Rev. E \textbf{61} (2000), no.~1, 204--209.

\bibitem[Man91]{manna}
S~S Manna, \emph{Two-state model of self-organized criticality}, Journal of
  Physics A: Mathematical and General \textbf{24} (1991), no.~7, L363.

\bibitem[MBS98a]{univ2}
E.~Milshtein, O.~Biham, and S.~Solomon, \emph{Universality classes in
  isotropic, abelian, and non-abelian sandpile models}, Phys. Rev. E
  \textbf{58} (1998), no.~1, 303--310.

\bibitem[MBS98b]{milshtein}
Erel Milshtein, Ofer Biham, and Sorin Solomon, \emph{Universality classes in
  isotropic, abelian, and non-abelian sandpile models}, Phys. Rev. E
  \textbf{58} (1998), no.~1, 303--310.

\bibitem[MD92]{burning}
S.N. Majumdar and Deepak Dhar, \emph{Equivalence between the abelian sandpile
  model and the q-->0 limit of the {P}otts model}, Physica A: Statistical
  Mechanics and its Applications \textbf{185} (1992), no.~1-4, 129 -- 145.

\bibitem[MD02]{mohanty1}
P.~K. Mohanty and Deepak Dhar, \emph{Generic sandpile models have directed
  percolation exponents}, Phys. Rev. Lett. \textbf{89} (2002), no.~10, 104303.

\bibitem[MD07]{mohanty2}
P.K. Mohanty and Deepak Dhar, \emph{Critical behavior of sandpile models with
  sticky grains}, Physica A: Statistical Mechanics and its Applications
  \textbf{384} (2007), no.~1, 34 -- 38.

\bibitem[Mer01]{mercat}
C.~Mercat, \emph{Discrete {R}iemann surfaces and the {I}sing model.}, Commun.
  Math. Phys. \textbf{218} (2001), 177--216.

\bibitem[MG97]{manna2}
S.~S. Manna and D.~Giri, \emph{Sandpile model with activity inhibition}, Phys.
  Rev. E \textbf{56} (1997), no.~5, R4914--R4917.

\bibitem[MS00]{numsand2}
M.~D. Menech and A.~L. Stella, \emph{From waves to avalanches: Two different
  mechanisms of sandpile dynamics}, Phys. Rev. E \textbf{62} (2000), no.~4,
  R4528--R4531.

\bibitem[MT99]{land}
Bruce~D. Malamud and Donald~L. Turcotte, \emph{Self-organized criticality
  applied to natural hazards}, Natural Hazards \textbf{20} (1999), 93--116.

\bibitem[MZ96]{maslov}
Sergei Maslov and Yi-Cheng Zhang, \emph{Self-organized critical directed
  percolation}, Physica A: Statistical and Theoretical Physics \textbf{223}
  (1996), no.~1-2, 1 -- 6.

\bibitem[OFC92]{ofc}
Zeev Olami, Hans Jacob~S. Feder, and Kim Christensen, \emph{Self-organized
  criticality in a continuous, nonconservative cellular automaton modeling
  earthquakes}, Phys. Rev. Lett. \textbf{68} (1992), no.~8, 1244--1247.

\bibitem[ORN98]{sup2}
C.~J. Olson, C.~Reichhardt, and Franco Nori, \emph{Fractal networks, braiding
  channels, and voltage noise in intermittently flowing rivers of quantized
  magnetic flux}, Phys. Rev. Lett. \textbf{80} (1998), no.~10, 2197--2200.

\bibitem[Ost03]{ostojic}
Srdjan Ostojic, \emph{Patterns formed by addition of grains to only one site of
  an abelian sandpile}, Physica A: Statistical Mechanics and its Applications
  \textbf{318} (2003), no.~1-2, 187 -- 199.

\bibitem[PB96]{train2}
Maya Paczuski and Stefan Boettcher, \emph{Universality in sandpiles, interface
  depinning, and earthquake models}, Phys. Rev. Lett. \textbf{77} (1996),
  no.~1, 111--114.

\bibitem[PB00]{paczuski}
Maya Paczuski and Kevin~E. Bassler, \emph{Theoretical results for sandpile
  models of self-organized criticality with multiple topplings}, Phys. Rev. E
  \textbf{62} (2000), no.~4, 5347--5352.

\bibitem[PDDK96]{euler}
V.~B. Priezzhev, Deepak Dhar, Abhishek Dhar, and Supriya Krishnamurthy,
  \emph{Eulerian walkers as a model of self-organized criticality}, Phys. Rev.
  Lett. \textbf{77} (1996), no.~25, 5079--5082.

\bibitem[Pea93]{earliertwo}
John~E. Pearson, \emph{{Complex Patterns in a Simple System}}, Science
  \textbf{261} (1993), no.~5118, 189--192.

\bibitem[PN06]{presp}
Ole Peters and J.~David Neelin, \emph{Critical phenomena in atmospheric
  precipitation}, Nat Phys (2006), 393--396.

\bibitem[PPH03]{asap}
A.~M. Povolotsky, V.~B. Priezzhev, and Chin-Kun Hu, \emph{The asymmetric
  avalanche process}, Journal of Statistical Physics \textbf{111} (2003),
  1149--1182.

\bibitem[Pri00]{ucd}
V.~B. Priezzhev, \emph{The upper critical dimension of the abelian sandpile
  model}, Journal of Statistical Physics \textbf{98} (2000), 667--684.

\bibitem[PSV00]{satorras}
Romualdo Pastor-Satorras and Alessandro Vespignani, \emph{Universality classes
  in directed sandpile models}, Journal of Physics A: Mathematical and General
  \textbf{33} (2000), no.~3, L33.

\bibitem[PZL{\etalchar{+}}05]{pan}
Gui-Jun Pan, Duan-Ming Zhang, Zhi-Hua Li, Hong-Zhang Sun, and Yan-Ping Ying,
  \emph{Critical behavior in non-abelian deterministic directed sandpile},
  Physics Letters A \textbf{338} (2005), no.~3-5, 163 -- 168.

\bibitem[RA00]{vespignaniz}
{R. Pastor-Satorras} and {A. Vespignani}, \emph{Anomalous scaling in the
  {Z}hang model}, Eur. Phys. J. B \textbf{18} (2000), no.~2, 197--200.

\bibitem[RMAS00]{path}
{R. Dickman}, {M.A. Mu\~{n}uz}, {A. Vespignani}, and {S. Zapperi}, \emph{Paths
  to self-organised criticality}, Brazilian Journal of physics \textbf{30}
  (2000), no.~1, 27--41.

\bibitem[RS92]{ruelle}
P~Ruelle and S~Sen, \emph{Toppling distributions in one-dimensional abelian
  sandpiles}, Journal of Physics A: Mathematical and General \textbf{25}
  (1992), no.~22, L1257.

\bibitem[{Sad}10]{ice}
T.~{Sadhu}, \emph{{A self-organized critical model of rearranging
  hydrogen-bonded network in ice}}, arXiv:1009.5995 (2010).

\bibitem[SD08]{mypre}
T.~Sadhu and D.~Dhar, \emph{Emergence of quasiunits in the one-dimensional
  zhang model}, Phys. Rev. E \textbf{77} (2008), no.~3, 031122.

\bibitem[SD09]{myjsp1}
\bysame, \emph{Steady state of stochastic sandpile models}, Journal of
  Statistical Physics \textbf{134} (2009), 427--441.

\bibitem[SD10]{myjsp2}
\bysame, \emph{Pattern formation in growing sandpiles with multiple sources or
  sinks}, Journal of Statistical Physics \textbf{138} (2010), 815--837.

\bibitem[SD12]{myjsm}
Tridib Sadhu and Deepak Dhar, \emph{Pattern formation in fast-growing
  sandpiles}, Phys. Rev. E \textbf{85} (2012), 021107.

\bibitem[SDK{\etalchar{+}}93]{sethna}
James~P. Sethna, Karin Dahmen, Sivan Kartha, James~A. Krumhansl, Bruce~W.
  Roberts, and Joel~D. Shore, \emph{Hysteresis and hierarchies: Dynamics of
  disorder-driven first-order phase transformations}, Phys. Rev. Lett.
  \textbf{70} (1993), no.~21, 3347--3350.

\bibitem[SDV04]{dickman2}
J\"{u}rgen~F Stilck, Ronald Dickman, and Ronaldo~R Vidigal, \emph{Series
  expansion for a stochastic sandpile}, Journal of Physics A: Mathematical and
  General \textbf{37} (2004), no.~4, 1145.

\bibitem[Sne95]{sneppen1}
Kim Sneppen, \emph{Extremal dynamics and punctuated co-evolution}, Physica A:
  Statistical and Theoretical Physics \textbf{221} (1995), no.~1-3, 168 -- 179.

\bibitem[Sor94]{sornet}
Didier Sornette, \emph{Sweeping of an instability : an alternative to
  self-organized criticality to get powerlaws without parameter tuning}, J.
  Phys. I France \textbf{4} (1994), no.~2, 209--221.

\bibitem[Spi01]{spitzer}
F.~Spitzer, \emph{Principles of random walk}, second ed., ch.~3, p.~148,
  Springer, New York, 2001.

\bibitem[SRB96]{barma}
Gunter~M Sch\"{u}tz, Ramakrishna Ramaswamy, and Mustansir Barma, \emph{Pairwise
  balance and invariant measures for generalized exclusion processes}, Journal
  of Physics A: Mathematical and General \textbf{29} (1996), no.~4, 837.

\bibitem[SS78]{earlierone}
L.~S. Schulman and P.~E. Seiden, \emph{Statistical mechanics of a dynamical
  system based on {C}onway's game of life}, Journal of Statistical Physics
  \textbf{19} (1978), 293--314.

\bibitem[SS99]{smcr}
Dietrich Stauffer and Didier Sornette, \emph{Self-organized percolation model
  for stock market fluctuations}, Physica A: Statistical Mechanics and its
  Applications \textbf{271} (1999), no.~3-4, 496 -- 506.

\bibitem[SS09]{tropical}
D.~Speyer and B.~Sturmfels, \emph{Tropical mathematics}, Mathematics Magazine
  \textbf{82} (2009), 163--173(11).

\bibitem[Tak89]{takayasu}
Hideki Takayasu, \emph{Steady-state distribution of generalized aggregation
  system with injection}, Phys. Rev. Lett. \textbf{63} (1989), no.~23,
  2563--2565.

\bibitem[TI92]{erotion}
Hideki Takayasu and Hajime Inaoka, \emph{New type of self-organized criticality
  in a model of erosion}, Phys. Rev. Lett. \textbf{68} (1992), no.~7, 966--969.

\bibitem[VD05]{dickman3}
Ronaldo Vidigal and Ronald Dickman, \emph{Asymptotic behavior of the order
  parameter in a stochastic sandpile}, Journal of Statistical Physics
  \textbf{118} (2005), 1--25.

\bibitem[VDMnZ98]{vespignani}
Alessandro Vespignani, Ronald Dickman, Miguel~A. Mu\~noz, and Stefano Zapperi,
  \emph{Driving, conservation, and absorbing states in sandpiles}, Phys. Rev.
  Lett. \textbf{81} (1998), no.~25, 5676--5679.

\bibitem[VZP95]{pietronero}
Alessandro Vespignani, Stefano Zapperi, and Luciano Pietronero,
  \emph{Renormalization approach to the self-organized critical behavior of
  sandpile models}, Phys. Rev. E \textbf{51} (1995), no.~3, 1711--1724.

\bibitem[WS83]{dla}
T.~A. Witten and L.~M. Sander, \emph{Diffusion-limited aggregation}, Phys. Rev.
  B \textbf{27} (1983), no.~9, 5686--5697.

\bibitem[Wu82]{hu}
F.~Y. Wu, \emph{The potts model}, Rev. Mod. Phys. \textbf{54} (1982), no.~1,
  235--268.

\bibitem[ZCDS98]{feromag2}
Stefano Zapperi, Pierre Cizeau, Gianfranco Durin, and H.~Eugene Stanley,
  \emph{Dynamics of a ferromagnetic domain wall: Avalanches, depinning
  transition, and the {B}arkhausen effect}, Phys. Rev. B \textbf{58} (1998),
  no.~10, 6353--6366.

\bibitem[Zha89]{zhang}
Yi-Cheng Zhang, \emph{Scaling theory of self-organized criticality}, Phys. Rev.
  Lett. \textbf{63} (1989), no.~5, 470--473.

\end{thebibliography}
\end{document}